\numberwithin{figure}{section}
\numberwithin{equation}{section}
\numberwithin{table}{section}
\definecolor{mygreen}{rgb}{0,0.6,0}
\definecolor{mygray}{rgb}{0.5,0.5,0.5}
\definecolor{mymauve}{rgb}{0.58,0,0.82}
\let\oldsection\section
\def\section{\cleardoublepage\oldsection}
\newtheorem{prop}{Proposition}[section]
\newtheorem{theorem}[prop]{Theorem}%[section]
\newtheorem{defi}[prop]{Definition}%[section]
\newtheorem{exam}[prop]{Example}%[section]
\newtheorem{cor}[prop]{Corollary}%[section]
\newtheorem{lem}[prop]{Lemma}%[section]
\newenvironment{postulate}[2][Postulate]{\begin{trivlist}
\item[\hskip \labelsep {\bfseries #1}\hskip \labelsep {\bfseries #2}]}{\end{trivlist}}
\newtheorem*{note}{Comment}
\newtheorem*{ass}{Assumption}
\newcommand{\be}{\begin{equation}}
\newcommand{\ee}{\end{equation}}
\DeclareFontFamily{U}{mathx}{\hyphenchar\font45}
\DeclareFontShape{U}{mathx}{m}{n}{
      <5> <6> <7> <8> <9> <10>
      <10.95> <12> <14.4> <17.28> <20.74> <24.88>
      mathx10
      }{}
\DeclareSymbolFont{mathx}{U}{mathx}{m}{n}
\DeclareMathAccent{\widecheck}{0}{mathx}{"71}
\DeclareMathAccent{\wideparen}{0}{mathx}{"75}
\begin{document}
\selectlanguage{english}

%%%%%%%%%%%%%% 

%% Titelintro
\thispagestyle{empty}
\begin{center}
  \renewcommand{\baselinestretch}{2.00}
  \Large\sffamily
  Department of Physics and Astronomy\\
  \large
  University of Heidelberg
  \par\vfill\normalfont
  Master thesis\\
  in Physics\\
  submitted by\\
  Marius Krumm\\
  born in Worms\\
  2015\\
\end{center}
%\cleardoublepage
\newpage
\thispagestyle{empty} \
\newpage

%% Titelseite
\thispagestyle{empty}
\begin{center}
  \renewcommand{\baselinestretch}{2.00}
  \large\bfseries\sffamily
	Thermodynamics and the Structure of Quantum Theory\\ as a Generalized Probabilistic Theory
  \par
  \vfill
  \large\normalfont
  This Master thesis has been carried out by Marius Krumm\\
  at the\\
  Institute for Theoretical Physics\\
  under the supervision of\\
  Dr. rer. nat. Markus Müller\\
  %% Bei externen Diplomarbeiten hier noch den zweiten Betreuer einfügen
  %% und den vspace in Z. 45 entsprechend reduzieren
\end{center}\par
\vspace{5\baselineskip}

% Zeilenabstand zurücksetzen
\renewcommand{\baselinestretch}{1.00}\normalsize

%\cleardoublepage
\newpage
\thispagestyle{empty} \
\newpage

%%%%%%%%%%%%%%%%%%%%%%%%%%%%%%%%%%

%% Abstract page
%% =============
%%
%% Content of abstract pages has been put into seperate pages to simplify
%% word counting. Use e.g. the unix command
%%   wc abstract-ger.tex
%% or
%%   wc abstract-eng.tex
%% to get the number of words contained in these files.
\thispagestyle{empty}
\begin{center}
  \begin{minipage}[c][0.48\textheight][b]{0.9\textwidth}
    \small
    \textbf{
     Thermodynamik und die Struktur der Quantentheorie als eine verallgemeinerte probabilistische Theorie:
    }\par
    \vspace{\baselineskip}
    Diese Arbeit untersucht den Zusammenhang zwischen Quantentheorie, Thermodynamik und Informationstheorie. Es werden Theorien betrachtet, welche eine der Quantentheorie ähnliche Struktur besitzen und im ``Generalized Probabilistic Theories'' genannten Framework beschrieben werden. Einem Vorschlag von J. Barrett \cite{Conference} folgend wird ein Gedankenexperiment von von Neumann\cite{Neumann} adaptiert um eine natürliche thermodynamische Entropie-Definition zu erhalten. Einige mathematische Eigenschaften dieser Entropie werden physikalischen Konsequenzen des Gedankenexperiments gegenüber gestellt. Die Gültigkeit des zweiten Hauptsatzes der Thermodynamik wird untersucht. In diesem Kontext werden auch Observablen und projektive Messungen verallgemeinert, um einen Entropie-Zuwachs in projektiven Messungen von Ensembles zu beweisen. Informationstheoretisch motivierte Definitionen der Entropie, welche in \cite{WehnerEntropy}\cite{BarnumEntropy} eingeführt wurden, werden mit der thermodynamisch motivierten Definition der Entropie verglichen. Die Bedingungen für die Wohldefiniertheit der Entropie werden genauer analysiert. Es werden einige weitere Eigenschaften der behandelten Theorien (z.B. Frage nach Interferenz höherer Ordnung, Pfisters Zustandsunterscheidungsprinzip \cite{Pfister}) und deren Zusammenhang mit der Entropie untersucht.
  \end{minipage}\par
  \vfill
  \begin{minipage}[c][0.48\textheight][b]{0.9\textwidth}
    \small
    \textbf{
     Thermodynamics and the Structure of Quantum Theory as a Generalized Probabilistic Theory:
    }\par
    \vspace{\baselineskip}
    This thesis investigates the connection between quantum theory, thermodynamics and information theory. Theories with structure similar to that of quantum theory are considered, mathematically described by the framework of ``Generalized Probabilistic Theories''. For these theories, a thought experiment by von Neumann \cite{Neumann} is adapted to obtain a natural thermodynamic entropy definition, following a proposal by J. Barrett \cite{Conference}. Mathematical properties of this entropy are compared to physical consequences of the thought experiment. The validity of the second law of thermodynamics is investigated. In that context, observables and projective measurements are generalized to prove an entropy increase for projective measurements of ensembles. Information-theoretically motivated definitions of the entropy introduced in \cite{WehnerEntropy}\cite{BarnumEntropy} are compared to the entropy from the thermodynamic thought experiment. The conditions for the thermodynamic entropy to be well-defined are considered in greater detail. Several further properties of the theories under consideration (e.g. whether there is higher order interference, Pfister's state discrimination principle \cite{Pfister}) and their relation to entropy are investigated.
  \end{minipage}
\end{center}

\thispagestyle{empty}
\quad
\newpage

\thispagestyle{empty}
\tableofcontents
\clearpage

%\onehalfspacing

\setcounter{page}{1}
\pagenumbering{arabic}

\section{Introduction}
While quantum theory exists for roughly 100 years, it still remains mysterious. Many people have pondered about quantum theory, asking for the true reality, hidden determinism,... without coming to a real conclusion. Many interpretations with identical predictions have appeared \cite{Interpretations}\cite{Interpretations2}, therefore it is not possible to find the ``right'' interpretation by experiment. Despite these fundamental conceptional issues, quantum theory is extremely successful in experiment and technology. As summarized by Mermin's famous sentence ``shut up and calculate''\cite{ShutUpAndCalculate}, a large part of the scientific community has turned away from the foundations of quantum physics, and prefers to apply quantum physics to concrete physical systems. An important reason for this decision is that the mathematical formalism provided by quantum theory can be used without understanding its origin. Another important reason is, that many attempts to think about quantum theory remain very vague or appear helpless. Explanations in the style of collapsing electron clouds, pilot waves, infinitely many realities,... often seem to miss the point, overcomplicating quantum theory without really solving its conceptional problems. Sometimes explanations motivated by classical intuition are even in contradiction to results from standard quantum theory, especially Bell's Theorem \cite{Bell}.\\ 
\\
Thus for a very long time, many people have lost interest in asking complicated questions about the foundations of quantum physics that seemingly cannot be answered anyway.\\
This has changed when a new approach reached the field of quantum foundations \cite{Hardy}: The rise of quantum information theory showed that it is fruitful to take an operational/information-theoretic approach to think about physics. This approach is inspired by both relativity and quantum field theory in a very general sense: Relativity gives the observer a fundamental role in reality, as indicated by the famous statement ``everything is relative''. Quantum field theory and particle physics care very much about producing results like cross sections and correlation functions that can be measured in experiment; especially hypothetical particles and fields without any interaction are excluded because their existence makes no difference  (it is this way in which such particles are not real).\\
The operational/information-theoretic approach also assumes the point of view ``real is what can be performed or measured''. Preparations, transformations and measurements are basic notions and are combined into a strict mathematical framework known as \textit{Generalized Probabilistic Theories} (GPTs). Historically, parts of the tools and the formalism as well as the idea to reaxiomatize quantum theory come from quantum logic\cite{Mackey}\cite{Logic}. The most important success of the GPT framework is that it allows to replace the vague attempts to derive quantum theory by postulates that are both mathematically precise and motivated by physical or information theoretic ideas. The idea to motivate postulates by considerations about computation and information\cite{Masanes}\cite{Barrett} comes from the close connection of the framework to quantum information theory.\\
While such questions are unusual for physicists, they provide many advantages: Information and state processing can be experimentally demonstrated in a laboratory, thus the corresponding postulates often can be tested. Furthermore a connection between information theory and physics is provided by the notion of entropy. The black hole information paradox and Landauer's principle indicate that this link might be very deep, as summarized by Landauer's statement ``\textit{information is physical}'' or Wheeler's ``\textit{it from bit}'' \cite{itfrombit}. Another hint can be found in quantum teleportation: If there was no need for classical communication, then the teleportation would happen faster than the speed of light (see e.g. \cite{BestBookEver}). This implies that information is a fundamental part of reality, not just some data irrelevant for the physical processes. Thus it might be that the unsatisfying attempts to explain quantum theory might be caused by asking the wrong questions, neglecting a possible fundamental importance of information and its relevance for observers, measurements and transformations.\\
\\
This thesis explores the connection between quantum theory, thermodynamics and quantum information theory. Two postulates, one of them motivated by physics and the other one by information theory, are used to provide strong structural properties for theories. For these theories, which include quantum and classical physics and many more (e.g. quaternionic quantum theory), an old thought experiment by von Neumann \cite{Neumann} is used to derive a von Neumann-like thermodynamic entropy, following a suggestion by J. Barrett\cite{Conference}. Many properties of this entropy are proven. The postulates, supplemented by two other postulates, already have been used to derive quantum theory (in finite dimensions)\cite{Postulates} and thus suggest a deep connection between quantum theory and thermodynamics. Another point of view is that these postulates summarize structural properties of quantum theory relevant for thermodynamics into two well-defined mathematical statements.\\ 
\\
The thesis is structured as follows:
At first, some basic definitions and results from convex geometry are explained in Chapter \ref{Section:ConvexGeometry}. These are necessary to introduce the mathematical framework called ``\textit{Generalized Probabilistic Theories (GPT)}'' in Chapter \ref{Section:GPT}. This framework allows to derive quantum theory by using exact mathematically well-defined postulates. Afterwards in Chapter \ref{Section:Postulates}, the main postulates of this thesis are introduced and motivated. Then in Chapter \ref{Section:Neumann}, the thought experiment by von Neumann which derives the von Neumann entropy is presented. We apply this thought experiment to our GPTs to find a corresponding notion of entropy. In Chapter \ref{Section:Petz}, a generalized version of the thought experiment is presented, which is more general but less elegant. From this experiment follows an important property of the entropy, whose validity is checked for several theories. In Chapter \ref{Section:ProjectiveMeasurements} we generalize the projective measurements known from quantum theory to our GPTs. They can be used to describe the semi-permeable membranes used in the thought experiments. Furthermore, it is possible to prove the second law of thermodynamics for these measurements, which is done in Chapter \ref{Section:2ndLaw}. Also the second law is checked in mixing procedures. As most of our considerations so far have been from a thermodynamic point of view, in Chapter \ref{Section:InformationTheory} we analyze the entropy from an information-theoretic/operational point of view. Here, measurement and decomposition entropies \cite{BarnumEntropy}\cite{WehnerEntropy} will be introduced and compared to the thermodynamic entropy. Furthermore, the same is done for the R\'{e}nyi entropies. We also investigate the question whether there is third order interference and relate it to the entropies. Then in Chapter \ref{Section:Spectrality}, an example for a state space is constructed to show, that our first postulate alone is not enough for a well-defined entropy. In Chapter \ref{Section:StateDiscrimination}, we will show that a principle called \textit{state discrimination principle} (introduced by Corsin Pfister) holds in all theories considered by us. At last, an outlook will be given in Chapter \ref{Section:Outlook}. 
%\newpage
%\thispagestyle{empty}
%\quad
%\newpage
\section{Convex geometry}
\label{Section:ConvexGeometry}
Our first goal is to introduce the framework called \textit{Generalized Probabilistic Theories} (GPT). This framework allows to discuss many probabilistic theories, including quantum theory and classical probability theory. The basic framework is very natural, relying only on very weak assumptions. These assumptions lead to convex sets and convex-linear maps. Therefore, basic notions and results from convex geometry are necessary to understand the GPT-framework.\\
As convex geometry usually is not part of the physics curriculum, we will provide a short introduction here. It is mainly based on \cite{Pfister} and \cite{Webster}. In case the reader needs more examples and applications, one should take a look at \cite{Pfister}.\\
\\ 
Convex sets contain all straight connection lines between points taken from these sets:
\vspace{1mm}
\begin{defi}
	Let $V$ be a real vector space. A subset $C \subset V$ is called \textbf{convex} if for all $v,w \in C$ and $p\in [0,1]$ also $pv+(1-p)w \in C$. 
\end{defi}
\vspace{1mm}
This definition directly extends to more states:
\vspace{1mm}
\begin{prop}
	Let $V$ be a real vector space, $C$ a convex subset. For any $p_1,...,p_n \ge 0$, $\sum_j p_j =1$, $v_1,...,v_n \in C$ we find $\sum_j p_j v_j \in C$.
\end{prop}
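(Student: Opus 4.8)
The plan is to prove this by induction on the number of points $n$. The base case $n=1$ is immediate, since the single constraint $p_1=1$ forces $\sum_j p_j v_j = v_1 \in C$, and the case $n=2$ is precisely the definition of convexity invoked above. So the real content lies in the inductive step.

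For the inductive step, I would assume the claim holds for any convex combination of $n-1$ points and then consider a convex combination $\sum_{j=1}^n p_j v_j$ of $n$ points. The natural idea is to split off the last term and regroup the first $n-1$ terms into a single convex combination, writing
\begin{equation}
\sum_{j=1}^n p_j v_j = (1-p_n)\,w + p_n v_n,
\qquad
w := \sum_{j=1}^{n-1} \frac{p_j}{1-p_n}\, v_j .
\end{equation}
The key observation is that the rescaled weights $\frac{p_j}{1-p_n}$ are nonnegative and sum to $\frac{\sum_{j=1}^{n-1} p_j}{1-p_n} = \frac{1-p_n}{1-p_n} = 1$, so $w$ is a convex combination of $v_1,\dots,v_{n-1}$. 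By the inductive hypothesis $w \in C$, and then the defining convexity property applied to the two points $w,v_n \in C$ with weight $p_n$ yields $(1-p_n)w + p_n v_n \in C$, which is exactly what we want.

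The one point requiring care — and the main obstacle, such as it is — is the edge case $p_n = 1$, for which the denominator $1-p_n$ vanishes and the regrouping above is undefined. I would dispose of this separately at the start of the inductive step: if $p_n = 1$, then since all $p_j \ge 0$ sum to $1$ we must have $p_1 = \dots = p_{n-1} = 0$, so $\sum_j p_j v_j = v_n \in C$ directly. With $p_n < 1$ we have $1-p_n > 0$ and the division is legitimate. This case distinction is entirely routine, so apart from handling it cleanly the proof is a straightforward induction with no genuine difficulty.
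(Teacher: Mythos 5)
Your proof is correct and follows essentially the same route as the paper: induct on $n$, split off the last term, and regroup the remaining terms into a single convex combination with rescaled weights summing to $1$. The only cosmetic difference is how the degenerate case is handled — the paper assumes wlog that all $p_j>0$, while you treat $p_n=1$ explicitly — but both resolve the same issue of a vanishing normalization.
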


\begin{proof}
	We prove by induction. For $n=1$ , there is nothing to show. So assume now the statement to be true for $n$. Wlog we assume $p_j > 0$ $\forall j$.\\
	We rewrite:
	\begin{equation}
		\sum_{j=1}^{n+1} p_j v_j =\sum_{k=1}^n p_k\left[ \sum_{j=1}^n \frac{p_j}{\sum_{a=1}^n p_a} v_j \right]+ p_{n+1} v_{n+1}
	\end{equation}
	By the induction hypothesis, we find:
	\begin{equation}	
	  \sum_{j=1}^n \frac{p_j}{\sum_{a=1}^n p_a} v_j \in C
	\end{equation}
	  By the definition of convex thus $\sum_{j=1}^{n+1} p_j v_j \in C$.
\end{proof}
\vspace{1mm}
However, not all points of convex sets are found in the interior of straight lines. The counter-examples will play an important role and thus deserve a name:
\vspace{1mm}
\begin{defi}
 	A point $x$ of a convex set $C$ is called an \textbf{extreme point} of $C$ if for all $p \in (0,1)$ and $v,w\in C$ with $x = pv+(1-p)w$ we find $v=w=x$. The set of extreme points is called $\rm{ext}(C)$.
\end{defi}
\vspace{1mm}
Examples for convex sets are cubes and balls. The extreme points of a cube are its corners, while all surface points of a ball are extreme points. More examples can be found in Figure \ref{Fig:ConvexSet}.

\begin{center}
\includegraphics[width= 0.7\textwidth]{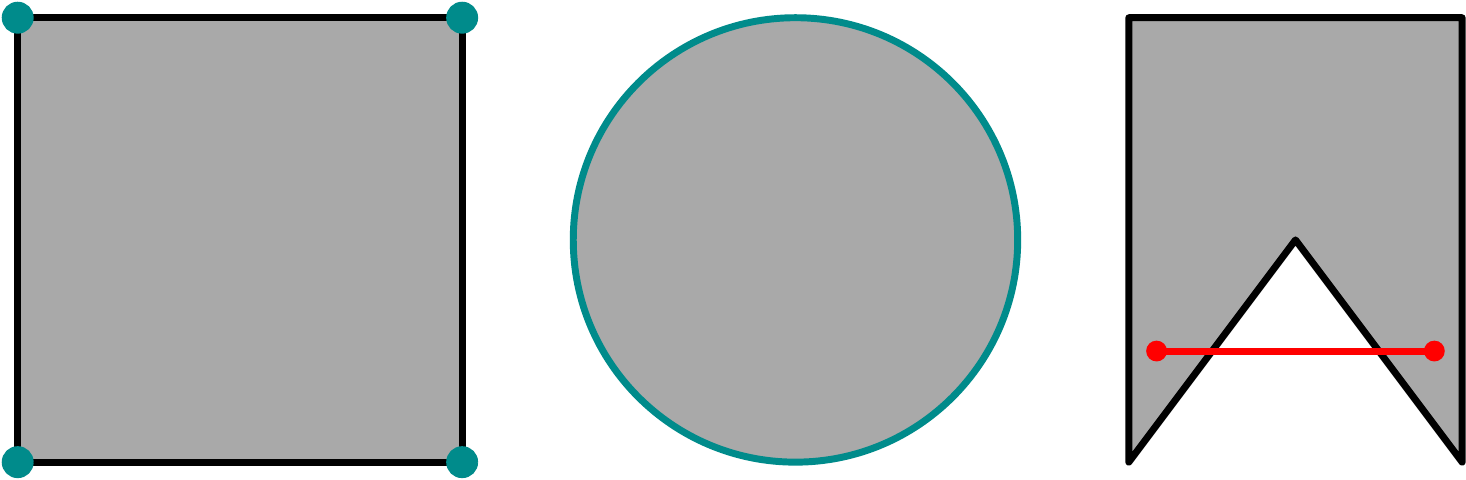}
\captionof{figure}{\textit{\small The square and the circle are examples for convex sets. For a square, only the corners are extreme points, while for the circle, all boundary points are extreme points. The third set is not convex: The red line connecting two points of the set is not fully contained in the set.}}
\label{Fig:ConvexSet}
\end{center}

\vspace{1mm}
Next we define faces. Faces are ``maximal'' planar surface parts, e.g. the sides of a cube (see also Figure \ref{Fig:Faces}):
\vspace{1mm}
\begin{defi}
	A nonempty convex subset $F$ of a convex set $C$ is called a \textbf{face} of $C$, if for all $v,w \in C$ and $p\in (0,1)$ with $pv+(1-p)w \in F$ we find $v\in F$, $w\in F$.  
\end{defi}

\vspace{1mm}

\begin{center}
\includegraphics[width= 0.5\textwidth]{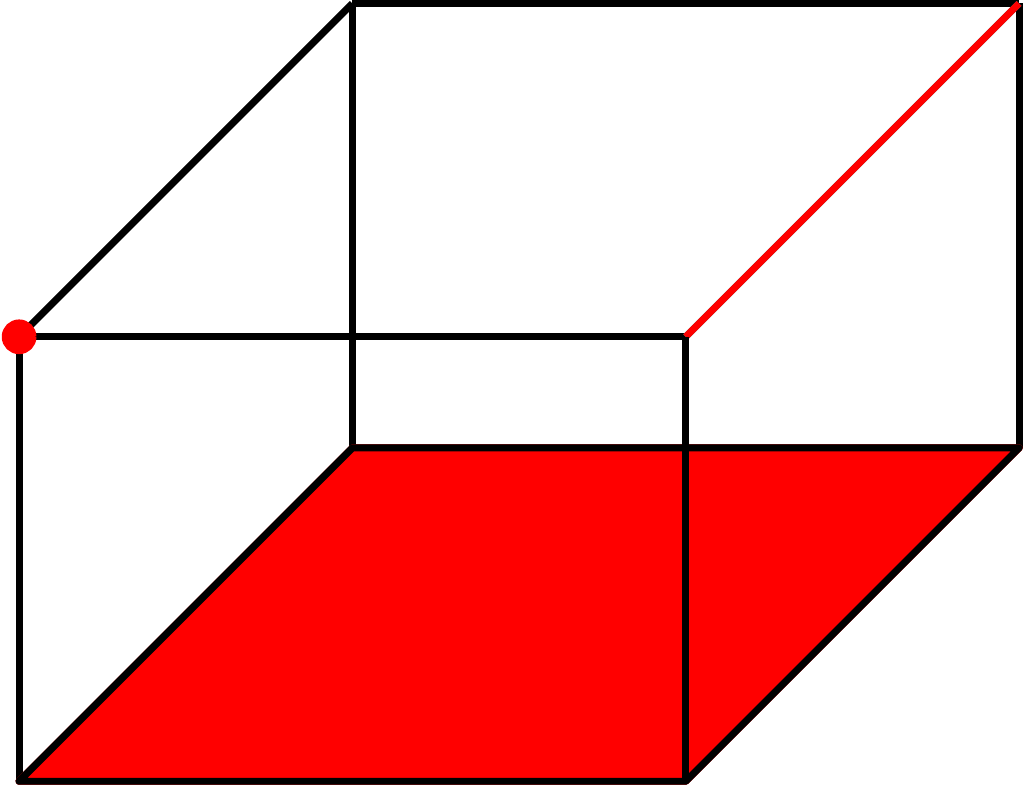}
\captionof{figure}{\textit{\small The faces of a cuboid are its corners, its edges, its rectangles and the cuboid itself.}}
\label{Fig:Faces}
\end{center}

\begin{lem}
	Let $x$ be an extreme point of a convex set $C$. Then $\{x\}$ is a face of $C$. 
\end{lem}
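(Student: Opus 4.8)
The plan is to verify directly that the singleton $\{x\}$ satisfies the three requirements in the definition of a face: it must be nonempty, convex, and it must absorb the endpoints of any open segment in $C$ that it intersects. The first two are immediate. Nonemptiness holds because $x \in \{x\}$, and convexity is trivial for a one-point set, since the only convex combination one can form from $x$ and $x$ is $x$ itself.

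The substantive step is the absorption condition. I would take arbitrary $v, w \in C$ and $p \in (0,1)$ with $pv + (1-p)w \in \{x\}$, and observe that membership in the singleton simply means $pv + (1-p)w = x$. This is exactly the hypothesis appearing in the definition of an extreme point, so I can apply that definition directly to conclude $v = w = x$. Hence both $v \in \{x\}$ and $w \in \{x\}$, which is precisely what the face condition demands.

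I do not expect any genuine obstacle here: the whole content of the lemma is the observation that the defining property of an extreme point is literally the face-absorption property specialized to the set $\{x\}$. The only thing to be careful about is matching quantifiers correctly, namely that the face definition quantifies over \emph{all} $v, w \in C$ and $p \in (0,1)$ and that the extreme-point definition supplies the needed implication for each such triple.
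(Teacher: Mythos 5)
Your proposal is correct and follows exactly the same route as the paper's proof: check nonemptiness and convexity of the singleton, then observe that the face-absorption condition for $\{x\}$ is literally the defining property of an extreme point. Nothing further is needed.
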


\begin{proof}
	$\{x\}$ is not empty, $x=px+(1-p)x$ i.e. $\{x\}$ is convex. $x=pv+(1-p)w$ for $p\in (0,1)$ implies $x=v=w$ because $x$ is extreme.
\end{proof}

\vspace{1mm}

\begin{lem}
	Let $F$ be a face of a convex set $C$. Let $v_1,...,v_n \in C$ and $p_1,...,p_n > 0$ with $\sum_{j=1}^n p_j=1$ be such that $\sum_{j=1}^n p_j v_j \in F$. Then for all $j$, $v_j\in F$
\end{lem}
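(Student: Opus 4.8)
The plan is to prove the statement by induction on $n$. The base case $n=1$ is immediate, since then $p_1=1$ and $v_1 = \sum_j p_j v_j \in F$. The case $n=2$ is precisely the defining property of a face: from $p_1 v_1 + p_2 v_2 \in F$ with $p_1,p_2 \in (0,1)$ one directly reads off $v_1,v_2 \in F$. These cases suggest the general strategy, namely to peel off one point at a time and reduce the size of the convex combination.

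For the inductive step I would assume the claim for $n$ and consider $n+1$ points. The key move is to isolate the last term and regroup,
\begin{equation}
	\sum_{j=1}^{n+1} p_j v_j = (1-p_{n+1})\, w + p_{n+1} v_{n+1}, \qquad w := \sum_{j=1}^n \frac{p_j}{1-p_{n+1}} v_j .
\end{equation}
By the proposition on convex combinations we have $w \in C$, because the coefficients $p_j/(1-p_{n+1})$ are nonnegative and sum to $\left(\sum_{j=1}^n p_j\right)/(1-p_{n+1}) = 1$. Since all weights are strictly positive and there are at least two of them, $p_{n+1} \in (0,1)$, so applying the face property to $(1-p_{n+1})w + p_{n+1} v_{n+1} \in F$ yields both $w \in F$ and $v_{n+1} \in F$.

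It then remains to observe that $w$ is itself a convex combination of $v_1,\ldots,v_n \in C$ with strictly positive coefficients that lies in $F$, so the induction hypothesis applies and gives $v_j \in F$ for all $j \le n$. Together with $v_{n+1} \in F$ this establishes the claim for $n+1$ and closes the induction.

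I expect the only delicate point to be the bookkeeping that guarantees $p_{n+1} \in (0,1)$ \emph{strictly}: the face definition requires a coefficient in the open interval, and this is exactly where positivity of the remaining weights enters, since $p_{n+1} = 1 - \sum_{j \le n} p_j < 1$ whenever $n \ge 1$. Once the combination is regrouped so that the induction hypothesis can be invoked on $w$, the rest is routine.
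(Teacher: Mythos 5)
Your proof is correct and follows essentially the same route as the paper: both arguments induct on the number of points, regroup the combination as $\left(\sum_{j\le n} p_j\right) w + p_{n+1} v_{n+1}$ with $w$ the renormalized convex combination of the first $n$ points, apply the face property to this two-term combination, and then invoke the induction hypothesis on $w$. The only cosmetic difference is that you start the induction at $n=1$ while the paper starts at $n=2$; the substance is identical.
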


\begin{proof}
	We prove by induction. For $n=2$, the statement follows immediately from the definition. So assume now that the statement is true for $n$ states, $n\ge 2$.\\
	We rewrite: 
	\begin{equation}
		\sum_{j=1}^{n+1} p_j v_j =\sum_{k=1}^n p_k\left[ \sum_{j=1}^n \frac{p_j}{\sum_{a=1}^n p_a} v_j \right]+ p_{n+1} v_{n+1}
	\end{equation}
	By definition of a face, 
	\begin{equation}		
		\sum_{j=1}^n \frac{p_j}{\sum_{a=1}^n p_a} v_j \in F
	\end{equation}	
	 and $v_{n+1} \in F$. By the induction hypothesis $v_j \in F$ for $1\le j \le n$.
\end{proof}

\vspace{1mm}

\begin{cor}
	Let $x$ be an extreme point of a convex set $C$ and $v_1,...,v_n \in C$ and $p_1,...,p_n > 0$ with $\sum_{j=1}^n p_j=1$ be such that  $\sum_{j=1}^{n} p_j v_j = x$. Then $v_j = x \ \forall j$.
\end{cor}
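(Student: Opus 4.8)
The plan is to derive this statement as an immediate consequence of the two preceding lemmas, with no fresh computation required. The hypotheses have been set up precisely so that the two lemmas chain together: the first converts the extreme-point property of $x$ into a statement about a face, and the second then transfers membership in that face down to each individual point $v_j$.

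First I would invoke the lemma stating that for an extreme point $x$ of $C$, the singleton $\{x\}$ is a face of $C$. This is the only step that uses extremality, and it reduces the problem to one purely about faces. Next I would observe that the hypothesis $\sum_{j=1}^{n} p_j v_j = x$ can be read as $\sum_{j=1}^{n} p_j v_j \in \{x\}$, since $x$ trivially lies in its own singleton. The weights satisfy $p_j > 0$ and $\sum_{j=1}^{n} p_j = 1$, which are exactly the conditions demanded by the lemma on convex combinations inside a face. Applying that lemma to the face $F = \{x\}$ yields $v_j \in \{x\}$ for every $j$, and since $\{x\}$ contains only $x$, this gives $v_j = x$ for all $j$, as claimed.

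I do not expect any genuine obstacle here; the only point deserving care is verifying that the strict positivity $p_j > 0$ assumed in the present statement matches the strict positivity required by the face lemma. This hypothesis is essential, since a point entering a convex combination with zero weight need not lie in the face, so the chaining would break if some $p_j$ were allowed to vanish. As strict positivity is assumed throughout, the two lemmas combine directly and the corollary follows.
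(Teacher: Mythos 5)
Your proposal is correct and is exactly the argument the paper intends: the corollary is stated without proof precisely because it follows by chaining the lemma that $\{x\}$ is a face with the lemma on convex combinations with strictly positive weights lying in a face. Your remark about the necessity of $p_j>0$ is also apt.
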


\vspace{1mm}

\begin{defi}
	Let $M$ be a subset of a real vector space $V$. The \textbf{convex hull} of $M$, $\rm{conv}(M)$, and the \textbf{affine hull} of $M$, $\rm{aff}(M)$,  are defined as 
	\begin{align}
		\rm{conv}(M) := \left\{\sum_{j=1}^n p_j w_j \ | \ n\in \mathbb N,v_j \in M, \ p_j \ge 0 \text{ with }\sum_{j=1}^n p_j=1\right\}\\
		\rm{aff}(M) := \left\{\sum_{j=1}^n p_j w_j \ | \ n\in \mathbb N,v_j \in M, \ p_j \in \mathbb R \text{ with } \sum_{j=1}^n p_j=1\right\}
	\end{align}
	Terms of the form $\sum_{j=1}^n p_j w_j$ with $p_j \ge 0 \text{ and }\sum_{j=1}^n p_j=1$ are called \textbf{convex (linear) combinations} . They are called \textbf{affine (linear) combinations} if $p_j \in \mathbb R$ with $\sum_{j=1}^n p_j=1$.
\end{defi}

\vspace{1mm}

\begin{prop}
	If $F$ is a face of a convex set $C$, then $F=\rm{aff}(F) \cap C$.
\end{prop}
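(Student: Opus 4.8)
The plan is to prove the two inclusions of the claimed equality separately. The inclusion $F \subseteq \rm{aff}(F) \cap C$ is immediate: a face of $C$ is by definition a subset of $C$, and every point of $F$ is trivially the affine combination of itself (with the single coefficient $1$), so $F \subseteq \rm{aff}(F)$ as well. All the content therefore lies in the reverse inclusion $\rm{aff}(F) \cap C \subseteq F$, which is where I would concentrate the effort.

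To establish it, I would take an arbitrary $x \in \rm{aff}(F) \cap C$ and write it as an affine combination $x = \sum_{j=1}^n \lambda_j w_j$ with $w_j \in F$, $\lambda_j \in \mathbb{R}$ and $\sum_{j=1}^n \lambda_j = 1$. If all $\lambda_j \ge 0$, this is already a convex combination of points of $F$, and since $F$ is convex (being a face) one immediately gets $x \in F$. The genuine obstacle is the possible presence of \emph{negative} coefficients: in that case $x$ is not visibly a convex combination of face points, so neither the convexity of $F$ nor the earlier lemma on faces can be applied directly, and some rearrangement is forced.

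The idea to overcome this is to split the index set into its positive part $P = \{ j : \lambda_j > 0 \}$ and its negative part $N = \{ j : \lambda_j < 0 \}$, and to set $a = \sum_{j \in P} \lambda_j$ and $b = -\sum_{j \in N} \lambda_j \ge 0$, so that $a - b = 1$, i.e. $a = 1 + b$. Assuming a negative coefficient is present we have $b > 0$, and $a \ge 1 > 0$ guarantees that $P$ is nonempty. Normalizing within each group, I would form the convex combinations $u = \sum_{j \in P} (\lambda_j / a)\, w_j$ and $v = \sum_{j \in N} (|\lambda_j| / b)\, w_j$, both of which lie in $F$ by convexity of $F$. Then $x = a u - b v = (1+b) u - b v$, which I would rearrange into $u = \frac{1}{1+b}\, x + \frac{b}{1+b}\, v$.

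This last equation exhibits the face point $u$ as a genuine convex combination, with both weights lying in $(0,1)$ because $b > 0$, of the two elements $x, v \in C$ (recall $v \in F \subseteq C$). The defining property of a face then forces $x \in F$, which completes the reverse inclusion and hence the proof. The only bookkeeping to keep straight is the case split: $b = 0$ is exactly the all-nonnegative case handled at the outset, while $b > 0$ makes $N$ nonempty and the convex-combination argument legitimate, so the two cases together cover every $x \in \rm{aff}(F) \cap C$.
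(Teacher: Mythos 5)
Your proposal is correct and follows essentially the same route as the paper: split the affine combination into its positive and negative parts, move the negative terms to the other side, normalize, and apply the face property to the resulting convex combination. The only cosmetic difference is that you first collapse each group into a single point ($u$ and $v$) so that the two-point definition of a face suffices, whereas the paper keeps the terms separate and invokes its multi-point face lemma.
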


\begin{proof}
	Trivially, $F\subset \rm{aff}(F)$ and $F \subset C$, thus $F\subset \rm{aff}(F) \cap C$.
	Much harder to show is $\rm{aff}(F) \cap C \subset F$:\\
	So let $v :=\sum_j p_j w_j \in \rm{aff}(F) \cap C$. We relabel such that for $j\le m$, $p_j \ge 0$ and for $j > m$, $p_j < 0$. We assume $m < n$, otherwise $v \in F$ as $v$ is given by a convex combination in $F$. Thus:
	\begin{equation}
		v +\sum_{j=m+1}^n |p_j|w_j = \sum_{j=1}^m |p_j| w_j
	\end{equation}
	Because of $\sum_{j=1}^n p_j = 1$, there is at least one $p_j > 0$. Therefore:
	\begin{equation} \label{eq:FaceMultipleStatesProof}
		\frac{1}{\sum_{j=1}^m |p_j|}v +\sum_{j=m+1}^n \frac{|p_j|}{\sum_{k=1}^m |p_k|}w_j = \sum_{j=1}^m \frac{|p_j|}{\sum_{k=1}^m |p_k|} w_j
	\end{equation}
	The expression on the right-hand side is a convex combination of states in F. Thus by convexity of $F$:
	\begin{equation}	 
	 \sum_{j=1}^m \frac{|p_j|}{\sum_{k=1}^m |p_k|} w_j \in F
	\end{equation}	 
	 Using 
	\begin{align}
	\frac{1}{\sum_{j=1}^m |p_j|}+& \sum_{j=m+1}^n \frac{|p_j|}{\sum_{k=1}^m |p_k|} = \frac{1+\sum_{k=m+1}^n |p_k|}{\sum_{k=1}^m |p_k|}=\frac{\sum_{k=1}^n p_k+\sum_{k=m+1}^n |p_k|}{\sum_{k=1}^m |p_k|} =1
	\end{align} we see that also the left-hand side of Equation (\ref{eq:FaceMultipleStatesProof}) is a convex combination of elements in $C$. As $F$ is a face and $v\in C$, $v \in F$.
\end{proof}

\vspace{1mm}

\begin{defi}
	For a convex set $C$, let $M \subset C$. The \textbf{face generated by} $M$ is defined as the \textbf{minimal face} containing $M$:
	\begin{equation}
		F := \bigcap_{H \subset C \text{ face, } M \subset H} H
	\end{equation}
\end{defi}

\vspace{1mm}

\begin{prop}
	For a convex set $C$, let $M \subset C$. The face $F$ generated by $M$ is indeed a face. If $G$ is another face of $C$ containing $M$, then $F\subset G$.
\end{prop}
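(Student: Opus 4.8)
The plan is to check the three defining properties of a face for $F$ directly from its definition as an intersection, and then read off the minimality statement. First I would note that the family of faces $H \subset C$ with $M \subset H$ over which the intersection is taken is nonempty: the set $C$ is itself a face of $C$, since it is nonempty and convex and the face condition holds trivially (any $v,w \in C$ already lie in $C$), and of course $M \subset C$. Thus $F$ is a genuine intersection over a nonempty family of faces. Moreover, since every $H$ in this family contains $M$, we have $M \subset \bigcap_H H = F$, so (assuming $M \neq \emptyset$) the set $F$ is nonempty, which is the first requirement for a face.

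Next I would verify convexity and the face property. Convexity is immediate: $F$ is an intersection of the convex sets $H$, and any intersection of convex sets is convex. For the face property, suppose $v,w \in C$ and $p \in (0,1)$ satisfy $pv+(1-p)w \in F$. By definition of $F$, this point lies in every face $H$ of $C$ containing $M$; since each such $H$ is a face, it follows that $v \in H$ and $w \in H$ for every such $H$. Intersecting over all of them gives $v \in F$ and $w \in F$, as required. This completes the verification that $F$ is a face.

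The minimality claim then follows with no extra work: if $G$ is any face of $C$ with $M \subset G$, then $G$ is one of the sets appearing in the intersection defining $F$, and an intersection is contained in each of its members, so $F \subset G$.

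I do not expect a genuine obstacle here, as the argument is essentially bookkeeping with the definitions of face and of intersection. The only point that deserves a moment of care is the nonemptiness clause in the definition of a face; this is precisely why I first confirm that $C$ is a face containing $M$ (so that the intersection is taken over a nonempty family) and that $M \subset F$, rather than taking the existence and nonemptiness of $F$ for granted.
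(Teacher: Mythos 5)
Your proof is correct and follows essentially the same route as the paper: nonemptiness via $C$ being a face containing $M$, convexity from the intersection of convex sets, the face property by pushing the condition into each $H$ and intersecting, and minimality read off from the definition. Your explicit caveat that $M \neq \emptyset$ is needed for $F$ to be nonempty is in fact slightly more careful than the paper's own wording.
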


\begin{proof}
	The last statement is clear by definition of $F$ as intersection of all faces containing $M$.\\
	So it remains to show, that $F$ is indeed a face:\\
	$F$ is not empty because $C$ itself is a face which contains $M$.
	For $\{v_k\} \subset F$, $\{p_k\}$ a probability distribution, $\{v_k\}$ is also found in all faces containing $M$ by definition of $F$. As all faces are convex, all these faces also contain $\sum_k p_k v_k$. By definition of $F$ as an intersection, also $\sum_k p_k v_k \in F$. Thus $F$ is convex.\\
	Now let $w = pv_1+(1-p)v_2 \in F$ with $v_1,v_2 \in C$ and $0<p<1$. Every face containing $M$ also contains $w$ and thus also $v_1,v_2$ because they are faces. By definition of $F$ as intersection of all these faces, also $v_1,v_2\in F$.
\end{proof}

\vspace{1mm}
The importance of the extreme points is that the extreme points generate (compact) convex sets, as shown by the famous Krein-Milman theorem (see e.g. \cite{Werner} Theorem VIII.4.4):
\vspace{1mm}

\begin{theorem}[Krein-Milman]
	Let $V$ be a locally convex topological vector space (Hausdorff) and $C$ a compact convex subset of $V$. Then \begin{equation} C = \overline{\rm{conv}(\rm{ext}(C))}\end{equation}
\end{theorem}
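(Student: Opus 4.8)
The plan is to prove this by the classical route for the Krein--Milman theorem, combining the Hahn--Banach separation theorem with a Zorn's lemma argument that produces extreme points. Throughout, the structural features I will exploit are that $V$ is locally convex and Hausdorff (so that continuous linear functionals separate points from closed convex sets) and that $C$ is compact (so that such functionals attain their suprema on $C$, and so that nested families of nonempty closed subsets have nonempty intersection).

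First I would establish that every nonempty compact convex subset of $V$ has an extreme point. To this end, call a nonempty closed subset $F \subseteq C$ \emph{extremal} if $pv + (1-p)w \in F$ with $v,w \in C$ and $p \in (0,1)$ forces $v,w \in F$; this is the closed analogue of the face notion from the earlier definitions. I order the family of nonempty closed extremal subsets of $C$ by reverse inclusion. Any chain has an upper bound, namely its intersection, which is nonempty by the finite intersection property (compactness) and is again closed and extremal. Zorn's lemma then yields a minimal element $F_0$. I would argue that $F_0$ is a single point: if it contained two distinct points, a continuous linear functional $f$ separating them (available by Hahn--Banach) attains its maximum on the compact set $F_0$ along a proper, nonempty, closed extremal subset, contradicting minimality. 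Hence $F_0 = \{x\}$ with $x \in \mathrm{ext}(C)$.

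Next I would prove the equality itself. Write $K := \overline{\mathrm{conv}(\mathrm{ext}(C))}$. Since $C$ is closed and convex and contains $\mathrm{ext}(C)$, the inclusion $K \subseteq C$ is immediate. For the reverse inclusion I assume, for contradiction, that there is some $x_0 \in C \setminus K$. As $K$ is closed and convex and $\{x_0\}$ is compact, the separating-hyperplane form of Hahn--Banach provides a continuous linear functional $f$ and a scalar $\alpha$ with $f(y) \le \alpha < f(x_0)$ for all $y \in K$. Let $\beta := \max_{x \in C} f(x)$, attained by compactness, and set $C_f := \{x \in C : f(x) = \beta\}$. This $C_f$ is a nonempty compact convex extremal subset of $C$, so by the first part it contains an extreme point $e$; since an extreme point of an extremal subset of $C$ is itself an extreme point of $C$, we get $e \in \mathrm{ext}(C) \subseteq K$. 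But then $\beta = f(e) \le \alpha < f(x_0) \le \beta$, a contradiction. Therefore $K = C$.

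I expect the main obstacle to be the existence-of-extreme-points step rather than the final separation argument. The Zorn's lemma setup must be checked carefully: one needs nonemptiness of chain intersections (where compactness is indispensable) and the verification that the subset of a minimal extremal set on which $f$ is maximized is again extremal in $C$. The concluding separation step, by contrast, is a routine application of local convexity once extreme points are known to populate faces of the form $C_f$.
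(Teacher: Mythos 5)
The paper does not prove this statement at all: Krein--Milman is quoted as a known result with a citation to Werner's functional analysis textbook (Theorem VIII.4.4), so there is no in-paper argument to compare against. Your proposal is the standard classical proof -- Zorn's lemma on nonempty closed extremal subsets to produce extreme points, then Hahn--Banach separation of a hypothetical $x_0 \in C \setminus \overline{\rm{conv}(\rm{ext}(C))}$ to reach a contradiction via the extremal set $C_f$ -- and it is correct, including the two verification points you flag (nonemptiness of chain intersections via compactness, and the fact that an extremal subset of an extremal subset is extremal in $C$, which is what lets the extreme point of $C_f$ be an extreme point of $C$).
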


\vspace{1mm}
In finite dimension, there is a simpler version by Minkowski (see e.g. \cite{Webster} Theorem 2.6.16, \cite{Pfister}):
\vspace{1mm}

\begin{theorem}
	Let $V$ be a real finite-dimensional vector space and $C$ a compact convex subset of $V$. Then \begin{equation} C = \rm{conv}(\rm{ext}(C)) \end{equation}
\end{theorem}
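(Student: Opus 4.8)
The plan is to prove the two inclusions of the claimed set equality separately. The inclusion $\mathrm{conv}(\mathrm{ext}(C)) \subseteq C$ is immediate: since $\mathrm{ext}(C) \subseteq C$ and $C$ is convex, the earlier Proposition on convex combinations guarantees that every convex combination of extreme points again lies in $C$. All the work therefore goes into the reverse inclusion $C \subseteq \mathrm{conv}(\mathrm{ext}(C))$, which I would establish by induction on $d := \dim(\mathrm{aff}(C))$, treating $\mathrm{aff}(C)$ as the ambient $d$-dimensional affine space throughout. In the base case $d = 0$ the set $C$ is a single point, which is its own unique extreme point, so the claim is trivial.

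For the inductive step I fix $x \in C$ and distinguish two cases according to whether $x$ lies on the relative boundary of $C$ or in its relative interior (boundary and interior taken inside $\mathrm{aff}(C)$). If $x$ is a relative boundary point, I would invoke the supporting hyperplane theorem to produce a hyperplane $H$ of $\mathrm{aff}(C)$ through $x$ such that $C$ lies in one of the closed half-spaces bounded by $H$ and $C \not\subseteq H$. Then $F := C \cap H$ is a compact convex set with $\dim(\mathrm{aff}(F)) \le d-1$, and one checks directly from the definition that $F$ is a \emph{face} of $C$. Applying the induction hypothesis to $F$ writes $x$ as a convex combination of points in $\mathrm{ext}(F)$; the key subsidiary fact is that $\mathrm{ext}(F) \subseteq \mathrm{ext}(C)$, which then gives $x \in \mathrm{conv}(\mathrm{ext}(C))$.

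If instead $x$ lies in the relative interior of $C$, I would choose any line $\ell$ through $x$ contained in $\mathrm{aff}(C)$. By compactness of $C$ the intersection $\ell \cap C$ is a closed segment whose two endpoints $x_1, x_2$ are relative boundary points of $C$. Writing $x = p\,x_1 + (1-p)\,x_2$ for some $p \in [0,1]$ and applying the already-settled boundary case to each of $x_1$ and $x_2$ expresses each endpoint, and hence $x$ itself, as a convex combination of extreme points of $C$. This closes the induction.

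The main obstacle is the geometric input of the boundary case, namely the existence of a supporting hyperplane at each relative boundary point; this is precisely where compactness and finite-dimensionality enter, and it is what lets us dispense with the closure that appears in the general Krein--Milman theorem. Alongside it I must verify the two structural claims used above. That $C \cap H$ is genuinely a face follows from the defining property of a face together with the supporting inequality: if a proper convex combination of $v,w \in C$ lands in $H$, then, since both points satisfy the half-space inequality, each must already lie on $H$. That every extreme point of a face $F$ is extreme in $C$ follows by combining the definitions: if $x \in \mathrm{ext}(F)$ and $x = pv + (1-p)w$ with $v,w \in C$ and $p \in (0,1)$, the face property forces $v,w \in F$, and extremality of $x$ within $F$ then gives $v = w = x$.
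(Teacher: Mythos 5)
Your proposal is correct. Note, however, that the paper does not actually prove this theorem: it states Minkowski's finite-dimensional version of Krein--Milman and defers entirely to the cited references (Webster, Theorem 2.6.16, and Pfister), so there is no in-paper argument to compare against. What you have written is the standard induction-on-dimension proof that those references give, and all the structural ingredients check out: the easy inclusion via the paper's proposition on convex combinations of several points, the verification that $C\cap H$ is a face (which mirrors the paper's Lemma~2.5-style reasoning), and the transitivity fact $\mathrm{ext}(F)\subseteq\mathrm{ext}(C)$ for a face $F$. The interior case is also sound: for $d\ge 1$ a point in the relative interior lies strictly inside the compact segment $\ell\cap C$, whose endpoints are relative boundary points, so the boundary case applies to them.

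The one ingredient you import without proof is the existence, at each relative boundary point, of a supporting hyperplane of $\mathrm{aff}(C)$ that does not contain all of $C$ (so that $\dim(\mathrm{aff}(C\cap H))\le d-1$ and the induction genuinely descends). You correctly identify this as the geometric input where compactness and finite dimension enter; since it is itself a nontrivial separation-type theorem, a fully self-contained write-up would either prove it or cite it explicitly, exactly as the paper cites Webster for the whole statement. With that caveat acknowledged, the argument is complete and closes a gap the paper leaves to the literature.
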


\vspace{1mm}
Note that the finite-dimensional version is much simpler, all the topological difficulties are gone. Later on, we will restrict ourselves to finite dimension in order to not obscure the physics by topological technicalities. 

Now we consider maps that preserve the convexity structure.
\vspace{1mm}

\begin{defi}
	A map $f:V\rightarrow W$ between finite-dimensional vector spaces is called \textbf{convex-linear} if for all $p\in [0,1]$, $x,y \in V$ we have $f(px+(1-p)y) = pf(x)+(1-p)f(y)$.\\
	It is called \textbf{affine-linear} if $f(px+(1-p)y) = pf(x)+(1-p)f(y)$ for all $p\in \mathbb R$, $x,y \in V$.
\end{defi}

\vspace{1mm}

\begin{prop}
	Every convex-linear map is affine-linear.
\end{prop}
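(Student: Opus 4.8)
The plan is to start from the defining identity for a convex-linear map, which already gives $f(px+(1-p)y)=pf(x)+(1-p)f(y)$ for every $p\in[0,1]$, and to extend it to all $p\in\mathbb{R}$. The only new cases are $p>1$ and $p<0$, and the idea in both is the same: I would rewrite one of the two endpoints $x$, $y$ as a genuine convex combination of the extrapolated point $z:=px+(1-p)y$ and the other endpoint, then apply the convex-linear identity (which is now legitimate because the coefficients lie in $[0,1]$), and finally solve the resulting linear equation for $f(z)$.

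Concretely, I would first treat $p>1$. Since then $\tfrac{1}{p}\in(0,1)$, a short computation shows
\begin{equation}
x=\frac{1}{p}\,z+\Bigl(1-\frac{1}{p}\Bigr)y,\qquad z:=px+(1-p)y,
\end{equation}
which is a convex combination. Applying convex-linearity gives $f(x)=\tfrac{1}{p}f(z)+(1-\tfrac{1}{p})f(y)$, and solving for $f(z)$ yields exactly $f(z)=pf(x)+(1-p)f(y)$, as desired.

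Next I would dispose of $p<0$ by symmetry: setting $p':=1-p$ one has $p'>1$ and $z=px+(1-p)y=(1-p')x+p'y$, so the previous case applies after interchanging the roles of $x$ and $y$, again producing $f(z)=pf(x)+(1-p)f(y)$. Together with the given range $[0,1]$ this covers all real $p$, establishing affine-linearity.

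I do not expect any serious obstacle here; the entire content is the algebraic observation that an extrapolation can be turned into an interpolation by treating $z$ as one of the data points. The only points requiring care are checking that the coefficients used (namely $\tfrac{1}{p}$ for $p>1$, and $\tfrac{1}{1-p}$ implicitly in the symmetry step for $p<0$) genuinely lie in $(0,1)$, so that the hypothesis of convex-linearity is indeed applicable, and that the denominators $p$ and $1-p$ do not vanish, which is guaranteed in the respective regimes.
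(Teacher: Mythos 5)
Your proposal is correct and follows essentially the same route as the paper: for $p>1$ you rewrite $x$ as the convex combination $\tfrac{1}{p}z+(1-\tfrac{1}{p})y$ of $z=px+(1-p)y$ and $y$, apply convex-linearity, and solve for $f(z)$; the case $p<0$ is then handled by swapping the roles of $x$ and $y$, exactly as in the paper. No gaps.
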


\begin{proof}
	We have to check $f(px+(1-p)y) = pf(x)+(1-p)f(y)$ for $x,y \in V$, $p \in \mathbb R$. For $p \in [0,1]$ this is clear by convexity. Now assume $p > 1$:\\
	Then we have to show $f(x) = \frac{1}{p} f(px+(1-p)y) + \frac{1-p}{-p}f(y)$. As $p > 1$ we find $1-p < 0$. Especially, $0 < \frac{1}{p}<1$ and $0< \frac{1-p}{-p} < 1$. By convexity
	\begin{equation}
		\frac{1}{p} f(px+(1-p)y) + \frac{1-p}{-p}f(y) = f\left(x +\frac{1-p}{p}y + \frac{1-p}{-p}y\right) = f(x)
	\end{equation}
	as we had to show. The case $p < 0$ is equivalent to $1-p > 1$ and thus can be proved like the case before (by exchanging the roles of $x$ and $y$).
\end{proof}

\vspace{1mm}

\begin{prop}
	Let $f: V \rightarrow W$ be a convex- or affine-linear map. Then for any $x_1,...,x_n \in V$, $p_1,...,p_n \in \mathbb R$ with $\sum_{j=1}^n p_j = 1$ we have: 
	\begin{equation}
		f\left(\sum_{j=1}^n p_j x_j\right) = \sum_{j=1}^n p_j f(x_j)
	\end{equation}
\end{prop}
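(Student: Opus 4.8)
The plan is to first invoke the immediately preceding proposition, which tells us that every convex-linear map is already affine-linear; hence it suffices to prove the claim under the assumption that $f$ is affine-linear, i.e. that $f(px+(1-p)y)=pf(x)+(1-p)f(y)$ holds for \emph{all} $p\in\mathbb R$ and all $x,y\in V$. I would then argue by induction on $n$, exactly in the spirit of the earlier inductions in this section (the proposition on convex combinations and the lemma on faces). The base case $n=1$ forces $p_1=1$ and reads $f(x_1)=f(x_1)$, which is trivial; the case $n=2$ is nothing but the defining property of affine-linearity.

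For the inductive step I would assume the identity for $n$ terms and establish it for $n+1$. The natural move is to peel off one summand and collect the remaining $n$ terms into a single normalized affine combination, so that the whole expression becomes a two-point affine combination to which the definition applies. Concretely, if I single out an index $m$ and set $P:=\sum_{j\neq m}p_j$, then $P+p_m=1$ and
\begin{equation}
	\sum_{j=1}^{n+1}p_jx_j = P\left[\sum_{j\neq m}\frac{p_j}{P}x_j\right]+p_m x_m,
\end{equation}
where the bracketed expression is an affine combination of $n$ vectors with coefficients summing to $1$. Applying affine-linearity with parameter $P$, then the induction hypothesis to the inner combination, and finally multiplying back through by $P$ collapses everything to $\sum_{j=1}^{n+1}p_jf(x_j)$.

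The step that needs genuine care --- and the point where the present statement is harder than the analogous convex-set results proved earlier --- is the division by $P$. In those earlier proofs the weights were nonnegative and one could assume them strictly positive, so the partial sum was automatically positive; here the $p_j$ are arbitrary reals and $P$ may well vanish. The fix is to choose the peeled index $m$ wisely: since $\sum_{j=1}^{n+1}p_j=1$ one has $P=1-p_m$, so I only need an index with $p_m\neq 1$. Such an index always exists once $n+1\geq 2$, for if every weight equalled $1$ the sum would be $n+1>1$, a contradiction. With $P=1-p_m\neq 0$ the construction above goes through verbatim, and the induction closes. Note that affine-linearity, rather than mere convex-linearity, is exactly what lets me apply the defining identity with parameter $P$ even when $P<0$ or $P>1$, which is unavoidable once signed weights are allowed.
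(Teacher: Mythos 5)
Your proposal is correct and follows essentially the same route as the paper: reduce to the affine-linear case via the preceding proposition, then induct by peeling off one term and normalizing the remaining $n$ coefficients, taking care to choose the peeled index so that the partial sum $\sum_{j\neq m}p_j = 1-p_m$ is nonzero. The paper handles this last point by a relabelling assumption ("label the $p_j$ such that $\sum_{j=1}^n p_j\neq 0$"); you justify its possibility slightly more explicitly, but the argument is the same.
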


\begin{proof}
	We can assume that $f$ is affine-linear. We use a proof by induction. For $n=2$, the statement is clear by affine-linearity. Now assume that the statement is true for $n \ge 2$:\\
	Let the $p_j$ be labelled such that $\sum_{j=1}^n p_j \ne 0$.We rewrite: 
	\begin{align}
	f\left(\sum_{j=1}^{n+1} p_j v_j\right) &=f\left(\sum_{k=1}^n p_k\left[ \sum_{j=1}^n \frac{p_j}{\sum_{a=1}^n p_a} v_j \right]+ p_{n+1} v_{n+1}\right)\\
		&= \sum_{k=1}^n p_k f\left( \sum_{j=1}^n \frac{p_j}{\sum_{a=1}^n p_a} v_j \right)+ p_{n+1} f(v_{n+1}) \\
		&= \sum_{k=1}^n p_k \sum_{j=1}^n \frac{p_j}{\sum_{a=1}^n p_a} f(v_j)+ p_{n+1} f(v_{n+1})
	\end{align}
	where we first used affine-linearity and then the induction hypothesis.
\end{proof}

\vspace{1mm}

\begin{prop}
	A map $f: V \rightarrow W$ between finite-dimensional real vector spaces is affine- (or convex-) linear exactly if it is of the form
	\begin{equation}
		f(\cdot) = L(\cdot)+y
	\end{equation} for some linear map $L:V\rightarrow W$ and some $y\in W$.
	\label{Proposition:LinearConvexLinear}
\end{prop}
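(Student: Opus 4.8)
The plan is to prove the two directions separately, with the forward (``if'') direction being a quick verification and the reverse (``only if'') direction carrying the real content. By the preceding propositions I may assume throughout that ``affine-linear'' and ``convex-linear'' coincide, and I may freely use that an affine-linear $f$ satisfies $f\left(\sum_j p_j x_j\right) = \sum_j p_j f(x_j)$ for \emph{any} real coefficients with $\sum_j p_j = 1$.

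For the easy direction, suppose $f(\cdot) = L(\cdot) + y$ with $L$ linear. Then for any $p \in \mathbb{R}$ and $x,z \in V$ I would simply compute
\begin{equation}
	f(px+(1-p)z) = pL(x)+(1-p)L(z)+y = p(L(x)+y)+(1-p)(L(z)+y),
\end{equation}
using linearity of $L$ and the fact that $py+(1-p)y = y$; the right-hand side is exactly $pf(x)+(1-p)f(z)$, so $f$ is affine-linear.

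For the substantive direction, assume $f$ is affine-linear. The natural candidate is to set $y := f(0)$ and define $L(x) := f(x) - f(0)$, then show $L$ is linear. Homogeneity is straightforward: since $\lambda x = \lambda x + (1-\lambda)\cdot 0$ is an affine combination, affine-linearity gives $f(\lambda x) = \lambda f(x) + (1-\lambda)f(0)$, whence $L(\lambda x) = \lambda(f(x)-f(0)) = \lambda L(x)$. Additivity is the step I expect to be the main obstacle, because the sum $x+z$ is \emph{not} itself an affine combination of $x$ and $z$ (the coefficients $1,1$ do not sum to one). The key trick is to rescale and write
\begin{equation}
	x+z = \tfrac{1}{2}(2x) + \tfrac{1}{2}(2z),
\end{equation}
which \emph{is} an affine combination. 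Applying affine-linearity together with the homogeneity relation $f(2x) = 2f(x)-f(0)$ then yields $f(x+z) = f(x)+f(z)-f(0)$, from which $L(x+z) = L(x)+L(z)$ follows immediately.

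Having established that $L$ is both additive and homogeneous, I conclude that $L$ is linear, and by construction $f(x) = L(x)+y$ with $y = f(0)$, completing the proof. I would remark that finite-dimensionality plays no essential role here; the argument rests entirely on the affine-combination identity and the rescaling device used for additivity.
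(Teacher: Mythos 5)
Your proof is correct, but it takes a genuinely different route from the paper. The paper's argument is an explicitly ``sloppy'' analytic sketch: it computes the partial derivatives $\frac{\partial f}{\partial x_j}$, rewrites the difference quotient as an affine combination to show these derivatives are constant, and concludes ``$f=$ linear map $+$ constant,'' deferring the details to Webster's Theorem 1.5.2. You instead give a purely algebraic proof: set $y:=f(0)$ and $L(x):=f(x)-f(0)$, get homogeneity from the affine combination $\lambda x = \lambda x + (1-\lambda)\cdot 0$, and get additivity from the rescaling $x+z = \tfrac{1}{2}(2x)+\tfrac{1}{2}(2z)$ combined with $f(2x)=2f(x)-f(0)$. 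Both steps check out, and your use of the preceding propositions (convex-linear $\Rightarrow$ affine-linear, and the multi-term affine identity) is legitimate. What your approach buys is twofold: it is fully rigorous where the paper's is only a sketch (the paper implicitly assumes differentiability of $f$, which is not given), and, as you note, it is independent of finite-dimensionality and of any choice of coordinates, whereas the paper's computation with the basis vectors $e_j$ is tied to $\mathbb{R}^n$. The paper's sketch, for its part, conveys the geometric intuition (``constant slope'') more immediately. One could make your write-up marginally tighter by observing that additivity plus homogeneity over $\mathbb{R}$ is exactly the definition of linearity, so no further argument is needed at the end — but that is how you conclude, so nothing is missing.
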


\begin{proof}
Here, we only provide a sloppy proof sketch:
	\begin{align}
		\frac{\partial f(x)}{\partial x_j} &= \lim_{h \rightarrow 0} \frac{1}{h} \big[ f(x+he_j)-f(x) \big] = \lim_{h \rightarrow 0} \frac{2}{h} \Big[ \frac{1}{2}f(x+he_j)+1\cdot f(0)-\frac{1}{2}f(x)- 1\cdot f(0)\Big]\\
			&= \lim_{h \rightarrow 0} \frac{2}{h} \Big[f\left( \frac{1}{2} (x+he_j)+1\cdot 0 - \frac{1}{2} \cdot x \right)-f(0)\Big] = \lim_{h \rightarrow 0} \frac{2}{h} \left[f\left( \frac{1}{2} \cdot h e_j\right)-f(0)\right]
	\end{align}
	Here, $e_j = (\delta_{kj})_k$ is the vector filled with zeroes except for the $j$-th component which is a $1$.\\
	Thus the partial derivatives are constant. Therefore, ``$f=$ linear map $+$ constant''.\\
	See also Theorem 1.5.2 from \cite{Webster} for a more detailed proof.
\end{proof}

\vspace{1mm}

\begin{defi}
	Let $V$ be a real vector space. A non-empty subset $K$ of $V$ is a \textbf{cone} if the following conditions are satisfied \cite{Cones}:
	\begin{enumerate}
		\item $K+K\subset K$
		\item $pK \subset K \ \forall p \ge 0$
		\item $K \cap (-K) = \{0\}$
	\end{enumerate}
\end{defi}

\vspace{1mm}
A typical example for a cone is an infinitely long ice-cream cone. Another typical example, which looks like an infinitely long pyramid turned upside-down, is shown in Figure \ref{Fig:Cone}.

\begin{center}
\includegraphics[width= 0.3\textwidth]{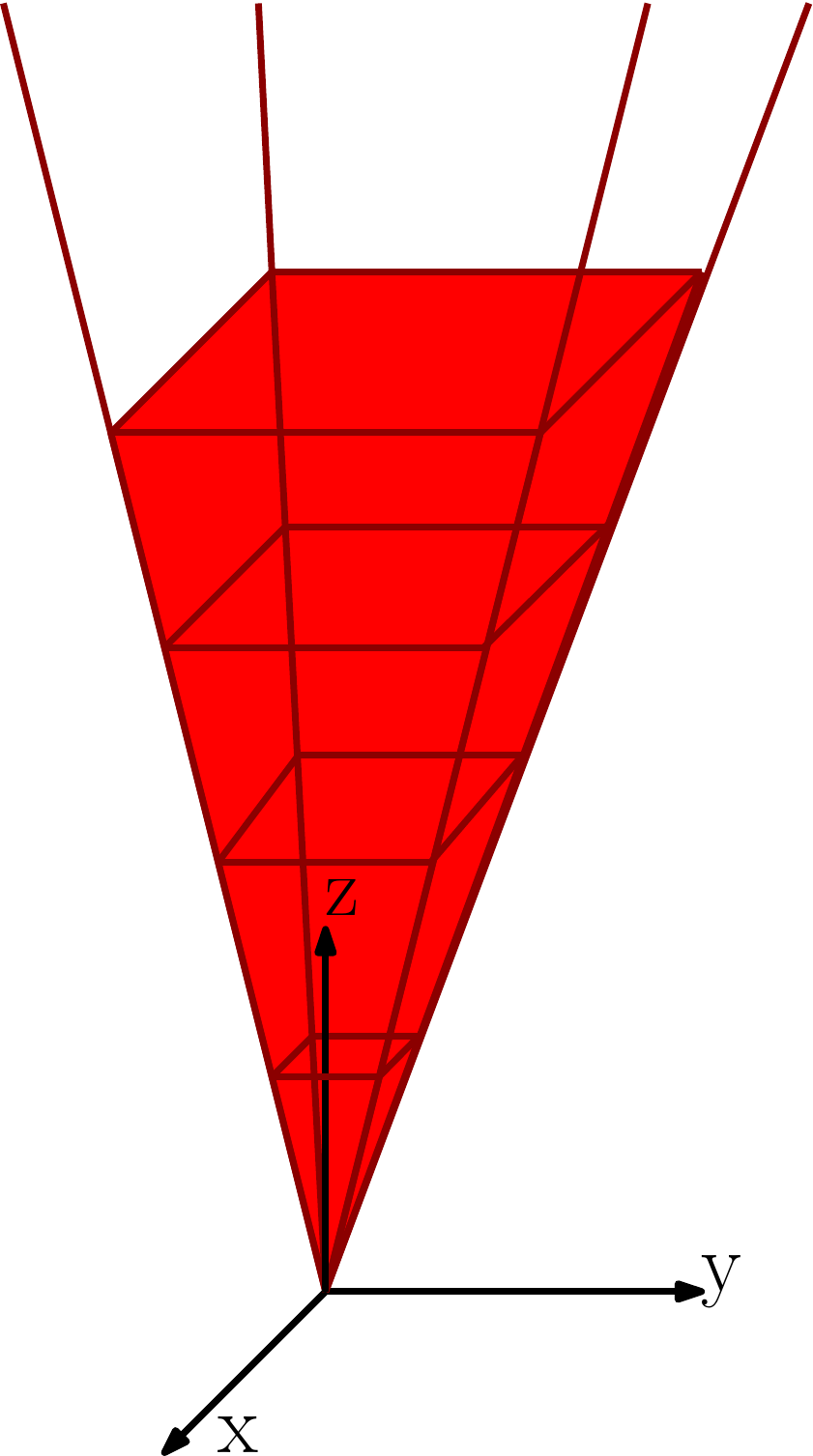}
\captionof{figure}{\textit{\small A typical cone. Note that cones are infinitely long, as indicated in the figure.}}
\label{Fig:Cone}
\end{center}  

\begin{prop}
	Let $K$ be a cone. Then $\rm{span}(K) = K-K$.
\end{prop}

\begin{proof}
	$K-K \subset \rm{span}(K)$ is clear by the definition of $\rm{span}(K)$.\\
	Let $\sum_{j=1}^n p_j v_j \in \rm{span}(K)$ with $v_j \in K$, $p_j \in \mathbb R$. By relabelling, we assume $p_j \ge 0$ for $j\le m \le n$ and $p_j < 0$ for $j>m$. Thus $\sum_{j=1}^n p_j v_j = \sum_{j=1}^m |p_j| v_j - \sum_{j=m+1}^n |p_j| v_j$. Thus if we can show that terms of the form $\sum_{j=m}^n q_j w_j$ are in $K$ for $q_j \ge 0$ and $w_j \in K$, then we find $\sum_{j=1}^n p_j v_j \in K-K$ and $\rm{span}(K) \subset K-K$ in total.\\
	By the second property of cones, $q_j w_j \in K$. Thus by the first property of cones, $\sum_{j=m}^n q_j w_j \in K$.
\end{proof}

\vspace{1mm}

\begin{defi}
	A cone $K$ of a real vector space $V$ is called \textbf{generating} if $\rm{span}(K) = K-K= V$.
\end{defi}

\vspace{1mm}

\begin{note}
	Sometimes, the definition of cones varies in the literature. For example in \cite{Ududec}, the condition $K \cap (-K) = \{0\}$ is not necessary for a set to be called cone. There, cones that satisfy $K \cap (-K) = \{0\}$ are called \textbf{pointed}. But in \cite{Ududec}, all cones are required to be generating.
\end{note}

\vspace{1mm}

\begin{defi}
	Given a cone $K \subset V$, an \textbf{order unit} $u_K$ is an element of $V^* = \{ f:V\rightarrow \mathbb R \text{ linear} \}$, which is \textbf{strictly positive} on all non-zero elements of the cone, i.e. 
	\begin{equation}
		u_K (v) > 0 \ \ \forall v \in K\backslash \{0\}.
	\end{equation} 
\end{defi}
%\newpage
%\thispagestyle{empty}
%\quad
%\newpage
\section{Framework: Generalized Probabilistic Theories}
\label{Section:GPT}
\subsection{The state space}
In this chapter, we provide an introduction to the framework called \textit{Generalized Probabilistic Theories (GPTs)}. This framework includes a wide range of physical theories, including classical and quantum theory. It is very general, starting from the idea that theories should specify measurement probabilities, adding only weak and natural assumptions. Other assumptions and postulates can be added while constructing a specific theory. Thus this framework allows us to use mathematically well-defined postulates instead of vague motivations to single out quantum theory. Many other sources also give introductions to this framework, but often there are slightly different points of view or approaches \cite{Pfister}\cite{Ududec}\cite{Barrett}\cite{Fuchs}\cite{PhysicalRequirements}\cite{InformationalDerivation}, e.g. concerning whether measurements or states are introduced first in the theory. We will use \cite{Pfister}, \cite{Hardy} and \cite{Barrett} as an orientation.\\
\\
The basic notions of GPTs are \textit{states} and \textit{measurements}. The state $w$ completely describes a physical system in the sense that the state determines the probabilities of all the measurement outcomes for all measurements. A meaningful representation of a state would be to just list the probabilities of all the possible measurements ($o_k$ is the outcome, $m_j$ the measurement):
\begin{equation}
	w = \begin{pmatrix}
		\vdots\\
		p(o_1|m_j)\\
		p(o_2|m_j)\\
		\vdots
		\end{pmatrix}
\end{equation}
However, even for the simple example of a spin-$\frac 1 2$ system, there are infinitely many axis and thus infinitely many possible measurements. However, the probabilities for spin-up-results for measurements along the $x$-, $y$- and $z$-axis already determine the whole quantum state. Thus the example also shows, that in many cases, knowing the probabilities of some measurements already completely determines the outcomes of the other measurements. Such a set of measurements is called \textit{fiducial}.\\
\\
Next we consider the following \textit{mixing}-procedure:
Assume we have $n$ preparation devices, and each of them can prepare a state $w_j$, $j \in \{1,...,n\}$. Furthermore assume that there is random number generator with $n$ outcomes, given by the probability distribution $(p_1,...,p_n)$. Now the preparation devices and the random number generator are put into a black box with a single button on the outside. If you push that button, the random number generator is activated. If you get outcome $j$, device $j$ is activated and system $w_j$ is produced and sent to the outside. An example for this device is shown in Figure \ref{Fig:MixingDevice}.
\begin{center}
	\includegraphics[width=0.9\textwidth]{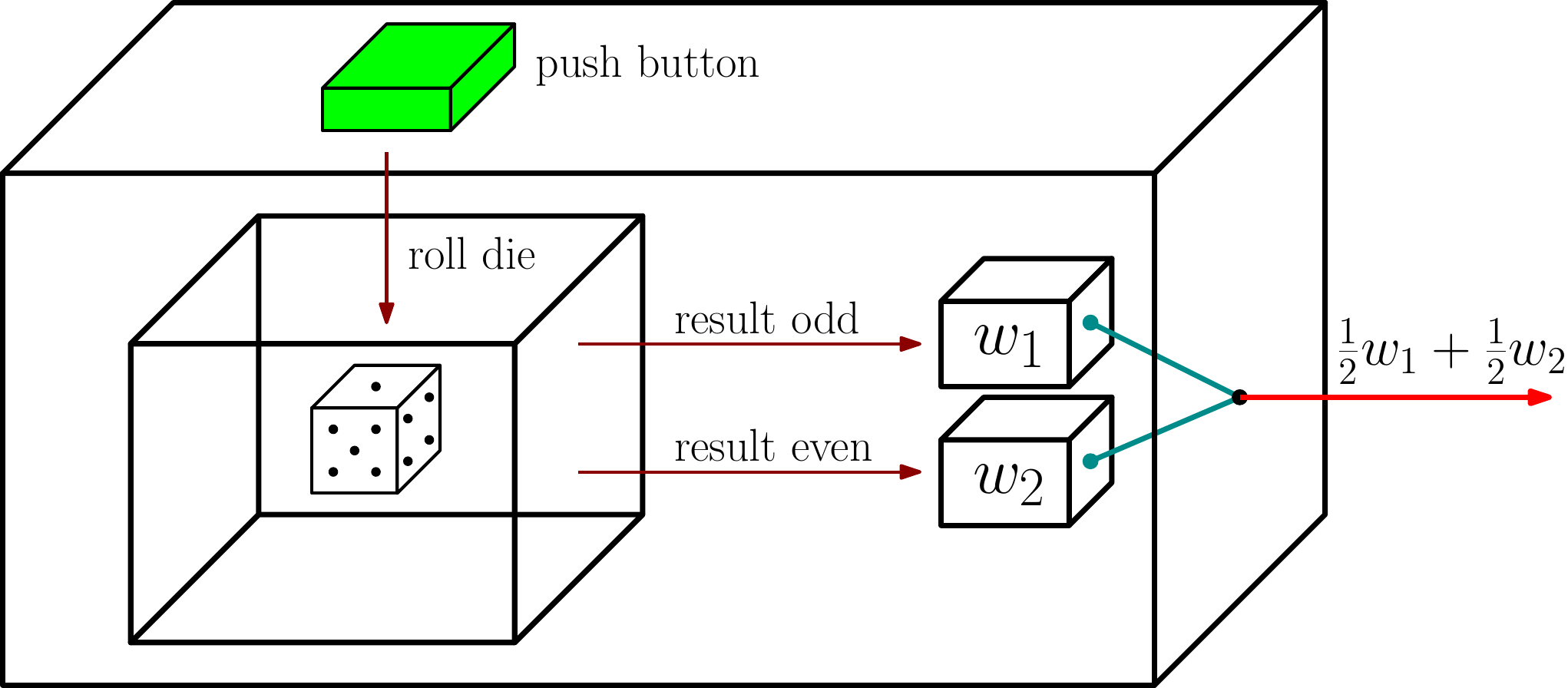}
	\captionof{figure}{\textit{\small An example for the mixing procedure described in the text: Pushing the green button activates a random number generator (here, a die) and depending on the outcome (here: odd or even), one of several states is prepared.}}
	\label{Fig:MixingDevice}
\end{center}
As everything happens within the black box, we never learn the result of the random number generator. Thus we only know that in $p_j$ of the cases, the system $w_j$ is obtained. We wish to describe the states the box outputs by something that says ``with probability $p_j$ you get the results expected for $w_j$'', i.e. a statistical mixture. If we consider the representation with fiducial probability vectors,
\begin{equation}
	w_j = \begin{pmatrix} \vdots \\ P^{(j)}(o_1|m_k)\\ P^{(j)}(o_2|m_k) \\ \vdots \end{pmatrix}
\end{equation}
we now show that it is meaningful to assume that the new state can be written as
\begin{equation}
	w = \sum_{j=1}^n p_j w_j = \begin{pmatrix} \vdots \\ \sum_{j=1}^n p_j \cdot P^{(j)}(o_1|m_k) \\ \sum_{j=1}^n p_j \cdot P^{(j)}(o_2|m_k) \\ \vdots \end{pmatrix}
\end{equation}
With probability $p_j$, the state is $w_j$. In case the state is $w_j$, for measurement $m_k$ the outcome $o_i$ occurs with probability $P^{(j)}(o_i|m_k)$. Thus the total probability for the outcome $o_i$ of measurement $m_k$ is given by $\sum_{j=1}^n p_j \cdot P^{(j)}(o_i|m_k)$. So the list of probabilities of the state $w$ should be of the form $\sum_{j=1}^n p_j \cdot P^{(j)}(o_i|m_k)$, i.e. exactly of the form $w=\sum_{j=1}^n p_j w_j$ as suggested above. This result suggests that the set of states should be embedded into a real vector space, and that statistical mixtures are described by convex linear combinations.\\
The black-box-random preparation device is an operational abstraction for a source or preparation device whose rules are not known. For example, for a random photon emitted by a star, we do not know which energy or polarisation was chosen by the star, and it might be a different one for each emitted photon.\\
\\
A general assumption for the sake of simplicity is that the vector space is finite-dimensional, i.e. that a finite list of fiducial probabilities is sufficient in case of a representation via fiducial probabilities. In quantum theory, this assumption restricts us to finite dimensional quantum theory. This assumption allows us to separate the mathematical technicalities introduced by functional analysis or topology from the ``real physics''. For example, in finite dimension there is norm equivalence, i.e. the choice of the norm is less important. Furthermore we do not have to deal with integration measures and divergent sums. Already in finite dimension the proofs often are really hard because of the generality of the framework. The general idea is therefore to characterize the finite dimensional case first. Afterwards one can try to generalize the results to infinite dimension. Furthermore the ``true'' physical theory should be capable of describing finite-dimensional systems as well. Such finite-dimensional systems often arise in computation, for example the qubit ion chains often used for quantum computation or the finite memory of regular computers. Thus if a theory fails to describe such systems, it must be wrong. Furthermore, insights from quantum gravity, especially the holographic principle, suggest that the fundamental basis of nature could be discrete and might even be finite-dimensional (for a non-technical introduction to quantum gravity, see e.g. \cite{Gravity}).
\\
Norm equivalence and the representation by lists of fiducial probabilities suggest that the state space should be bounded.\\
\\
Now assume that there is an element $w$ of the vector space, such that there is a sequence of states $w_n$ with $\lim_{n \to \infty} w_n = w$, i.e. $w$ can be approached arbitrarily well. As no preparation procedure is perfect and as all measurement devices have a finite reliability, there is no practical difference between perfect preparability and arbitrarily good preparability. Thus we also assume that $w$ is a state. This means that the set of states should be closed. In finite dimensions together with the boundedness, this means that the set of states should be compact. 
\\

Furthermore, it also makes sense to define ``subnormalized'' states. As an example, we consider a projective measurement $\mathcal P$ in quantum theory performed on a system described by a density matrix $\rho$. We call the projectors $P_j$. There is a probability of $\text{Tr}(P_j \rho P_j)$ that the outcome $j$ occurs, and afterwards that system is described by the density matrix $\frac{P_j \rho P_j}{\text{Tr}(P_j \rho P_j)}$. Instead, we can say that the state of the system is $P_j \rho P_j$, with the following interpretation: With probability $\text{Tr}(P_j \rho P_j)$, the system is in the state $\frac{P_j \rho P_j}{\text{Tr}(P_j \rho P_j)}$ after the measurement, i.e. the notation $P_j \rho P_j$ summarizes both the probability and the state in case of outcome j. Using that notation, after averaging or forgetting the result, the total density matrix after the measurement is described by a sum of subnormalized states $\rho'=\sum_j P_j \rho P_j$.\\
A slightly different usage of subnormalized states is that $\text{Tr}(\rho)$ gives the probability of success of preparation of the state. In case of failure, no system is output at all. This point of view can be related to the projective measurement from before. Only if measurement $j$ occurs, the system is described by $\frac{P_j \rho P_j}{\text{Tr}(P_j \rho P_j)}$. If another outcome occurred, outcome $j$ failed.\\
This notation can be used to hide further conditions or to include implicit conditions. For example in the projective measurement, $\rho_j := P_j \rho P_j$ is used to perform another projective measurement $\mathcal Q$ with projectors $Q_k$. The probability for outcome $k$ in the $\mathcal Q$-measurement is given by $\text{Tr}(Q_k \rho_j Q_k)$. This can be rewritten as:
\begin{align*}
	\text{Tr}(Q_k \rho_j Q_k) &= \text{Tr}\left(Q_k \frac{\rho_j}{\text{Tr}(\rho_j)}\right)\cdot \text{Tr}(\rho_j)\\ &= \text{Prob(outcome $k$ in $\mathcal Q$| outcome $j$ in $\mathcal P$)} \cdot \text{Prob(outcome $j$ in $\mathcal P$)}\\
		&= \text{Prob(outcome $j$ in $\mathcal P$, afterwards outcome $k$ in $\mathcal Q$)}   
\end{align*}
	Thus now all probabilities calculated with $\rho_j$ contain the additional event that outcome $j$ in the first measurement is obtained.\\
	All these applications show, that subnormalized states are not really necessary, but helpful to simplify notation and to put more content into a simpler expression. We will use a function $u_A$ to specify the normalization. Using the interpretation that the normalization gives the success probability, there should only be one state normalized to zero. This state corresponds to certain failure/no output at all. Furthermore, we will also consider ``supernormalised'' states. We do not give them a physical meaning. However, introducing them has many mathematical advantages, allowing us to use the full framework provided by cones from convex geometry.\\
	\\
	Now we collect our results to define:
	\begin{defi}
		A tripel $(A,\Omega_A,u_A)$ is called an \textbf{abstract state space} iff the following conditions hold:
			\begin{enumerate}
				\item A is a finite-dimensional, real vector space.
				\item $\Omega_A \subset A$ is a convex, compact subset.
				\item $A_+ := \mathbb R_{\ge 0} \cdot \Omega_A$ is a closed, generating cone.
				\item $u_A \in A^*$ is strictly positive on the non-zero elements of the cone, i.e. $u_A(w)>0$ for all $w\in A_+$ with $w\ne 0$.
				\item For $w \in A_+$: $u_A(w) = 1 \Leftrightarrow w \in \Omega_A$.
			\end{enumerate}
		$\Omega_A$ is called the \textbf{set of (normalized) states}, $A_+$ the \textbf{cone of unnormalized states} and the order unit $u_A$ gives the normalization. Furthermore, $\Omega_A^{\le 1} := \{w\in A_+ | u_A(w)\le 1\}$ is called the set of \textbf{subnormalized states}.
	\end{defi}

An example for a state space is shown in Figure \ref{Fig:StateSpace}. We note, that the definition of abstract state spaces is overcomplete - some properties are consequences of other properties.
\begin{center}
	\includegraphics[width=0.3\textwidth]{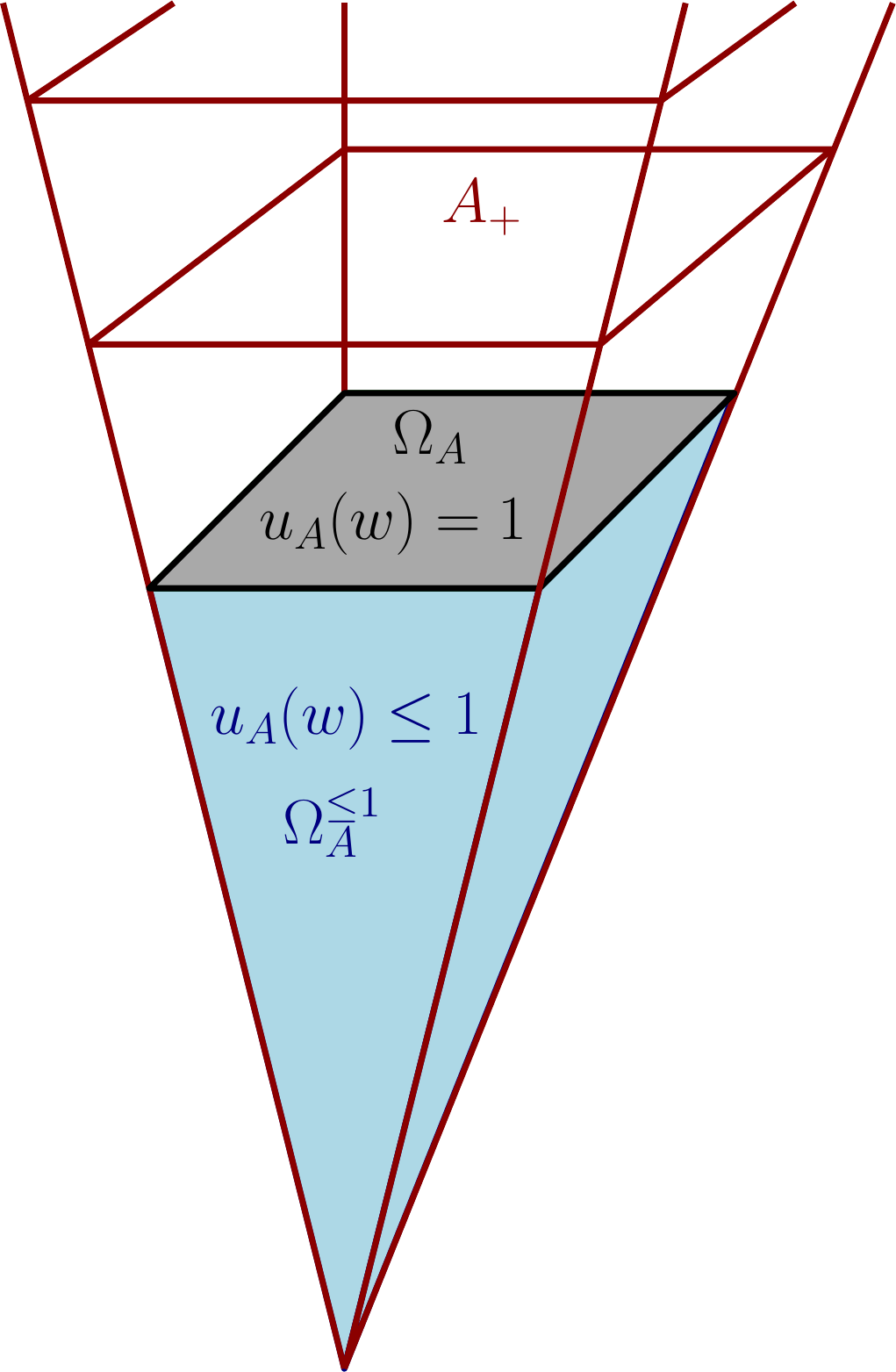}
	\captionof{figure}{\textit{\small A state space consists of a cone of unnormalized states $A_+$, where the normalization is defined by an order unit $u_A$. The set of normalized states is given by the states with $u_A(w)=1$. The set of subnormalized states $\Omega_A^{\le 1}$ is given by those states with $u_A(w)\le 1$. The (sub)normalized states have a physical interpretation, the normalization gives the probability of success of preparation.}}
	\label{Fig:StateSpace}
\end{center}

\subsection{Measurements}
So far, we have only defined state spaces. We also want to describe actions on the system, especially measurements and transformations.\\
\\
At first we consider measurements:\\
Assume there is a system in the state $w\in \Omega_A$. We wish to perform a measurement with $n$ different outcomes on the system. As the idea of a state is that it fully determines the outcome probabilities of all measurements, this measurement will be no exception. Thus it is possible to define functions $e_j : \Omega_A \to [0,1]$ that give the probabilities $e_j(w)$ of the measurement outcomes.\\
Consider a black-box preparation device, which with probability $p$ prepares a system in the state $w_1$, and with probability $1-p$ in the state $w_2$. The total state is $w=pw_1 +(1-p)w_2$. The probability for the $j$-th outcome is $e_j(w)=e_j(pw_1 +(1-p)w_2)$. But there is also another way to think about the black-box: With probability $p$, the system is in the state $w_1$. In that case, the probability for the $j$-th outcome is $e_j(w_1)$. In the other case, which happens with probability $1-p$, the probability for the $j$-th outcome is $e_j(w_2)$. The law of total probability states that the total probability is given by $p e_j(w_1) +(1-p) e_j(w_2)$. As both points of view describe the same situation, $e_j(pw_1 +(1-p)w_2) = p e_j(w_1) +(1-p) e_j(w_2)$. Thus the functions $e_j$ are convex-linear. As $\mathbb R_{\ge 0} \cdot \Omega_A = A_+$ and $\rm{span}(A_+) = A$, it is reasonable to assume that the $e_j$ can be convex-linearly continued to $A$. Furthermore, the state $0$ should give $0$ for all probabilities, i.e. $e_j(0) = 0$. By Proposition \ref{Proposition:LinearConvexLinear}, $e_j$ are linear functions on $A$. Note, that the linearity implies $e_j(w) =u_A(w)e_j \left( \frac{w}{u_A(w)}\right) $: If there is no system prepared, we will not measure anything. In Appendix \ref{Chapter:ConvexLinearToLinear}, we will discuss the technical details that the intuition is right, i.e. that it is possible to extend the effects to linear functions on $A$.\\

\begin{defi}
	For an abstract state space, the \textbf{set of effects} is defined by 
	\begin{equation}
		E_A := \{ e \in A^* | 0\le e(w) \le u_A(w) \ \forall w \in A_+ \}
	\end{equation}
	i.e. effects are linear maps $e:A \rightarrow \mathbb R$ with $0 \le e(w) \le 1$ for all $w \in \Omega_A$.
\end{defi}

\vspace{1mm}

\begin{defi}
	For effects $e,f \in E_A$, we write
	\begin{equation}
		e \le f \quad \Leftrightarrow \quad e(w) \le f(w) \quad \forall w \in \Omega_A
	\end{equation}
	or equivalently
		\begin{equation}
		e \le f \quad \Leftrightarrow \quad e(w) \le f(w) \quad \forall w \in A_+
	\end{equation}
	Likewise, $\ge$ is defined.\\
	Furthermore, we define
		\begin{equation}
		e < f \quad \Leftrightarrow \quad e(w) < f(w) \quad \forall w \in \Omega_A
	\end{equation}	
	and analogously $>$.
\end{defi}

Furthermore, measurement probabilities on properly normalized states should sum to $1$.

\begin{defi}
	A \textbf{measurement} is a set $\mathcal M = \{e_1,...,e_n\}$ of effects such that $\sum_{j=1}^n e_j = u_A$.
\end{defi}

\begin{note}
	While we have defined measurements in a mathematical sense, it is not clear whether these measurements can actually be implemented in an actual experiment. Thus it is not clear whether the measurements are physically allowed. A typical assumption is the ``\textit{no-restriction}''-hypothesis, which claims that all mathematically well-defined measurements are physically possible. Such an assumption can be justified as follows:\\
	Of course there can be practical limitations (insufficient control, too expensive,...), but also conceptual problems which forbid a measurement. An example for the latter one could be given by space-like separated systems, such that it is not possible to act on both systems at the same time. However, all these limitations are not introduced by quantum theory itself, but by the choice of physical system it is applied to. Just like with a two-level system in quantum theory (qubit), many different physical systems might be described by the same abstract state space. Thus it can happen, that the same mathematical measurement might be impossible for one physical system, but possible for another physical system. Thus one decides not to exclude any well-defined measurement beforehand. However, a choice of a specific physical system might render some measurements impossible. This means, that not the GPT makes a measurement impossible, but the physical system it is applied to. As we want our framework to be as powerful as possible, one does not exclude any measurement without a reason.\\
	We will not use the no-restriction hypothesis here, as the postulates used by us will ensure that all effects are physically allowed. But in the general case, if the no-restriction hypothesis is neither a postulate nor a consequence of the postulates one chooses, then one has to introduce an extra set which specifies the allowed effects/measurements.\\ However, we will assume that all well-defined measurements formed by allowed effects also are allowed measurements.\\
	\\
	The most prominent example for an impossible measurement is to measure position and momentum of a particle in non-relativistic quantum mechanics. As the position eigenstates form a basis, the measurement of the position is already normalized to one. Adding effects of the form ``\textit{Is the particle's momentum found in $[p_1,p_2]$ ?}'' would lead to a mathematically ill-defined measurement, because the total probability would be larger than $1$. Thus this important example, which is constructed from allowed effects, is already mathematically forbidden because of wrong normalization. 
\end{note}

\begin{ass}
	Let $e_1,...,e_n$ be physically allowed effects with $u_A \ge \sum_{j=1}^n e_j$. Then $\{e_1,...,e_n\}$ can appear in a common physically allowed measurement.
\end{ass}

Furthermore, we assume that for any event described by an effect $e$, also the counter-event described by $u_A-e$ is physically allowed:
\begin{ass}
	If $e$ is a physically allowed effect, then so is the effect $u_A-e$.
\end{ass}
If two effects $e_1,e_2$ can appear in a common measurement, then also $e_1+e_2$ should be a physically valid effect. It can be obtained by assigning a new combined outcome to $e_1$ and $e_2$ which does not distinguish any more if $e_1$ or $e_2$ was triggered.
\begin{ass}
	If $e_1,e_2$ are physically allowed effects with $e_1+e_2\le u_A$, then also $e_1+e_2$ is a physically valid effect.
\end{ass}

\begin{defi}
	A set of states $w_1,...,w_n \in \Omega_A$ is called \textbf{perfectly distinguishable}, if there is a set of allowed effects $e_1,...,e_n$, which can appear in a common measurement with $\sum_j e_j \le u_A$ and for which 
	\begin{equation}
		e_j(w_k) =\delta_{jk}
	\end{equation}
\end{defi}
\begin{prop}
	If $w_1,...,w_n$ are perfectly distinguishable, there also exists a properly normalized allowed measurement $e_1',...,e_n'$ with $\sum_j e_j' = u_A$ and $e_j'(w_k) = \delta_{jk}$. \label{Prop:SubnormalizedDistinguishable}
\end{prop}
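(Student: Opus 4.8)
The plan is to redistribute the normalization slack into one of the existing effects while leaving the others untouched. First I would introduce the remainder functional $r := u_A - \sum_{j=1}^n e_j$. Since the $e_j$ appear in a common measurement with $\sum_j e_j \le u_A$, we have $0 \le r(w) \le u_A(w)$ for every $w \in A_+$, so $r$ is itself an effect.

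The key observation I would exploit is that $r$ vanishes on each of the states $w_k$: normalization gives $u_A(w_k) = 1$, while the distinguishing condition gives $\sum_j e_j(w_k) = \sum_j \delta_{jk} = 1$, so that $r(w_k) = 1 - 1 = 0$. This is exactly what lets me modify an effect by $r$ without spoiling the $\delta_{jk}$ pattern.

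Next I would set $e_1' := e_1 + r$ and $e_j' := e_j$ for $2 \le j \le n$. Then $\sum_j e_j' = \sum_j e_j + r = u_A$, so the family is properly normalized, and the distinguishing property survives: $e_1'(w_k) = e_1(w_k) + r(w_k) = \delta_{1k} + 0 = \delta_{1k}$, while $e_j'(w_k) = \delta_{jk}$ for $j \ge 2$ is unchanged.

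The part requiring the most care is physical allowedness, which I would handle purely through the three assumptions stated above. Iterating the assumption that the sum of two allowed effects bounded by $u_A$ is again an allowed effect shows that $\sum_j e_j$ is allowed; the counter-event assumption then makes $r = u_A - \sum_j e_j$ allowed; and since $e_1 + r = u_A - \sum_{j \ge 2} e_j \le u_A$, the same sum assumption makes $e_1'$ allowed. Finally, the allowed effects $e_1',...,e_n'$ satisfy $\sum_j e_j' = u_A$, so the assumption guaranteeing a common measurement for families of allowed effects bounded by $u_A$ yields the desired normalized allowed measurement. I expect the only genuine subtlety to be this bookkeeping — in particular, checking that the enlarged effect $e_1'$ stays inside the allowed set rather than being merely a mathematically valid effect.
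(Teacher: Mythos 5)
Your construction is correct and is essentially the paper's own proof: the paper keeps $e_1,\dots,e_{n-1}$ and replaces the last effect by $u_A-\sum_{j=1}^{n-1}e_j$, which is exactly $e_n+r$ in your notation, so the two differ only in which index absorbs the slack $r$. Your explicit bookkeeping of why $e_1+r$ remains a \emph{physically allowed} effect is a slightly more careful spelling-out of the step the paper dispatches with ``by our assumptions''.
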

\begin{proof}
	For $j<n$, set $e_j' := e_j$. These trivially fulfil $e_j'(w_k) = \delta_{jk}$. Furthermore, set $e_n' := u_A - \sum_{j=1}^{n-1} e_j$. Thus we obtain a properly normalized measurement. $e_n'$ is a effect, because $u_A$ and $e_j$ are linear and $0 \le u_A -\sum_{j=1}^{n-1} e_j \le u_A$.\\
	By our assumptions, $e_n'$ is a physically valid effect, and
	\begin{equation}
	e_n'(w_j) = 1-\sum_{k=1}^{n-1}e_k(w_j) = 1-\sum_{k=1}^{n-1} \delta_{jk} = \delta_{nj}
	\end{equation}
	The last equality holds, because for $j < n$,  $\sum_{k=1}^{n-1} \delta_{jk}= 1$, while for $j= n$ we find $\sum_{k=1}^{n-1} \delta_{jk} = 0$.
\end{proof}

Physically, this measurement can be implemented by using the measurement which includes $e_1$,...,$e_n$. If we do not obtain outcome $1$,...,$n-1$, we say we obtained outcome $n$. Thus the event with outcome $n$ is the counter-event for the event \textit{outcome 1,...,n-2 or n-1 obtained} and thus is indeed described by $u_A -\sum_{j=1}^{n-1} e_j$.

\vspace{1mm}

\begin{exam} \label{Example:FigureToEffect}
	We consider the affine hyperplane given by $\Omega_A$, i.e. $u_A^{-1}(1)$ and choose an origin in this plane such that $u_A^{-1}(1)$ can be considered a vector space. All vectors we will consider now will be vectors in $u_A^{-1}(1)$.
	Effects are convex-linear, and therefore of the form $e=L(\cdot)+y$ with $y$ a constant, $L$ a linear map. Thus there exists a vector $\vec v$ with $e(\vec w) = \vec v^T \cdot \vec w+y$. The sets with constant values $e^{-1}(a) = \{ \vec w | \vec v^T \cdot \vec w+y = a \}$ define affine hyperplanes in $u_A^{-1}(1)$ with $\vec v$ as normal vectors. 
	\\ \\ 
	Vice versa, two parallel hyperplanes in $u_A^{-1}(1)$ can be used to define an unique linear functional in $A^*$ (see also Figure \ref{Fig:FigureToEffect}): \\
	 In the vector space $u_A^{-1}(1)$, we think of two parallel hyperplanes: $\{\vec w \ | \ \vec v^T \cdot \vec w = a'\}$ and $\{ \vec w \ | \ \vec v^T \cdot \vec w = b'\}$. Hereby, $\vec v$ is the normal vector of the hyperplanes. We consider the convex-linear map $\check e : u_A^{-1}(1) \rightarrow \mathbb R$:
	\begin{equation}
		\check{e}(\vec w) = \frac{\vec v^T \cdot \vec w - a'}{b'-a'}\cdot b + \frac{\vec v^T \cdot \vec w - b'}{a'-b'}\cdot a
	\end{equation}
	One can directly check $\check{e}(\vec w) = a$ if $\vec v^T \cdot \vec w = a'$ and $\check{e}(\vec w) = b$ if $\vec v^T \cdot \vec w = b'$. Thus all points on the hyperplane $\{\vec w \ | \ \vec v^T \cdot \vec w = a'\}$ are assigned the value $a$, the points on the other hyperplane get the value $b$.\\ 
	By Appendix \ref{Chapter:ConvexLinearToLinear}, we know that this convex-linear map can be linearly extended to $A$. If the values $a$ and $b$ are chosen carefully, one can define effects in this way. A typical construction is to choose the parallel hyperplanes such that $\Omega_A$ is found between them, while one hyperplane defines the states with $e(w)= 0$ and the other one the states with $e(w)=1$, as shown in Figure \ref{Fig:FigureToEffect}.
	\begin{center}
		\includegraphics[width=0.5\textwidth]{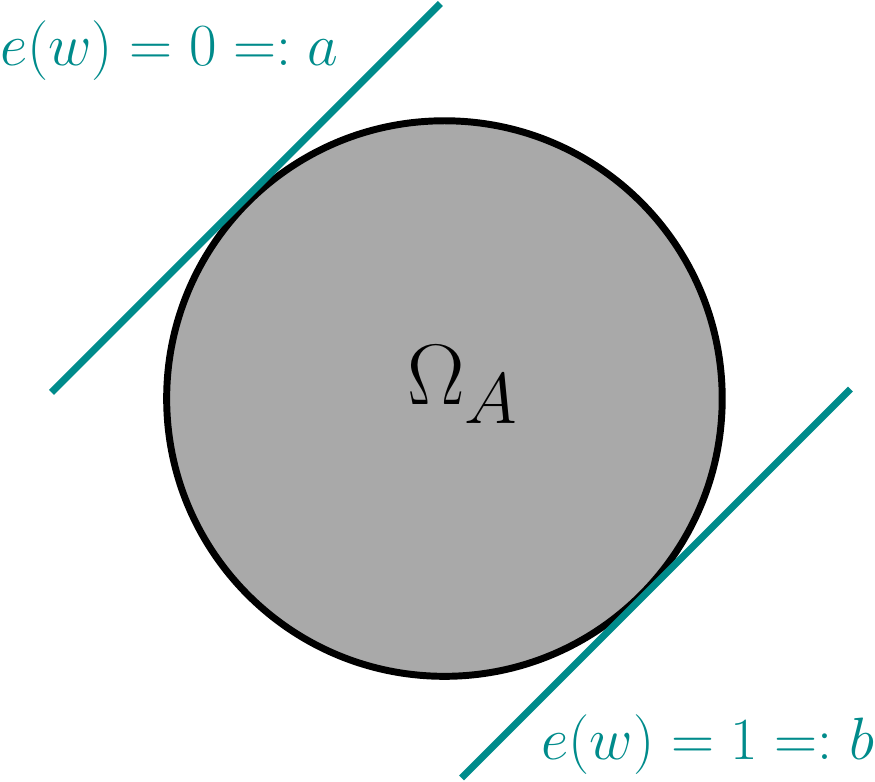}
		\captionof{figure}{\small \textit{This figure shows an abstract state space and two parallel hyperplanes. These two planes can be used to define a convex-linear map on $\Omega_A$ or a linear map on $A$.}}
		\label{Fig:FigureToEffect}
	\end{center}
\end{exam}
	
\subsection{Transformations and operations}

Next we consider transformations:\\
A transformation converts one state of a physical system into another state (possibly of another system): $T: A_+ \to B_+$. With the same argument like for effects, transformations have to be convex-linear and can be extended to linear maps $T: A \to B$. Furthermore, transformations should not increase the normalization, as otherwise a physically meaningful normalized state could be changed into an unphysical supernormalized state. Furthermore, a transformation should convert states into states.

\begin{defi}
	Let $A,B$ be two abstract state spaces. A \textbf{transformation} is a map $T: A \to B$ which satisfies:
	\begin{enumerate}
		\item $T(A_+) \subset B_+$. This property is called \textbf{``T is positive''}.
		\item $u_B \circ T \le u_A$, i.e. $u_B(T(w)) \le u_A(w)$ $\forall w \in A_+$. 
		\item $T$ is linear
	\end{enumerate}
\end{defi}

\vspace{1mm}

\begin{note}
	Just like with measurements and effects, not all well-defined transformations have to be physically allowed. Especially, if the no-restriction hypothesis is not satisfied, then for every physically allowed effect $e$ and every physically allowed transformation $T$, also $e\circ T$ has to be a physically allowed effect. This requirement is stronger than just $u_B \circ T \le u_A$. Analogously, for a physically allowed measurement $\{e_r | r \in R\}$, also $\{e_r\circ T | r \in R\}$ has to be a physically allowed measurement. We will clarify the physical assumptions used in this thesis when stating the postulates in Chapter \ref{Section:Postulates}. \\
	In Quantum Theory, one usually demands a stronger property than positivity: complete positivity. This means that for all types of composite systems, also the transformation $T\otimes \mathbbm 1$ has to be positive. This map means, that on one part of the composite system, the transformation $T$ is applied, while the other part is not changed at all. However, we do not consider composite systems here. In many axiomatic derivations of quantum theory, one uses a postulate called \textit{Local Tomography}. It states that all states of composite systems can be characterised by local measurements and correlations between them. In \cite{Barrett}, Barrett shows how from such an assumption, a tensor product rule for composite systems can be derived. However, we will use different postulates because we do not consider composite systems in this work. Note that all completely positive transformations are also transformations in our sense, thus the results we derive will also be valid for completely positive transformations. 
\end{note}

\vspace{1mm}

\begin{defi}
	A transformation $T: A\rightarrow B$ is called \textbf{reversible}, if $T^{-1}$ exists and is transformation too.
A physically allowed transformation $T$ is called \textbf{(physically) reversible}, if $T^{-1}$ exists and is a physical transformation. The set of physically allowed physically reversible transformations is denoted by $\mathcal G_A$.
\end{defi}

Next we consider operations.\\
We consider a collection $\{T_1,...,T_n\}$ of transformations. We have a device which randomly applies exactly one of the transformations to any system which enters the device. With our usual interpretation, $u_A\circ T_j(w)$ gives the probability that the $j$-th transformation is applied to an incoming state $w$, i.e. with probability $u_A\circ T_j(w)$ the system will be in the state $\frac{T_j(w)}{u_A\circ T_j(w)}$. With probability $1-u_A\circ T_j(w)$, this procedure fails and one of the other transformations is applied. As the total probability should be given by $u_A$ (i.e. $1$ for a properly normalized state), the transformations should satisfy $\sum_{j=1}^n u_B\circ T_j =u_A$. In case of a black-box device which does not tell us which result it obtained for $j$, the state of the system after leaving the device will be described by:
\begin{equation}
	w' = \sum_{j: \ u_A\circ T_j(w) \ne 0} u_A\circ T_j(w) \cdot \frac{T_j(w)}{u_A\circ T_j(w)} = \sum_{j = 1}^n T_j(w)
\end{equation}

\begin{defi}
	An operation is a collection of transformations $\mathcal O = \{T_1,...,T_n\}$ which satisfies
	\begin{equation}
		\sum_{j=1}^n u_B\circ T_j =u_A
	\end{equation}
\end{defi}

The most famous examples for operations are given by projective measurements in quantum theory: Here, the transformations are given by projections. Thus it is possible to model projective measurements by using operations.

\begin{note}
	So far, our notion of an abstract state space only describes the structure of the set of states. This definition can be extended to what is sometimes called a \textbf{dynamical abstract state space} (see also \cite{Hardy}\cite{Barrett}): An abstract state space together with a set of physically allowed measurements and operations $(A,\Omega_A,u_A,\mathcal M_A, \mathcal O_A )$. If one assumes that all mathematically well-defined measurements composed only of physically allowed effects are also physically allowed (like we do in this thesis), then it is sufficient to specify the set of allowed effects $\mathcal E_A$ instead of the set of allowed measurements $\mathcal M_A$. If the no-restriction hypothesis is assumed to hold, then it is not necessary to specify $\mathcal M_A$, $\mathcal E_A$.\\
Similarly, if one assumes that all mathematically well-defined operations constructed from physically allowed transformations are also physically allowed, then it is sufficient to specify the set of allowed transformations $\mathcal T_A$ instead of the set of allowed operations $\mathcal O_A$. Furthermore, it is important that one of the main applications of the GPT-framework is to derive quantum theory. In many such derivations, one is not interested in the set of allowed operations, but rather in the set $\mathcal G_A$ of physically reversible transformations. The reason is that such operations map the state space onto itself in a reversible way; such symmetry transformations put many restrictions on the possible shapes of the state space and are therefore very useful in axiomatic derivations of quantum theory. So often an abstract state space is considered as a tupel $(A,\Omega_A,u_A, \mathcal G_A)$, specifying also the set of physically reversible transformations. When we state our postulates in Chapter \ref{Section:Postulates}, we will also consider $(A,\Omega_A,u_A, \mathcal G_A)$.
\end{note}

\vspace{1mm}

\begin{exam}
	In \textbf{classical probability theory}, in principle measurements can be performed without disturbing the system. In principle, it is possible to combine all measurements into one large measurement, from which all other probabilities can be derived. The assumption of finite-dimensionality means that there are only a finite number of outcomes. For example, a n-sided die is fully characterised by the probabilities for the n sides. Other probabilities, for example ``\textit{Does the die show a prime number?}'' can be deduced from that. Thus, in classical probability theory, one assumes that it is possible to find a single fiducial measurement which describes the whole system. Thus the states can be be described by listing the probabilities of all the outcomes of that measurement:
	\begin{equation}
		w = \begin{pmatrix} p_1 ,..., p_n \end{pmatrix}
	\end{equation}
	The pure states are states with a predetermined outcome: $w^{(k)}=\left(p^{(k)}_j\right)_{j=1,...,n} = \left( \delta_{jk} \right)_{j=1,...,n}$.\\
	Especially, all pure states are perfectly distinguishable.
	All other states can be obtained by statistical mixtures.\\
	\begin{equation}
		\Omega_\text{classical} := \rm{conv} \left(\begin{pmatrix} 1\\ 0\\ \vdots \\ 0 \end{pmatrix}, \begin{pmatrix} 0\\ 1\\ \vdots \\ 0 \end{pmatrix},... ,\begin{pmatrix} 0\\ 0\\ \vdots \\ 1 \end{pmatrix}\right)
	\end{equation}
	State spaces with only finitely many extreme points are called \textbf{polytopes}. Finite-dimensional state spaces are polytopes. Even more, every mixed state has a unique decomposition into pure states, so classical states spaces are \textbf{simplices}.
\end{exam}

\vspace{1mm}

\begin{exam}
	In \textbf{Quantum Theory}, the most general description of states is given by density operators, which include both irreducible quantum randomness and the classical randomness caused by ensembles or missing knowledge. Here for a finite-dimensional complex Hilbert space $\mathcal H$, we define $A=\{\text{hermitian operators on } \mathcal H \}$, $\Omega_A= \{\text{density operators on } \mathcal H \} = \{\rho \in A \ | \ \rho \ge 0, \rho^\dagger = \rho, \rm{Tr}(\rho) = 1 \}$. Note that $A$ is only a real vector space. For example, $\mathbbm 1$ is hermitian, but not $i \cdot \mathbbm 1$.\\
	There are two different cases to which the density state formalism can be applied.\\ 
	The first application is from statistical physics/thermodynamics:\\
	Here, density operators describe ensembles of quantum systems whose microstates all realize the same macrostate. If one applies a measurement of the observable $A$ to the ensemble, an average value $\braket{A} = \text{Tr}(A\rho)$ is obtained.\\
	\\
	The second application is mainly used in quantum information theory:\\
	Like explained before for more general GPTs, density operators can be used to describe quantum states whose preparation is not fully known. If we know that with probability $p_j$ the state $\ket{\phi_j}$ was prepared, then $\rho = \sum_j p_j \ket{\phi_j}\bra{\phi_j}$ is the our best description of the state of the system. While it does not make sense to consider measurements beyond average values for ensembles, for single systems of unclear preparation it makes sense to consider measurements, where only one out of several outcomes is obtained. The most general such measurements are described by \textbf{POVMs}(Positive operator valued measurements):\\ Let $E_1,...,E_n \ge 0$, $\sum_j E_j = \mathbbm 1$, $E_j^\dagger = E_j$. Then $e_j(\cdot) := \text{Tr}(E_j \cdot)$ form a measurement which is called a POVM. The correspondence between $e_j$ and $E_j$ is induced by the self-duality of quantum theory.
\end{exam}

\begin{defi}
	By $A_+^*$ we denote the set of unnormalized effects, i.e.:
	\begin{equation}
		A_+^* = \mathbbm R_{\ge 0} E_A = \{ e \in A^* | \ e(w)\ge 0 \ \forall w \in A_+\}
	\end{equation}
	$A_+^*$ is also called the \textbf{dual cone}.\\
	We say a state space is \textbf{(strongly) self-dual}, if there exists an inner product $\braket{\cdot,\cdot}$ such that:
	\begin{equation}
		A_+^* = \{\braket{\cdot,w}| \ w \in A_+\}
	\end{equation}
	Thus $A_+$ and $A_+^*$ can be identified with each other in case of self-duality.
\end{defi}

\subsection{Equivalent state spaces}
So far we have motivated abstract state spaces by lists of fiducial probabilities. However, as quantum theory and the Bloch ball suggest, sometimes other choices for the state space are more convenient. So some state spaces are physically equivalent, if they have the same convexity-structure\cite{Masanes}:
\begin{defi}
	Two state spaces $(A,\Omega_A,u_A)$ and $(B,\Omega_B, u_B)$ are \textbf{equivalent} if there exists a bijective linear map $L: A \rightarrow B$ such that $L(A_+) = B_+$ and $u_B \circ L = u_A$.
\end{defi}

\begin{note}
	If the considered state spaces also include a set of allowed effects/measurements/operations/(reversible) transformations, then the map $L$ also has to conserve these sets, e.g. $\mathcal E_B \circ L = \mathcal E_A$ for the sets of allowed effects.
\end{note}

Our definition of abstract state spaces does not start from a list of probabilities, but rather from any convex compact set. So it it important to note that every state space is equivalent to a state space that has the form of a list of probabilities:
\begin{theorem}
	Let $\Omega_A$ be a GPT with $\text{dim}(A)=N$. Then $\Omega_A$ is equivalent to a state space $\Omega_B$ such that all components can be found between $0$ and $1$, i.e. can represent probabilities.
\end{theorem}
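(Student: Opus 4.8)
The plan is to represent each state by the list of probabilities it assigns to a suitable collection of $N$ effects; this is exactly the fiducial-probability picture motivated at the start of the chapter. Concretely, I would search for effects $f_1,\dots,f_N \in E_A$ that form a basis of the dual space $A^*$ and then define the candidate equivalence map $L\colon A\to\mathbb{R}^N$ by $L(w)=(f_1(w),\dots,f_N(w))$. For a normalized state $w\in\Omega_A$ every coordinate $f_i(w)$ automatically lies in $[0,1]$ by the definition of an effect, so the image $\Omega_B:=L(\Omega_A)$ has all components between $0$ and $1$, which is precisely the desired conclusion.

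The crucial step is to guarantee that such a basis of effects exists, i.e. that $\mathrm{span}(E_A)=A^*$. Here I would use that $A_+$ is a closed, generating and pointed cone: the strict positivity of the order unit already forces $A_+\cap(-A_+)=\{0\}$, since $u_A(w)>0$ and $u_A(-w)>0$ cannot both hold for $w\neq 0$. A closed cone is pointed if and only if its dual cone is full-dimensional, so $A_+^*$ has nonempty interior and hence spans $A^*$. Finally, every $f\in A_+^*$ is bounded on the compact set $\Omega_A$, so a sufficiently small positive multiple $\lambda f$ satisfies $0\le(\lambda f)(w)\le 1$ on $\Omega_A$ and is therefore an effect; this is the statement $A_+^*=\mathbb{R}_{\ge 0}E_A$ already recorded in the definition of the dual cone. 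Consequently $\mathrm{span}(E_A)=\mathrm{span}(A_+^*)=A^*$, and I can extract $N$ linearly independent effects $f_1,\dots,f_N$.

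With this basis in hand, $L$ is injective: $L(w)=0$ forces $f_i(w)=0$ for all $i$, hence $g(w)=0$ for every $g\in A^*$ and thus $w=0$. Since $\dim A=N=\dim\mathbb{R}^N$, the map $L$ is a linear bijection. I would then set $B=\mathbb{R}^N$, $B_+:=L(A_+)$, $\Omega_B:=L(\Omega_A)$ and $u_B:=u_A\circ L^{-1}$. By construction $L$ is bijective and linear with $L(A_+)=B_+$ and $u_B\circ L=u_A$, so $(A,\Omega_A,u_A)$ and $(B,\Omega_B,u_B)$ are equivalent in the required sense. Because a linear bijection transports convexity, compactness, the closed generating-cone structure and the positivity of the order unit, $(B,\Omega_B,u_B)$ is again an abstract state space, and each coordinate of a state equals an effect value lying in $[0,1]$.

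I expect the spanning claim $\mathrm{span}(E_A)=A^*$ to be the only real obstacle; everything afterwards is transport of structure along an isomorphism. If one prefers to avoid quoting the cone-duality fact, the spanning can instead be obtained by a direct separation argument: picking an interior point $f_0$ of $A_+^*$, any $w\neq 0$ with $f_0(w)=0$ can be separated by $f_0+\varepsilon g\in A_+^*$ for small $\varepsilon$ and a functional $g$ with $g(w)\neq 0$, showing that $A_+^*$ separates points and hence spans $A^*$. A cosmetic refinement would be to insist that one of the chosen effects be $u_A$ itself, so that the normalization appears explicitly as one of the fiducial probabilities, matching the classical and quantum examples.
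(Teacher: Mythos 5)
Your proof is correct, but it takes a genuinely different route from the one in the paper. The paper's argument is purely geometric: it places the normalization hyperplane $u_A^{-1}(1)$ perpendicular to the $x_N$-axis, uses boundedness to enclose $\Omega_A$ in a box $[-c,c]^{N-1}\times\{1\}$, and then writes down an explicit linear bijection sending that box into $[0,1]^{N-1}\times\{1\}$, so that the last coordinate carries the normalization and the remaining ones land in $[0,1]$ by brute-force rescaling. You instead realize the fiducial-probability picture literally: you show that the effects span $A^*$ (via pointedness of the closed cone $A_+$ implying full-dimensionality of $A_+^*$, together with $A_+^*=\mathbb{R}_{\ge 0}E_A$ from compactness of $\Omega_A$), pick a basis of effects, and use their outcome probabilities as coordinates. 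Your approach costs you one nontrivial cone-duality fact that the paper's sketch avoids, but it buys a strictly more meaningful conclusion: each coordinate of a state is not merely a number in $[0,1]$ but an actual outcome probability of an allowed effect, which is exactly the operational motivation given at the start of Chapter \ref{Section:GPT}. Your closing remark about including $u_A$ among the chosen effects would recover the paper's feature that the normalization appears explicitly as one coordinate. The remaining steps (injectivity from the $f_i$ forming a basis, transport of the cone and order unit along $L$) match the paper's in substance and are carried out more completely than in the paper's admitted sketch.
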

\begin{proof}
	Here, we provide only a proof sketch.\\
	$u_A^{-1}(1)$ describes a hyperplane in $A$ which contains $\Omega_A$. We rotate this hyperplane such that it is perpendicular to the $x_N$-axis. As $\Omega_A$ is bounded, there exists a $c > 0$ such that $\Omega_A \subset [-c,c]^{N-1} \times \{1\}$. The $N-1$ vectors $(0,...,0,c,0,...,0,0)$ together with the vector $(-c,-c,...,-c,1)$ form a basis. We define a linear map by $(0,...,0,c,0,...,0,0) \mapsto (0,...,0,\frac 1 2,0,...,0,0)$ and $(-c,-c,...,-c,1)\mapsto (0,...,0,1)$. These new states also form a basis, thus the map is invertible. In particular, $[-c,c]^{N-1} \times \{1\} \rightarrow [0,1]^{N-1}\times \{1\}$. Thus the first $N-1$ components of $\Omega_A$ are now found in $[0,1]$, while the last one is $1$ and gives the normalization.
\end{proof}

\subsection{Some mathematical properties of GPTs}

\begin{lem}
	For an abstract state space, let $F\subset A_+$ be a face. Then for all $w \in F$, we have $\mathbb R_{\ge 0} \cdot w \subset F$. \label{lemma:rays}
\end{lem}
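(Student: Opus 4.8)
The plan is to split the claim according to whether we scale $w$ up (by factors $\lambda>1$) or down (by factors $0\le\lambda\le1$), and to dispatch these with the two defining features of a face: the "pulling-apart" property for the former and plain convexity for the latter. Throughout I would use that $A_+$ is a cone, so that $\lambda w\in A_+$ for every $\lambda\ge0$ and $w\in A_+$, and in particular $0\in A_+$ (take $\lambda=0$ in property 2 of a cone).

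First I would treat $\lambda>1$. The key observation is that the given point $w\in F$ sits strictly between $\lambda w$ and $0$ on a line segment:
\[
 w = \frac{1}{\lambda}(\lambda w) + \Big(1-\frac{1}{\lambda}\Big)\cdot 0 .
\]
Since $\lambda>1$ we have $p:=\frac{1}{\lambda}\in(0,1)$, and both $\lambda w$ and $0$ lie in $A_+$. Because $w\in F$ and $F$ is a face of $A_+$, the defining property of a face forces both endpoints of this segment into $F$; hence $\lambda w\in F$, and as a by-product also $0\in F$.

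Then I would treat $0\le\lambda\le1$. Having just established $0\in F$, and knowing $w\in F$, I can write $\lambda w=\lambda w+(1-\lambda)\cdot 0$ as a convex combination of two elements of $F$; convexity of $F$ then yields $\lambda w\in F$. The case $\lambda=1$ is trivial and $\lambda=0$ is already covered. Combining the two ranges gives $\mathbb{R}_{\ge0}\cdot w\subset F$, as claimed.

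The only step requiring a little care — and the main (mild) obstacle — is that the face property is formulated for points of the ambient convex set, so before invoking it I must check that the endpoints $\lambda w$ and $0$ genuinely belong to $A_+$; this is precisely where the cone structure of $A_+$ enters. Everything else is a direct application of the definitions of \emph{face} and \emph{convex set}.
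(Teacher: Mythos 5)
Your proof is correct and follows essentially the same route as the paper's: for $\lambda>1$ you write $w=\frac{1}{\lambda}(\lambda w)+(1-\frac{1}{\lambda})\cdot 0$ and invoke the face property to get $\lambda w\in F$ (and $0\in F$), and for $\lambda\le 1$ you use convexity of $F$ with the endpoints $w$ and $0$. Your extra remark that one must first verify $\lambda w,0\in A_+$ via the cone axioms is a point the paper leaves implicit, but it is a welcome clarification rather than a difference in approach.
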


\begin{proof}
	Let $\lambda \in \mathbb R_{\ge 0}$ be arbitrary. If $\lambda = 1$, then $\lambda w \in F$ trivially.\\
	If $\lambda > 1$, then $w = \frac{1}{\lambda}\lambda w + (1-\frac{1}{\lambda}) 0$. As $F$ is a face, this implies $0 \in F$ and $\lambda w \in F$.\\
	If $\lambda < 1$, then $\lambda w = \lambda \cdot w +(1-\lambda)\cdot 0$. As faces are convex, also $\lambda w \in F$.
\end{proof}

\vspace{1mm}

There is a bijective correspondence between the faces of $\Omega_A$ and $A_+$:
\vspace{1mm}

\begin{prop}
	For an abstract state space, a face $F$ of $\Omega_A$ induces a face $\mathbb R_{\ge 0} \cdot F$ of $A_+$. Vice versa, a face $F\ne \{0\}$ of $A_+$ induces a face $\Omega_A \cap F$ of $\Omega_A$. If $w_1,...w_m\in \Omega_A$ generate the face $F$ of $\Omega_A$ ($A_+$), they also generate the corresponding face of $A_+$ ($\Omega_A$). Furthermore for a face $F\subset \Omega_A$, $\Omega_A \cap (\mathbb R_{\ge 0}\cdot F) = F$, and vice versa for a face $G\subset A_+$, $G \ne \{0\}$, we find $G= \mathbb R_{\ge 0} \cdot (\Omega_A \cap G)$. \label{Prop:FaceCorrespondence}
\end{prop}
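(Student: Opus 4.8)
The plan is to establish four separate claims and verify them in sequence, using Lemma \ref{lemma:rays} (rays stay in faces of $A_+$) and the earlier face lemmas repeatedly. First I would show that if $F$ is a face of $\Omega_A$, then $\mathbb R_{\ge 0}\cdot F$ is a face of $A_+$. Nonemptiness and convexity are routine; for the face property, suppose $pv+(1-p)w\in \mathbb R_{\ge 0}\cdot F$ with $v,w\in A_+$ and $p\in(0,1)$. Writing $v=\lambda v'$, $w=\mu w'$ with $v',w'\in\Omega_A$ and $\lambda,\mu\ge 0$ (using $A_+=\mathbb R_{\ge 0}\cdot\Omega_A$), I would apply $u_A$ to normalize: the combination $pv+(1-p)w$ lies on a ray through some point of $F$, and projecting onto $u_A^{-1}(1)$ rewrites it as a genuine convex combination of $v'$ and $w'$ landing in $F$. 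Since $F$ is a face of $\Omega_A$, this forces $v',w'\in F$, hence $v,w\in\mathbb R_{\ge 0}\cdot F$.

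Second, for the reverse direction, I would show that a face $G\ne\{0\}$ of $A_+$ yields a face $\Omega_A\cap G$ of $\Omega_A$. This intersection is convex and, because $G\ne\{0\}$ contains a nonzero cone element whose normalization lies in $\Omega_A\cap G$, it is nonempty. The face condition transfers directly: a convex combination of states of $\Omega_A$ lying in $\Omega_A\cap G$ in particular lies in $G$, so by the face property of $G$ the components lie in $G$, and being states they lie in $\Omega_A\cap G$.

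Third, for the generation statement, I would use the characterization of the generated face together with the fact that $\mathbb R_{\ge 0}\cdot(\cdot)$ and $\Omega_A\cap(\cdot)$ are inclusion-preserving and mutually inverse on faces (the last claim). If $w_1,\dots,w_m$ generate $F\subset\Omega_A$, then any face of $A_+$ containing them contains $\mathbb R_{\ge 0}\cdot F$, while $\mathbb R_{\ge 0}\cdot F$ itself is such a face, so it is the minimal one. The symmetric argument handles generation in the other direction.

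Finally, for the two round-trip identities, I would verify $\Omega_A\cap(\mathbb R_{\ge 0}\cdot F)=F$ and $G=\mathbb R_{\ge 0}\cdot(\Omega_A\cap G)$ by double inclusion. The inclusion $F\subset\Omega_A\cap(\mathbb R_{\ge 0}\cdot F)$ is immediate; conversely, a normalized state in $\mathbb R_{\ge 0}\cdot F$ must equal some $\lambda w$ with $w\in F$, and applying $u_A$ gives $\lambda u_A(w)=1$ with $u_A(w)=1$, forcing $\lambda=1$ and membership in $F$. For the cone identity, $\subset$ follows because any nonzero $g\in G$ normalizes to a point of $\Omega_A\cap G$ (using that $G$ is a face, hence ray-closed by Lemma \ref{lemma:rays}), and $\supset$ is clear. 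I expect the main obstacle to be the first claim: carefully handling the normalization when $v$ or $w$ is the zero vector (so that $u_A(v)=0$) and keeping track of the scalars when rewriting a cone-level convex combination as a state-level one, since the face property of $\Omega_A$ only speaks about normalized states.
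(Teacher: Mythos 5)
Your plan is correct and follows essentially the same route as the paper's proof in Appendix \ref{Section:FaceCorrespondence}: normalize cone elements via $u_A$ to reduce the face property of $\mathbb R_{\ge 0}\cdot F$ to that of $F$ in $\Omega_A$ (with the same care about the zero state), transfer the face property directly for $\Omega_A\cap G$, and obtain the generation and round-trip statements from minimality plus the inclusion-reversing normalization arguments. The only difference is cosmetic: you phrase the generation step via the mutually-inverse correspondence while the paper argues by contradiction with minimality, but the content is identical.
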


\begin{proof}
	The proof is quite technical and is provided in Appendix \ref{Section:FaceCorrespondence}.
\end{proof}
\vspace{1mm}
The following lemma is based on Pfister's Proposition 3.36 \cite{Pfister}:
\vspace{1mm}

\begin{lem}
	Let $e: \Omega_A \to [0,1]$ be an effect such that there exists a state $w$ with $e(w) = 1$ (or $e(w) = 0$).\\ Then $e^{-1}(1)$ (or $e^{-1}(0)$) is a face of $\Omega_A$. \label{Lemma:EffectFace} 
\end{lem}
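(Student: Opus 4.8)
The plan is to show that $e^{-1}(1)$ (the argument for $e^{-1}(0)$ being entirely analogous) satisfies the three defining requirements of a face: it is nonempty, convex, and has the crucial ``pullback'' property that whenever a proper convex combination of states from $\Omega_A$ lands in $e^{-1}(1)$, the constituent states must already lie in $e^{-1}(1)$. Nonemptiness is immediate from the hypothesis that some $w$ with $e(w)=1$ exists. Convexity follows directly from the linearity (affine-linearity) of $e$ established earlier: if $e(v)=e(w)=1$ and $p\in[0,1]$, then $e(pv+(1-p)w)=pe(v)+(1-p)e(w)=1$, so $pv+(1-p)w\in e^{-1}(1)$.

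The heart of the argument is the face condition. First I would take $v,w\in\Omega_A$ and $p\in(0,1)$ with $pv+(1-p)w\in e^{-1}(1)$, so that
\begin{equation}
	p\,e(v)+(1-p)\,e(w) = e\big(pv+(1-p)w\big) = 1,
\end{equation}
using affine-linearity of $e$. The key observation is that since $e$ maps into $[0,1]$, we have $e(v)\le 1$ and $e(w)\le 1$. A convex combination of two numbers, each at most $1$, equals $1$ only when both numbers equal $1$: writing $1 = p\,e(v)+(1-p)\,e(w) \le p\cdot 1 + (1-p)\cdot 1 = 1$ forces equality throughout, and because $p>0$ and $1-p>0$ the only way to saturate both bounds simultaneously is $e(v)=e(w)=1$. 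Hence $v,w\in e^{-1}(1)$, which is exactly the face property.

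I do not anticipate a serious obstacle here; the mild subtlety worth stating carefully is why a convex combination of two quantities bounded above by $1$ can equal $1$ only when each attains $1$ — this is where both the codomain constraint $e(\Omega_A)\subseteq[0,1]$ and the strictness $p\in(0,1)$ are genuinely used, and it is the step that would fail for an effect not normalized by $u_A$. For the parenthetical case $e^{-1}(0)$, I would run the same argument with the lower bound: $e(v),e(w)\ge 0$ together with $p\,e(v)+(1-p)\,e(w)=0$ and $p\in(0,1)$ force $e(v)=e(w)=0$. Convexity in that case again follows from affine-linearity since $e(pv+(1-p)w)=0$ whenever $e(v)=e(w)=0$, and nonemptiness is guaranteed by the hypothesis. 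This completes the verification that $e^{-1}(1)$ and $e^{-1}(0)$ are faces of $\Omega_A$.
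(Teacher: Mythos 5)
Your proof is correct and follows essentially the same route as the paper's: both verify nonemptiness, convexity via affine-linearity of $e$, and the face property by observing that a convex combination with strictly positive weights of values in $[0,1]$ can equal $1$ (resp. $0$) only if every value equals $1$ (resp. $0$). The only cosmetic difference is that the paper states the saturation argument for an $n$-term convex combination while you treat the two-term case required by the definition of a face.
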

\begin{proof}
	Let $e(w) = 1$ (or $e(w) = 0$) and $w = \sum_{j=1}^n p_j w_j$ be any convex combination of states with $p_j > 0$. Then, by convex linearity, $\sum_{j=1}^n p_j e(w_j) = e(w) = 1$ (or $0$). As $p_j > 0$ and $\sum_{j=1}^n p_j = 1$ and $e(w_j) \in [0,1]$, this requires $e(w_j) = 1$ (resp. $e(w_j) = 0$). This especially holds true for $n=2$.\\
	 Furthermore, $e(w)=e(w') = 1$ (or $0$) implies $e(pw+(1-p)w') = p e(w)+(1-p)e(w') = p +1-p = 1$ (or $p\cdot 0 +(1-p)\cdot 0 = 0$). Thus, $e^{-1}(1)$ (or $e^{-1}(0)$) is a face of $\Omega_A$ (they are non-empty by requirement). \\ 
\end{proof}

\subsection{The gbit}
Now we consider our first GPT example beyond quantum and classical theory. The gbit or (square-bit) is a square-shaped set of normalized states, see Figure \ref{Fig:Gbit}. It is very important, because one can interpret it as one half of a so-called PR-box with superstrong correlations (see e.g. \cite{Barrett}, \cite{Ududec} for more explanations). Also, the gbit is one of the most simple state spaces which is often used as a counter-example. Here we will consider it to apply all the basic notions important for GPTs in a concrete example.
\begin{center}
	\includegraphics[width=0.6\textwidth]{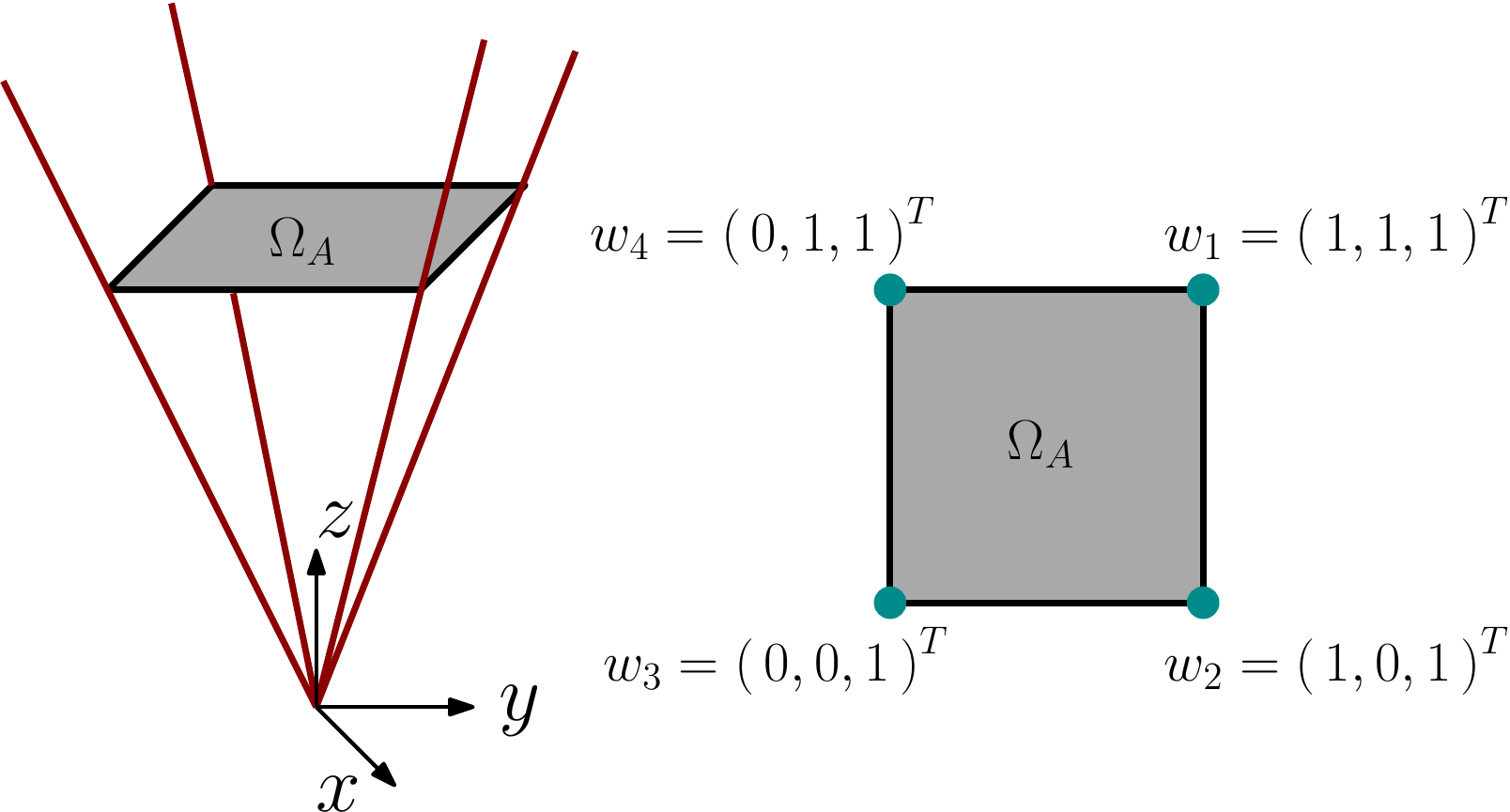}
	\captionof{figure}{\textit{\small The cone of unnormalized states $A_+$ and the square-shaped set of normalized states $\Omega_A$ for the gbit.}}
	\label{Fig:Gbit}
\end{center}

\begin{defi}
Let $A=\mathbb R^3$. We set $\Omega_A := \rm{conv}(w_1,w_2,w_3,w_4) = \left\{ \begin{pmatrix} a \\ b \\ 1 \end{pmatrix} \Big| 0\le a,b \le 1 \right\}$, where $w_1 = \begin{pmatrix} 1 \\ 1 \\ 1 \end{pmatrix}$, $w_2 = \begin{pmatrix} 1 \\ 0 \\ 1 \end{pmatrix}$, $w_3 = \begin{pmatrix} 0\\ 0\\ 1 \end{pmatrix}$, $w_4 = \begin{pmatrix} 0\\ 1\\ 1 \end{pmatrix}$. 
\end{defi}
The idea behind that choice is, that every state is described by two experiments $E_1$, $E_2$ with two outcomes $x,y$ each. Then a normalized state can be written as $\begin{pmatrix} p(x|E_1) \\ p(x|E_2) \\ 1 \end{pmatrix}$ considering that  $p(y|E_j) = 1-p(x|E_j)$. The third component gives the normalization, i.e. $u_A = \rm{pr}_3$, the projector on the third component. These states form a square, the corners $w_j$ of this square are those with definite results for the measurements and thus deserve their name \textit{pure states}, because they are states of maximal knowledge about what will happen in these measurements.\\
\\
Now we wish to find those operations $\{T_1, T_2\}$, $T_j : A \to A$, which can be used to distinguish two opposing sides. As every side is given by the convex combinations of two pure states, it is enough to consider them:
\begin{theorem}
All operations $T_1,T_2$ which can be used to distinguish the sides given by $w_1,w_4$ and $w_2,w_3$, i.e.
\begin{enumerate}
	\item $u_A(T_1(w_4)) = u_A(T_1(w_1)) = 1$
	\item $u_A(T_2(w_2)) = u_A(T_2(w_3)) = 1$
\end{enumerate} 
have the following form:
\begin{align*}
	T_2(a_1 w_1 + a_3 w_3 + a_4 w_4) = a_3 T_2(w_3) =: a_3 v'\\
	T_1(a_1 w_1 + a_2 w_2 + a_3 w_3) = a_1 T_1(w_1) =: a_1 v
\end{align*}
with $v,v' \in \Omega_A$ arbitrary.\\
However among those, only the transformations with $v \in  \{p\cdot w_1+ (1-p) \cdot w_4 | p \in [0,1]\}$ and $v' \in  \{p\cdot w_2+ (1-p) \cdot w_3 | p \in [0,1]\}$ are repeatable, i.e. 
$(u_A \circ T_j)\Big(\frac{T_j(w)}{(u_A \circ T_j)(w)}\Big) = 1 \quad \forall w \in \Omega_A$, and thus $(u_A \circ T_j)(T_j(w)) = u_A (T_j(w))$.
\end{theorem}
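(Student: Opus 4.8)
The plan is to first pin down the normalisation functionals $f_j := u_A\circ T_j$, then use strict positivity of $u_A$ to collapse each transformation onto a single ray, and finally read off repeatability by a direct computation.

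First I would exploit that $\{T_1,T_2\}$ is an operation, so $f_1+f_2=u_A$. Together with the two distinguishing conditions this fixes $f_1$ on all four corners: condition~(1) gives $f_1(w_1)=f_1(w_4)=1$, while condition~(2) combined with $f_1+f_2=u_A$ and $u_A(w_2)=u_A(w_3)=1$ gives $f_1(w_2)=u_A(w_2)-f_2(w_2)=0$ and likewise $f_1(w_3)=0$. Since $w_2,w_3,w_4$ form a basis of $\mathbb{R}^3$, these three values already determine $f_1$ uniquely, and one checks that $f_1=\mathrm{pr}_2$ (the value $f_1(w_1)=1$ is then automatic via $w_1=w_2+w_4-w_3$, confirming consistency). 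Hence $f_2=u_A-\mathrm{pr}_2$.

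Next I would use that $u_A$ is strictly positive on $A_+\setminus\{0\}$. Because $T_2$ is positive, $T_2 w_1\in A_+$, but $u_A(T_2 w_1)=f_2(w_1)=1-1=0$, forcing $T_2 w_1=0$; the same argument gives $T_2 w_4=0$ and $T_1 w_2=T_1 w_3=0$. Setting $v:=T_1 w_1$ and $v':=T_2 w_3$ (both lie in $\Omega_A$, being in $A_+$ and $u_A$-normalised to $1$), linearity together with the parallelogram identity $w_1+w_3=w_2+w_4$ yields the stated forms, since $\{w_1,w_2,w_3\}$ and $\{w_1,w_3,w_4\}$ are each bases of $\mathbb{R}^3$. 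Conversely, for any $v,v'\in\Omega_A$ these prescriptions define genuine transformations: each map has rank one with image along $v$ (resp.\ $v'$), so $T_1(\Omega_A)=\mathrm{conv}(0,v)\subset A_+$ and likewise for $T_2$, while $f_1+f_2=\mathrm{pr}_2+(u_A-\mathrm{pr}_2)=u_A$ confirms the operation condition. This establishes that the family is exactly the one claimed, with $v,v'$ free.

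Finally, for repeatability I would compute the action explicitly. Expanding a generic state $(a,b,1)$ in the relevant basis gives $T_1(a,b,1)=b\,v$ and $T_2(a,b,1)=(1-b)\,v'$. The repeatability condition $(u_A\circ T_j)(T_j(w))=u_A(T_j(w))$ then reads $b\,\mathrm{pr}_2(v)=b$ for $T_1$ and $(1-b)(1-\mathrm{pr}_2(v'))=1-b$ for $T_2$, for all $b\in[0,1]$. These hold iff $\mathrm{pr}_2(v)=1$ and $\mathrm{pr}_2(v')=0$, i.e.\ iff $v$ has second coordinate $1$ and $v'$ second coordinate $0$; combined with $v,v'\in\Omega_A$ this is precisely $v\in\{pw_1+(1-p)w_4\}$ and $v'\in\{pw_2+(1-p)w_3\}$. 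I expect the only real care to be needed in the bookkeeping forced by the linear dependence of the four corners---making sure the assignment on a chosen three-corner basis is consistent with the value on the fourth corner via $w_1+w_3=w_2+w_4$---rather than in any single computation.
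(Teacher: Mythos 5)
Your proposal is correct and follows essentially the same route as the paper's proof: use the operation condition $u_A\circ T_1+u_A\circ T_2=u_A$ together with the distinguishing conditions and strict positivity of $u_A$ to kill $T_1$ on $w_2,w_3$ and $T_2$ on $w_1,w_4$, reduce each map to rank one via the basis structure of the corners, and then check repeatability by direct computation. Your explicit identification $u_A\circ T_1=\mathrm{pr}_2$ and the resulting conditions $\mathrm{pr}_2(v)=1$, $\mathrm{pr}_2(v')=0$ are just a streamlined rephrasing of the paper's equivalent formulation $T_1\circ T_2=T_2\circ T_1=0$ and its kernel analysis.
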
 

\begin{proof}
As $u_A(T_1(w)) + u_A(T_2(w)) = 1 \quad \forall w\in \Omega_A$, we find $u_A(T_1(w_2)) = u_A(T_1(w_3)) = 0$ and $u_A(T_2(w_1)) = u_A(T_2(w_4)) = 0$. In particular, we find:
\begin{align}
	T_1(w_2) = 0 && T_1(w_3) = 0\\
	T_2(w_1) = 0 && T_2(w_4) = 0 \nonumber
\end{align}
Here we used that $u_A$ as an order unit is strictly positive, i.e. $0$ is the only element of $A_+$ that is mapped to $0$.\\
%\\
%We note that $e_1 := u_A \circ T_1$ and $e_2 := u_A \circ T_2$ are pure:\\
%Let $p \in (0,1)$, e,f effects with: $e_j = p \cdot e + (1-p)\cdot f$\\
%Consider $w_k$:\\ As $e(w_k), f(w_k) \in [0,1]$ and $e_j(w_k) = 0$ or $1$, we find $e(w_k) = f(w_k) = e_j(w_k)$. As the $w_k$ span $A$ and effects are linear, we find $e=f=e_j$.\\
\\
Now we want to construct $T_j$. At first we notice that arbitrary choices of three of the $w_k$ gives us a basis, especially:
$\begin{pmatrix} 1 \\ 0 \\ 0 \end{pmatrix} = w_1-w_4 = w_2-w_3$, $\begin{pmatrix} 0 \\ 1 \\ 0 \end{pmatrix} = w_1-w_2 = w_4-w_3$, $\begin{pmatrix} 0 \\ 0 \\ 1 \end{pmatrix} = w_3$, thus $\{w_1,w_2,w_3\}$ and $\{w_1,w_3,w_4\}$ each are a basis of $A$. Thus we already know that the $T_j$ have a 2-dimensional kernel and rank 1. We can write:
\begin{align*}
	T_2(a_1 w_1 + a_3 w_3 + a_4 w_4) = a_3 T_2(w_3) =: a_3 v'\\
	T_1(a_1 w_1 + a_2 w_2 + a_3 w_3) = a_1 T_1(w_1) =: a_1 v
\end{align*}
We rewrite $a_1 w_1 + a_2 w_2 + a_3 w_3 = a_1 w_1 + a_2 (w_1-w_4+w_3) + a_3 w_3 = (a_1+a_2) w_1 +(a_2+a_3)w_3 -a_2 w_4$ to find:
\begin{align*}
	T_2(a_1 w_1 + a_2 w_2 + a_3 w_3) &= (a_2+a_3) v'\\
	T_1(a_1 w_1 + a_2 w_2 + a_3 w_3) &=  a_1 v\\
	(T_1+T_2)(a_1 w_1 + a_2 w_2 + a_3 w_3) &= a_1 v + (a_2+a_3)v'
\end{align*}
As $1 = u_A(T_1+T_2)(w) \quad \forall w \in \Omega_A$ and $(T_1+T_2)(w_1) = v$ and $(T_1+T_2)(w_2) = v'$, we know that $v = \begin{pmatrix} ? \\ ? \\ 1 \end{pmatrix}$, $v' = \begin{pmatrix} ? \\ ? \\ 1 \end{pmatrix}$. As the $T_j$ are positive, we also know that the missing components can be found in $[0,1]$. So we have $v, v' \in \Omega_A$.\\ \\
\\
\textbf{Claim:} All $v,v' \in \Omega_A$ give rise to valid operations.
\begin{enumerate}
\item $u_A(T_1(w_4)) = u_A(T_1(w_1)) = 1$, $u_A(T_2(w_2)) = u_A(T_2(w_3)) = 1$ by $v,v' \in \Omega_A$ and $w_4=w_1-w_2+w_3$
%\item Let $w=a_1 w_1+a_2 w_2+a_3 w_3+a_4 w_4+a_5 \cdot 0$ be an arbitrary convex combination in $\Omega_A^{\le 1}$, i.e. $a_j \in [0,1]$ and $\sum_j a_j = 1$. Then:\\
%	$T_1(w) = a_1 v +a_4 v$, $T_2(w) = (a_2+a_3)v'$, i.e. the normalization $a_1+a_2+a_3+a_4$ was reduced to $a_1+a_4$ and $a_2+a_3$, i.e. $T_1$ and $T_2$ are normalization-decreasing.
\item Now we show positivity and that the normalization does not increase:\\
	As $A_+ = \mathbb R_{\ge 0} \cdot \Omega_A$, we either have $w=0$ ($T_j(0) = 0$, so a state again), or $w= \begin{pmatrix} a \\ b \\ c\end{pmatrix} = c \cdot \begin{pmatrix} a/c \\ b/c \\ 1 \end{pmatrix}$ with $c > 0$ and $0 \le \frac a c , \frac b c \le 1$. Thus:\\
	$T_1(w) = T_1(a\cdot (w_2-w_3) + b\cdot (w_1-w_2)+c w_3) = b v$ and\\ $T_2(w) = T_2(a\cdot (w_2-w_3) + b\cdot (w_1-w_2)+c w_3) = (a-a-b+c) v' = (c-b) v'$.
	As $b \ge 0$ and $b \le c$ both results are in $A_+ = \mathbb R_{\ge 0} \cdot \Omega_A$ again. Here one can also see that the $T_j$ reduce the normalization, as before the normalization was $c$ and now it is $b\le c$ or $c-b \le c$.
\item $u_A((T_1+T_2)(a_1 w_1+a_2 w_2+a_3 w_3)) = u_A(a_1 v + (a_2+a_3) v') = a_1+a_2+a_3 = u_A(a_1 w_1+a_2 w_2+a_3 w_3)$, i.e. $u_A(T_1+T_2) = u_A$.
\item Linearity is clear.
\end{enumerate}
Thus the claim is true.\\
\\
\\
At last we consider the consequences of repeatability:\\ \\
We want those $T_j$ for which 
\begin{equation}
  (u_A \circ T_j)\Big(\frac{T_j(w)}{(u_A \circ T_j)(w)}\Big) = 1 \quad \forall w \in \Omega_A
\end{equation}
  i.e. $(u_A \circ T_j)(T_j(w)) = u_A (T_j(w))$. As $u_A\circ (T_1+T_2) = u_A$, $u_A$ strictly positive, this is equivalent to $T_1 \circ T_2 = T_2 \circ T_1 = 0$ on $\Omega_A$ (and thus on A):\\
Let $k \in \{1,2\}$, $k \ne j$. Then 
\begin{equation}
  (u_A \circ T_j)(T_j(w)) = u_A (T_j(w)) = (u_A\circ T_j)(T_j(w)) + (u_A\circ T_k)(T_j(w))
\end{equation}  
and thus $(u_A\circ T_k)(T_j(w)) = 0$, i.e. $T_k \circ T_j = 0$. Vice versa:
\begin{equation}
  u_A (T_j(w)) = (u_A\circ T_j)(T_j(w)) + (u_A\circ T_k)(T_j(w)) = (u_A \circ T_j)(T_j(w))
\end{equation}
The interpretation is clear: If we have measured result 1, a new measurement will not lead to result 2 because of the repeatability.\\
Thus we have to choose $v,v'$ such that $T_2(v) = T_1(v') = 0$.\\
As $\rm{ker}(T_1) = \rm{span}\{ w_2,w_3\}$ and $v' \in \Omega_A$, any choice $v' \in \{p\cdot w_2+ (1-p) \cdot w_3 | p \in [0,1]\}$ is valid:
We surely have $v'=a_2 w_2+a_3w_3$. As $u_A(v') = 1$, we find $a_2+a_3=1$. As $w_2 = \begin{pmatrix} 1 \\ 0 \\ 1 \end{pmatrix}$, $w_3 = \begin{pmatrix} 0\\ 0\\ 1 \end{pmatrix}$ and $v' \in \Omega_A$, we need $0 \le a_2 \le 1$ because of the first component. But then we also have $0 \le a_3 \le 1$.\\
As $\rm{ker}(T_2) = \rm{span}\{ w_1, w_2-w_3 \} = \rm{span}\{ w_1,w_4\}$ and $v \in \Omega_A$, any choice $v \in  \{p\cdot w_1+ (1-p) \cdot w_4 | p \in [0,1]\}$ is valid: We have $v=a_1w_1+a_4w_4$. As $v\in \Omega_A$ also $a_1+a_4=1$. As $w_1 = \begin{pmatrix} 1 \\ 1 \\ 1 \end{pmatrix}$, $w_4 = \begin{pmatrix} 0\\ 1\\ 1 \end{pmatrix}$ we need $0 \le a_1 \le 1$ because of the first component. But then also $0 \le a_4 \le 1$.
\end{proof}

This shows that the measurement outcome can be used to distinguish the sides spanned by $w_1, w_4$ and $w_2, w_3$. By relabelling, it should be possible to construct operations that also separate the other two lines. However, it is not possible to separate all the corners on the square (see \cite{Pfister}). The reason is, as we have seen, that $T_j$ will disturb its input.\\
\\
At last, we give examples of equivalent definitions of the gbit:\\
\begin{enumerate}
	\item $L(w_1) = w_1$, $L(w_4) = \begin{pmatrix} -1 \\ 1 \\ 1 \end{pmatrix}$, $L(w_3) = \begin{pmatrix} -1 \\ -1 \\ 1 \end{pmatrix}$ can be linearly extended. Then $L(w_2) = L(w_1-w_4+w_3) =\begin{pmatrix} 1 \\ -1 \\ 1 \end{pmatrix}$ . L is invertible. It identifies the gbit with $[-1,1]^2\times\{1\}$. 
	\item \cite{Barrett} uses a notation $\begin{pmatrix}p(x|E_1) \\ p(y|E_1) \\ p(x|E_2) \\ p(y|E_2) \end{pmatrix}$ which contains also the probabilities for the other results, but the normalization is not explicitly listed any more. It is given by $p(x|E_1) + p(y|E_1) = p(x|E_2) + p(y|E_2) =: c$.\\
	We can achieve this form by $L\left(\begin{pmatrix} a \\ b \\ c\end{pmatrix}\right) = \begin{pmatrix} a \\ c-a\\b\\c-b \end{pmatrix}$, i.e. $L = \begin{pmatrix} 1 & 0 &0\\ -1 & 0 & 1 \\ 0 & 1 & 0 \\ 0 & -1 & 1 \end{pmatrix}$. $L$ is a bijective map from $A$ to $\rm{span}\left\{  \begin{pmatrix} 1 \\ -1\\0\\0 \end{pmatrix},  \begin{pmatrix} 0 \\ 0\\1\\-1 \end{pmatrix}, \begin{pmatrix} 0 \\ 1\\0\\1 \end{pmatrix} \right\}$. 
\end{enumerate}

\section{The postulates}
\label{Section:Postulates}
After we have introduced the framework, we can finally state the postulates. These postulates are taken from \cite{Postulates}, where it was shown that in finite dimension, they single out quantum theory, i.e. quantum theory is the only theory which satisfies these postulates. \\
\\
\cite{Postulates} also makes some additional weak assumptions on the set of allowed effects, i.e. that it is convex and closed (for similar reasons like $\Omega_A$ is convex and closed) and that it has full dimension (to ensure that there are no different states that give the same probabilities for all measurements). We will not explicitly list this set as the postulates will have the consequence that all effects are allowed.

\subsection{Motivation of the postulates}
Consider $n$ perfectly distinguishable pure states $w_1,...,w_n$. The convex hull of these states has all the properties of a classical $n$-level system. The first postulate is, that all states are an element of a classical subspace:

\begin{postulate}[Postulate 1][\hspace{-3mm} Classical decomposability/ weak spectrality] \\\textit{
	For every state $w\in \Omega_A$, there exists a probability distribution $p_1,...,p_n$ and perfectly distinguishable pure states $w_1,...,w_n$ such that:
	\begin{equation}
		w = \sum_{j=1}^n p_j w_j
	\end{equation}
	}
\end{postulate}

This means that the only non-classical behaviour exists, because not all states have to belong to the same classical subspace.\\
\\
Next we consider a postulate which is important for powerful computation:\\
The computation power of a classical computer does not depend on its physical implementation. No matter if using silicon wafers, Lego, or redstone in Minecraft\cite{Minecraft}, there are many ways to build Turing machines. An important requirement is, that in terms of computation power, all classical implementations of a bit are equivalent. This property is called \textit{Bit Symmetry}. This property can be expanded to all classical $n$-level systems, where $n=2$ is the bit. quantum computation and quantum information usually are analyzed with the assumption that all quantum $n$-level systems are equivalent, e.g. all qubits are equivalent. Especially, this assumption implies that in principle, it should be possible to translate an entangled state of a composite system to a state of a single system and vice versa without any losses. Thus this assumption is crucial for the superior computation power of quantum computers. In more mathematical terms, the equivalence between $n$-level systems requires a translation-function $T$, which translates one n-level system into another. Furthermore, it should be possible to translate back. Thus \cite{Postulates} considers (dynamical) state spaces $(A, \Omega_A, u_A, \mathcal G_A)$ with $\mathcal G_A$ the Lie group of physically reversible physically allowed transformations and postulates:

\begin{postulate}[Postulate 2][\hspace{-3mm} Strong symmetry / generalised bit symmetry] \\
	\textit{
	For any $n\in \mathbb N$, let $w_1,...,w_n$ and $v_1,...,v_n$ be sets of perfectly distinguishable pure states. Then there exists a reversible transformation $T$ such that $T(w_j) = v_j \ \forall j$.
	}
\end{postulate}

To simplify notation, we refer to $\mathcal G_A$ as the set of reversible transformations. Hereby the condition that the transformations and their inverses have to be physically allowed is implicit.\\
\\
The next postulate is related to the famous two-slit experiment, which was an important step towards the discovery of quantum physics. Like for sand falling through two slits, in classical physics one would expect that the probability of electrons passing a two-slit experiment to be just the sum of the probability of passing one slit, and that the resulting intensity on a detector plane was just the sum of the single-slit intensities. The surprising experimental result was that also matter shows interference. There are multiple paths electrons could take, and all these paths contribute with a certain phase which leads to interference patterns as summarised in the path integral.\\
So one fundamental insight of quantum physics is, that one cannot just ``add'' the intensities and probabilities of two single slits to describe two-slit experiments. The reason is interference.\\
Despite the fundamental importance of the two-slit experiment for the discovery of quantum physics, long time has passed until people started to think about three-slit experiments or other multi-slit experiments. One can ask similar questions like one once did for the two-slit experiment: If we know the behaviour of the single-slit and the two-slit experiments, can we also infer the behaviour of the three-slit experiment? Or will there be a no-trivial interference which only appears when at least three slits are involved?\\
The surprising answer is that there is no non-trivial third- or higher-order interference.

\begin{postulate}[Postulate 3][\hspace{-3mm} No third-order interference] \\
	\textit{
	There is no non-trivial third- or higher-order interference.
	}
\end{postulate} 

To state this postulate in an exact way, one needs a lot of technical formalization and abstract definitions. As we will not directly use this postulate, we will not make the effort. The details can be found in \cite{Postulates},\cite{Ududec}, and the original framework of higher-order interference was first introduced by Sorkin \cite{Sorkin}.\\
The last postulates gives rise to Hamiltonian dynamics:\\

\begin{postulate}[Postulate 4][\hspace{-3mm} Observability of energy] \\
	\textit{
	There is non-trivial reversible continuous time evolution and the generator of every such evolution can be associated to an observable (energy) , which is a conserved quantity.
	}
\end{postulate}

This postulate is special in the sense, that most axiomatic derivations of quantum theory do not define any time evolution and also do not talk about time at all, not even in the final results. Just like before, we will not directly use this postulate and thus do not explain it in full detail.

\subsection{First consequences of the postulates} 
The most important result from \cite{Postulates} is, that the 4 postulates single out the state space structure of quantum theory together with the unitary transformations:
\begin{theorem}
	The 4 postulates imply that the state space is an N-level state space of standard complex quantum theory for some $N \in \mathbb N$, and all conjugations $\rho \mapsto U\rho U^\dagger$ with $U\in \rm{SU}(N)$ are contained in the group of reversible transformations.
\end{theorem}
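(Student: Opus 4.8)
The plan is to reconstruct the argument of \cite{Postulates}: first distil rigid structural information from Postulates 1 and 2, use it to identify $A_+$ as the cone of a Euclidean Jordan algebra, classify the possibilities, and then let Postulates 3 and 4 eliminate every candidate but complex quantum theory.

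First I would combine Postulate 1 (classical decomposability) with Postulate 2 (strong symmetry). Decomposability supplies a notion of \emph{rank} $N$ (the number of perfectly distinguishable pure states in a maximal decomposition) and of \emph{frame} (a maximal perfectly distinguishable set of pure states); I would check that $N$ is well defined and that the maximally mixed state $\mu=\frac1N\sum_j w_j$ is the unique $\mathcal G_A$-invariant state. Since Postulate 4 makes $\mathcal G_A$ a Lie group acting on the compact set $\Omega_A$, it is compact, and averaging an arbitrary inner product over $\mathcal G_A$ yields a $\mathcal G_A$-invariant inner product on $A$. The decisive --- and hardest --- step is to show that with respect to this inner product the cone $A_+$ is \emph{self-dual}, $A_+=A_+^*$, and \emph{homogeneous} (its linear automorphism group acts transitively on the interior). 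Self-duality should follow from the transitive action of $\mathcal G_A$ on frames, in the spirit of the reversibility-implies-self-duality results, while homogeneity requires supplementing the compact ``rotations'' in $\mathcal G_A$ with radial boost maps inside the cone; pinning down both rigorously is the main obstacle.

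Granting self-duality and homogeneity, I would invoke the Koecher--Vinberg theorem: such a cone is precisely the cone of squares of a Euclidean (formally real) Jordan algebra, with $\Omega_A$ its normalized states. The Jordan--von Neumann--Wigner classification then leaves only the Hermitian-matrix algebras $H_N(\mathbb{R})$, $H_N(\mathbb{C})$, $H_N(\mathbb{H})$, the exceptional algebra $H_3(\mathbb{O})$, and the rank-$2$ spin factors (whose state spaces are Euclidean balls). It remains to discard everything except $H_N(\mathbb{C})$.

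Finally I would switch on Postulates 3 and 4. No third-order interference (Postulate 3), tested on the projection lattices of these algebras, removes the exceptional octonionic algebra. Observability of energy (Postulate 4) then does the rest: for rank $\ge 3$ it forces the complex case, because only for $H_N(\mathbb{C})$ are the generators of reversible dynamics the skew-Hermitian operators $-iH$, in bijection with genuine observables $H$ (the antisymmetric, respectively anti-self-dual, generators of the real and quaternionic automorphism groups are not observables); for rank $2$ it singles out the $3$-ball qubit among all balls, since a $d$-ball has reversible group $\mathrm{SO}(d)$ with $\binom{d}{2}$ generators, and these match the $d$ observable directions only when $\binom{d}{2}=d$, i.e. $d=3$. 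What survives at every rank is $H_N(\mathbb{C})$, standard complex quantum theory. The last clause is then immediate: the trace-preserving Jordan automorphisms of $H_N(\mathbb{C})$ are the conjugations $\rho\mapsto U\rho U^\dagger$ by $U\in\mathrm{U}(N)$, and strong symmetry together with the Lie structure of $\mathcal G_A$ forces $\mathcal G_A$ to exhaust this connected automorphism group, so in particular every conjugation by $U\in\mathrm{SU}(N)$ lies in $\mathcal G_A$.
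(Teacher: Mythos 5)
The paper itself does not prove this theorem at all --- its ``proof'' is the single line ``See Theorem 31 from \cite{Postulates}'' --- so your proposal has to be measured against the argument of that reference. Your skeleton (extract an invariant inner product and self-duality from strong symmetry, identify $A_+$ as the cone of a Euclidean Jordan algebra via Koecher--Vinberg, classify by Jordan--von Neumann--Wigner, then eliminate candidates) is indeed the strategy of the cited work, and the self-duality step from Postulates 1 and 2 is exactly the bit-symmetry result quoted in this thesis as Theorem \ref{Theorem:Selfdual}. But two of your steps are wrong in a way that matters.

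First, you try to obtain homogeneity of $A_+$, and hence the Jordan structure, from Postulates 1 and 2 alone, relegating Postulate 3 to the role of discarding a single Jordan algebra at the end. In the actual derivation Postulate 3 is an essential structural ingredient \emph{before} Koecher--Vinberg can be invoked: given Postulates 1 and 2 it is equivalent to the covering property of the face lattice (a fact this thesis itself quotes and uses in Chapter \ref{Section:InformationTheory}), and it is the covering property that allows one to construct the filters/compressions whose transitivity on the interior of the cone yields homogeneity. Whether Postulates 1 and 2 alone already force the Jordan-algebraic structure is explicitly flagged in this thesis as an open question, so the step you yourself call ``the main obstacle'' is not merely hard --- as you formulate it, it is not known to be provable. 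Second, Postulate 3 cannot ``remove the exceptional octonionic algebra'': every Euclidean Jordan algebra, including $H_3(\mathbb{O})$, exhibits no third-order interference, and accordingly $3$-level octonionic quantum theory appears in the classification of systems satisfying Postulates 1--3 quoted in this thesis (Lemma 33 of \cite{Postulates}). It is Postulate 4, via the requirement that every generator of reversible dynamics corresponds to a conserved observable (a dynamical correspondence in the Alfsen--Shultz sense), that eliminates the octonionic case along with the real and quaternionic ones and the higher-dimensional balls. Your final counting $\binom{d}{2}=d$ for the ball case is also not the right criterion --- what is needed is an equivariant injection of the Lie algebra of $\mathcal G_0$ into the space of observables, not an equality of dimensions --- though that is a more minor point.
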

\begin{proof}
	See Theorem 31 from \cite{Postulates}.
\end{proof}

We will now focus on the consequence of the first two postulates, i.e. classical decomposability and strong symmetry. We  will often call them Postulates 1 and 2. So it is interesting to have some examples for non-trivial state spaces that satisfy these postulates. Especially it is important to know that there are non-quantum and non-classical state spaces fulfilling the postulates:
\begin{theorem}
	The possible state spaces satisfying Postulates 1, 2 and 3 which have a non-trivial connected component $\mathcal G_0$ of their reversible transformation groups are the following:
	\begin{enumerate}
		\item The $d$-dimensional ball state spaces $\Omega_d := \{(1,r)^T|r\in \mathbb R^d, ||r||\le 1\}$ with $d\ge 2$, and either $\mathcal G_0 = \rm{SO}(d)$, or $\mathcal G_0 = \rm{SU}(d/2)$ if $d=4,6,8,...$, or $\mathcal G_0 = \rm{U}(d/2)$ if $d=2,4,6,8,...$, or $\mathcal G_0 = \rm{Sp}(d/4)$ if $d= 8,12,16,...$, or $\mathcal G_0 = \rm{Sp}(d/4) \times \rm{U}(1)$ if $d=8,12,16,...$, or $\mathcal G_0 = \rm{Sp}(d/4)\times \rm{SU}(2)$ if $d=4,8,12,...$, or $\mathcal G_0 = \rm{G}_2$ if $d=7$ or $\mathcal G_0 = \rm{Spin}(7)$ if $d=8$ or $\mathcal G_0 =\rm{Spin}(9)$ if $d=16$.
		\item $N$-level real quantum theory with $N \ge 2$ and $\mathcal G_0 = \{\rho \mapsto O\rho O^T | O \in \rm{SO}(N)\}$
		\item $N$-level complex quantum theory with $N \ge 2$ and $\mathcal G_0 = \{\rho \mapsto U \rho U^\dagger | U \in \rm{SU}(N)\}$
		\item $N$-level quaternionic quantum theory with $N\ge 2$ and $\mathcal G_0 \cong \rm{Sp}(N)/\{-1,+1\}$ 
		\item $3$-level octonionic quantum theory with $\mathcal G_0 \cong \rm{F}_4$.
	\end{enumerate}
	However, among those, only the complex quantum theory state spaces (including $\Omega_3$, the qubit) satisfy Postulate 4, that is, observability of energy.
\end{theorem}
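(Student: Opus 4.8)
The plan is to treat the two assertions separately: the classification forced by Postulates 1--3, and then the elimination of all but complex quantum theory by Postulate 4, following the route of \cite{Postulates}. For the classification I would first extract the structural backbone from Postulates 1 and 2. Postulate 1 supplies a spectral decomposition of every state into perfectly distinguishable pure states, while Postulate 2 (strong symmetry) makes the reversible group $\mathcal{G}_A$ act transitively on maximal frames of such states, and hence on the pure states themselves. Together these are known to force the cone $A_+$ to be \emph{self-dual} and \emph{homogeneous}, so by the Koecher--Vinberg theorem $A$ acquires the structure of a Euclidean (formally real) Jordan algebra. The hypothesis that the connected component $\mathcal{G}_0$ is non-trivial is what excludes the classical simplices, whose reversible group is merely the discrete permutation group, and guarantees genuine continuous dynamics. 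I would then bring in Postulate 3: the absence of third- and higher-order interference is precisely the condition ruling out the non-Jordan candidates, pinning the system to an \emph{irreducible} Euclidean Jordan algebra. The Jordan--von Neumann--Wigner classification of these algebras then yields exactly the five families in the statement --- $N$-level real, complex and quaternionic quantum theory, the $3$-level octonionic (Albert) algebra, and the spin factors, whose normalized states are the $d$-dimensional balls $\Omega_d$.

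For the ball case the remaining task is to pin down the admissible groups $\mathcal{G}_0$. The pure states of $\Omega_d$ form the sphere $S^{d-1}$, and Postulate 2 forces $\mathcal{G}_0$ to act transitively on this sphere while preserving the Euclidean inner product. I would therefore read the list directly off the Montgomery--Samelson--Borel classification of compact connected Lie groups acting transitively on spheres; this reproduces $\mathrm{SO}(d)$, $\mathrm{SU}(d/2)$, $\mathrm{U}(d/2)$, $\mathrm{Sp}(d/4)$ together with its two central extensions, and the exceptional actions of $\mathrm{G}_2$ on $S^6$, $\mathrm{Spin}(7)$ on $S^7$ and $\mathrm{Spin}(9)$ on $S^{15}$, matching case 1 exactly.

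For the second assertion I would test Postulate 4 family by family. The criterion to verify is that every generator of a continuous reversible evolution --- that is, every element of the Lie algebra of $\mathcal{G}_0$ --- can be identified with a conserved observable that generates that very flow. Complex quantum theory passes because multiplication by $i$ furnishes a bijection between the self-adjoint observables $H$ and the anti-self-adjoint generators $-iH$: the flow $\rho \mapsto e^{-iHt}\rho\,e^{iHt}$ is generated by the observable $H$, which is conserved because it commutes with $e^{-iHt}$. I would then argue that this bridge between observables and generators is unavailable elsewhere --- real quantum theory possesses no imaginary unit internal to its symmetric observables, so the anti-symmetric generators have no conserved-observable partners; the quaternionic and octonionic cases fail because their symplectic-type generator algebra and the derivations of the Albert algebra are not in observable-preserving bijection with the self-adjoint cone; and for the spin factors the sphere-transitive groups have Lie algebras that cannot index the full space of observables as conserved energies.

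I expect the main obstacle to lie in this last step rather than in the classification, which is largely a matter of assembling cited structural theorems. Making precise, in the GPT language, what it means for a generator to ``be associated to a conserved observable,'' and then showing rigorously that such an identification exists \emph{only} for complex quantum theory --- uniformly across all the exotic ball groups such as $\mathrm{G}_2$, $\mathrm{Spin}(7)$ and $\mathrm{Spin}(9)$ --- is the delicate part, since each exceptional group carries an idiosyncratic Lie-algebraic structure and the elimination must be carried out without leaning on the explicit operator picture that is special to the complex case.
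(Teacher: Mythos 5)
The first thing to say is that the paper does not prove this theorem at all: its ``proof'' is the single line ``Lemma 33 from \cite{Postulates}'', so the only meaningful comparison is with the argument in that reference. Your sketch is, in outline, a faithful reconstruction of that argument --- self-duality from Postulates 1 and 2, a Jordan-algebraic characterization of the cone, the Jordan--von Neumann--Wigner classification to produce the five families, the classification of compact connected groups acting transitively on spheres to list the groups $\mathcal G_0$ for the ball state spaces, and a Lie-algebra/observable correspondence argument to show that Postulate 4 singles out the complex case. You also correctly locate the delicate point in the Postulate 4 elimination.

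There is, however, one step as you state it that would fail, or at least is not available: you assert that Postulates 1 and 2 already force the cone to be self-dual \emph{and homogeneous}, so that Koecher--Vinberg applies before Postulate 3 enters, with Postulate 3 serving only to ``rule out non-Jordan candidates.'' Self-duality from Postulates 1 and 2 is indeed a theorem (Theorem \ref{Theorem:Selfdual} in this thesis, Proposition 3 of \cite{Postulates}), but homogeneity is not known to follow from them; if it did, Koecher--Vinberg plus strong symmetry would already yield the Jordan classification, all of whose members satisfy Postulate 3, and this would settle the question --- explicitly flagged as open in this thesis and in \cite{Postulates} --- of whether Postulate 3 is independent of Postulates 1 and 2. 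In the actual proof, Postulate 3 is what supplies the missing structure (via its equivalence, given 1 and 2, to the covering law and the good behaviour of the filters/compressions) that leads to the Euclidean Jordan algebra; it is not merely a pruning device. A second, smaller attribution slip: irreducibility of the resulting Jordan algebra (equivalently, excluding direct sums) comes from strong symmetry acting transitively on frames, not from Postulate 3. Neither point changes the overall architecture of your proposal, but the first one is a genuine logical gap in the route as written.
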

\begin{proof}
	Lemma 33 from \cite{Postulates}.
\end{proof}
The previous theorem tells us what the state spaces satisfying the first \textbf{three} postulates are, i.e. one more than we are going to consider. It is not known whether the third postulate is separate from the first two or if there exist state spaces that satisfy Postulates 1 and 2, but not 3.\\
\\
The following definitions and results also are from \cite{Postulates}. The proofs are too long and technical to repeat them here. Also the definitions and results themselves are quite technical but will be needed in this thesis. Therefore, examples from quantum theory are used to explain them. When referring to ``the Hilbert space'', we mean the Hilbert space of pure states. This comparison is possible, because our postulates provide some of the structure that the Hilbert space of pure states provides for the set of all states (i.e. pure and mixed states) in quantum theory.
\\
Bit symmetry, i.e. the special case of strong symmetry for 2-level systems, has the important result that the GPT is self-dual:
\begin{theorem} \label{Theorem:Selfdual}
	Postulates 1 and 2 imply that $A_+$ is self-dual.% with a corresponding self-dualizing inner product that satisfies $\braket{Tw,Tv} = \braket{w,v}$ for all reversible transformations $T$. 
	The inner product can be chosen such that all of the following properties hold:
	\begin{enumerate}
		\item $\braket{Tw,Tv} = \braket{w,v}$ for all reversible transformations $T$
		\item $0 \le \braket{w,v} \le 1$ for all $w,v \in \Omega_A$
		\item $\braket{w,w} = 1$ for all pure $w\in \Omega_A$ and $\braket{v,v} < 1$ for all mixed $v \in \Omega_A$
		\item $\braket{w,v} = 0$ for all $w,v \in \Omega_A$ which are perfectly distinguishable. This means that all perfectly distinguishable states are orthogonal. 
	\end{enumerate}			
%	 the corresponding norm $||w|| := \sqrt{\braket{w,w}}$ is $1$ on all pure states, and strictly less than $1$ for all mixed states. The inner product is non-negative on all states, i.e. $\braket{w,v} \ge 0$ for all $w,v \in A_+$. It satisfies $\braket{w,v} = 0$ if $w,v\in \Omega_A$ are perfectly distinguishable, i.e. all perfectly distinguishable states are orthogonal.
\end{theorem}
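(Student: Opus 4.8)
The plan is to manufacture a single inner product on $A$ out of the symmetry group and then show it realizes the self-duality, deriving the four listed properties along the way. First I would argue that $\mathcal{G}_A$ is a compact (Lie) group: each $T\in\mathcal{G}_A$ maps the compact spanning set $\Omega_A$ onto itself, so $\mathcal{G}_A$ is a closed, norm-bounded subgroup of $GL(A)$. Starting from any inner product $(\cdot,\cdot)_0$ and averaging over the Haar measure, $\braket{x,y}:=\int_{\mathcal{G}_A}(Tx,Ty)_0\,dT$, yields a $\mathcal{G}_A$-invariant inner product, which is property 1. Transitivity of $\mathcal{G}_A$ on pure states — the special case $n=1$ of Postulate 2 — makes $\braket{w,w}$ constant over pure states, so after rescaling $\braket{w,w}=1$ for every pure $w$, the first half of property 3. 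For a mixed $v=pa+(1-p)b$ with $a\ne b$ and $0<p<1$ the states $a,b$ are distinct unit vectors, hence not positively proportional, so the strict triangle inequality gives $\|v\|<1$, i.e.\ $\braket{v,v}<1$, completing property 3. The upper bound $\braket{w,v}\le\|w\|\,\|v\|\le 1$ in property 2 is then immediate from Cauchy--Schwarz together with $\|v\|\le 1$ for states.

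Next I would identify the maximally mixed state with the order unit. Writing a maximal frame $w_1,\dots,w_n$ and $\mu:=\frac1n\sum_j w_j$, invariance of both $\mu$ and the inner product forces $\braket{\mu,v}$ to be constant over pure states $v$; since every state is a convex combination of pure states, the affine functional $\braket{\mu,\cdot}$ is constant on $\Omega_A$, and as $\Omega_A$ spans $A$ this gives $\braket{\mu,\cdot}=k\,u_A$, with $k=\braket{\mu,\mu}>0$. This mechanism ties the inner product to the normalization functional. To prepare the positivity argument I would use that a linear functional attains its minimum over the compact convex set $\Omega_A$ at an extreme point (finite-dimensional Minkowski theorem), so that to prove $\braket{w,v}\ge 0$ for pure $w$ and all states $v$ it suffices to prove it for pure $v$.

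The crux — which is also the main obstacle — is to establish positivity $\braket{w,v}\ge 0$ for states (property 2, lower bound), equivalently $\phi(A_+)\subseteq A_+^*$ for the equivariant isomorphism $\phi(w):=\braket{w,\cdot}$, together with the reverse inclusion giving full self-duality and the orthogonality in property 4. My plan is to show that for a pure $w_1$, completed to a frame $w_1,\dots,w_n$ with distinguishing measurement $e_1,\dots,e_n$ (which exists by Proposition \ref{Prop:SubnormalizedDistinguishable}), one has $\phi(w_1)=e_1$. Here strong symmetry does the essential work: it supplies reversible transformations fixing $w_1$ while permuting $w_2,\dots,w_n$, and, because $\phi$ intertwines the $\mathcal{G}_A$-actions on states and on effects, $\phi(w_1)$ must be the unique effect invariant under this stabilizer that takes the value $1$ at $w_1$, namely $e_1$. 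Granting this, $\braket{w_i,w_j}=e_i(w_j)=\delta_{ij}$ yields orthonormality of frames (property 4), the positivity $\braket{w_1,v}=e_1(v)\ge 0$ follows for all states, and averaging over frame decompositions extends it to arbitrary states; the reverse inclusion $A_+^*\subseteq\phi(A_+)$ then comes from noting that the extreme rays of $A_+^*$ are atomic effects, each of the form $\phi(\text{pure state})$, with transitivity across all frames propagating the identification over a spanning set. I expect the delicate point to be exactly this uniqueness-of-invariant-effect step: invariance plus transitivity on single pure states is \emph{not} enough to force self-duality (regular even-gon state spaces are vertex-transitive and admit an invariant inner product yet are not self-dual), so the argument must genuinely exploit transitivity on frames — the full strength of Postulate 2 — rather than on individual pure states.
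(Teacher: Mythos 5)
The paper does not actually prove this theorem itself: its ``proof'' is a citation to Proposition 3 of \cite{Postulates}, Theorem 1 of \cite{Skalarprodukt} and Proposition 5.19 of \cite{Ududec}, so your attempt must be judged against those external arguments. The preparatory part of your plan is sound and is indeed how those references begin: Haar-averaging over the compact group $\mathcal G_A$ gives an invariant inner product (property 1); transitivity on single pure states plus rescaling gives $\braket{w,w}=1$ for pure $w$; the strict triangle inequality gives $\braket{v,v}<1$ for mixed $v$; Cauchy--Schwarz gives the upper bound in property 2; and reducing positivity to pure states via extreme points is legitimate. Identifying $\braket{\mu,\cdot}$ with a multiple of $u_A$ is also essentially right, though you should define $\mu$ as the Haar average of a pure-state orbit rather than as a frame average, since the frame-independence of $\frac1n\sum_j w_j$ is not yet available at that stage.

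The crux, however --- the lower bound $\braket{w,v}\ge 0$, property 4, and the self-duality $\phi(A_+)=A_+^*$ --- rests on a step that fails as stated. You assert that $\phi(w_1)=\braket{w_1,\cdot}$ must equal the distinguishing effect $e_1$ because it is ``the unique effect invariant under the stabilizer of $w_1$ that takes the value $1$ at $w_1$.'' That uniqueness is false: $u_A$ itself is invariant under all of $\mathcal G_A$ and takes the value $1$ at $w_1$, and so does the stabilizer-average of any effect with value $1$ at $w_1$; invariance under the stabilizer constrains the functional only on orbits, not pointwise on $\Omega_A$. The step is also circular in two places: you treat $\phi(w_1)$ as an effect (i.e.\ $0\le\phi(w_1)\le u_A$ on states), which is essentially the positivity being proved, and you later assume the extreme rays of $A_+^*$ are of the form $\phi(\text{pure state})$, which is the self-duality statement itself. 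You correctly diagnose that transitivity on single pure states cannot suffice (the even polygons are vertex-transitive but not self-dual) and that transitivity on $2$-frames must enter quantitatively --- but you never supply the argument that uses it. The proof in \cite{Skalarprodukt} does exactly this missing work: after constructing the invariant inner product and the common axis through $\mu$ and $u_A$, it compares the invariant cones $A_+$ and $\phi^{-1}(A_+^*)$ by an extremality argument and uses bit symmetry to force the two cones to coincide, from which properties 2 and 4 then follow. Without an argument of that kind, your proposal establishes properties 1 and 3 and the reductions, but not the theorem's main content.
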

\begin{proof}
	See Proposition 3 from \cite{Postulates}, Theorem 1 from \cite{Skalarprodukt}, Proposition 5.19 from \cite{Ududec}.
\end{proof}
This inner product will be our most important tool. In quantum theory, the self-dualizing inner product on the space $A$ of hermitian operators is given by $\braket{M,B} = \text{Tr}(M^\dagger B) = \text{Tr}(MB)$.\\ 
\\
In quantum theory, orthonormal pure states $\ket{1}\bra{1}$, $\ket{2}\bra{2}$,...,$\ket{n}\bra{n}$ are perfectly distinguishable. We call such a set a $n$-frame and generalize this definition as follows:
\begin{defi}
	A set of $n$ perfectly distinguishable pure states $w_1,...,w_n \in \Omega_A$ is called a \textbf{$n$-frame}.
\end{defi}

Postulate 2 implies that two $n$-frames $w_1$,...,$w_n$ and $v_1$,...,$v_n$ can be reversibly transformed into each other, i.e. there is a reversible transformation $T$ with $T w_j = v_j$. This implies that there is a maximal frame size, and that all smaller frames can be completed to a maximal frame.\\

\begin{lem}
	Let $\rm{dim}(A) = n$, $\{w_j\}_{j \in J}$ be a set of perfectly distinguishable (pure) states. Then $|J| \le n$.
\end{lem}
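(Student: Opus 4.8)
The plan is to invoke the self-dualizing inner product $\braket{\cdot,\cdot}$ guaranteed by Theorem \ref{Theorem:Selfdual} and to argue that a family of perfectly distinguishable pure states forms an orthonormal — hence linearly independent — family in $A$. Since $\dim(A)=n$, linear independence immediately caps the number of such states at $n$, giving $|J|\le n$.

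First I would reduce the global distinguishability hypothesis to a pairwise one. By the definition of perfect distinguishability there are allowed effects $\{e_i\}$ with $e_i(w_l)=\delta_{il}$ and $\sum_i e_i \le u_A$. For any two distinct indices $j,k$ the pair $\{e_j,e_k\}$ then satisfies $e_j+e_k\le u_A$ together with $e_j(w_j)=e_k(w_k)=1$ and $e_j(w_k)=e_k(w_j)=0$, so $w_j$ and $w_k$ are themselves perfectly distinguishable. Property 4 of Theorem \ref{Theorem:Selfdual} therefore yields $\braket{w_j,w_k}=0$ for $j\ne k$, and since each $w_j$ is pure, property 3 gives $\braket{w_j,w_j}=1$. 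Thus $\{w_j\}_{j\in J}$ is an orthonormal family with respect to $\braket{\cdot,\cdot}$.

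The remaining step is the standard orthonormality-implies-independence argument, applied to finite subfamilies so as not to presuppose that $J$ is finite. If $w_{j_1},\dots,w_{j_m}$ satisfy $\sum_{l} c_l w_{j_l}=0$ with $c_l\in\mathbb R$, then pairing with $w_{j_r}$ and using orthonormality gives $c_r=\braket{w_{j_r},\sum_l c_l w_{j_l}}=0$ for every $r$, so each finite subfamily is linearly independent. A linearly independent set in the $n$-dimensional real vector space $A$ has at most $n$ elements, so every finite subset of $J$ has size at most $n$; taking the supremum over finite subsets gives $|J|\le n$. The only real point of care here is not to assume $J$ finite at the outset and to perform the reduction to pairwise distinguishability so that property 4 is applicable; once orthonormality is established, the dimension bound is elementary.
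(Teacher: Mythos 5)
Your proof is correct, but it takes a genuinely different and heavier route than the paper's. The paper proves linear independence directly from the distinguishing effects: if the $w_j$ were linearly dependent, one could write $w_k=\sum_{j\ne k}a_j w_j$ for some $k$, and applying the effect $e_k$ (which satisfies $e_k(w_j)=\delta_{kj}$) yields $1=0$, a contradiction; the dimension bound is then immediate. That argument needs nothing beyond the definition of perfect distinguishability and linearity of effects --- in particular it does not use Postulates 1 and 2, does not require the states to be pure, and works in any abstract state space. Your route instead imports Theorem \ref{Theorem:Selfdual} (self-duality, which is itself a nontrivial consequence of the postulates) to turn pairwise distinguishability into orthonormality and then applies the standard orthonormality-implies-independence argument. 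This is valid in the setting where the lemma is actually used (frames of pure states under Postulates 1 and 2), and your care about reducing to pairwise distinguishability and about possibly infinite $J$ is sound; note also that purity is not really needed even in your version, since positive-definiteness of the inner product already gives $\braket{w_j,w_j}>0$, which is all the independence argument requires. The trade-off is that the paper's proof is more elementary and strictly more general, while yours leans on machinery that is only available once the postulates are assumed.
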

\begin{proof}
	Let $\{e_j\}_{j \in J}$ be the effects that distinguish the states: $e_k(w_j) = \delta_{kj}$. The $w_j$ are linearly independent, because if they were linearly dependent, we had $w_k = \sum_{j\ne k} a_j w_j$ for some $k$, $a_j \in \mathbb R$ and thus $1=e_k(w_k) =  \sum_{j\ne k} a_j e_k(w_j) = 0$. Thus the number of perfectly distinguishable (pure) states is no larger than the dimension of $A$.
\end{proof}

In quantum theory, all bases contain the same number of states and have therefore the same frame-size. This generalizes to our GPTs:
\begin{lem}
	 All maximal sets of perfectly distinguishable pure states have the same size, that is:\\
	If $\{w_1,...w_m\}$, $\{w_1',...w_n'\}$ are both sets with perfectly distinguishable pure states and $m < n$, we can find pure states $w_{m+1},...,w_n$ such that $\{w_1,...w_n\}$ is perfectly distinguishable. 
\end{lem}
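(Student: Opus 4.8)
The plan is to fold the smaller frame into a subframe of the larger one by strong symmetry and then carry the remaining distinguishable states of the larger frame along the same reversible transformation. Since any subset of perfectly distinguishable pure states is again perfectly distinguishable (just restrict the distinguishing effects and note $\sum_{i=1}^m e_i' \le \sum_{i=1}^n e_i' \le u_A$), the first $m$ states $\{w_1',\dots,w_m'\}$ of the $n$-frame form an $m$-frame, exactly like $\{w_1,\dots,w_m\}$. Postulate 2 (strong symmetry) then supplies a reversible transformation $T \in \mathcal G_A$ with $T(w_j') = w_j$ for all $j \le m$. I would define the candidate completion by $w_{m+1} := T(w_{m+1}'),\dots,w_n := T(w_n')$ and claim that $\{w_1,\dots,w_n\}$ is the desired set; two things then have to be verified, namely that each $w_j$ with $j > m$ is pure and that the whole collection is perfectly distinguishable.

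Purity is the easier half and rests only on reversibility. For reversible $T$ one has $u_A \circ T = u_A$ (from $u_A(w) = u_A(T^{-1}Tw) \le u_A(Tw) \le u_A(w)$), so together with positivity $T$ maps $\Omega_A$ into $\Omega_A$. If $T(w_j') = p v_1 + (1-p) v_2$ with $v_1,v_2 \in \Omega_A$ and $p \in (0,1)$, then applying $T^{-1}$ gives $w_j' = p\, T^{-1}(v_1) + (1-p)\, T^{-1}(v_2)$ with $T^{-1}(v_1),T^{-1}(v_2) \in \Omega_A$; extremality of $w_j'$ forces $T^{-1}(v_1) = T^{-1}(v_2) = w_j'$, hence $v_1 = v_2 = T(w_j')$. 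Thus $T(w_j')$ is extreme, i.e. pure.

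The harder half, and the main obstacle, is to see that reversibility preserves \emph{perfect distinguishability} and not merely orthogonality. Let $e_1',\dots,e_n'$ be the allowed effects distinguishing the $n$-frame, so $e_i'(w_j') = \delta_{ij}$ and $\sum_i e_i' \le u_A$. I would set $\tilde e_i := e_i' \circ T^{-1}$. Because $T^{-1} \in \mathcal G_A$ is a physically allowed transformation, each $\tilde e_i$ is again a physically allowed effect, and $\sum_i \tilde e_i = \big(\sum_i e_i'\big) \circ T^{-1} \le u_A \circ T^{-1} = u_A$. Evaluating gives $\tilde e_i(w_j) = e_i'(T^{-1} T w_j') = e_i'(w_j') = \delta_{ij}$ for all $i,j \le n$, so $\{\tilde e_i\}$ distinguishes $\{w_1,\dots,w_n\}$, which are therefore $n$ perfectly distinguishable pure states extending the original $m$-frame. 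One could alternatively argue through the self-dualizing inner product of Theorem \ref{Theorem:Selfdual}, using $\braket{T w_i', T w_j'} = \braket{w_i', w_j'} = \delta_{ij}$, but the effect-based route is preferable here because it constructs the distinguishing measurement directly rather than only establishing orthogonality.
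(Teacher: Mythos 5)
Your proposal is correct and follows essentially the same route as the paper's own proof: apply strong symmetry to map the first $m$ states of the $n$-frame onto $\{w_1,\dots,w_m\}$, define the completion as the images of the remaining states, pull back the distinguishing effects via $e_i'\circ T^{-1}$ using $u_A\circ T^{-1}=u_A$, and establish purity from extremality under the reversible map. The only (welcome) addition is your explicit remark that a subset of a perfectly distinguishable set is again perfectly distinguishable, which the paper uses implicitly when invoking Postulate 2 for the $m$-element sets.
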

\begin{proof}
	Let $T$ be the reversible transformation (Postulate 2) taking $\{w_1',...w_m'\}$ to $\{w_1,...w_m\}$, i.e. $T(w_j') = w_j \quad \forall j \le m$.
	Let $e_1',...,e_n'$ be the effects with $e_j'(w_k') = \delta_{j,k}$, $\sum_j e_j' \le u_A$. As $T$ is reversible, $T^{-1}$ must be normalization-preserving ($T$ is not allowed to be normalization-increasing), i.e. $u_A \circ T^{-1} = u_A$. For $e_j := e_j' \circ T^{-1}$ we have $\sum_{j=1}^n e_j = \sum_{j=1}^n e_j' \circ T^{-1} \le u_A \circ T^{-1} = u_A$. The $e_j$ are linear as composition of linear functions, and im$(e_j) \subset $ im$(e_j') \subset [0,1]$ on $\Omega_A^{\le 1}$  as $T^{-1}(\Omega_A^{\le 1}) = \Omega_A^{\le 1}$ by reversibility and positivity. I.e. the $e_j$ are also effects.\\
$e_j(T(w_k')) = e_j'(w_k') = \delta_{jk}$, i.e. the $e_j$ perfectly distinguish $\{T(w'_1),...,T(w'_n)\} = \{w_1,...,w_m,T(w_{m+1}'),..., T(w_n')\}$.\\
Hereby, the $T(w_j')$ are also pure: Let $p\in (0,1)$, $T(w_j')= p w + (1-p) w'$. Then by linearity and bijectivity $w_j' = p T^{-1}(w) + (1-p) T^{-1}(w') $. As $w_j'$ pure, $w_j' =  T^{-1}(w) = T^{-1}(w')$. Thus $w = w' = T(w_j')$.
\end{proof} \noindent

Similarly to the bases of sub-Hilbert spaces in quantum theory, also faces can be identified by sets of perfectly distinguishable pure states. The rank generalizes the dimension.   
\begin{prop}
	Postulates 1 and 2 imply that every face of $\Omega_A$ is generated by a frame. Any two frames that generate the same face $F$ have the same size, called the \textbf{rank} of $F$, and denoted by $|F|$. Moreover, if $F \subset G$ and $F\ne G$, then $|F| < |G|$. Every frame of size $|F|$ in $F$ generates $F$.
\end{prop}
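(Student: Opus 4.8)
The plan is to reduce all four assertions to a single geometric engine built from the two postulates and then to run essentially the same contradiction in three guises. Throughout I write ``face'' for a face of $\Omega_A$ and ``frame'' for a set of perfectly distinguishable pure states. For the existence of a generating frame I would first pick a point $w$ in the relative interior of the given face $F$; a standard fact of convex geometry is that the smallest face of $\Omega_A$ containing such a $w$ is $F$ itself (if $w$ lies in the open segment between $v,v''\in F$, then $v$ belongs to every face containing $w$). Applying Postulate 1 I would decompose $w=\sum_{j=1}^k p_j w_j$ with all $p_j>0$ and $\{w_j\}$ a frame. The lemma on faces and positive convex combinations forces every $w_j\in F$, so the face generated by $\{w_j\}$ is contained in $F$; conversely it contains the combination $w$ and hence contains $F$. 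Thus $\{w_j\}$ generates $F$, settling the first claim.

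Next I would assemble the engine, namely how reversibility acts on faces. A reversible $T$ satisfies $u_A\circ T\le u_A$ and $u_A\circ T^{-1}\le u_A$; applied to $w\in\Omega_A$ these give $u_A(w)=u_A(T^{-1}Tw)\le u_A(Tw)\le u_A(w)$, forcing $u_A\circ T=u_A$ and $T(\Omega_A)=\Omega_A$. Being a linear bijection fixing $\Omega_A$ setwise, $T$ then permutes faces and commutes with face generation, $T(\text{face gen }M)=\text{face gen }T(M)$. The crucial upgrade is: if $T$ is reversible and $T(F)\subseteq F$, then $T(F)=F$. Indeed $T$ preserves affine dimension, so $\dim\mathrm{aff}(T(F))=\dim\mathrm{aff}(F)$; combined with $\mathrm{aff}(T(F))\subseteq\mathrm{aff}(F)$ this forces equality of affine hulls, and since a face equals its affine hull intersected with $\Omega_A$, I get $T(F)=F$. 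Fed by Postulate 2, this is exactly what lets me compare frames of different sizes.

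For uniqueness of rank, suppose frames $\{w_j\}_{j\le m}$ and $\{v_k\}_{k\le n}$ both generate $F$ with $m\le n$. By Postulate 2 I would choose a reversible $S$ with $Sw_j=v_j$ for $j\le m$; then $S(F)=\text{face gen }\{v_1,\dots,v_m\}\subseteq F$, so the engine yields $S(F)=F$, i.e. already $\{v_1,\dots,v_m\}$ generates $F$. If $m<n$, the distinguishing effect $e$ of the larger frame satisfies $e(v_i)=0$ for $i\le m$ and $e(v_n)=1$; but $e^{-1}(0)$ is a face (by the effect-face lemma), it contains $\{v_1,\dots,v_m\}$ and hence the face they generate, namely $F\ni v_n$, forcing $e(v_n)=0$, a contradiction. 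Hence $m=n$ and $|F|$ is well defined. For $F\subsetneq G$ with generating frames of sizes $m=|F|$ and $m'=|G|$, the identical mechanism (transport a size-$\min(m,m')$ sub-frame of one onto the other) gives $m\le m'$ by the same effect-face contradiction; and equality is impossible, since $m=m'$ would let me map a generating frame of $G$ onto one of $F$, giving $S(G)=F\subseteq G$, hence $\mathrm{aff}(F)=\mathrm{aff}(G)$ and $F=G$, against $F\ne G$. Therefore $|F|<|G|$.

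Finally, for the statement that every frame of size $|F|$ in $F$ generates $F$, I would take such a frame $\{u_j\}_{j\le|F|}\subseteq F$, a generating frame $\{w_j\}_{j\le|F|}$ of $F$, and a reversible $S$ with $Sw_j=u_j$; then $S(F)=\text{face gen }\{u_j\}\subseteq F$, so the engine forces $S(F)=F$ and $\{u_j\}$ generates $F$. I expect the main obstacle to be the engine rather than the bookkeeping: one must verify carefully that reversible transformations fix $\Omega_A$ setwise (so that they descend to the face lattice and commute with face generation) and that the affine-dimension count legitimately upgrades $T(F)\subseteq F$ to $T(F)=F$. Once this lemma is secured, Postulate 2 supplies the transformations and the effect-face lemma supplies every contradiction, so all four claims follow uniformly.
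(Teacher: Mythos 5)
The thesis does not actually prove this proposition --- the stated ``proof'' is just a citation of Proposition~2 of \cite{Postulates} --- so there is no in-paper argument to compare yours against; what you have written is a genuine reconstruction, and it is correct. The existence part is sound: a relative-interior point of $F$ has $F$ as its minimal face, Postulate~1 decomposes it into a frame with strictly positive weights, and the lemma that faces absorb the constituents of positive convex combinations puts that frame inside $F$, whence minimality gives equality. Your ``engine'' is also correctly assembled from results the thesis does establish: reversibility forces $u_A\circ T=u_A$ and $T(\Omega_A)=\Omega_A$, so $T$ permutes faces and commutes with face generation, and the upgrade from $T(F)\subseteq F$ to $T(F)=F$ legitimately combines the affine-dimension count with the proposition $F=\mathrm{aff}(F)\cap\Omega_A$. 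The rank arguments then go through: subsets of frames are frames, so Postulate~2 supplies the transporting map, and the effect-face lemma applies because $e_n^{-1}(0)$ is nonempty. One small remark: in the step $|F|<|G|$ for $F\subsetneq G$ you do not actually need the effect-face contradiction at all --- if $|G|\le|F|$ you can transport a generating frame of $G$ onto a sub-frame of a generating frame of $F$, giving $S(G)\subseteq F\subseteq G$, hence $G\subseteq F$ by the engine, which already contradicts $F\ne G$; this tightens your sketch but does not change its substance.
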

\begin{proof}
	Proposition 2 from \cite{Postulates}
\end{proof}

The no-restriction hypothesis is satisfied:
\begin{prop}
	Postulate 1 and 2 imply that all effects are allowed.
\end{prop}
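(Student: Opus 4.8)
The plan is to exploit the self-duality granted by Theorem~\ref{Theorem:Selfdual} in order to translate a statement about effects into a statement about states, to which Postulate~1 can then be applied. Fix an arbitrary effect $e \in E_A$. Since $e$ lies in the dual cone $A_+^*$ and $A_+$ is self-dual, there is a $w \in A_+$ with $e = \langle \cdot, w\rangle$. If $w = 0$ then $e = 0$ and nothing is to be shown, so assume $w \ne 0$ and set $\hat w := w/u_A(w) \in \Omega_A$. By Postulate~1 there is a frame $w_1,\dots,w_m$ and probabilities $p_1,\dots,p_m$ with $\hat w = \sum_{j=1}^m p_j w_j$; completing this to a maximal frame $w_1,\dots,w_n$ (with $n=\dim A$) and putting $\lambda_j := u_A(w)\,p_j$ for $j\le m$ and $\lambda_j := 0$ otherwise yields $w = \sum_{j=1}^n \lambda_j w_j$ with all $\lambda_j \ge 0$.

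Next I would identify the frame-distinguishing effects. A maximal frame is a basis of $A$, and by the properties of the self-dualizing inner product (pure states have unit norm, perfectly distinguishable states are orthogonal) the functionals $e_j := \langle \cdot, w_j\rangle$ satisfy $e_j(w_k) = \langle w_k,w_j\rangle = \delta_{jk}$. Since $\{w_k\}$ is a basis, $e_j$ is the \emph{unique} linear functional with these values; but the frame is by definition perfectly distinguishable, so allowed distinguishing effects with exactly these values exist, and by uniqueness they must coincide with the $e_j$. Hence every $e_j$ is an allowed effect, and moreover $\sum_{j=1}^n e_j = u_A$ (both sides agree on the basis $\{w_k\}$), so $u_A$ is allowed and, by the assumption that the counter-effect $u_A-e$ is allowed whenever $e$ is, so is $0 = u_A - u_A$. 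Expanding $e$ in this frame gives
\begin{equation}
  e = \langle \cdot, w\rangle = \sum_{j=1}^n \lambda_j\, e_j ,
\end{equation}
and evaluating on $w_k$ shows $\lambda_k = e(w_k) \in [0,1]$.

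Finally I would assemble $e$ from allowed building blocks. Because the set of allowed effects is convex and contains both $0$ and each $e_j$, every scalar multiple $\lambda_j e_j = \lambda_j e_j + (1-\lambda_j)\,0$ with $\lambda_j \in [0,1]$ is allowed. Since each summand is non-negative and the total equals $e \le u_A$, every partial sum is likewise $\le u_A$; applying the assumption that $e_1+e_2$ is allowed whenever $e_1,e_2$ are allowed with $e_1+e_2 \le u_A$ inductively then shows that $\sum_{j=1}^n \lambda_j e_j = e$ is allowed. The main obstacle is the bookkeeping that keeps this construction inside the allowed set: one must check both that the coefficients genuinely lie in $[0,1]$ (so that the $\lambda_j e_j$ are effects) and that the partial sums never exceed $u_A$ (so that the additivity assumption applies), and---more conceptually---that the distinguishing effects of a maximal frame are forced to coincide with the inner-product effects $\langle \cdot, w_j\rangle$, which is precisely where self-duality and the maximality of the frame enter.
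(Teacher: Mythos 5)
Your overall strategy---use self-duality to write $e=\braket{\cdot,w}$, classically decompose $w$, and reassemble $e$ from the allowed frame-distinguishing effects using the convexity/additivity assumptions on the allowed set---is a sensible plan, and the bookkeeping in your last paragraph (coefficients in $[0,1]$, partial sums bounded by $u_A$) is fine as far as it goes. The paper itself only cites Proposition 1 of \cite{Postulates} rather than giving a proof, so there is nothing in-text to compare against, but your argument contains one load-bearing error.

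The gap is the assertion that a maximal frame is a basis of $A$ (``$n=\dim A$''), from which you deduce that the linear functional with $e_j(w_k)=\delta_{jk}$ is \emph{unique} and hence that the allowed distinguishing effects guaranteed by the definition of perfect distinguishability must coincide with $\braket{\cdot,w_j}$. This is false: the maximal frame size $|A_+|$ is in general much smaller than $\dim A$ --- in $N$-level quantum theory a maximal frame has $N$ elements while $\dim A = N^2$ --- so a maximal frame spans only a proper subspace of $A$, and there are infinitely many linear functionals taking the values $\delta_{jk}$ on it. (The paper's lemma only gives $|J|\le\dim A$ as an upper bound, not equality.) Consequently you only know that \emph{some} allowed effects $f_j$ satisfy $f_j(w_k)=\delta_{jk}$; these agree with $\braket{\cdot,w_j}$ on $\mathrm{span}\{w_1,\dots,w_n\}$ but need not agree elsewhere, so $\sum_j\lambda_j f_j$ reproduces $e$ only on that span, not on all of $A$. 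Showing that the rank-one effects $\braket{\cdot,w_j}$ are themselves allowed is precisely the hard content of the proposition; in \cite{Postulates} this is where the reversible symmetry group enters (e.g. by averaging allowed effects over the stabilizer of a pure state, using that the allowed set is closed, convex and group-invariant), not a dimension count. A secondary caution: the statement you are proving is Proposition 1 of \cite{Postulates}, whereas self-duality is their Proposition 3, so if you lead with self-duality you should also check that its proof does not already presuppose the no-restriction hypothesis, lest the argument become circular.
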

\begin{proof}
	Proposition 1 from \cite{Postulates}.
\end{proof}

The following proposition is analogous to the fact, that every orthonormal set (of pure states) can be extended to an orthonormal basis of the Hilbert space.
\begin{prop}
	Postulates 1 and 2 imply that every frame $w_1,...,w_n$ can be extended to a frame $w_1,...,w_{|A_+|}$ that generates $A_+$.
\end{prop}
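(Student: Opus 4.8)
The plan is to combine two earlier structural results — the lemma that all maximal frames have the same size, and the proposition characterizing faces by their rank — so that ``extending to a maximal frame'' and ``generating $A_+$'' become two descriptions of the same object. Write $N := |A_+|$ for the rank of the full cone, which under the face correspondence equals the rank $|\Omega_A|$ of $\Omega_A$ regarded as its own (improper) face. First I would record that frames have bounded size: by the lemma bounding any set of perfectly distinguishable states by $\dim(A)$, every frame has at most $\dim(A)$ elements, so a frame of maximum size exists, and by the equal-size lemma all such maximal frames share this common size, which I identify with $N$.

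Next I would carry out the extension itself. Given the frame $w_1,\dots,w_n$ with $n \le N$, I would invoke the equal-size lemma with the pair consisting of $w_1,\dots,w_n$ (in the role of the smaller set) and any fixed maximal frame of size $N$ (in the role of the larger set). The lemma then supplies pure states $w_{n+1},\dots,w_N$ for which $\{w_1,\dots,w_N\}$ is perfectly distinguishable, i.e. a frame of maximal size $N$ that contains the original one.

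It then remains to show that such a size-$N$ frame actually generates $A_+$. I would pass to $\Omega_A$ and consider the face $F$ of $\Omega_A$ generated by $\{w_1,\dots,w_N\}$. By the rank proposition this frame generates $F$, so $|F| = N$. If $F$ were a proper subface of $\Omega_A$, the monotonicity clause ``$F \subsetneq G \Rightarrow |F| < |G|$'' would force $N = |F| < |\Omega_A| = N$, a contradiction; hence $F = \Omega_A$. The bijective face correspondence between $\Omega_A$ and $A_+$ in Proposition \ref{Prop:FaceCorrespondence}, together with its clause that a frame generating a face of $\Omega_A$ also generates the corresponding face of $A_+$, then upgrades ``generates $\Omega_A$'' to ``generates $A_+$,'' completing the argument.

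The hard part will be the bookkeeping of these identifications rather than any single deep estimate: one must be careful to equate $|A_+|$ (the rank of the cone) with the maximal frame size $N$ and with $|\Omega_A|$, all mediated by Proposition \ref{Prop:FaceCorrespondence}, and to confirm that the maximal frame used in the extension genuinely realizes this common value. Once those are pinned down, the three ingredients — finite-dimensional boundedness giving existence of a maximal frame, the equal-size lemma supplying the extension, and the rank-plus-monotonicity proposition forcing the generated face to be all of $\Omega_A$ — assemble directly into the stated claim.
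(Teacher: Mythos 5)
Your argument is correct, but note that the thesis itself does not prove this statement at all: its ``proof'' is a one-line citation of Proposition 5 from \cite{Postulates}. What you have done is reassemble a self-contained derivation from the other results the thesis imports, and the assembly checks out. The linear-independence bound guarantees frames have at most $\dim(A)$ elements, so a frame of maximum size $N$ exists; the equal-size lemma (whose proof via a reversible transformation mapping a maximal frame onto an extension of the given one really does supply pure states $w_{n+1},\dots,w_N$ making $\{w_1,\dots,w_N\}$ perfectly distinguishable) performs the extension; and the rank proposition plus monotonicity ($F\subsetneq G\Rightarrow |F|<|G|$) forces the face generated by any maximal frame to be all of $\Omega_A$, since otherwise $N=|F|<|\Omega_A|\le N$. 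Proposition \ref{Prop:FaceCorrespondence} then transports ``generates $\Omega_A$'' to ``generates $A_+$'' and identifies $|A_+|=|\Omega_A|=N$, which is exactly the bookkeeping you flagged as the delicate part. The only step worth making explicit is that at least one frame exists to begin with (any single pure state is a frame of size one, and pure states exist by Minkowski's theorem since $\Omega_A$ is compact, convex and nonempty), so ``a frame of maximum size'' is not vacuous. What your route buys is that this extension statement becomes a corollary of the rank proposition and the equal-size lemma rather than one more result imported as a black box; what it costs is nothing beyond the identifications you already carried out, though of course it still ultimately rests on Proposition 2 of \cite{Postulates}, which the thesis also does not prove internally.
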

\begin{proof}
	Proposition 5 from \cite{Postulates}
\end{proof}

Orthogonal pure states already are perfectly distinguishable, i.e. for pure states, orthogonality and perfect distinguishability are equivalent. Furthermore, maximal frames define measurements that are similar to POVM constructed from a non-degenerate projective measurement in quantum theory. Removing some frame states corresponds to leaving out some measurement results and thus gives a subnormalized measurement:
\begin{prop} \label{Prop:FramesAddUp}
	Postulates 1 and 2 imply that if $w_1,...,w_n$ are mutually orthogonal pure states, then they are a frame and $\sum_{j=1}^n \braket{w_j,\cdot} \le u_A$. Furthermore, every maximal frame $w_1,...,w_{|A_+|}$ adds up to the order unit, i.e. $\sum_{j=1}^{|A_+|} \braket{w_j,\cdot} = u_A$.
\end{prop}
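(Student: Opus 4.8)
The plan is to treat the two assertions in reverse order, since the maximal-frame identity is the more elementary one and feeds into the first claim. Let $w_1,\dots,w_{|A_+|}$ be a maximal frame, i.e.\ a frame of size $N:=|A_+|$. By the preceding proposition such a frame generates $A_+$, and since $A_+$ is a generating cone we have $\mathrm{span}(A_+)=A$; hence the $w_j$ span $A$. Being perfectly distinguishable they are linearly independent (by the earlier lemma in which the distinguishing effects force linear independence), so $\{w_1,\dots,w_N\}$ is in fact a basis of $A$ and $N=\dim A$. Now compare the two linear functionals $\sum_{j=1}^N\braket{w_j,\cdot}$ and $u_A$ on this basis: using purity ($\braket{w_k,w_k}=1$) and mutual orthogonality ($\braket{w_j,w_k}=0$ for $j\ne k$, from property 4 of the self-dualizing inner product) one gets $\sum_j\braket{w_j,w_k}=1=u_A(w_k)$ for every $k$. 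Two linear maps agreeing on a basis coincide, so $\sum_{j=1}^{|A_+|}\braket{w_j,\cdot}=u_A$.

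For the first claim, fix mutually orthogonal pure states $w_1,\dots,w_n$. The natural candidate effects are $e_j:=\braket{w_j,\cdot}$: each lies in $A^*$, is nonnegative on $A_+$ by self-duality, and satisfies $e_j(w)=\braket{w_j,w}\le 1$ for $w\in\Omega_A$ (property 2), so each $e_j$ is a genuine effect; moreover $e_j(w_k)=\braket{w_j,w_k}=\delta_{jk}$ by purity and orthogonality. By the definition of perfect distinguishability, the $w_j$ then form a frame precisely once we know $\sum_{j=1}^n e_j\le u_A$, which is also the inequality to be proved. I would obtain this bound by completing: extend $w_1,\dots,w_n$ to a maximal frame $w_1,\dots,w_N$, apply the identity just proved to get $\sum_{j=1}^N\braket{w_j,\cdot}=u_A$, and then drop the last $N-n$ terms, each nonnegative on $A_+$. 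This yields $\sum_{j=1}^n\braket{w_j,\cdot}=u_A-\sum_{j=n+1}^N\braket{w_j,\cdot}\le u_A$ and simultaneously exhibits the measurement witnessing that $\{w_1,\dots,w_n\}$ is a frame.

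The hard part will be the completion step, i.e.\ showing that mutually orthogonal pure states are perfectly distinguishable in the first place (so that the previous proposition's frame-completion may be invoked). This is exactly the converse of property 4, and one must be careful not to argue in a circle: by self-duality the desired bound $\sum_j\braket{w_j,\cdot}\le u_A$ is equivalent to the cone membership $m-\sum_j w_j\in A_+$, where $u_A=\braket{m,\cdot}$ and $m$ is the sum of any maximal frame (by the first paragraph), so merely re-expressing the statement in the inner product buys nothing. A first genuine reduction is available through Postulate 1: decomposing an arbitrary $w\in\Omega_A$ into perfectly distinguishable pure states reduces the bound to the case of a pure argument, $\sum_j\braket{w_j,u}\le 1$ for pure $u$. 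Here Bessel's inequality gives only $\sum_j\braket{w_j,u}^2\le\braket{u,u}=1$, which is too weak, so the real content is that the transition ``probabilities'' themselves, not their squares, sum to at most one. I would try to extract this from the full strength of the self-dualizing inner product together with frame completion, for instance by showing that a maximal set of mutually orthogonal pure states must span $A$: if its span were proper, Postulate 1 applied to a normalized state in the orthogonal complement would produce a further pure state orthogonal to all of them, contradicting maximality, whence the maximal orthogonal set is a basis to which the identity of the first paragraph applies. Making the extraction of that complementary state rigorous, without tacitly assuming the very positivity one is trying to prove, is the crux of the argument.
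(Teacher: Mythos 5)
There is a genuine gap, and it sits at the very first step. You assert that a maximal frame $w_1,\dots,w_{|A_+|}$, because it generates the face $A_+$ and the cone is generating, must linearly span $A$ and hence be a basis with $|A_+|=\dim A$. This conflates ``generates the face $A_+$'' (the minimal face containing the frame is all of $A_+$) with ``linearly spans $A$'', and the two are not the same: in complex quantum theory an $N$-level system has maximal frame size $|A_+|=N$ while $\dim A=N^2$, so the maximal frame spans only the $N$-dimensional subspace of operators diagonal in the corresponding basis. Consequently your key step --- ``two linear maps agreeing on a basis coincide'' --- does not apply: $\sum_j\braket{w_j,\cdot}$ and $u_A$ are only shown to agree on the span of the frame, which is a proper subspace, and the identity $\sum_{j=1}^{|A_+|}\braket{w_j,\cdot}=u_A$ does not follow. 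The same false spanning claim reappears in your proposed resolution of the ``crux'' in the last paragraph (arguing that a maximal mutually orthogonal set of pure states must be a basis of $A$), so that route cannot be repaired as stated. The remaining content --- that mutually orthogonal pure states are perfectly distinguishable, which you correctly identify as the converse of property 4 of the inner product and as the point where circularity threatens --- is left open by your own admission, so the first assertion of the proposition is not established either.

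For what it is worth, the paper does not prove this proposition itself: it cites Proposition 6 of the Barnum--M\"uller--Ududec reference, where the argument genuinely uses the strong-symmetry group (transitivity on maximal frames and invariance of the self-dualizing inner product) rather than a dimension count. Any correct proof has to exploit that group action or the projective/face machinery; the linear-algebraic shortcut you propose cannot work because the frame is too small to pin down a functional on all of $A$.
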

\begin{proof}
	Proposition 6 from \cite{Postulates}
\end{proof}

Also for faces it is possible to extend frames to generating frames:
\begin{prop} Suppose Postulates 1 and 2 are satisfied.
	If $w_1,...,w_n$ is any frame contained in some face $F$ of $A_+$, then it can be extended to a frame $w_1,...,w_{|F|}$ of $F$ which generates $F$.
\end{prop}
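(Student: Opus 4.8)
The plan is to induct on the corank $|F|-n$, adjoining one orthogonal pure state of $F$ at a time. In the base case $n=|F|$ the already-established fact that every frame of size $|F|$ contained in $F$ generates $F$ finishes immediately. For the inductive step I assume $n<|F|$ and reduce the whole statement to one existence claim: there is a pure state $w_{n+1}\in F$ with $\langle w_{n+1},w_j\rangle=0$ for $j=1,\dots,n$. Granting this, Proposition~\ref{Prop:FramesAddUp} shows that the mutually orthogonal pure states $w_1,\dots,w_{n+1}$ again form a frame, it is still contained in $F$, and its corank $|F|-(n+1)$ has dropped by one, so the induction hypothesis applies to it.

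It remains to produce $w_{n+1}$. First I note that $n<|F|$ really leaves room: by the face--frame correspondence (Proposition~\ref{Prop:FaceCorrespondence}) the face generated by $w_1,\dots,w_n$ has rank $n$, and since its rank is strictly smaller than $|F|$ it is a proper subface of $F$. To capture ``the part of $F$ orthogonal to the $w_j$'' I build a complementary effect. Using that every face is generated by a frame, I pick a frame $v_1,\dots,v_{|F|}$ generating $F$ and set $e_F:=\sum_{k=1}^{|F|}\langle v_k,\cdot\rangle$ and $e':=\sum_{j=1}^{n}\langle w_j,\cdot\rangle$. By orthonormality of frames (Theorem~\ref{Theorem:Selfdual}) one checks that $e_F\equiv 1$ on the normalized states of $F$, and by Proposition~\ref{Prop:FramesAddUp} that $e'$ is an effect. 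The candidate is $f:=e_F-e'$: on $F$ it equals $1-e'$, hence takes values in $[0,1]$, and a state $t\in F$ satisfies $f(t)=1$ exactly when $e'(t)=0$, i.e. when $t$ is orthogonal to all $w_j$.

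The main obstacle is therefore to show that $f$ is a genuine effect on all of $A_+$ (equivalently $e'\le e_F$) and that it attains the value $1$. The inequality $e'\le e_F$ is the GPT analogue of $P'\le P_F$ for subspace projectors in quantum theory and should follow from $w_1,\dots,w_n\in F$ together with self-duality. Once $f$ is known to be a nonzero effect, Lemma~\ref{Lemma:EffectFace} makes $f^{-1}(1)$ a nonempty face of $\Omega_A$, which by Proposition~\ref{Prop:FaceCorrespondence} is generated by a frame and hence contains a pure state $w_{n+1}$; by construction $w_{n+1}\in F$ and $e'(w_{n+1})=0$, as required. Establishing that $f$ actually attains $1$ (equivalently, that this complementary face is nonempty) is the delicate point, and it is precisely here that the spectral structure forced by Postulates~1 and~2 is used.

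Finally I would record a cleaner alternative route and why I do not take it as primary. One can try to verify that $F$, equipped with its inherited states, order unit $e_F$, and the reversible transformations of $A$ that preserve $F$, is itself an abstract state space satisfying Postulates~1 and~2, and then invoke verbatim the extension of frames to maximal generating frames applied to $F$. Postulate~1 transfers to $F$ at once, because any decomposition of a state of $F$ into distinguishable pure states has all its components in $F$ by the defining property of a face together with the face correspondence (Proposition~\ref{Prop:FaceCorrespondence}). The subtle part of this route is Postulate~2: one must know that the reversible transformations preserving $F$ still act transitively on the frames of $F$, which does not follow formally from global strong symmetry. This gap is the reason I treat the direct inductive construction above as the main argument.
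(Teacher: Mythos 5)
First, note that the thesis does not prove this proposition at all: it simply cites Proposition~7 of \cite{Postulates}. Your outline follows the natural strategy (adjoin one pure state of $F$ orthogonal to $w_1,\dots,w_n$ at a time, reducing everything to the existence of such a state), and the surrounding bookkeeping is fine. But the two claims you yourself flag as ``the main obstacle'' and ``the delicate point'' are not side issues --- they \emph{are} the mathematical content of the proposition, and neither is established. Concretely: writing $G$ for the face generated by $w_1,\dots,w_n$ (so $|G|=n$ and, by Proposition~\ref{Prop:ProjectiveUnit}, your $e'$ equals $u_G$ while $e_F=u_F$), your claim (a) is the monotonicity $u_G\le u_F$ for nested faces $G\subseteq F$. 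Nothing in the thesis delivers this: Proposition~\ref{Prop:ProjectiveUnit} controls $u_F+u_G$ only for \emph{orthogonal} faces, and Proposition~\ref{Prop:PositiveProjector} gives positivity of $P_F$ but says nothing about $P_F-P_G$. By self-duality the claim is equivalent to $u_F-u_G\in A_+$, which is the GPT analogue of $P_G\le P_F$ for nested projectors and genuinely requires the projection/complementation machinery of \cite{Postulates}; ``should follow from self-duality'' is an appeal to the quantum picture, not an argument.

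Your claim (b) --- that $f=u_F-u_G$ attains the value $1$ --- is also unsubstantiated as written (a nonzero effect need not attain $1$; think of $\tfrac12 u_A$). It can, however, be bypassed entirely once (a) is granted: by self-duality $z:=u_F-u_G$ is then an element of $A_+$ with $u_A(z)=|F|-n>0$, $u_F(z)=\langle u_F,u_F\rangle-\sum_j u_F(w_j)=|F|-n$, hence $z/u_A(z)\in F$, and $\langle z,w_j\rangle=1-1=0$; a classical decomposition of $z/u_A(z)$ stays in $F$ because $F$ is a face, and each of its pure components is orthogonal to every $w_j$ since the nonnegative terms sum to zero --- any one of them serves as $w_{n+1}$, after which Proposition~\ref{Prop:FramesAddUp} closes the induction. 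So the proposal really has a single load-bearing gap, namely (a), and filling it is where the work of \cite{Postulates} lives. Your diagnosis of the alternative route is accurate for the analogous reason: global strong symmetry does not obviously restrict to transitivity of the $F$-preserving reversible transformations on frames of $F$, so that route is not a shortcut either.
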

\begin{proof}
	Proposition 7 from \cite{Postulates}.
\end{proof}

Just like the projectors in quantum theory, also the projectors considered here map states to states:
\begin{prop}
	Postulates 1 and 2 imply that for every face $F$ of $A_+$, the orthogonal projection $P_F$ onto the linear span of $F$ is positive. \label{Prop:PositiveProjector}
\end{prop}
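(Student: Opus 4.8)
The plan is to reduce the statement to two structural facts: that the orthogonal ``block-diagonalization'' onto $\mathrm{span}(F)\oplus\mathrm{span}(F^\perp)$ is a positive map, and that the positive cone, restricted to this block-diagonal subspace, splits as a direct sum of the sub-cones belonging to $F$ and to its complementary face $F^\perp$. Throughout I use the self-dualizing inner product $\langle\cdot,\cdot\rangle$ of Theorem \ref{Theorem:Selfdual}, for which $A_+$ is self-dual, reversible transformations are orthogonal, and perfectly distinguishable states are orthogonal.

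\emph{Setup.} By the frame results above I would choose a frame $w_1,\dots,w_k$ generating $F$, extend it to a maximal frame $w_1,\dots,w_n$ generating $A_+$, and let $F^\perp$ be the face generated by $w_{k+1},\dots,w_n$. The first point to record is that $\mathrm{span}(F)$ is in general strictly larger than $\mathrm{span}\{w_1,\dots,w_k\}$ (as in quantum theory, where it is the full block of Hermitian operators supported on the span of the frame), so $P_F$ is a genuine compression and not the frame-diagonal map $\sum_j\langle w_j,\cdot\rangle w_j$. A structural fact I would use, the cone analogue of $F=\mathrm{aff}(F)\cap C$, is $A_+\cap\mathrm{span}(F)=F$; hence it suffices to land any image in $A_+$, as membership in $F$ is then automatic. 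The rank-one case is immediate: there $P_F(v)=\langle w_1,v\rangle\,w_1$ with $\langle w_1,v\rangle\ge 0$ by self-duality and $w_1\in A_+$, so positivity follows from the cone axioms.

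\emph{The reflection.} The key construction is a reversible involution $R\in\mathcal G_A$ acting as $+\mathrm{id}$ on $\mathrm{span}(F)\oplus\mathrm{span}(F^\perp)$ and as $-\mathrm{id}$ on its orthogonal complement (the ``off-block'' directions), modelled on $\rho\mapsto(2P_V-\mathbbm 1)\rho(2P_V-\mathbbm 1)$ in quantum theory. Since reversible transformations preserve $\langle\cdot,\cdot\rangle$, such an $R$ is orthogonal, so its $\pm1$-eigenspaces are orthogonal and $D:=\tfrac12(\mathrm{id}+R)$ is exactly the orthogonal projection onto $\mathrm{span}(F)\oplus\mathrm{span}(F^\perp)$. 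As a convex combination of the two positive maps $\mathrm{id}$ and $R$, $D$ is positive, i.e. $D(A_+)\subset A_+$. Then for $v\in A_+$ I write $x:=Dv=P_F v+P_{F^\perp}v\in A_+$, with $P_F v\in\mathrm{span}(F)$ and $P_{F^\perp}v\in\mathrm{span}(F^\perp)$. Using that the block-diagonal slice $A_+\cap\big(\mathrm{span}(F)\oplus\mathrm{span}(F^\perp)\big)$ equals $\big(A_+\cap\mathrm{span}(F)\big)\oplus\big(A_+\cap\mathrm{span}(F^\perp)\big)$ — the statement that ``block-diagonal and positive'' forces each block to be positive — I conclude $P_F v\in A_+\cap\mathrm{span}(F)=F\subset A_+$. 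As $v\in A_+$ was arbitrary, $P_F(A_+)\subset A_+$.

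The main obstacle is establishing the two structural inputs from Postulates 1 and 2 alone: the existence of the reflection $R$ inside $\mathcal G_A$, where strong symmetry must be leveraged to produce a reversible transformation negating precisely the off-block directions, and the direct-sum splitting of the block-diagonal cone, which must be extracted from the frame decomposition of states together with the orthogonality of the two blocks. Both are exactly where the quantum intuition $\rho\mapsto P_V\rho P_V$ has to be replaced by the general frame/face machinery; once they are in place, positivity of $P_F$ is the short consequence above.
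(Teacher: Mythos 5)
The paper itself offers no proof of this proposition; it defers entirely to Theorem 8 of \cite{Postulates}, so your attempt has to stand on its own. As it stands it contains a genuine gap: you correctly reduce the claim to two lemmas (existence of the reflection $R$, and the splitting of the block-diagonal cone), but these two lemmas are precisely the hard content of the theorem, and neither is established. The first is not even a consequence of Postulates 1 and 2 in the form you need. Strong symmetry only gives transitivity of $\mathcal G_A$ on frames of equal size; it does not supply an involution in $\mathcal G_A$ acting as $-\mathrm{id}$ exactly on $\big(\mathrm{span}(F)\oplus\mathrm{span}(F^\perp)\big)^\perp$. Concretely, the $d$-ball state spaces satisfy Postulates 1 and 2 with $\mathcal G_A=\mathrm{SO}(d)$, and for a face generated by a single pure state your $R$ would have to act on the Bloch part as $\mathrm{diag}(1,-1,\dots,-1)$, whose determinant $(-1)^{d-1}$ is $-1$ for even $d$, so $R\notin\mathcal G_A$. (The proposition still holds there because every proper face is an atom and your rank-one argument applies directly, but this shows the reflection cannot simply be conjured from the postulates; its existence would have to be proved exactly where you need it, for higher-rank faces, and that is not done.)

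The second input, $A_+\cap\big(\mathrm{span}(F)\oplus\mathrm{span}(F^\perp)\big)=F+F^\perp$, is also not innocent. Unwinding it for the $F$-component: from self-duality of $A_+$ and $\mathrm{span}(F)\perp\mathrm{span}(F^\perp)$ you get that $x_F\in\mathrm{span}(F)$ satisfies $\langle x_F,y\rangle\ge 0$ for all $y\in F$, and to conclude $x_F\in F$ you need the face $F$ to be self-dual inside its own span. In \cite{Postulates} that self-duality of faces is derived as a \emph{consequence} of the positivity of $P_F$ (it is what makes a face into a system satisfying the postulates in its own right), so invoking it here is circular unless you give an independent derivation. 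In short, the skeleton is the right quantum-inspired picture (pinch onto the block diagonal, then split the block-diagonal cone), the rank-one case is handled correctly, and the reduction itself is sound — but the two load-bearing lemmas are exactly what Theorem 8 of \cite{Postulates} has to work hard to establish, and your writeup acknowledges rather than closes that gap.
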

\begin{proof}
	Theorem 8 from \cite{Postulates}.
\end{proof}

Now consider a projective measurement in quantum theory with orthogonal projectors $P_1,...,P_m$. They map states according to $\rho \mapsto P_j \rho P_j$. Then $\rm{Tr}(P_j \rho P_j) =\rm{Tr}(\rho P_j)$ is the probability that after the measurement, the state is found in the subspace given by $P_j$. A generalization of the functional $\rm{Tr}(P_j \ \cdot)$ is defined as follows:
\begin{defi}
	Let $A$ be any system satisfying Postulates 1 and 2. Then, to every face $F$ of $A_+$, define the \textbf{projective unit} $u_F$ as
	\begin{equation}
		u_F := u_A \circ P_F
	\end{equation}
	where $P_F$ is the orthogonal projection onto the linear span of $F$.
\end{defi}

By using the self-duality and the symmetry of $P_F$, $\braket{u_F,w} = u_F(w) = u_A\circ P_F (w) = u_A(P_F w) =  \braket{u_A, P_F w} = \braket{P_F u_A,w}$. Thus one can also write $u_F = P_F u_A$, which is the original definition from \cite{Postulates}.\\
\\
In quantum theory, $\rm{Tr}(P_j \ \cdot)$ are well-defined probability functionals. Furthermore, for a spectral decomposition $P_j = \sum_k \ket{k_j}\bra{k_j}$ we find $\rm{Tr}(P_j \ \cdot) = \sum_k \rm{Tr}(\ket{k_j}\bra{k_j} \cdot )$, i.e. $\rm{Tr}(P_j \ \cdot)= \sum_k \ket{k_j}\bra{k_j}$ using self-duality. Here, the $\ket{k_j}$ span the subspace $P_j$ projects onto. Furthermore, two projectors can only appear in a common measurement if they are orthogonal. Similar results also hold for our generalizations:

\begin{prop} \label{Prop:ProjectiveUnit}
	Let $A$ be any system satisfying Postulates 1 and 2. $u_F$ is an effect $0\le u_F \le u_A$ with $u_F(w) = 1 \ \forall w\in F \cap \Omega_A$. If $w_1,...,w_{|F|}$ is any frame that generates $F$, then
	\begin{equation}
		u_F = \sum_{j=1}^{|F|} w_j
	\end{equation}
	We have $u_F + u_G \le u_A$ if and only if $F$ and $G$ are orthogonal.
\end{prop}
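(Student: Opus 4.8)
The plan is to establish the explicit formula of the second claim first, since both the effect property and the orthogonality criterion follow cleanly from it. Fix a frame $w_1,\dots,w_{|F|}$ generating $F$. By Theorem \ref{Theorem:Selfdual} these perfectly distinguishable pure states are orthonormal with respect to the self-dualizing inner product, i.e. $\braket{w_i,w_j}=\delta_{ij}$, so they are linearly independent and hence span a $|F|$-dimensional subspace. The key point I must verify is that this subspace is exactly the linear span of $F$: since the frame generates $F$ and $|F|$ equals the rank of $F$, the span of the frame coincides with $\rm{span}(F)$, so $\{w_j\}_{j=1}^{|F|}$ is an orthonormal basis of $\rm{span}(F)$. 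Consequently the orthogonal projection admits the representation $P_F(\cdot)=\sum_{j=1}^{|F|}\braket{w_j,\cdot}\,w_j$, and therefore
\begin{equation}
	u_F(\cdot)=u_A\big(P_F(\cdot)\big)=\sum_{j=1}^{|F|}\braket{w_j,\cdot}\,u_A(w_j)=\sum_{j=1}^{|F|}\braket{w_j,\cdot},
\end{equation}
using $u_A(w_j)=1$. Under the self-duality identification this is precisely $\sum_{j=1}^{|F|}w_j$, which proves the second claim; since $u_F=u_A\circ P_F$ depends only on $F$, this also shows the sum is independent of the chosen generating frame.

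From this formula the effect property is immediate. For $w\in\Omega_A$ each term $\braket{w_j,w}\ge 0$ by Theorem \ref{Theorem:Selfdual}, so $u_F\ge 0$, while Proposition \ref{Prop:FramesAddUp} applied to the mutually orthogonal pure states $w_1,\dots,w_{|F|}$ gives $\sum_{j=1}^{|F|}\braket{w_j,\cdot}\le u_A$, hence $u_F\le u_A$. For the value on the face, note that any $w\in F\cap\Omega_A$ lies in $\rm{span}(F)$, so $P_F w=w$ and thus $u_F(w)=u_A(w)=1$.

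For the orthogonality criterion I would argue both directions using frames. Assuming $F$ and $G$ are orthogonal, pick generating frames $w_1,\dots,w_{|F|}$ and $v_1,\dots,v_{|G|}$; orthogonality of the faces means $\braket{w_i,v_k}=0$ for all $i,k$, so the combined list is a set of mutually orthogonal pure states, hence a frame by Proposition \ref{Prop:FramesAddUp}, and that same proposition yields $u_F+u_G=\sum_i\braket{w_i,\cdot}+\sum_k\braket{v_k,\cdot}\le u_A$. Conversely, suppose $u_F+u_G\le u_A$. Evaluating at a pure state $v_k\in G\cap\Omega_A$ gives $u_G(v_k)=\sum_l\braket{v_l,v_k}=1$, so the inequality forces $u_F(v_k)\le 0$; but $u_F\ge 0$, hence $u_F(v_k)=\sum_i\braket{w_i,v_k}=0$, and since every summand is nonnegative each $\braket{w_i,v_k}=0$. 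As this holds for all $i$ and $k$, the spans of $F$ and $G$ are orthogonal, i.e. the faces are orthogonal.

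The main obstacle is the identification $\rm{span}(F)=\rm{span}\{w_1,\dots,w_{|F|}\}$ needed to express $P_F$ through the generating frame: the inclusion $\supseteq$ is trivial, but the reverse relies on the fact that a generating frame has size equal to the rank of $F$ and that its linear span already captures the whole face. Once this identification is in place, the rest reduces to linear algebra with the self-dualizing inner product together with the frame sum rule of Proposition \ref{Prop:FramesAddUp}.
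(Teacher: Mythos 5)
The paper itself offers no proof of this proposition (it simply cites Lemma 11 of \cite{Postulates}), so your attempt to reconstruct one is worthwhile — but the step you yourself flag as the ``main obstacle'' is not merely unproven, it is false. The identification $\mathrm{span}(F)=\mathrm{span}\{w_1,\dots,w_{|F|}\}$, and with it the representation $P_F(\cdot)=\sum_{j}\braket{w_j,\cdot}\,w_j$, fails in general: a generating frame spans only an $|F|$-dimensional subspace, whereas the linear span of the face is typically much larger. Already in quantum theory, take $F=A_+$ for a qubit: $\mathrm{span}(F)$ is the four-dimensional real space of Hermitian $2\times 2$ matrices, a maximal frame $\{\ket{0}\bra{0},\ket{1}\bra{1}\}$ spans a two-dimensional subspace, and $P_F=\mathrm{id}$ is certainly not $\sum_j\braket{w_j,\cdot}\,w_j$. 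The justification you give (``its linear span already captures the whole face'') is circular, and the same false identification is silently reused in the converse direction of the orthogonality criterion, where you pass from orthogonality of the two frames to orthogonality of the faces.

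The formula $u_F=\sum_j w_j$ is nonetheless correct, because only the single vector $P_F u_A$ is needed, not the whole operator $P_F$. A repair along your lines: set $e:=\sum_{j=1}^{|F|}\braket{w_j,\cdot}$; by Proposition \ref{Prop:FramesAddUp} this is an effect with $e\le u_A$, and by Lemma \ref{Lemma:EffectFace} the set $e^{-1}(1)\cap\Omega_A$ is a face containing the generating frame, hence it contains $F\cap\Omega_A$ by minimality of the generated face. Since every nonzero element of the cone face is a positive multiple of an element of $F\cap\Omega_A$, this gives $\braket{u_A-\sum_j w_j,\,v}=u_A(v)-e(v)=0$ for a spanning set of $\mathrm{span}(F)$, while $\sum_j w_j\in\mathrm{span}(F)$; hence $P_F u_A=\sum_j w_j$ and $u_F=\braket{P_F u_A,\cdot}=\sum_j\braket{w_j,\cdot}$. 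With that in place your arguments for $0\le u_F\le u_A$, for $u_F=1$ on $F\cap\Omega_A$ (via $P_Fw=w$), and for the forward orthogonality implication go through; for the converse you should deduce from $u_F+u_G\le u_A$ that $u_F$ vanishes on all of $G\cap\Omega_A$ (not just on a frame of $G$) and then use the frame-independence of $u_F$ together with classical decompositions inside the faces to conclude $\braket{v,w}=0$ for all $v\in F$, $w\in G$.
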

\begin{proof}
	Lemma 11 from \cite{Postulates}.
\end{proof}

In quantum theory, a projective measurement $P_1,...,P_m$ only gives a predictable outcome $k$ if the considered system is already found in the subspace onto which $P_k$ projects. Something similar holds for our generalization:

\begin{prop}
	Assume Postulates 1 and 2. Then every face $F$ of the set of normalized states $\Omega_A$ can be written as:
	\begin{equation}
		F = \{ w \in \Omega_A | \braket{u_{[ \mathbb R_{\ge 0}\cdot F]},w}=1 \}
	\end{equation}
\end{prop}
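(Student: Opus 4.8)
The plan is to abbreviate $G := \mathbb R_{\ge 0}\cdot F$, which is a face of $A_+$ by Proposition \ref{Prop:FaceCorrespondence}, and to write $u_G := u_{[\mathbb R_{\ge 0}\cdot F]}$ for the associated projective unit, so that the claim reads $F = \{w\in\Omega_A \mid \braket{u_G,w}=1\}$. I would prove the two inclusions separately. The inclusion $F \subset \{w\in\Omega_A \mid \braket{u_G,w}=1\}$ is immediate: Proposition \ref{Prop:ProjectiveUnit} gives $u_G(w)=1$ for all $w\in G\cap\Omega_A$, and Proposition \ref{Prop:FaceCorrespondence} identifies $G\cap\Omega_A = (\mathbb R_{\ge 0}\cdot F)\cap\Omega_A = F$, so every $w\in F$ meets the condition.

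For the reverse inclusion, I would take $w\in\Omega_A$ with $\braket{u_G,w}=1$ and exploit self-duality together with the symmetry of the orthogonal projection $P_G=P_F$ onto $\mathrm{span}(F)$ to rewrite $\braket{u_G,w} = \braket{P_G u_A, w} = u_A(P_G w)$, giving $u_A(P_G w)=1$. Decomposing $w = P_G w + (\mathbbm 1 - P_G)w$ and using $u_A(w)=1$, linearity yields $u_A((\mathbbm 1 - P_G)w)=0$. The decisive point is that both summands lie in $A_+$: one has $P_G w\in A_+$ by positivity of $P_F$ (Proposition \ref{Prop:PositiveProjector}), and $(\mathbbm 1 - P_G)w\in A_+$ because $\mathbbm 1 - P_G$ is the orthogonal projection onto the complementary face, which is positive by the same proposition. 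Since $u_A$ is strictly positive on $A_+\setminus\{0\}$, the equality $u_A((\mathbbm 1 - P_G)w)=0$ forces $(\mathbbm 1 - P_G)w=0$, hence $w = P_G w \in \mathrm{span}(G)$.

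It then remains to upgrade $w\in\mathrm{span}(G)\cap A_+$ to $w\in F$. Since $G$ is a cone containing $0$, its affine hull coincides with its linear span, so the face identity $\mathrm{aff}(G)\cap A_+ = G$ gives $w\in\mathrm{span}(G)\cap A_+ = G$; combined with $w\in\Omega_A$ and Proposition \ref{Prop:FaceCorrespondence} this yields $w\in G\cap\Omega_A = F$, as desired.

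I expect the main obstacle to be the positivity of $\mathbbm 1 - P_G$, i.e.\ showing that the orthogonal complement of $\mathrm{span}(F)$ is again the span of a face. This is where the frame machinery is essential: one extends a generating frame of $F$ to a maximal frame, notes (using the orthogonality of perfectly distinguishable states from Theorem \ref{Theorem:Selfdual} and the additivity in Proposition \ref{Prop:FramesAddUp}) that the leftover frame vectors generate a face whose span is exactly the orthogonal complement of $\mathrm{span}(F)$, so that $P_F$ plus the complementary projection equals $\mathbbm 1$, and then applies Proposition \ref{Prop:PositiveProjector} to that complementary face. An alternative that avoids handling the complement directly is to invoke Lemma \ref{Lemma:EffectFace}: since $u_G$ is an effect attaining the value $1$, the set $u_G^{-1}(1)$ is automatically a face of $\Omega_A$ containing $F$, and a rank comparison (strictly larger faces have strictly larger rank) then forces $u_G^{-1}(1)=F$.
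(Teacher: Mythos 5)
The paper does not actually prove this statement itself---it defers to Proposition 5.29 of Ududec's thesis---so your attempt has to be judged on its own merits. The forward inclusion and the final step ($w\in\mathrm{span}(G)\cap A_+\Rightarrow w\in F$) are fine, but the central step of your main argument is wrong: $\mathbbm 1 - P_G$ is the orthogonal projection onto $\mathrm{span}(G)^\perp$, and this is \emph{not} the linear span of the complementary face, so Proposition \ref{Prop:PositiveProjector} does not apply to it and $(\mathbbm 1-P_G)w$ need not lie in $A_+$. Already in complex quantum theory this fails: for a qutrit with $F$ the face generated by $\ket{1}\bra{1}$, one has $\dim\mathrm{span}(F)=1$, the complementary face (generated by $\ket{2}\bra{2}$ and $\ket{3}\bra{3}$) spans a $4$-dimensional subspace, and $\dim A=9$; the missing four dimensions are the off-diagonal blocks. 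Concretely, for $\ket\psi=(\ket1+\ket2)/\sqrt2$ the operator $(\mathbbm 1-P_F)\left(\ket\psi\bra\psi\right)=\ket\psi\bra\psi-\tfrac12\ket1\bra1$ has a negative eigenvalue. So the plan of writing $\mathbbm 1 = P_F+P_{F'}$ with $F'$ the complementary face cannot be repaired: the two spans are orthogonal but do not exhaust $A$.

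Your fallback via Lemma \ref{Lemma:EffectFace} is the right route, but as stated it still has a hole: knowing that $H:=u_G^{-1}(1)$ is a face of $\Omega_A$ containing $F$, together with the fact that strict inclusions of faces strictly increase rank, only helps once you have shown $|H|\le|F|$, which you assert but do not prove. It does follow from a short double count: let $v_1,\dots,v_{|F|}$ be a frame generating $F$ (so $u_G=\sum_k v_k$) and $h_1,\dots,h_{|H|}$ a frame generating $H$; then $\sum_{i}\sum_k\braket{v_k,h_i}=\sum_i u_G(h_i)=|H|$ because each $h_i$ lies in $H$, while exchanging the order of summation and using $u_H=\sum_i h_i\le u_A$ (Proposition \ref{Prop:ProjectiveUnit}) gives $\sum_k u_H(v_k)\le\sum_k u_A(v_k)=|F|$. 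Hence $|H|\le|F|$, and since $F\subset H$, the rank monotonicity of faces forces $F=H$. With that supplement your second route becomes a complete, self-contained proof; the first route should be abandoned.
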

\begin{proof}
	Proposition 5.29 from \cite{Ududec}.
\end{proof}
%\newpage
%\thispagestyle{empty}
%\quad
%\newpage
\section{Von Neumann's thought experiment}
\label{Section:Neumann}
\subsection{The plan}
In quantum theory, consider a density operator $\rho$ with orthonormal eigenbasis $\ket{j}$ and eigenvalues $p_j$, i.e. $\rho = \sum_j p_j \ket{j}\bra{j}$.\\
The von Neumann entropy is defined as 
\begin{equation}
	S(\rho) = - k_B \sum_j p_j \ln p_j
\end{equation}
where $0 \ln 0 := 0$ by continuity. In quantum theory, all pure states are of the form $\ket{j}\bra{j}$. As the eigenstates are orthogonal, they are perfectly distinguishable.\\
Now we consider GPTs which satisfy Postulates 1 and 2. For any state $w \in \Omega_A$, we consider a classical decomposition $w=\sum_j p_j w_j$ with $w_j$ pure and perfectly distinguishable. A natural generalization of the von Neumann entropy is:
\begin{equation}
	S(w) = -k_B \sum_j p_j \ln p_j
\end{equation}
Of course, the analogy to quantum theory is enough to motivate why this entropy definition is natural and interesting. However, it is important for us that von Neumann obtained his entropy by thermodynamic considerations\cite{Neumann}. Realizing an idea by J. Barrett \cite{Conference}, we will see that these considerations can be applied to many other GPTs as well. While it is relatively easy to introduce operationally/information-theoretically motivated entropies in GPTs (see e.g. \cite{WehnerEntropy}, \cite{BarnumEntropy}), there is no straight-forward way to introduce thermodynamics to GPTs. Thus von Neumann's thought experiment is an important step to provide a deep connection between information theory and thermodynamics also for other GPTs.

\subsection{Combining GPTs and ideal gases}
We consider a GPT ensemble $[S_1,S_2,...,S_N]$, which we will call the $S_j$-ensemble. Now we consider the following trick introduced by Einstein and applied by von Neumann: Imagine we take $N$ hollow, small boxes $K_1$, $K_2$,$...$, $K_N$ and put one of the systems into each of the boxes. We do this in a such way that the internal system $S_j$ has no interaction with its box $K_j$ or anything else - the systems are completely isolated, such that they are not perturbed and thus the ensemble is not changed. The boxes are assumed to form an ideal classical gas. The internal GPT systems have no impact on the behaviour of the classical gas, as the internal systems are hidden from any interaction; that is except for two steps, where we will actually open the box and measure its content or transform it.\\ \\
So we have a classical ideal gas whose particles function as the carriers of internal, passive GPT systems. A key idea is that the inner state will behave like a classical label. It sounds absurd because something like that would be extremely hard to realize in experiment - but this is a thought experiment, and thermodynamics should also be capable of describing such well-defined thought experiments.\\
\\
The basic idea for the derivation of the thermodynamic entropy of a GPT ensemble is that of consistency: We will perform a reversible operation. We already know how the classical gas and its entropy will behave, and the difference between the total entropy change and the entropy change of the classical gas (or the heat reservoir) must be caused by the GPT ensemble. 

\subsection{Relation between the entropies of the GPT ensemble and the gas}
We assume we have a $w$-ensemble, where $w$ is the state of $[S_1,...,S_N]$, and a $v$-ensemble, where $v$ is the state of $[S'_1,...,S'_N]$. Then later on, we need to be sure, that the entropy difference between the gases is the same as the entropy difference between the internal GPT ensembles, if both gases are considered at the same conditions (i.e. same temperature and same volume of the tank). The idea is, that the gases are almost equal, the only difference in entropy being caused by the internal GPT entropy.\\
So if we consider low temperatures, the movement of the boxes freezes out and the entropy of the gas is just given by its internal GPT ensemble. In this limit, we can imagine the gas as just a bunch of GPT systems that do not see each other, which is how ensembles are typically introduced in textbooks. So in this limit, the statement is true. \\
Now we heat the two gases to the same, arbitrary temperature $T$. The boxes are assumed to be completely equal. And as the internal GPT systems are completely isolated, only the boxes can take any work or heat, while the internal GPT system is unaffected. Thus both gases have the same specific heat $C_V = \frac{\delta Q}{\delta T}$. So in order to heat them by $\delta T$, both need the same heat $\delta Q$. As $dS = \frac{\delta Q}{T}$, both gases have the same change in entropy. Thus the entropy difference is still given by the GPT ensemble.\\
Therefore for two such gases with same $T,V,N$: 
\begin{equation}
	S_{w\text{-gas}} - S_{v\text{-gas}} = S_{w\text{-ensemble}} - S_{v\text{-ensemble}}
\end{equation}

\subsection{Tool 1: Semipermeable membranes}
For von Neumann's reasoning, one needs semipermeable membranes. In quantum theory, if we start with a density operator $w$, then we can diagonalize it. Let $w = \sum_j p_j \ket{j}\bra{j}$ with $\ket{j}$ orthonormal. We can realize $w$ by preparing a lot of quantum systems, $p_j$ being the probability that the system is prepared in the state $\ket{j}$. This ensemble now is our $[S_1,...,S_N]$-ensemble. As the $\ket{j}$ are orthonormal, one can imagine a semipermeable membrane: This membrane opens the boxes of incoming particles and measures the internal quantum state. This measurement is a projective measurement in the orthonormal basis $\ket{j}$. We know that this measurement does not perturb the internal quantum state and always gives the right result. Depending on the state, the box is allowed to pass (a window opens) or is reflected (window remains closed). Von Neumann also gives a thermodynamic reasoning, that such a semipermeable membrane can only exist for orthogonal states. As this is also a standard result from quantum information theory, we will not reproduce it here. Also it is enough to know that there exists one preparation procedure for which the single-system states can be distinguished by a semipermable membrane. \\
\\
The orthonormal states $\ket{j}\bra{j}$ from above have the important property, that we can distinguish them without perturbation or error. I.e. if know that the system is one of the states $\ket{j}\bra{j}$, we can find out for sure in which one of the states the systems is, and we can do so without destroying the state. This reminds of classical physics\footnote{The convex hull of such perfectly distinguishable pure states forms a simplex and thus is a classical system in the sense of GPTs}. The property, that a (mixed) state can be prepared by using only perfectly distinguishable pure states is thus called classical decomposability. Reproducing the von Neumann argument for more general GPTs is the reason, why we are so much interested in this postulate.\\
\\
Thus if a GPT fulfils Postulate 1 we can prepare arbitrary (mixed) states by only using perfectly distinguishable pure states $w_j$, which replace the eigenbasis $\ket{j}$ from the quantum case. Then we can consider a semipermeable membrane, which uses the effects $e_i$ with $e_i(w_j) = \delta_{ij}$ to find out the internal state of the box. One moment of thought is needed considering post-measurement states: We don't want the semipermeable membrane to change the internal state of the boxes. However, Postulates 1 and 2 do not make any statement about post-measurement states. Thus we have to add the additional assumption, that a perfectly distinguishing measurement can be implemented without disturbing the states it distinguishes. This assumption is well motivated. First of all, one could assume that the membrane prepares the box in the same state it just has measured before, undoing any perturbation caused by the measurement. In more details, we might consider a measurement described by the operation $T_1,...,T_n$ with $u_A \circ T_j(w_k) = \delta_{jk}$ and $T_j w_k =: \delta_{jk} v_j$. We assume that the measurement is perfect/noiseless in the sense that the pure $w_j$ are mapped to pure states, i.e. $v_j$ pure. Then by Postulate 2, there is a reversible transformation $T_j'$ with $T_j' v_j = w_j$. As $T_j'$ is reversible, $u_A \circ T_j' = u_A$. Thus the operation $T_1'\circ T_1,...,T_n'\circ T_n$ induces the same measurement, but it does not disturb the $w_j$. Also we will see in Chapter \ref{Section:ProjectiveMeasurements}, that projective measurements in the style of quantum theory can be defined. It is reasonable, that the operations are described by projective measurements, as they should be repeatable. This is also the motivation for Pfister's one simple postulate\cite{Pfister}: \\
\begin{postulate}[Postulate][\hspace{-3mm} Pfister's one simple postulate] \\
\textit{
If we can predict the outcome of a measurement with certainty, we can perform the measurement without altering the state: Let $M = \{e_1,...,e_n\}$ be a pure measurement with corresponding operation $\{T_1,...,T_n\}$.  If $w$ is a state with certain outcome, i.e. $e_k(w) = 1$ for some k, then $T_k(w) = w$.
}
\end{postulate}

We will implement this assumption by refining the postulate of classical decomposability. The basic idea of the classical decomposability is that every state is part of a classical subspace and that the non-classical properties only exist because there are different classical subspaces. Thus if we stay within a classical subspace, the key idea is that this subspace behaves classically. Especially, a measurement which perfectly distinguishes the pure states can be implemented such that it does not disturb the pure state. Thus we postulate:
\begin{postulate}[Postulate 1'][\hspace{-3mm} Classical decomposability and classical behaviour of classical subspaces]\\
\textit{
	For any state $w\in \Omega_A$, there exists a probability distribution $p_1$,...,$p_n$ and perfectly distinguishable pure states $w_1$,...,$w_n$ with $w = \sum_{j=1}^n p_j w_j$. A measurement which perfectly distinguishes the pure states can be implemented by a physically allowed operation $T_1,...,T_n$ which does not disturb the pure states: $T_j w_k = \delta_{jk} w_k$.
	}
\end{postulate}
Furthermore we note that this refined postulate is only necessary for the thought experiment. For the mathematical definition of the entropy and the proof of its properties, Postulates 1 and 2 from \cite{Postulates} will be sufficient. As already mentioned, we will construct projective measurements in Chapter \ref{Section:ProjectiveMeasurements}. Thus the new postulate is not stronger than the original postulate concerning the state space structure. However, it does tell us that the projective measurement or a measurement with similar non-destructive properties is physically allowed. Thus we know that even without the stronger postulate, there is at least one mathematically well-defined operation to perfectly distinguish a frame. The stronger version of this postulate thus just adds that this or a similar operation is physically allowed, but does not add anything to the state space structure.\\
\\
Long story short:\\
Postulate 1, refined by the assumption of non-destructive distinguishing measurements, is enough to obtain a semipermeable membrane. 
  
\subsection{Tool 2: Reversible pure state conversions}
In the end we will transform pure states reversibly into other pure states. In quantum physics, this can be achieved by unitary time evolution $e^{i H t}$ or a complicated sequence of infinitely many measurements with infinitesimal perturbation.\\
We will simply use Postulate 2 from \cite{Postulates}. In fact, we only need a weaker form of this postulate. For us it is enough to know that any pure state can be reversibly transformed to any other one.

\subsection{The main argument: Deriving the entropy}
\begin{center}
\includegraphics[width=\textwidth]{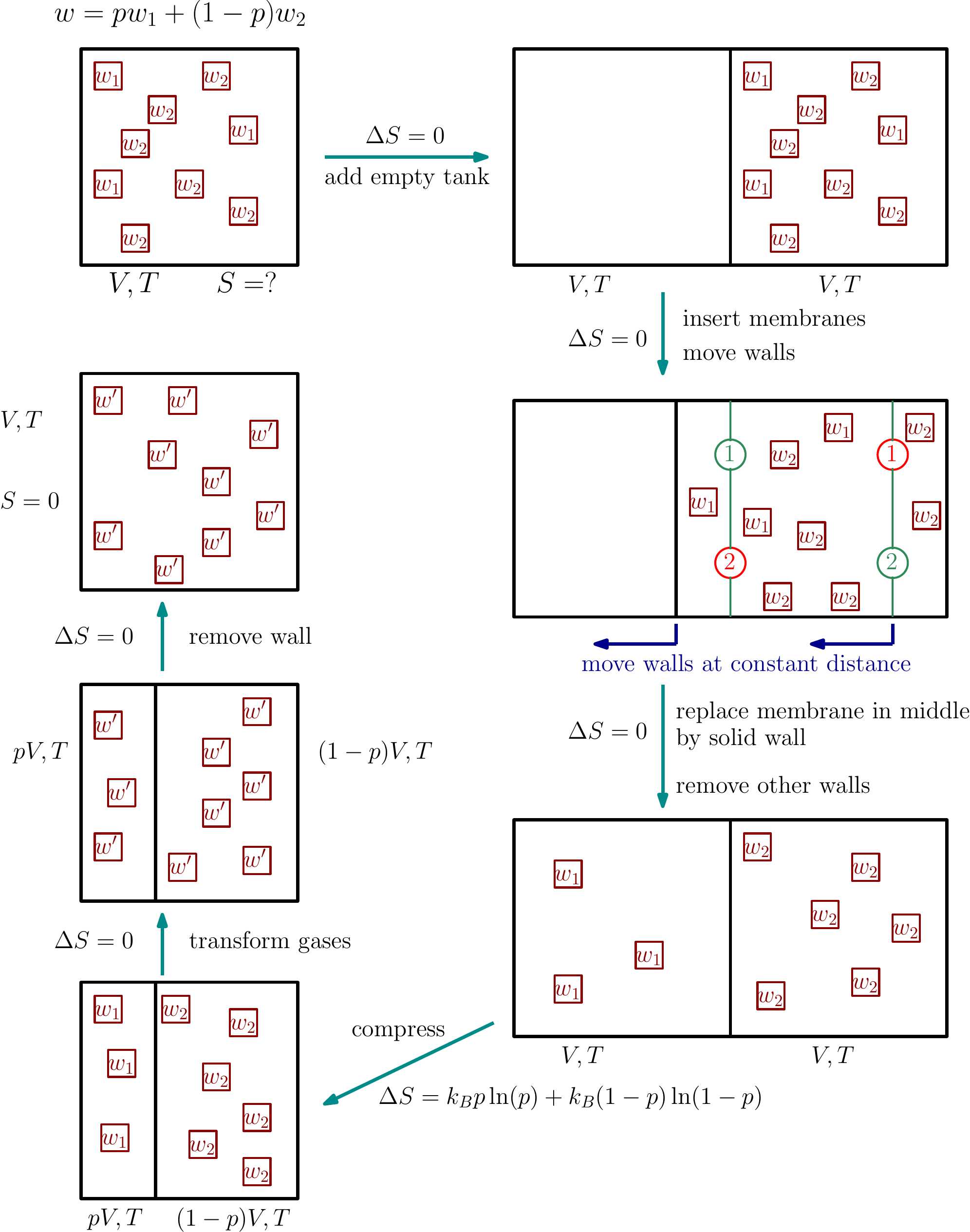}
\captionof{figure}{\small \textit{This figure visualizes all steps of von Neumann's thought experiment.}}
\label{Fig:vonNeumannTotal}
\end{center}
Now we can finally perform von Neumann's thought experiment. All steps of the thought experiment are shown in Figure \ref{Fig:vonNeumannTotal}. It is important to recognize that, except for Tool 1 and 2, the underlying GPT, from which the internal ensemble is taken, plays no role. The reason is that using the semipermeable membrane and the reversible state conversion, the inner GPT-state just behaves like a classical label. This is an important consequence of the classical decomposability, i.e. the idea that every system belongs to a classical subspace.\\
\\
\begin{center}
\includegraphics[width=0.9\textwidth]{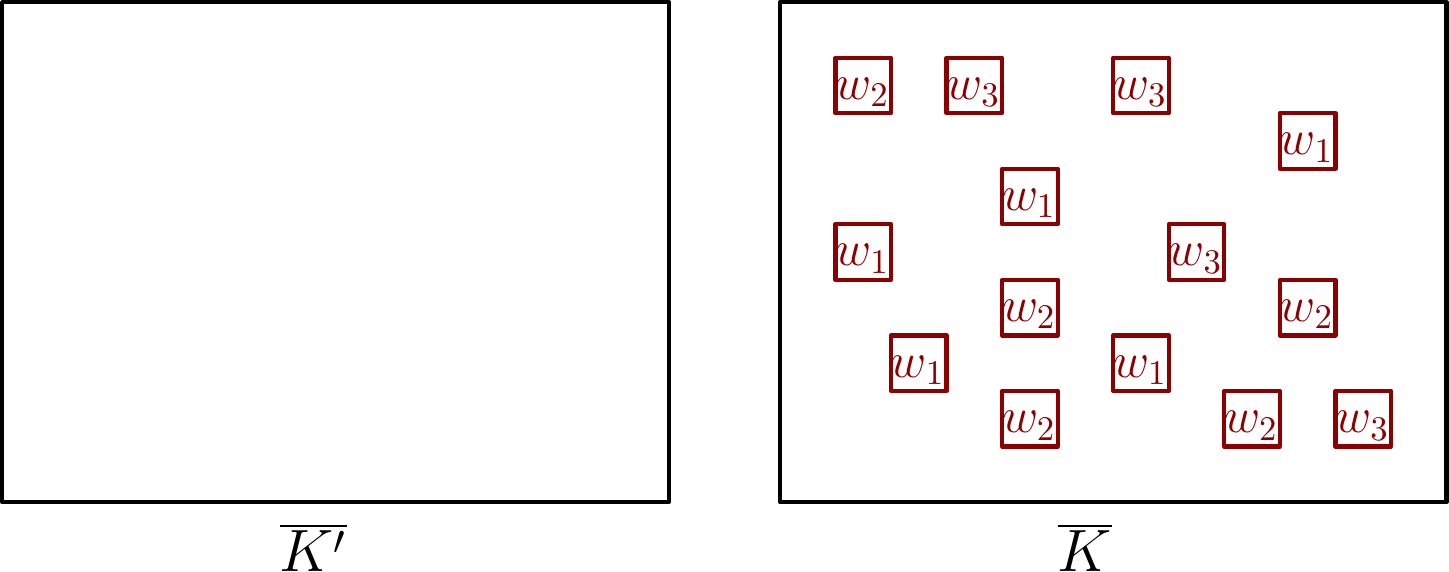}
\captionof{figure}{\small \textit{Tank $\overline K$ contains our gas. In the beginning, the tank $\overline{K'}$, which is a clone of  $\overline{K}$, is empty.}}
\label{Fig:TanksBeginning}
\end{center}
We assume we start with a $w$-ensemble $[S_1,...S_N]$. Following Postulate 1', we will assume that the systems $S_j$ are prepared using only pure distinguishable states $w_j$, i.e. $w = \sum_j p_j w_j$. Choosing $N$ very large, we can assume that $p_j \cdot N$ systems have the internal state $w_j$. Thus we have access to Tool 1, the semipermeable membrane. This ensemble is implanted into a classical gas at temperature T confined in tank $\overline K$ of volume $V$. To the left, we add a tank $\overline{K'}$ of the same volume $V$, but empty (vacuum), see Figure \ref{Fig:TanksBeginning}.

We assume that we have two neighbouring walls separating the two tanks. The wall to the left is a standard wall, not letting through anything. We call it $1$. The wall to the right is semipermeable (Tool 1): The boxes with internal state $w_1$ can pass through the semipermeable membrane, the other ones are reflected. This wall we call $2$. Furthermore, we have another semipermeable membrane (Tool 1) at the right end of tank $\overline K$: It is transparent for all $w_j$ with $j \ne 1$  and only reflects $w_1$. This wall we call $3$. The whole situation is shown in Figure \ref{Fig:Walls}.\\
\\
Now we move the standard wall 1 and the right semipermeable membrane (i.e. 3) to the left while keeping them at constant distance. We do so until wall $1$ collides with the left end of tank $\overline{K'}$.\\
The boxes with $w_j$, $j \ne 1$ are not influenced by this procedure at all. As the walls are moved at same velocity, the $w_1$-gas is also kept at constant volume, and we do not need to perform any work. The basic idea is, that the $w_2,w_3,...$-gas has the same pressure on wall 3 from both sides and thus can be neglected. The pressure of the $w_1$-gas one has to work against at wall 3 (right), is the same pressure that moves wall 1 (left), thus here the energy difference is also zero (the work needed at the right end can be regathered from the left end).\\ This way of arguing is justified by Dalton's law (see e.g. \cite{Schneider} or \cite{NoltingThermo} Chapter 3.5): Different types of ideal gases behave, as if they were alone: For a gas which can access a container of volume $V$, the partial pressure of that gas is given by $p V= N k_B T$, where $N$ is the number of particles of that gas. The total pressure is given by the sum of all the partial pressures of the different types of gases. This law is a consequence of the fact, that the Hamiltonian of the ideal gas is modelled to include no particle-particle-interaction.

\begin{center}
\includegraphics[width=0.9\textwidth]{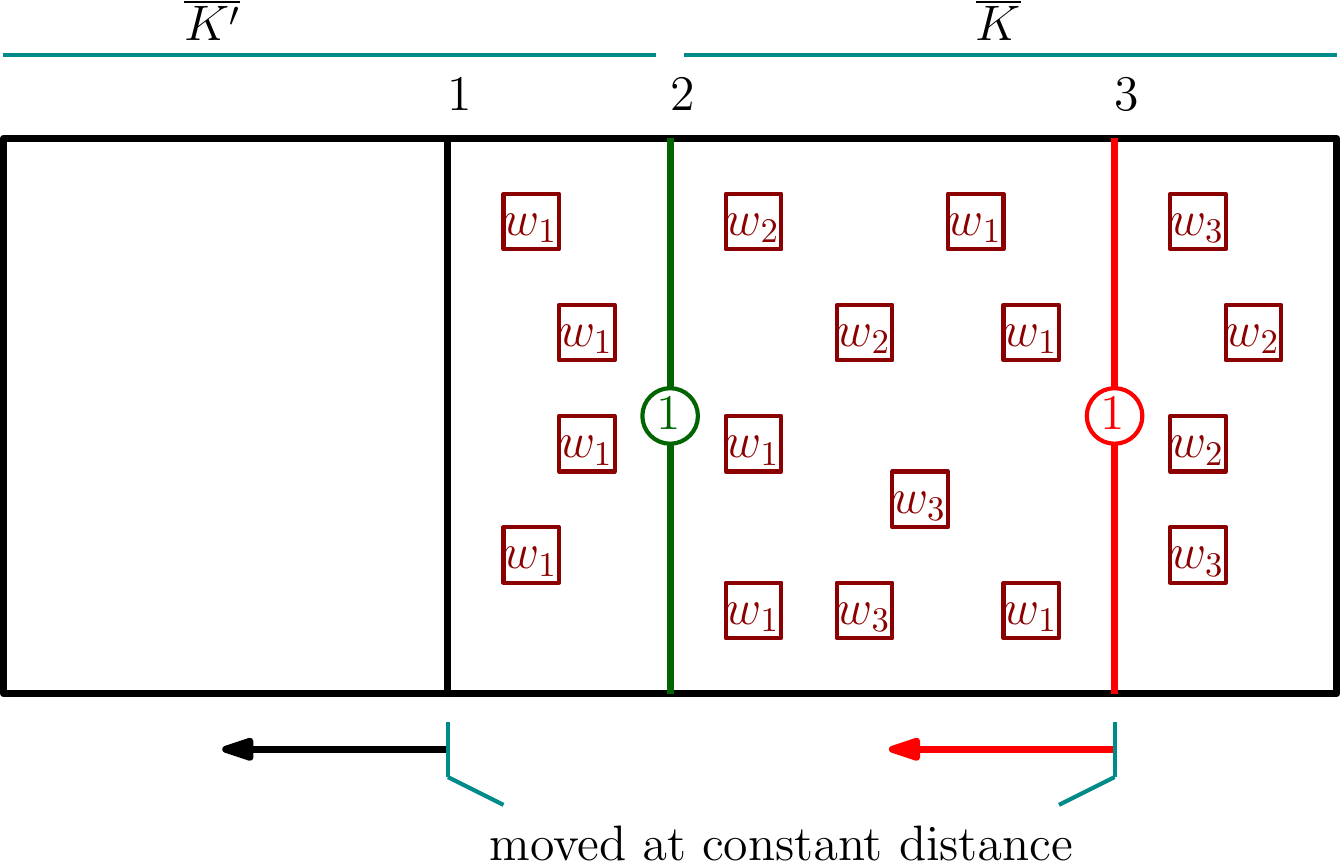}
\captionof{figure}{\textit{The $w_1$-boxes are separated from the rest by using three walls: The green wall inscribed with label 1 is a semipermeable wall which lets only $w_1$-boxes pass. The red wall with label 3 is a semipermeable wall which lets all $w_j$ pass, except $w_1$ (green $\rightarrow$ go, red $\rightarrow$ no-go). Wall 3 and the standard wall 1 are moved to the left at constant distance.}}
\label{Fig:Walls}
\end{center}

Now, all $w_1$-boxes are in $\overline{K'}$, while all the other boxes are in $\overline K$. We separate the two tanks.\\
\\
Thus now we have (reversibly, without any work or heat exchange!) successfully isolated the $w_1$-gas from the rest. We repeat this procedure so often, that each $w_j$-gas ends up in its own tank.\\
\\
Now we isothermally compress each tank to the volume $p_j \cdot V$, shown in Figure \ref{Fig:Compress}.

\begin{center}
\includegraphics[width=\textwidth]{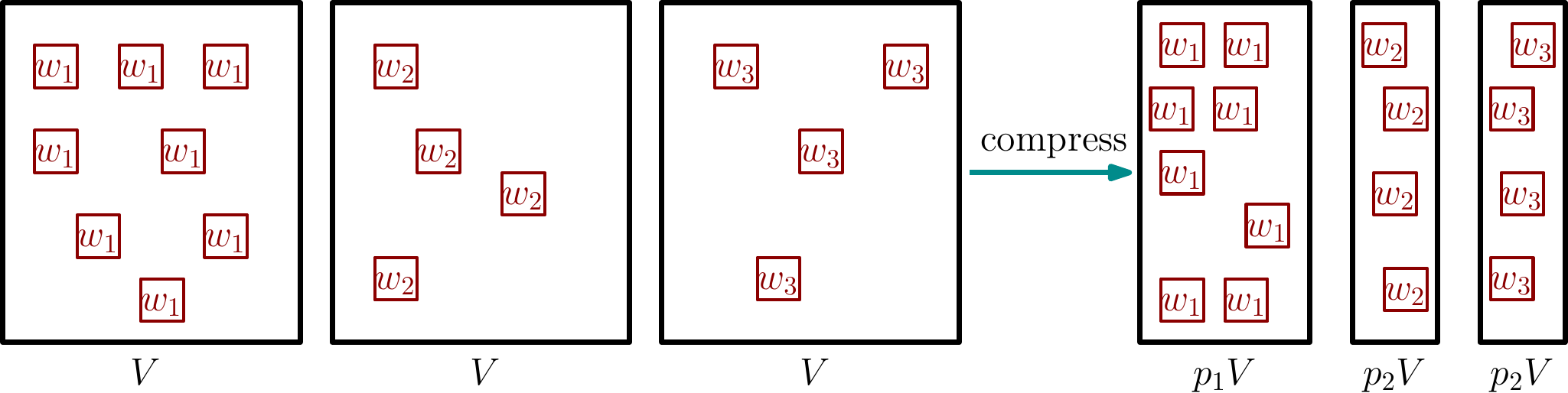}
\captionof{figure}{\small \textit{Each tank containing a $w_j$-gas is compressed to the size $p_j V$.}}
\label{Fig:Compress}
\end{center}

The work needed for this is:\\
\begin{align*}
	\delta W &= -\sum_j \int\limits_{V}^{p_j V} p \rm{d} V' = -\sum_j \int\limits_{V}^{p_j V} \frac{p_j N k_B T}{V'} \rm{d} V' = -\sum_j \Big[ p_j N k_B T \ln (V') \Big]_{V}^{p_j V} \\
		&= -\sum_j p_j N k_B T [\ln(p_j V) -\ln(V)]\\
		&= -N k_B T \sum_j p_j \ln(p_j)
\end{align*} 
 The mean energy $E \propto T = const$ is constant in isothermal procedures. Thus the work performed on the gas is given as heat to the heat reservoir. 
Following d$S = \frac{\delta Q}{T}$, the entropy of the reservoir increases by $-N k_B \sum_j p_j \ln(p_j)$, vice versa the entropy of the bunch of gases 
is increased by $N k_B \sum_j p_j \ln(p_j)$ (negative !), as the collection of gases loses the heat $\delta Q$. \\
\\
Now we apply Tool 2: All the gases are reversibly transformed into the same pure state $w'$. These gases all have the same density $\frac{p_j N}{p_j V} = \frac N V$. We define that an ensemble whose systems all have the same pure state has entropy 0. We can do this, because all pure states can reversibly be transformed into each other (Tool 2), i.e. no entropy change here. It makes sense to define the pure ensemble-entropy as 0: All particles have the same label, it is trivial.\\
\\
The last step is that we merge all the tanks to one tank of volume $V$ and take away the separating walls, see Figure \ref{Fig:end}. Of course, we can put the walls back in, no entropy change here as all the tanks contained the same gases at same density anyway. \\
Overall, we have reversibly transformed our original $w$-gas to a pure gas at same temperature and volume. The entropy change of the gas was $N k_B \sum_j p_j \ln(p_j)$. As we have already reasoned, this entropy change is the entropy change of the internal ensemble. As now the internal ensemble has entropy 0, our original $w-$ensemble had the entropy
\begin{equation*}
	S_\text{GPT} = - N k_B \sum_j p_j \ln(p_j)
\end{equation*}
\begin{center}
\includegraphics[width=0.8\textwidth]{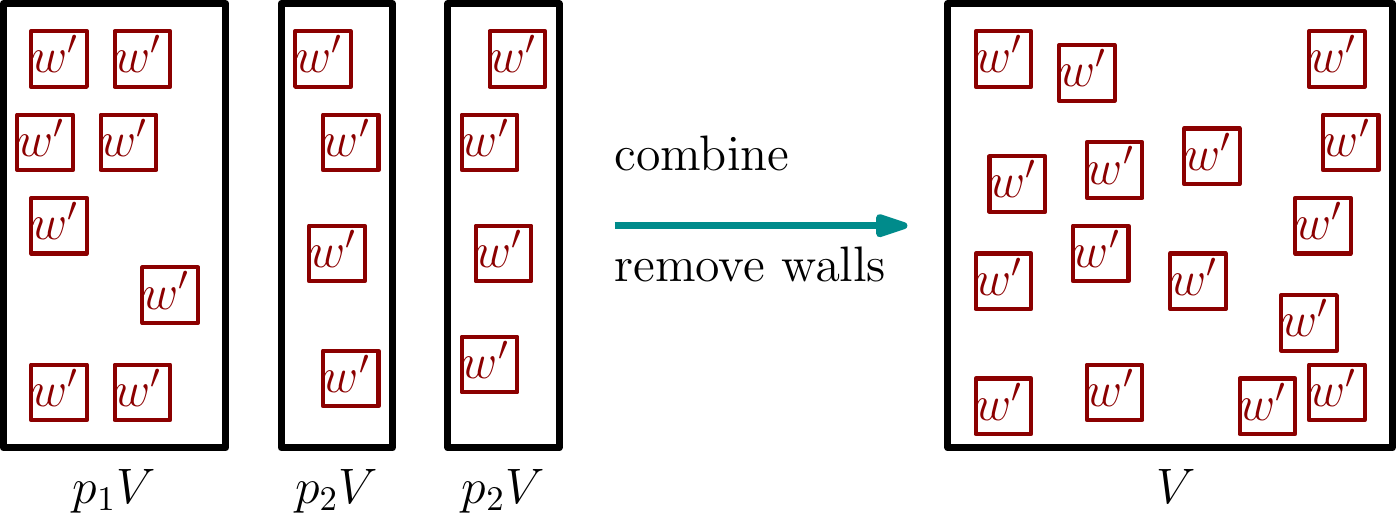}
\captionof{figure}{\small \textit{In the end, all tanks contain the same pure gas and are merged to one tank of size V. The resulting gas differs from the original $w-$gas only in its internal GPT state and the fact, that the entropy has changed by $N k_B \sum_j p_j \ln(p_j)$.}}
\label{Fig:end}
\end{center}

For the special case of quantum theory, $p_j$ are the eigenvalues of the density operator $w$ and we can also write $S_\text{QT} = - N k_B \text{Tr}(w \ln w)$. Furthermore, we can consider the entropy per particle
\begin{equation*}
	s_\text{GPT} = \frac{S_\text{GPT}}{N} = - k_B \sum_j p_j \ln(p_j)
\end{equation*}

\subsection{Entropy from combinatorial considerations}
The derivation of the entropy was based on a purely thermodynamic argument, not using any combinatorial arguments as introduced by Boltzmann in statistical physics. We will now give a short combinatorial argument for an isolated system which gives us the same entropy equation as in the thermodynamic derivation. Arguments of that form are often used in statistical physics, e.g. a related argument can be found in \cite{Schwabl} or \cite{Daijiro}.\\
\\
Once more we consider a GPT-ensemble realized with perfectly distinguishable pure states, each system being put into a small box. There are $N$ systems in total, $N_j$ of them in the state $w_j$. Once more we assume that these boxes form a classical ideal gas. However, this time we assume that the container is perfectly isolated instead of being surrounded by a heat bath.\\
The basic idea is that the inner GPT-states behave like a classical label, e.g. instead of distinguishing between different GPT-states, we could have boxes in different colors or different molecules.\\
\\
The number of states in this situation is given by 
\begin{equation}
	\Omega_\text{total} = \Omega_\text{gas} \cdot \Omega_\text{GPT-configuration} \label{eq:OmegaFaktoren}
\end{equation}

There are two slightly different points of view with same result, see also Figure \ref{Fig:BoltzmannTwoViews}.\\

\begin{center}
	\includegraphics[width=\textwidth]{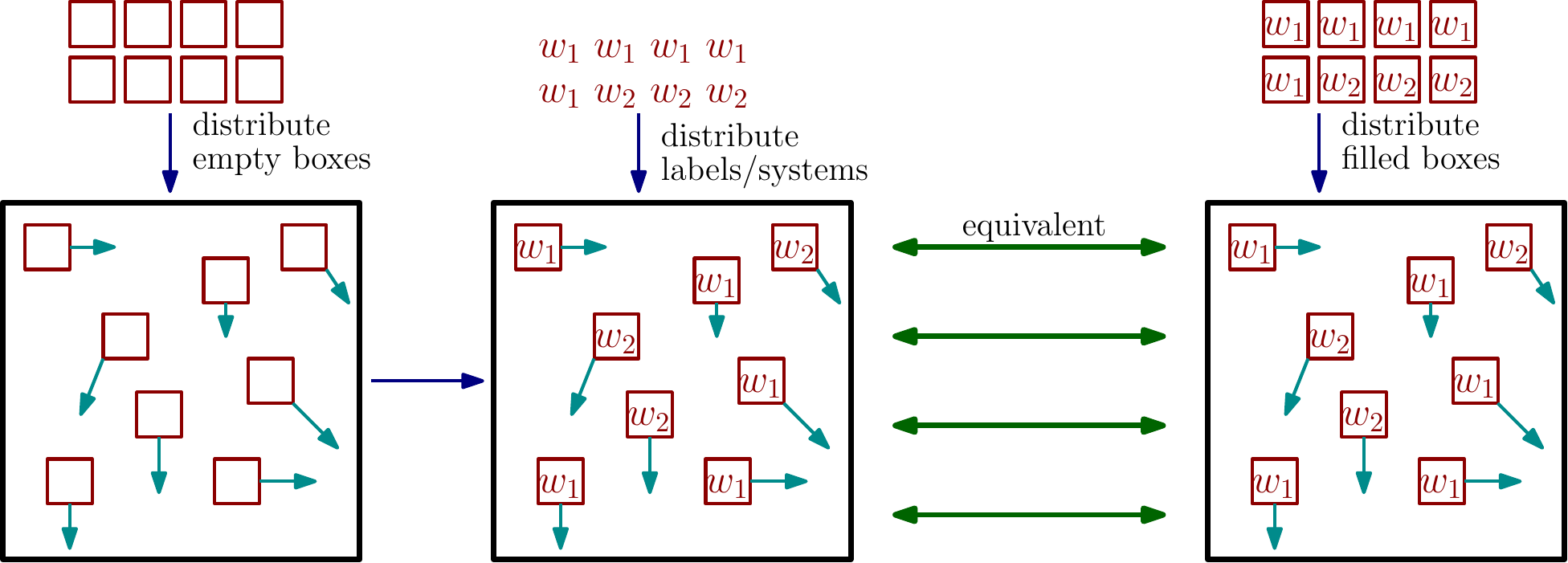}
	\captionof{figure}{\textit{\small This figure illustrates the two different points of view of the combinatorial argument. In the first one, we start by distributing empty (i.e. label-less) boxes across the phase space, and then put systems (i.e. labels) into the boxes. In the second one, we start with boxes containing systems/labels and distribute them in the phase space.}}
	\label{Fig:BoltzmannTwoViews}
\end{center}

The first point of view is that we start with empty boxes. They are indistinguishable, and $\Omega_\text{gas}$ is the standard number of states for a monoatomic ideal gas without labels (see e.g. \cite{Nolting} Equations (1.129)(1.71)), i.e.: 
\begin{equation}
  \Omega_\text{gas} = \int\limits_{E_0 < E < E_0 + \delta E} \frac{\mathrm d^{3N}p \ \mathrm d^{3N}q}{N! \mathrm{h}^{3N}} 
\end{equation}
  Now, we put the GPT-systems into the boxes. As the boxes can be distinguished by their position and momentum (i.e. the phase space coordinates), there are $\Omega_\text{GPT-configuration} = \frac{N!}{N_1! N_2!...}$ ways to do this.\\
  Thus with Equation (\ref{eq:OmegaFaktoren}):
  \begin{equation}
  \Omega_\text{total} = \int\limits_{E_0 < E < E_0 + \delta E} \frac{\mathrm d^{3N}p \ \mathrm d^{3N}q}{\mathrm{h}^{3N}N_1! N_2!...} \label{eq:combi}
\end{equation}
The other point of view is to directly incorporate the idea that the inner GPT-states serve as classical label, see e.g. \cite{Nolting} Equations (1.130)(1.71). Then some of the boxes are distinguishable and we directly find Equation (\ref{eq:combi}).\\
\\
Next, we use Stirling's formula 
\begin{equation}
  \ln(n!) \approx n \ln(n) -n
\end{equation}  
together with $\sum_j N_j = N$ and $p_j = \frac{N_j}{N}$, the probability that a random system in the ensemble is found in the state $w_j$:

\begin{align}
	S_\text{total} &= k_B \ln(\Omega_\text{total}) = k_B \ln(\Omega_\text{gas}) + k_B \ln(\Omega_\text{GPT-configuration})\\
		&= S_\text{gas} + k_B \ln(N!)-k_B\sum_j \ln(N_j!)\\
		& \approx S_\text{gas} + k_B N \ln N - k_B N - k_B \sum_j N_j \ln N_j + k_B \sum_j N_j\\
		&= S_\text{gas} - k_B \sum_j N_j (\ln(N_j) - \ln(N)) = S_\text{gas} - k_B N \sum_j \frac{N_j}{N}\ln\left(\frac{N_j}{N}\right)\\
		&= S_\text{gas} - N k_B \sum_j p_j \ln p_j
\end{align}
Thus again we find 
\begin{align}
	S_\text{GPT} = - N k_B \sum_j p_j \ln p_j
	&&s_\text{GPT} = -k_B \sum_j p_j \ln p_j
\end{align}

The assumption that the $w_j$ are perfectly distinguishable entered by the analogy with classical labels: If the $w_j$ were not perfectly distinguishable, we could not distinguish between all the boxes that contain different $w_j$ at the same time. Especially it would not be possible for an experimenter to find out how many boxes are of type $w_1$, how many of type $w_2$,...\\
For example in a gbit, all states are described by statistical mixtures of the corners. And these corners are pairwise distinguishable, however they are not perfectly distinguishable as a whole \cite{Pfister}. Let the corners be called $w_1,w_2,w_3,w_4$. If we know we get either $w_i$ or $w_j$, $i,j \in \{1,2,3,4\}$ and $i\ne j$ , we always can find out which one it is. But if we are only told it is one of the states $w_1$, $w_2$, $w_3$ or $w_4$, it is not possible to find out which one of these states we got. There is no combinatorial rule from statistical mechanics how to count such ``semi-distinguishable'' configurations, that involve pairwise distinguishable states which are not distinguishable as a whole. The important property of gbits is that there are states without classical decomposition.\\
\\
So we can see in this example, that even for more general GPTs, the considerations from statistical physics and thermodynamics still agree.\\
\\
For the rest of Chapter \ref{Section:Neumann}, $S$ will denote the entropy already divided by the number of systems, i.e. the entropy we called $s_\text{GPT}$ before. 

\subsection{Entropy in classical and quantum physics}
In classical physics, the continuous generalization 
\begin{equation}
	S(\rho) = -k_B \int \mathrm d \Gamma \rho \ln \rho,
\end{equation}
where $\mathrm d \Gamma$ denotes the phase space integral, is called the Gibbs entropy and is the usual entropy in classical equilibrium statistical physics, see \cite{Schwabl} Equation (10.6.5). \\
Like in the classical case, $S(\rho) = -k_B \text{Tr}(\rho \ln \rho)$ is also known as the Gibbs entropy and used for equilibrium physics, see \cite{Schwabl} Equation (10.6.1).

\subsection{Consistency}
The equation which we derived for the entropy depends on the coefficients of the classical decomposition. To be exact, so far we have only derived the entropy of a particular realisation of a state $w$ (via a particular ensemble). In the worst case, the same state might have different classical decompositions with different entropies. This would mean, that the ``state'' $w$ is not a complete description of the the thermodynamic properties. ``The state does not describe the state''. Adapting a proof found by Howard Barnum and Markus Mueller\cite{Erloesung} that the coefficients of a classical decomposition majorize the coefficients of all convex decompositions of same size, we can show, that all GPTs satisfying Postulates 1+2 give rise to consistent thermodynamic entropies.\\

\begin{theorem}
	By Postulate 1 and 2 from \cite{Postulates}, the entropy of a state $w$ is well-defined, i.e. it does not depend on the choice of classical decomposition.
\end{theorem}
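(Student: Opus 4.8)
The plan is to show that the probability vector appearing in a classical decomposition is, up to reordering and padding with zeros, an invariant of the state $w$; since $S(w) = -k_B\sum_j p_j \ln p_j$ is a symmetric function of that vector (and the convention $0\ln 0 = 0$ makes padding harmless), this immediately gives well-definedness. Concretely, I would fix two classical decompositions $w=\sum_{i=1}^n p_i w_i = \sum_{i=1}^{n'} p_i' w_i'$, extend both frames to maximal frames generating $A_+$ (filling the missing coefficients with zeros), and order both coefficient vectors decreasingly. The goal then reduces to proving that $p = p'$ as ordered vectors.

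The key step is a Ky-Fan-type characterization of the partial sums that makes no reference to the chosen decomposition. For a face $G$ of $A_+$ let $u_G$ denote its projective unit. I claim that for every $k$,
\begin{equation}
	\sum_{i=1}^k p_i = \max\left\{ u_G(w) \ \middle| \ G \text{ a face of } A_+ \text{ with } |G| \le k \right\}.
\end{equation}
The lower bound follows by taking $G = F_k$, the face generated by $w_1,\dots,w_k$: by Proposition \ref{Prop:ProjectiveUnit} and self-duality (Theorem \ref{Theorem:Selfdual}) one has $u_{F_k}=\sum_{i=1}^k w_i$, and orthonormality of the frame gives $u_{F_k}(w)=\braket{\sum_{i=1}^k w_i,w}=\sum_{i=1}^k p_i$. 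For the upper bound, any face $G$ with $|G|\le k$ satisfies $u_G(w)=\sum_i p_i\,\braket{u_G,w_i}$ with $\braket{u_G,w_i}\in[0,1]$ (since $0\le u_G\le u_A$) and $\sum_i \braket{u_G,w_i}=\braket{u_G,u_A}=|G|\le k$ (using Proposition \ref{Prop:FramesAddUp}, so that the maximal frame sums to $u_A$). A one-line rearrangement argument then bounds $\sum_i p_i c_i$ with $c_i\in[0,1]$, $\sum_i c_i\le k$ and $p$ decreasing by $\sum_{i=1}^k p_i$. Because the right-hand side of the display depends only on $w$, the decreasing partial sums of $p$ and of $p'$ agree for every $k$, whence $p=p'$ termwise and the entropies coincide.

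The main obstacle is establishing this partial-sum characterization with the abstract machinery rather than quantum eigenvalue calculus: one must verify that the projective units $u_G$ furnished by Proposition \ref{Prop:ProjectiveUnit} (together with positivity of the projections, Proposition \ref{Prop:PositiveProjector}) behave like rank-$k$ projectors, in particular that $\braket{u_G,u_A}=|G|$ and that faces of every rank $\le k$ are available after extending a frame to a maximal one. Equivalently, in the language of the cited Barnum--Müller result, this step is precisely the statement that the classical coefficients majorize the coefficients of every convex decomposition of the same size; applying it to two classical decompositions (each of which, after padding, is in particular a convex decomposition of the other's size) gives mutual majorization, so the two ordered vectors are permutations of one another and $-k_B\sum_j p_j\ln p_j$ is unchanged. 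The remaining care is pure bookkeeping: matching the sizes of the two decompositions by padding with zero-weight frame states and checking that $0\ln 0=0$ keeps the entropy formula continuous under this padding.
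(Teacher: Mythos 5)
Your proof is correct, but it takes a genuinely different route from the one in the thesis. The thesis pads both classical decompositions to maximal frames $\{w_j\}$ and $\{w_i'\}$, forms the matrix $r_{ij}=\braket{w_i',w_j}$, and shows it is doubly stochastic (nonnegativity from self-duality, row and column sums equal to $1$ because maximal frames add up to $u_A$); Hardy--Littlewood--P\'olya then gives $q\prec p$ and, by symmetry, $p\prec q$, so the two coefficient vectors are permutations of each other. You instead establish a Ky--Fan-type variational formula, $\sum_{i=1}^k p_i^{\downarrow}=\max\{u_G(w): G \text{ a face of } A_+,\ |G|\le k\}$, whose right-hand side is manifestly decomposition-independent; the lower bound via the face generated by the top $k$ frame states and the upper bound via $0\le u_G\le u_A$, $\braket{u_G,u_A}=|G|$ and the rearrangement inequality are all sound, and they rest on exactly the same structural inputs (self-duality, orthonormality of frames, maximal frames summing to the order unit). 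The thesis's argument is shorter because it plays the two frames directly against each other; yours buys more, namely an intrinsic, decomposition-free characterization of the ordered spectrum of $w$, which immediately yields the stronger majorization statement against arbitrary same-size convex decompositions that the thesis only cites from private communication. Both correctly reduce to the observation that a symmetric function of the coefficient vector is insensitive to permutations and to padding with zeros under the convention $0\ln 0=0$.
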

\begin{proof}
According to Theorem \ref{Theorem:Selfdual} (or \cite{Skalarprodukt}, Proposition 3 from \cite{Postulates}), there is an inner product $\braket{\cdot,\cdot }$.\\
Let $w = \sum_{j=1}^m p_j w_j = \sum_{j=1}^n q_j w_j'$ be a state with two classical decompositions, i.e. the $w_j$ are pure and perfectly distinguishable, the $p_j$ form a probability distribution, analogously for the $w_k'$ and $q_j$. Wlog we assume that the $w_j$ and $w_k'$ are frames of maximal size ( adding terms of the form $0\cdot \ln(0) = 0$ does not change the entropy). We will now adapt the proof from \cite{Erloesung} for our own purpose.\\
For perfectly distinguishable states $a,b$ we have $\braket{a,b} = 0$. For pure states $w_j'$ we have $\braket{w_j',w_j'} = 1$. Thus
\begin{equation}
	q_i = \braket{w_i', w} = \sum_j p_j \cdot \braket{w_i',w_j}
\end{equation}
We define $r_{ij} := \braket{w_i',w_j}$, probability vectors $q,p$ given by $\{q_i\}_i,\{p_i\}_i$ and a matrix $R=(r_{ij})_{i,j}$ such that $q=R\cdot p$.\\
Now we show that $R$ is doubly-stochastic:\\
The state-space is self-dual, and by Proposition \ref{Prop:FramesAddUp} (or Proposition 6 from \cite{Postulates}) we have $\sum_j \braket{w_j',\cdot} \le u_A$, $\sum_i \braket{w_i,\cdot } \le u_A$. Every maximal frame adds up to the order unit, i.e. we have $\sum_j \braket{w_j',\cdot} = u_A$, $\sum_i \braket{w_i,\cdot } = u_A$. Thus
\begin{equation}\sum_j r_{ij} = \sum_j \braket{w_i',w_j} = u_A(w_i') = 1\end{equation}
and also $\sum_i r_{ij} = 1$. For all states $a,b$ we have $\braket{a,b} \ge 0$, i.e. $r_{ij} \ge 0$.\\
Thus $q \prec p$ by a theorem of Hardy, Littlewood and Polya (see e.g. Lemma I.2.B.2 of \cite{Majorization}) and in the same way one shows $q \succ p$. Following \cite{Majorization}, eq. I.1.B.(2),  this implies that $q$ is a permutation of $p$, and thus the entropies agree. 
\end{proof}
%\newpage
%\thispagestyle{empty}
%\quad
%\newpage
\section{Generalized thought experiment}
\label{Section:Petz}
So far, we have only considered the von Neumann argument for a decomposition into perfectly distinguishable \textbf{pure} states. The interpretation behind that is: In QT, the pure states are of the form $\ket{\psi}\bra{\psi}$. This state can be realized by preparing a system in the state $\ket{\psi}$. The density operators are then used to describe ensembles of systems in such quantum states. We expect that the states of maximal knowledge, i.e. the pure states, should be the basic states one can prepare on a single system. But we can combine the two interpretations of states, i.e. missing knowledge of a single system versus ensemble of many fully known systems: In principle it makes sense that we do not know the exact pure state a system is prepared in. Then it makes sense to describe single systems with mixed states. Indeed, this is the interpretation normally applied when talking about GPTs. And we can still put these only partially known systems into boxes. Following that idea, we will now apply a generalized version of the thought experiment as found in \cite{Petz} for the quantum case. Already Petz suggested that the argument can be used beyond quantum theory, as long as orthogonality makes sense.\\
\\
\begin{center}
	\includegraphics[width=\textwidth]{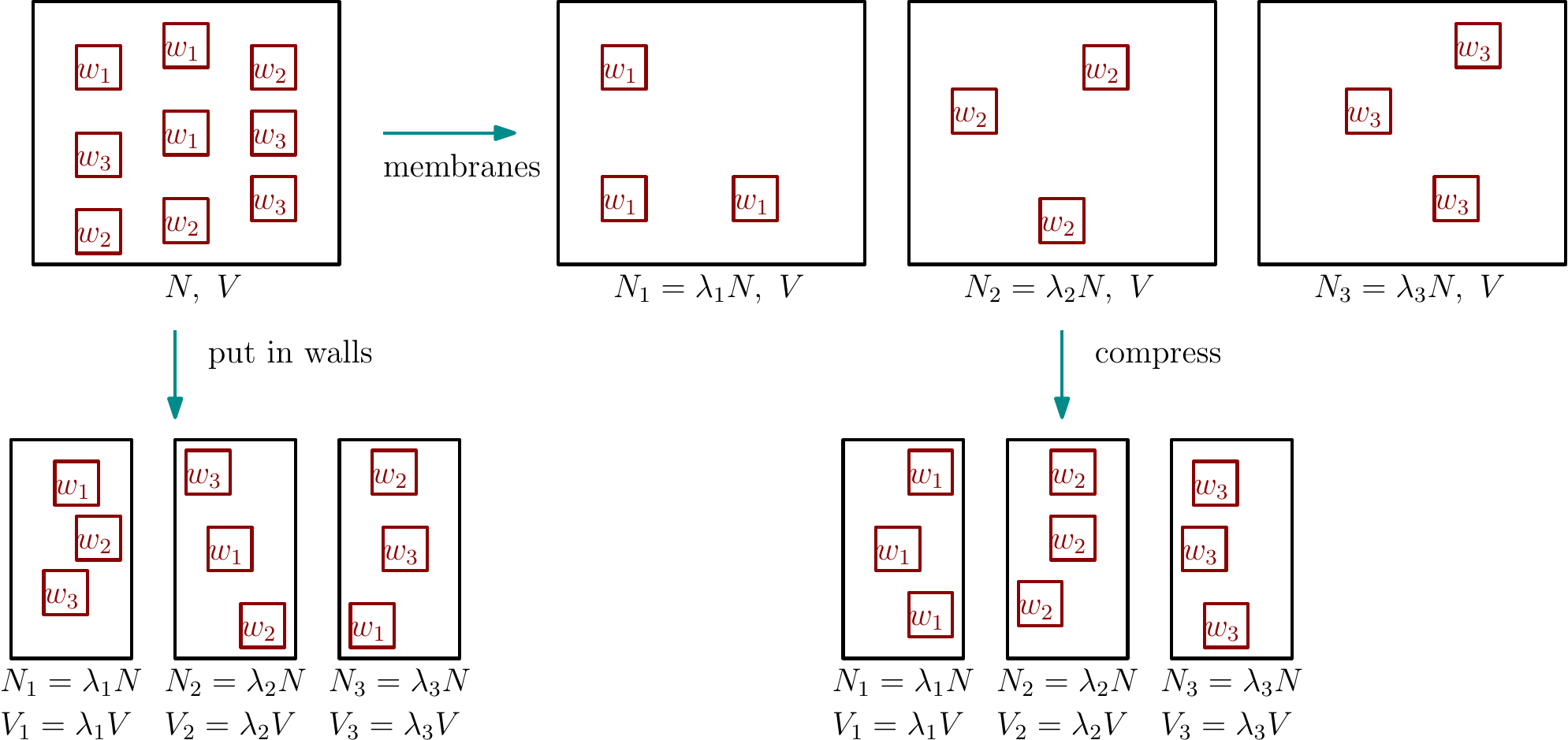}
	\captionof{figure}{\small \textit{Visualization of the argument which relates the thermodynamic entropy of a state to the entropies of a decomposition into perfectly distinguishable mixed states.}}
\end{center}

So let us assume we have $w=\sum_{j=1}^n \lambda_j w_j$ a convex combination ($\sum_j \lambda_j = 1$ and $\lambda_j \in [0,1]$) with $w_j$ perfectly distinguishable ($e_k(w_j) = \delta_{jk}$, $\sum_{k} e_k = u_A$), but not necessarily pure.\\
Analogously to the thought experiment by von Neumann, we consider the following situation: In a tank with volume V, N boxes form an ideal gas. For each $j$, $N_j$ of these boxes contain the GPT-state $w_j$. Thus we obtain an ensemble $w = \sum_j \frac{N_j}{N} w_j$, $\lambda_j = \frac{N_j}{N}$, in a volume V.\\
\\
Just like before we can construct a semi-permeable membrane. However, we need to assume that the membrane can be constructed such that it does not disturb the $w_j$. Like in von Neumann's thought experiment, we use that membrane to separate the constituents: every $w_j$-gas with its $N_j$ particles now is in a separate tank with volume V. We compress each tank to the volume $\lambda_j V$. Hereby we need to perform 
\begin{equation}
  \delta W = - N k_B T \sum_j \lambda_j \ln (\lambda_j).
\end{equation}
As $E \propto T = const.$, the bath gets 
\begin{equation}
\delta Q = \delta W = - N k_B T \sum_j \lambda_j \ln (\lambda_j),
\end{equation}
the gases lose $\delta Q = \delta W$. This means the gases have ``obtained'' \begin{equation}\Delta S = N k_B \sum_j \lambda_j \ln (\lambda_j) \le 0 .\end{equation} All the tanks now have the same density $n_j =\frac{N_j}{\lambda_j V} = \frac{N}{V}$ which coincides with the density at the beginning. \\
\\
Now we go back to the very beginning: One single tank of volume V with a $w$-gas of $N$ particles. We insert walls such that we end up with $n$ tanks of volumes $\lambda_j V$. This step is reversible. We assume that the entropy is extensive and additive, i.e we have $\sum_{j=1}^n S_j(w) = S(w)$, where $S_j(w)$ is the entropy of the $w$-gas in the $j$-th tank. In the situation from the paragraph before, we have $S(w)+\Delta S = \sum_{j=1}^n S_j(w_j)$ (reversible!). Thus:\\
\begin{equation} S(w) = \sum_{j=1}^n S_j(w) = \sum_{j=1}^n S_j(w_j) -  N k_B \sum_{j=1}^n \lambda_j \ln (\lambda_j)\end{equation} or:
\begin{equation} S(w) -\sum_{j=1}^n S_j(w_j) = \sum_{j=1}^n S_j(w)-\sum_{j=1}^n S_j(w_j) = -  N k_B \sum_{j=1}^n \lambda_j \ln (\lambda_j)\end{equation}
As the $j$-th tank in both situations has the same macroscopic conditions (volume, particle number, temperature), the only difference in entropy of the $j$-th tank can be caused by the internal GPT entropy. Thus we find:
\begin{align*}
	\sum_{j=1}^n \big(S_{GPT,j}(w) - S_{GPT,j}(w_j)\big) = -  N k_B \sum_{j=1}^n \lambda_j \ln (\lambda_j)
\end{align*}
We also assume that $S_{GPT}$ is extensive with $\sum_{j=1}^N S_{GPT,j}(w) = S_{GPT}(w)$ and $S_{GPT,j}(w_j) = \frac{N_j}{N}S_{GPT}(w_j)$, i.e. that a homogeneity relation holds. Here, $S_{GPT}(w')$ refers to a tank of volume V filled with N boxes that form a $w'$-ensemble\footnote{This is clear, if we consider the equation $S_{GPT}(w') = -N k_B \sum_j q_j \ln q_j \propto N$ where $q_j$ are the coefficients of a classical decomposition. However, one should be aware that here we are checking for self-consistency of the entropy. The self-consistency might fail, even if the entropy itself is well-defined as a function. Furthermore, we later wish to apply this generalized von Neumann thought experiment to systems that don't always have classical decompositions, like the gbit.}. This makes sense, if we assume that the GPT-entropy is additive:\\
\begin{center}
\includegraphics[width=\textwidth]{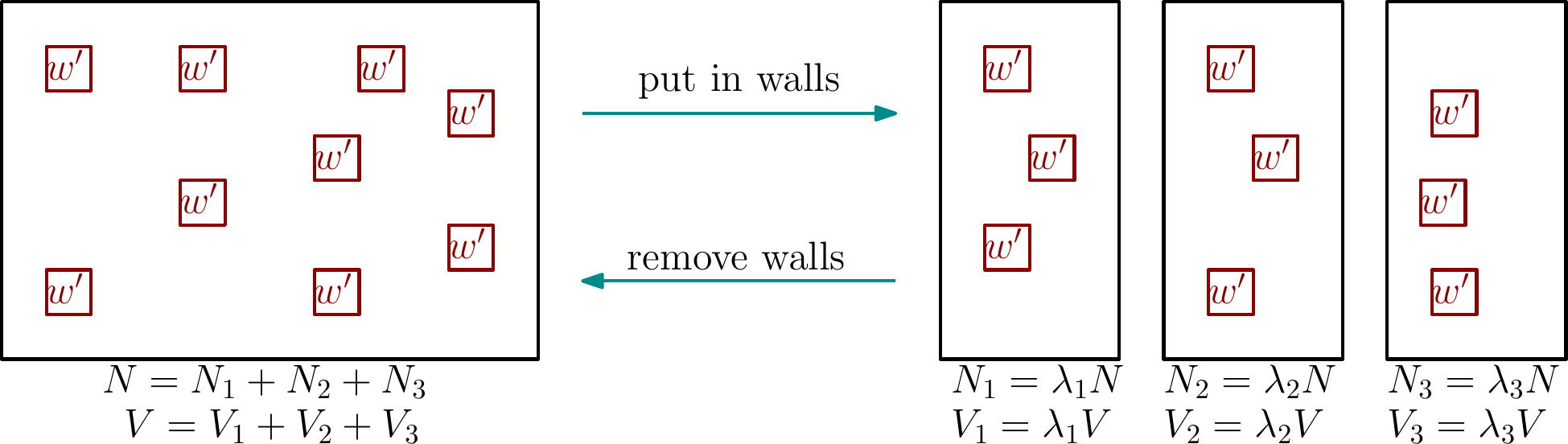}
\captionof{figure}{\small \textit{This figure visualizes the ``wall removing/putting back in''-argument.}}
\label{Fig:WandWegHin}
\end{center}
Tanks with volume $V_j := \lambda_j V$, $\sum_j \lambda_j =1$, filled with a $w'$-gas at density $\frac N V$ can be combined and their walls removed, see Figure \ref{Fig:WandWegHin}. As they had the same content at the same density, this step is reversible by putting the walls back in. Thus it makes sense that 
\begin{equation}
  \sum_{j=1}^n S_{GPT}(w',\lambda_j N, \lambda_j V) = S_{GPT}(w', N, V).
\end{equation} 
Especially in the cases where all $\lambda_j$ are equal ($\lambda_j = \frac 1 n$) we find 
\begin{align*} 
  &n S_{GPT}\left(w',\frac N n, \frac V n\right) = S_{GPT}(w', N, V)  && S_{GPT}\left(w',\frac N n, \frac V n\right) = \frac 1 n S_{GPT}(w', N, V).
\end{align*} 
As by the same \textit{wall removing-putting back in}-argument 
\begin{equation*} 
  m S_{GPT}(w',N, V) = S_{GPT}(w', mN, mV),
\end{equation*} 
$m \in \mathbb N$, we find in total 
\begin{equation} 
  S_{GPT}\left(w', \frac m n N, \frac m n V\right) = \frac m n S_{GPT}(w',N,V),
\end{equation} 
thus\footnote{If you feel uncomfortable about using these extensivity/additivity arguments for the GPT entropy, you can instead use it on the total entropy in $\sum_{j=1}^n S_j(w)-\sum_{j=1}^n S_j(w_j) = -  N k_B \sum_{j=1}^n \lambda_j \ln (\lambda_j)$, obtaining $\sum_{j=1}^n \lambda_j S(w)-\sum_{j=1}^n \lambda_j S(w_j) = -  N k_B \sum_{j=1}^n \lambda_j \ln (\lambda_j)$. Then, we can still identify $S(w)- S(w_j) = S_{GPT}(w)-S_{GPT}(w_j)$.} by continuity 
\begin{equation}
  S_{GPT}(w',pN,pV) = p\cdot S_{GPT}(w',N,V).
\end{equation} 

Therefore we find $\sum_{j=1}^n S_{GPT,j}(w) = S_{GPT}(w)$ and $\sum_{j=1}^n S_{GPT,j}(w_j)=\sum_{j=1}^n \lambda_j S_{GPT}(w_j)$ which leads to our end result:
\begin{align}
	S_{GPT}(w) &= \sum_{j=1}^n \lambda_j \cdot S_{GPT}(w_j) - N k_B \sum_{j=1}^n \lambda_j \ln (\lambda_j)\\
	s_{GPT}(w) &= \sum_{j=1}^n \lambda_j \cdot s_{GPT}(w_j) -  k_B \sum_{j=1}^n \lambda_j \ln (\lambda_j)
\end{align}
where again $s := \frac{S}{N}$. There is no $V$-dependence any more. This result coincides with that of Petz\cite{Petz}. 

It is important that we have not applied any strong symmetry argument, especially we have NOT assumed that the $w_j$ can be reversibly transformed into each other. So far, we have only expressed the entropy of one state by the entropies of other states. To get a final numerical value for the entropy, we need to reduce the entropy of a state to states whose entropies are known (or chosen, by convention). For GPTs satisfying the classical decomposition postulate, one can reduce the entropy to those of pure states, by the strong symmetry one argues that all the pure states have the same entropy which we set to 0. If we had chosen another value, the entropy would not necessarily be extensive, unless we also chose that this additional summand also scales with $N$.\\

\subsection{Classical mixtures and their entropy}
In classical physics one finds for classical labels $j$  (see e.g. \cite{NoltingThermo} Equations (3.57) (3.56) (3.55) (3.54)):
\begin{align}
	 &S(T,V,N,w) =  \sum_{j=1}^{n} S(T,V_j,N_j,w_j) - k_B \sum_{j=1}^{n} N_j \ln\left( \frac{N_j}{N} \right)
\end{align}
If we use $\lambda_j = \frac{N_j}{N} = \frac{V_j}{V}$ and that the entropy is homogeneous, i.e. $S(T,N_j,V_j,w_j) = \lambda_j S(T,N,V,w_j)$ by Equation (3.39) from \cite{NoltingThermo}, then we find:
\begin{align}
	 &S(T,V,N,w) =  \sum_{j=1}^{n} \lambda_j S(T,V,N,w_j) - k_B N \sum_{j=1}^{n} \lambda_j \ln\left( \lambda_j \right)
\end{align}
This result coincides with the result from the generalized von Neumann argument. The last term is the mixture entropy. It also shows, that we really can just add the mixing/internal entropy to the external entropy.\\
\\
For the rest of Chapter \ref{Section:Petz}, $S$ will denote the entropy already divided by the number of systems, i.e. the entropy we called $s_\text{GPT}$ before. Furthermore, we omit $k_B$.

\subsection{Quantum mixtures and their entropy}
Let $\rho = \sum_{j=1}^{n'} p_j \rho_j$ be a convex combination with $\rm{Tr}(\rho_j \cdot \rho_k) \propto \delta_{jk}$. We want to prove 
\begin{equation} S(w) = \sum_{j=1}^{n'} p_j \cdot S(w_j) - \sum_{j=1}^{n'} p_j \ln (p_j) \label{Eq:QTPetz} \end{equation}
for the von Neumann entropy ($k_B =1$, $N$ already divided out).\\
\\
Let $\ket{1_j},...\ket{n_j}$ be an eigenbasis of $\rho_j$ with eigenvalues $\lambda^{(j)}_{k}$. Let $\ket{1_j},...,\ket{f(j)_j}$, where $f(j) \in \{1,...n\}$, be the eigenvectors with $\lambda^{(j)}_{k} \ne 0$. Thus we find ($j\ne k$):
\begin{align*}
	0 &= \text{Tr}(\rho_j \rho_k) = \sum_{a=1}^n \bra{a_j}\rho_j\rho_k\ket{a_j} = \sum_{a=1}^{f(j)} \lambda^{(j)}_a \bra{a_j}\rho_k\ket{a_j}
\end{align*}
Thus $\bra{a_j}\rho_k\ket{a_j}=0$ for $a \le f(j)$ because $\rho_k$ is positive semi-definite. As $1=\text{Tr}(\rho_k)$ we have $1 = \sum_{a=f(j)+1}^n \bra{a_j}\rho_k\ket{a_j}$. For $\rho_k = \sum_b \lambda^{(k)}_b \ket{b_k}\bra{b_k}$ we thus find:
\begin{align*}
1 &= \sum_{a=f(j)+1}^n \sum_{b=1}^n \lambda^{(k)}_b \cdot |\braket{a_j|b_k}|^2 =  \sum_{b=1}^n \lambda^{(k)}_b \sum_{a=f(j)+1}^n |\braket{a_j|b_k}|^2 \\& = \sum_{b=1}^{f(k)} \lambda^{(k)}_b \sum_{a=f(j)+1}^n |\braket{a_j|b_k}|^2
\end{align*}
Therefore $\sum_{a=f(j)+1}^n |\braket{a_j|b_k}|^2 = 1$ for all $b$ with $\lambda^{(k)}_b \ne 0$, which means those $\ket{b_k}$ that actually appear in the eigendecomposition of $\rho_k$. We used 
$\sum_{a=1}^n  |\braket{a_j|b_k}|^2 = 1$ and $\sum_{b=1}^{f(k)} \lambda^{(k)}_b = 1$. Thus  $\sum_{a=f(j)+1}^n |\braket{a_j|b_k}|^2 \le 1$ but we need $=1$ because $\sum_{b=1}^{f(k)}\lambda_{b}^{(k)} q_b < 1$ for $q_b \in [0,1]$ and one $q_b < 1$. \\
We find $|\braket{a_j|b_k}|^2=0$ for all $a \in \{1,..f(j)\}$, $b\in \{1,...,f(k)\}$. In nicer words: The eigenvectors of $\rho_j$ and $\rho_k$, which belong to non-vanishing eigenvalues, are orthogonal to each other. Especially, the eigenvectors of $\rho_j$ and $\rho_k$ that actually appear in the decomposition are orthogonal. We have shown, that $\rho_i$ have support on orthogonal subspaces. By Theorem 11.8(4) from \cite{BestBookEver} we finally find Equation (\ref{Eq:QTPetz}).

\subsection{Maximal consistency for GPTs}

\begin{theorem}
	Consider a GPT which satisfies Postulate 1 (Classical Decomposition) and Postulate 2 (Strong Symmetry) from \cite{Postulates}.\\ Then the thermodynamic entropy is \textbf{fully von Neumann argument}-compatible, i.e. it is fully compatible with decompositions into perfectly distinguishable (mixed) states and their semipermeable membrane\cite{Petz}:\\
Let $w= \sum_j p_j w_j$ be a convex combination of perfectly distinguishable states (they are allowed to be mixed). Then:
\begin{equation}
	S(w) = \sum_j p_j S(w_j) - \sum_j p_j \ln p_j
\end{equation}
\end{theorem}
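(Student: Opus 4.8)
The plan is to reduce the claim to the already-established well-definedness of the entropy (the consistency theorem) by refining each perfectly distinguishable mixed state into pure states and checking that the resulting pure states assemble into a single frame. Concretely, I would first invoke Postulate~1 to give each $w_j$ a classical decomposition $w_j = \sum_k \lambda_k^{(j)} w_k^{(j)}$ into perfectly distinguishable pure states, discarding any vanishing coefficients so that $\lambda_k^{(j)} > 0$ throughout; by definition of the thermodynamic entropy this gives $S(w_j) = -\sum_k \lambda_k^{(j)} \ln \lambda_k^{(j)}$. Substituting these into $w = \sum_j p_j w_j$ yields the candidate decomposition $w = \sum_{j,k} p_j \lambda_k^{(j)} w_k^{(j)}$, and the whole argument hinges on showing that the family $\{w_k^{(j)}\}_{j,k}$ is itself a frame.

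The key step, which I expect to be the main obstacle, is establishing this mutual perfect distinguishability. I would work with the self-dualizing inner product $\braket{\cdot,\cdot}$ of Theorem~\ref{Theorem:Selfdual}. For fixed $j$, the states $w_k^{(j)}$ form a frame, hence are mutually orthogonal by property~4 of that inner product. For $j \ne j'$, the states $w_j$ and $w_{j'}$ are perfectly distinguishable and therefore orthogonal, so $0 = \braket{w_j, w_{j'}} = \sum_{k,k'} \lambda_k^{(j)} \lambda_{k'}^{(j')} \braket{w_k^{(j)}, w_{k'}^{(j')}}$. Here every coefficient is strictly positive and every inner product $\braket{w_k^{(j)}, w_{k'}^{(j')}}$ is non-negative by property~2, so a sum of non-negative terms vanishing forces each term to vanish: $\braket{w_k^{(j)}, w_{k'}^{(j')}} = 0$. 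Thus the full family is pairwise orthogonal, and in particular the states are distinct, since any coincidence would give inner product~$1$. By Proposition~\ref{Prop:FramesAddUp}, mutually orthogonal pure states form a frame, so $\{w_k^{(j)}\}_{j,k}$ is a set of perfectly distinguishable pure states and $w = \sum_{j,k} p_j \lambda_k^{(j)} w_k^{(j)}$ is a genuine classical decomposition.

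Finally, I would apply the consistency result (that the entropy is independent of the choice of classical decomposition) to evaluate $S(w)$ on this decomposition and expand the logarithm:
\begin{align*}
S(w) &= -\sum_{j,k} p_j \lambda_k^{(j)} \ln\!\big(p_j \lambda_k^{(j)}\big) = -\sum_{j,k} p_j \lambda_k^{(j)}\big(\ln p_j + \ln \lambda_k^{(j)}\big)\\
&= -\sum_j p_j \ln p_j \sum_k \lambda_k^{(j)} + \sum_j p_j\Big(-\sum_k \lambda_k^{(j)} \ln \lambda_k^{(j)}\Big).
\end{align*}
Using $\sum_k \lambda_k^{(j)} = 1$ and identifying the inner bracket with $S(w_j)$, this collapses to $S(w) = \sum_j p_j S(w_j) - \sum_j p_j \ln p_j$, as required. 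The only genuinely non-routine ingredient is the orthogonality argument in the second paragraph; the remainder is bookkeeping with logarithms together with the appeal to well-definedness.
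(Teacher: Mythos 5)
Your proof is correct, and it follows the same overall strategy as the paper's: refine each perfectly distinguishable mixed state $w_j$ into a classical decomposition, show that the union of the resulting pure states is a single frame, and then expand the logarithm. The difference lies entirely in how the central orthogonality claim is established. The paper argues through face theory: using Lemma~\ref{Lemma:EffectFace}, the level sets $e_j^{-1}(1)$ and $e_a^{-1}(0)$ of the distinguishing effects are faces, so they absorb the pure components $w_k^{(j)}$ of each $w_j$ (this is where $q_k^{(j)}>0$ is used), giving $e_a(w_k^{(j)})=\delta_{ja}$; the pure components belonging to different $j$ are therefore themselves perfectly distinguishable by the \emph{original} effects, hence orthogonal by Theorem~\ref{Theorem:Selfdual}. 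You instead expand $0=\braket{w_j,w_{j'}}=\sum_{k,k'}\lambda_k^{(j)}\lambda_{k'}^{(j')}\braket{w_k^{(j)},w_{k'}^{(j')}}$ and use non-negativity of the inner product on states together with strict positivity of the coefficients to kill each term. Your route is more elementary and self-contained (it needs only properties 2 and 4 of the self-dualizing inner product plus Proposition~\ref{Prop:FramesAddUp}, and bypasses the face machinery entirely), at the cost of not exhibiting explicit effects that distinguish the refined family; the paper's route yields that extra structural information, which is occasionally useful elsewhere but is not needed for the entropy identity. Your explicit appeal to the consistency theorem when evaluating $S(w)$ on the combined decomposition is also slightly more careful than the paper, which uses well-definedness implicitly.
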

\vspace{0.5cm}
\begin{proof}
	By Postulate 1, the $w_j$ have classical decompositions $w_j = \sum_k q_k^{(j)} w_k^{(j)}$. Without loss of generality, we assume $q_k^{(j)} > 0$. By Proposition \ref{Prop:SubnormalizedDistinguishable}, leaving out effects from the measurement causes no problems. There are effects with $e_j(w_a) = \delta_{ja}$. By Lemma \ref{Lemma:EffectFace}, $e_j^{-1}(1)$ is a face of $\Omega_A$. By our proof of Lemma \ref{Lemma:EffectFace} or Proposition 2.7 from \cite{Pfister}, we find $w_k^{(j)} \in e_j^{-1}(1) \ \ \forall k$ , i.e. $e_j(w_k^{(j)})=1$. Furthermore, as $e_a(w_j) = 0 \ \ \forall a\ne j$ and $e_a^{-1}(0)$ is a face too, also $e_a(w_k^{(j)})=0$. Thus in total, $e_a(w_k^{(j)})=\delta_{ja}$.\\
So far, by perfect distinguishability we have $\braket{w_j, w_k} \propto \delta_{jk}$, $\braket{w_k^{(j)}, w_a^{(j)}} = \delta_{ka}$ (equality holds because pure states are normalized to 1 by $\braket{\cdot,\cdot}$). As $e_a(w_k^{(j)}) = \delta_{aj}$, for $j\ne k$ also $w_a^{(j)}$ and $w_b^{(k)}$ are perfectly distinguishable by $e_j$ and $e_k$. Thus we also have $\braket{w_a^{(j)},w_b^{(k)}} = 0 \ \ \forall j \ne k$. Thus in total $\braket{w_a^{(j)},w_b^{(k)}} = \delta_{ab} \delta_{jk}$ (equality because pure).\\
%Now, we define $e_a^{(j)} := \braket{w_a^{(j)} , \cdot}$. Now we show that they are effects. First of all, $\braket{w_a^{(j)} , \cdot} \ge 0$ by self-duality of $A_+$, i.e. $\braket{w_a^{(j)} , \cdot} \in A_+^* = \{e \in A^* | \forall v \in A_+: e(v) \ge 0\}$. By proposition 6, we have $\sum_{j,a}\braket{w_a^{(j)},\cdot} \le u_a$ and especially $\braket{w_a^{(j)},\cdot} \le u_a$, i.e. they are effects and can appear in a common measurement meaning $\sum_{j,a}\braket{w_a^{(j)},\cdot} \le u_a$. As $\braket{w_a^{(j)},w_b^{(k)}} = \delta_{ab} \delta_{jk}$, the $w_a^{(j)}$ are perfectly distinguishable by our $e_a^{(j)}$.
Therefore all the $w_a^{(j)}$ form a frame and are perfectly distinguishable.\\
Thus $w = \sum_{j,k} p_j q_k^{(j)} w_k^{(j)}$, $\sum_{j,k} p_j q_k^{(j)} = \sum_j p_j = 1$, $ p_j q_k^{(j)} \in [0,1]$, is a classical decomposition. Especially, by our definition of the thermodynamic entropy: 
\begin{align} S(w) &= - \sum_{j,k} p_j q_k^{(j)} \ln (p_j q_k^{(j)})= -\sum_{j,k} p_j q_k^{(j)} \ln (q_k^{(j)})-\sum_{j,k} p_j q_k^{(j)} \ln (p_j) \nonumber \\
		&= \sum_j p_j S(w_j) -\sum_j p_j \ln(p_j)
\end{align}
\end{proof}

\subsection{Gbits do not satisfy maximal consistency}
The gbit does not allow a classical decomposition in general. But every state can be decomposed into perfectly distinguishable mixed states found in opposing edges of the gbit. As we have already seen in Chapter \ref{Section:GPT}, the corresponding membrane can be constructed such that it does not perturb the states being distinguished by it: the transformations corresponding to effects distinguishing opposing edges (and only those can be distinguished) can be chosen such that the faces collapse to an arbitrary state in that edge. And those states have a classical decomposition. So the idea is as follows, see Figure \ref{Fig:EntropyGbit}: By using the equation of the von Neumann argument for decompositions into mixed states, we reduce the entropy of an arbitrary state to the entropy of states in the boundary:\\
For a convex combination $w= \sum_j p_j w_j'$ with $w'_j$ perfectly distinguishable but not necessarily pure, we have 
\begin{equation*} S(w) = \sum_{j=1}^n p_j \cdot S(w_j') - k_B \sum_{j=1}^n p_j \ln (p_j) \end{equation*}
We choose the $w_j'$ such that they are found in opposing edges of the gbit. Then we use the same equation to reduce the entropies $S(w_j')$ to the entropies of the pure states, i.e. the corners. As a corner can be reversibly transformed into any other corner by a rotation, we assume that all corners have the same entropy ( 0 by convention). This makes sense, as no corner is special.\\
\\
The basic question is: Is this a self-consistent way to define the thermodynamic entropy for a gbit? It is not, as we will show now.

\begin{centering}
\includegraphics[width= 0.6\textwidth]{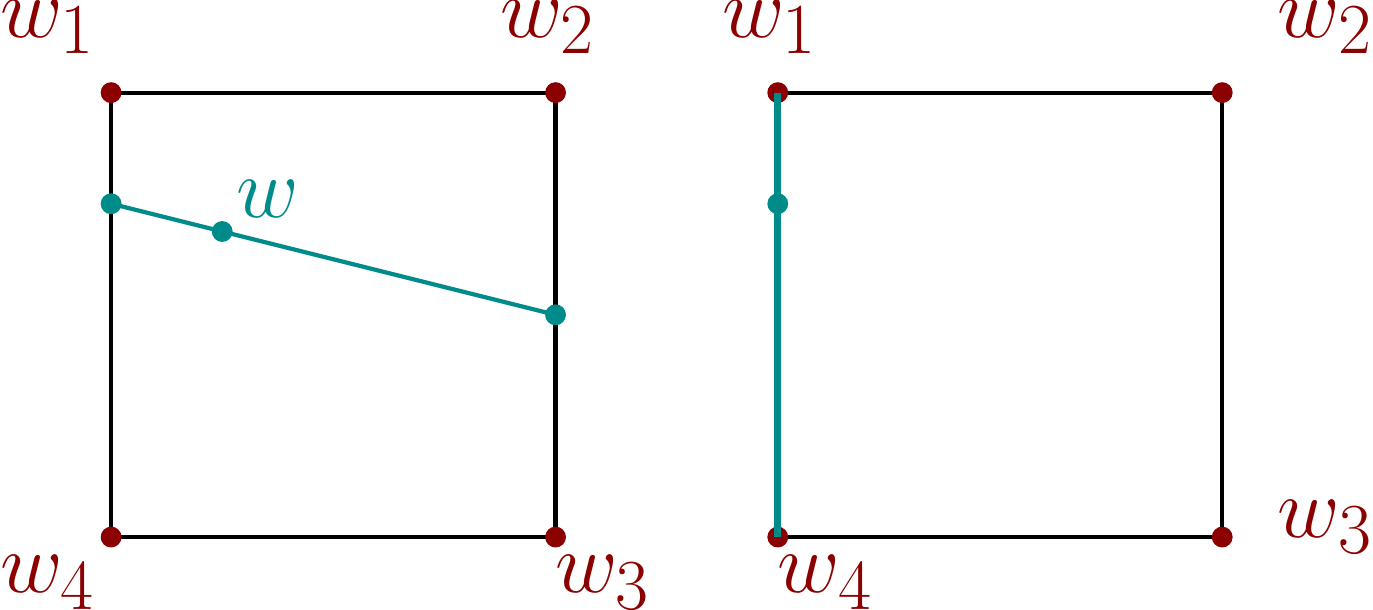}
\captionof{figure}{\small \textit{Example for how to decompose an arbitrary non-pure state into perfectly distinguishable (mixed) states. If more than one such state is necessary, they can always be chosen in opposing edges of the gbit.}}
\label{Fig:EntropyGbit}
\end{centering}

Clockwise starting in the upper-left corner, we denote the corners by $w_1,w_2,w_3,w_4$.\\
We consider the ``maximally mixed state'' 
\begin{align}
  w&=\frac 1 2 \left(\frac 1 2 w_1+\frac 1 2 w_2\right)+\frac 1 2 \left(\frac 1 2 w_3+\frac 1 2 w_4\right)\\
  &=\frac 1 4 w_1+\frac 1 4 w_2+\frac 1 4 w_3+\frac 1 4 w_4 = \frac 1 2 w_1 + \frac 1 2 w_3 = \frac 1 2 w_2 + \frac 1 2 w_4
\end{align}
found in the center of the square. There are many ways how this state can be decomposed into states found in the boundary, it also does have classical decompositions. For arbitrary $a \in [0,1]$ we define $v_a:= a \cdot w_1 + (1-a)\cdot w_2$. Then, we also define $v_a':= a \cdot w_3 + (1-a)\cdot w_4$. Thus we find:
\begin{align*}
 \frac 1 2 v_a + \frac 1 2 v_a' &= a \left(\frac 1 2 w_1 + \frac 1 2 w_3\right) + (1-a) \left(\frac 1 2 w_2 +\frac 1 2 w_4\right) = a w +(1-a) w = w
\end{align*}

\begin{center}
\includegraphics[width=0.3\textwidth]{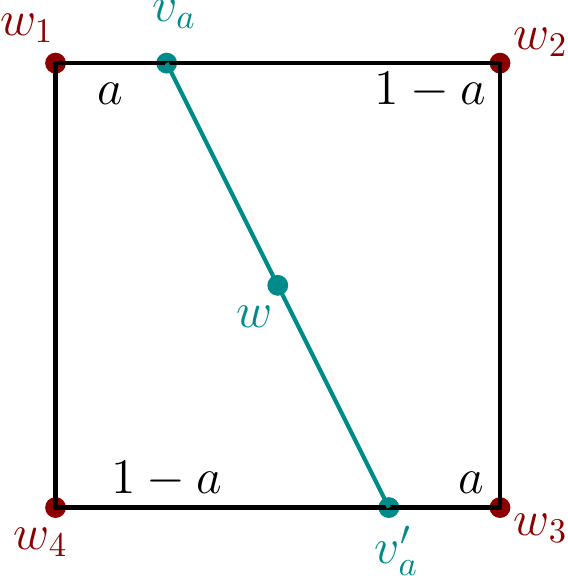}
\captionof{figure}{\small \textit{This figure visualizes how we choose $v_a$ and $v_a'$ as decompositions of the ``maximally mixed state''.}}
\end{center}

$v_a$ and $v_a'$ are perfectly distinguishable, because they are found in opposing edges of the square. So our entropy should be:
\begin{equation*} S(w) = \frac 1 2 \cdot S(v_a)+\frac 1 2 \cdot S(v_a') - k_B 2 \cdot\Big( \frac 1 2 \ln \left(\frac 1 2\right)\Big) =\frac 1 2 \cdot( S(v_a)+ S(v_a')) +k_B \ln 2  \end{equation*}
While the equation of the entropy did not require that $v_a$ and $v_a'$ can be reversibly transformed into each other, here it is possible by a 180 degree rotation: $T(w_1) = w_3, T(w_2) = w_4, T(w_3) = w_1, T(w_4) = w_2$, i.e. $T(v_a) = v_a', T(v_a') = v_a $. This has the consequence that both states have the same entropy, which we will now determine:
\begin{align}
S(v_a) &= a \cdot S(w_1)+(1-a)\cdot S(w_2) - k_B a \ln (a)-k_B (1-a) \ln (1-a) \nonumber \\
	&= - k_B a \ln (a)-k_B (1-a) \ln (1-a)
\end{align}
We used our convention $S(w_j) = 0$ and that neighbouring corners can be perfectly distinguished, because there are two opposing edges which contain these corners. In the same way: $S(v_a') = - k_B a \ln (a)-k_B (1-a) \ln (1-a)$.
So in total, we find for entropy of the center:
\begin{equation}
	S(w) = - k_B a \ln (a)-k_B (1-a) \ln (1-a) + k_B \ln 2
\end{equation}
This result is not self-consistent, as it still depends on $a$, i.e. on the chosen decomposition. This shows that it is not possible to define an entropy on the states of the gbit which is compatible with von Neumann's thermodynamic argument. This entropy varies from $\ln 2$ for $a= 0,1$ to $2\ln 2$ for $a=\frac 1 2$.
\\

\section{Projective measurements for GPTs}
\label{Section:ProjectiveMeasurements}
So far we have considered membranes as in von Neumann's original argument for the derivation of the von Neumann entropy. While there are measurements which can perfectly distinguish the states considered in the argument, it is not clear what happens to these states. Now we want to make the measurement more concrete, by modelling it as an operation. Like in quantum theory, there exist projective measurements that have the desired properties. At first we will consider projectors onto ``minimal'' faces, each of them generated by only one pure state. These projectors can be used to perfectly distinguish the states of (maximal) frames. Afterwards we will consider more general projective measurements as needed to distinguish perfectly distinguishable states which in general are not pure. 

\subsection{Projective measurements for frames}
Here, we consider projective measurements which perfectly distinguish the elements of (maximal) frames. In quantum theory, these measurements correspond to the measurement of non-degenerate observables as we will explain in Chapter \ref{Section:2ndLaw}.\\
Remember from Section \ref{Section:Postulates} that every pure state generates a face while every face is generated by a frame of one or more pure states. For every face $F$ there is a positive, symmetric projector $P_F$ onto the linear span of the face. $u_F = u_A \circ P_F$ is an effect and is called the projective order unit of the face. If $v_1,...,v_k$ is a frame that generates $F$, then $u_F = \braket{\sum_{j=1}^k v_j,\cdot}$ or $u_F = \sum_{j=1}^k v_j$ using self-duality. 

\begin{lem}
	Suppose that strong symmetry and classical decomposability are satisfied. The face $F:= \mathbb R_{\ge 0} \cdot \{w\}$ for $w$ pure has $u_F = w$ because $F$ is generated by $w$. Furthermore, for a frame $w_1,...,w_n$ and $j \ne k$, we find $u_A \circ P_j(w_k) = u_j(w_k) = \braket{w_j,w_k} = 0$, where $u_j$ and $P_j$ are the projective unit and the orthogonal projector of the face $\mathbb R_{\ge 0} \cdot \{w_j\}$. As $P_j$ is positive and $0 \in A_+$ is the only state normalised to 0, we find $P_j(w_k) = 0$. \label{Lemma:FrameProjectors}
\end{lem}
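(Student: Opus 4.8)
The plan is to verify the three assertions in turn, each of which follows by combining the cited structural results with self-duality. First I would establish that $u_F = w$ for the face $F = \mathbb{R}_{\ge 0} \cdot \{w\}$ generated by a single pure state $w$. The key observation is that a single pure state constitutes a frame of size one: it is trivially a set of perfectly distinguishable pure states, and by construction it generates $F$. Proposition \ref{Prop:ProjectiveUnit} then tells us that for any frame $w_1, \ldots, w_{|F|}$ generating $F$ we have $u_F = \sum_{j=1}^{|F|} w_j$; applied to the one-element frame $\{w\}$ this gives $u_F = w$ immediately, where as usual self-duality lets us identify the effect $u_F$ with the vector $w$ via $u_F = \langle w, \cdot \rangle$.

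Next I would treat the off-diagonal vanishing. By the definition of the projective unit, $u_j = u_A \circ P_j$, so the identity $u_A \circ P_j(w_k) = u_j(w_k)$ holds by definition. Since $w_j$ is pure and $P_j$ projects onto the span of the face $\mathbb{R}_{\ge 0} \cdot \{w_j\}$, the first part gives $u_j = w_j$, hence $u_j(w_k) = \langle w_j, w_k \rangle$. Because $w_1, \ldots, w_n$ is a frame, the states are perfectly distinguishable, so the fourth property listed in Theorem \ref{Theorem:Selfdual} forces $\langle w_j, w_k \rangle = 0$ whenever $j \ne k$. This yields the full chain $u_A \circ P_j(w_k) = u_j(w_k) = \langle w_j, w_k \rangle = 0$.

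Finally, for the conclusion $P_j(w_k) = 0$, I would invoke positivity. By Proposition \ref{Prop:PositiveProjector} the projection $P_j$ is positive, so $P_j(w_k) \in A_+$. The computation above shows $u_A(P_j(w_k)) = 0$, and since $u_A$ is an order unit it is strictly positive on $A_+ \setminus \{0\}$; therefore $P_j(w_k)$ must be the zero vector.

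I do not expect a genuine obstacle here, since the statement is essentially a direct unwinding of the cited propositions. The only point requiring a moment's care is the recognition that a single pure state counts as a frame of size one and that the associated projective unit coincides with the state itself under self-duality; once this is fixed, the remaining steps are formal consequences of perfect distinguishability and of the strict positivity of the order unit.
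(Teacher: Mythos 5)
Your proposal is correct and follows essentially the same route as the paper: the paper's justification is exactly the chain you give, namely $u_F = w$ via Proposition \ref{Prop:ProjectiveUnit} applied to the one-element generating frame, then $u_A \circ P_j(w_k) = u_j(w_k) = \braket{w_j,w_k} = 0$ by orthogonality of frame elements, and finally $P_j(w_k) = 0$ from positivity of $P_j$ together with strict positivity of the order unit on $A_+\setminus\{0\}$. No gaps.
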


\begin{theorem}
	Assume strong symmetry and classical decomposability. Let $w_1,...,w_n$ be a maximal frame\footnote{Remember that every frame can be extended to a maximal frame.}. Let $u_j$ be the projective unit, $P_j$ the orthogonal projector corresponding to the face $F_j := \mathbb R_{\ge 0} \cdot \{w_j\}$. Then $\{P_1,...,P_n\}$ form a valid operation. They induce a measurement which perfectly distinguishes the elements of the frame: $u_j(w_k) = u_A \circ P_j(w_k) = \delta_{jk}$. Furthermore, $P_j(w_k) = \delta_{jk} w_k$. Thus the measurement does not disturb the frame.
\end{theorem}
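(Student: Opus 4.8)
The plan is to assemble the statement from the structural results already established for projectors, projective units, and the self-dualizing inner product, since most of the genuine work has been done in Lemma \ref{Lemma:FrameProjectors} together with the cited propositions. First I would check that each $P_j$ is by itself a legitimate transformation. Linearity is immediate because $P_j$ is an orthogonal projection; positivity, i.e. $P_j(A_+)\subset A_+$, is exactly Proposition \ref{Prop:PositiveProjector}; and the normalization is non-increasing because $u_A\circ P_j = u_j$ is an effect with $u_j\le u_A$ by Proposition \ref{Prop:ProjectiveUnit}. Hence each $P_j$ satisfies all three defining conditions of a transformation.

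Next I would verify the operation condition $\sum_{j=1}^n u_A\circ P_j = u_A$. By Lemma \ref{Lemma:FrameProjectors}, the face $F_j=\mathbb R_{\ge 0}\cdot\{w_j\}$ generated by the single pure state $w_j$ has projective unit $u_j = u_A\circ P_j = w_j$, where I identify effects and states through the self-dualizing inner product, i.e. $u_j = \braket{w_j,\cdot}$. Because $w_1,\dots,w_n$ is a \emph{maximal} frame, Proposition \ref{Prop:FramesAddUp} gives $\sum_{j=1}^n \braket{w_j,\cdot} = u_A$, so $\sum_{j=1}^n u_A\circ P_j = u_A$ and $\{P_1,\dots,P_n\}$ is indeed a valid operation. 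This is the one step where maximality of the frame is essential: for a non-maximal frame the projective units would only sum to a subnormalized effect.

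I would then read off the induced measurement directly. Its effects are precisely $u_j = u_A\circ P_j = \braket{w_j,\cdot}$, and by the properties of the inner product in Theorem \ref{Theorem:Selfdual} — orthogonality of perfectly distinguishable states and the normalization $\braket{w_k,w_k}=1$ of pure states — one obtains $u_j(w_k) = \braket{w_j,w_k} = \delta_{jk}$, so the frame is perfectly distinguished.

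Finally, for the non-disturbance claim $P_j(w_k)=\delta_{jk}w_k$, the off-diagonal case $j\ne k$ is already contained in Lemma \ref{Lemma:FrameProjectors}, which shows $P_j(w_k)=0$. For $j=k$ it suffices to note that $w_j$ lies in the linear span of $F_j$, which is exactly the subspace onto which $P_j$ orthogonally projects, so $P_j(w_j)=w_j$. I expect no real obstacle here; the only points demanding care are the consistent use of the self-dual identification that lets the effect $u_j$ be written as the state $w_j$, and keeping track that the appeal to Proposition \ref{Prop:FramesAddUp} genuinely requires the frame to be maximal rather than merely orthogonal.
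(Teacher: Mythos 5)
Your proof is correct and follows essentially the same route as the paper's: Lemma \ref{Lemma:FrameProjectors} for the off-diagonal case and the identification $u_j=\braket{w_j,\cdot}$, the fact that a projector fixes its image for the diagonal case, positivity plus $0\le u_j\le u_A$ for validity of the transformations, and Proposition \ref{Prop:FramesAddUp} for the normalization $\sum_j u_A\circ P_j=u_A$. Your explicit remark that maximality of the frame is the one place where it is genuinely needed is a nice clarification, but the argument itself is the same.
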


\begin{proof}
	In Lemma \ref{Lemma:FrameProjectors} we have seen $P_j(w_k) = \delta_{jk} w_k$ for $j \ne k$. As projectors are surjective onto the linear spans of the corresponding faces (as implied by the word ``onto''), there is a vector $w \in A$ with $w_j = P_j w$. As $P_j$ is a projector and thus $P_j P_j = P_j$, we find $P_j w_j = P_j P_j w = P_j w = w_j$ , i.e. $P_j(w_k) = \delta_{jk} w_k$ for $j = k$.\\
\\
As $0 \le u_j = u_A \circ P_j \le u_A$, the projectors induce valid effects and especially are normalization-non-increasing. Furthermore, we already know that the projectors are positive. Thus the projectors are valid transformations. As maximal frames add up to the order unit, we find $\sum_j u_A\circ P_j = \sum_j u_j = \sum_j w_j = u_A$, i.e. a full measurement is obtained. So in total we have a full operation. It perfectly distinguishes the frame, because $u_j(w_k) = u_A \circ P_j(w_k) = \delta_{jk}$. 
\end{proof}

\begin{cor}
	The operation constructed above satisfies all properties needed for the membrane in the von Neumann thought experiment.
\end{cor}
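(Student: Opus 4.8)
The plan is simply to check that the three demands placed on a semipermeable membrane in the von Neumann argument are each met verbatim by the operation $\{P_1,\dots,P_n\}$ from the preceding theorem. Recall from the discussion of Tool~1 and from Postulate~1' that the membrane must: (i) perfectly distinguish the frame states $w_j$, so that the measurement outcome can control whether a given box passes or is reflected; (ii) leave each frame state unchanged, so that sorting the boxes does not perturb the internal ensemble; and (iii) be realizable as a physically allowed operation.

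First I would invoke the identity $u_j(w_k) = u_A \circ P_j(w_k) = \delta_{jk}$ proven in the theorem. This is exactly the perfect-distinguishability condition $e_i(w_j) = \delta_{ij}$ required of the membrane, since reading outcome $j$ with certainty tells the window mechanism precisely which state entered. Second I would invoke $P_j(w_k) = \delta_{jk}\,w_k$, also established in the theorem, which coincides with the non-disturbance requirement $T_j w_k = \delta_{jk} w_k$ of Postulate~1': a box in state $w_k$ triggers only the outcome $k$ and emerges in the very same state $w_k$. Third, the theorem shows that $\{P_1,\dots,P_n\}$ sums to the order unit and consists of positive, normalization-non-increasing transformations, hence is a full, physically allowed operation.

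Since these are precisely the properties invoked when the membrane is used to isolate each $w_j$-gas in the thought experiment, the corollary follows immediately. I expect no genuine obstacle here: the entire content is already contained in the theorem, and the only thing worth spelling out is the translation between the abstract operation and its physical role as a window that opens or closes according to the measured outcome while leaving the box's internal state untouched.
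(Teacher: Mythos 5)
Your proposal is correct and matches the paper's treatment, which states the corollary without further proof precisely because all three membrane requirements (perfect distinguishability via $u_j(w_k)=\delta_{jk}$, non-disturbance via $P_j(w_k)=\delta_{jk}w_k$, and forming a properly normalized operation) are verbatim conclusions of the preceding theorem. The only overstatement is in your point (iii): the theorem establishes that the projectors form a \emph{mathematically well-defined} operation, but --- as the paper's Note immediately following the corollary emphasizes --- their \emph{physical} allowedness is not a consequence of Postulates 1 and 2 and must be added as a separate, well-motivated assumption (this is exactly what Postulate 1' supplies for the thought experiment).
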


\begin{note}
	While we have shown, that the operation is mathematically well-defined, it is not clear whether the projectors actually are physically allowed. The postulates from \cite{Postulates} do not consider non-reversible transformations and especially do not assume that the projectors which model the M-slit experiments in Postulate 3 are actually physically allowed. The von Neumann thought experiment gives a good reason to assume that projectors are physically allowed.\\
One might motivate this assumption similarly to Pfister's ``one simple postulate''\cite{Pfister}: As each of the frame elements will have a clear, predetermined outcome, it should be possible to do that measurement without disturbing. Another motivation is, that a frame generates a classical subspace (see motivation for Postulate 1). If we only consider the $w_j$ and the $u_j$, then we only work in that classical subspace, i.e. we have a classical behaviour. In classical physics, in principle measurements can be done without disturbing the system. Furthermore, as M-slit experiments are built by a slit-plane followed by a detector-plane, it is natural to assume that the system survives the slit-plane. 
\end{note}

\subsection{General projective measurements}
While the measurements from the previous chapter fit to the original von Neumann argument, the generalized version found in Petz \cite{Petz} needs similar statements for more general projective measurements. This means now we want to find measurements that perfectly distinguish some perfectly distinguishable states, which might be mixed, but without disturbing these states.

\begin{lem}
	Assume strong symmetry and classical decomposability. Let $w_1,...,w_n \in \Omega_A$ be perfectly distinguishable, but not necessarily pure. Let $F_j \subset \Omega_A$ be the minimal faces that contain $w_j$. Then $F_j \subset e_j^{-1}(1)$, $F_j \subset e_k^{-1}(0)$ for $j \ne k$ where $e_j$ are effects that perfectly distinguish the $w_k$, i.e. $e_j(w_k) = \delta_{jk}$. Furthermore, the faces $F_j$ are orthogonal to each other. The same is true for the corresponding faces of $A_+$. \label{Lemma:FacesOrthogonalDegenerate}
\end{lem}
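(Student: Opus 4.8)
The plan is to obtain the two face-inclusions first, then read off orthogonality from them directly, and finally lift everything from $\Omega_A$ to $A_+$ by homogeneity of the inner product.

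First I would fix the distinguishing effects $e_1,\dots,e_n$ with $e_j(w_k)=\delta_{jk}$, which exist by perfect distinguishability. Since $e_j(w_j)=1$, Lemma \ref{Lemma:EffectFace} shows that $e_j^{-1}(1)$ is a face of $\Omega_A$, and it contains $w_j$; as $F_j$ is by definition the \emph{minimal} face containing $w_j$, this forces $F_j\subset e_j^{-1}(1)$. Likewise, for $k\neq j$ we have $e_k(w_j)=0$, so $e_k^{-1}(0)$ is a face of $\Omega_A$ containing $w_j$, and minimality again gives $F_j\subset e_k^{-1}(0)$. This settles the two inclusions stated in the lemma.

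Next I would derive orthogonality of the faces directly from these inclusions. Take arbitrary normalized states $v\in F_j$ and $v'\in F_k$ with $j\neq k$. The inclusions just proved give $e_j(v)=1$, $e_k(v)=0$, $e_k(v')=1$ and $e_j(v')=0$. Since $e_j$ and $e_k$ are two effects drawn from the common (sub)measurement distinguishing the $w_l$, we have $e_j+e_k\le u_A$, so the pair $\{e_j,e_k\}$ witnesses that $v$ and $v'$ are perfectly distinguishable. Theorem \ref{Theorem:Selfdual}, property~4, then yields $\braket{v,v'}=0$. As $v$ and $v'$ were arbitrary in $F_j$ and $F_k$, the two faces are orthogonal in the self-dualizing inner product.

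Finally I would transfer the statement to the cone. By Proposition \ref{Prop:FaceCorrespondence} the faces of $A_+$ corresponding to $F_j$ and $F_k$ are $\mathbb R_{\ge 0}\cdot F_j$ and $\mathbb R_{\ge 0}\cdot F_k$; any of their elements have the form $\lambda v$ and $\mu v'$ with $\lambda,\mu\ge0$, $v\in F_j$, $v'\in F_k$, and $\braket{\lambda v,\mu v'}=\lambda\mu\braket{v,v'}=0$ by bilinearity, and the cone versions of the inclusions ($e_k\equiv 0$, $e_j\equiv u_A$ on $\mathbb R_{\ge 0}\cdot F_j$) follow by the same homogeneity. The only point requiring a moment's care—and the place I would watch most closely—is the appeal to perfect distinguishability of the \emph{possibly mixed} states $v,v'$: one must confirm that $e_j,e_k$ genuinely form part of an allowed common measurement with sum bounded by $u_A$, which is exactly what being drawn from the distinguishing family of the $w_l$ guarantees.
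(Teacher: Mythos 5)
Your proposal is correct and follows essentially the same route as the paper: the two inclusions via Lemma \ref{Lemma:EffectFace} and minimality of $F_j$, orthogonality of the faces by noting that $e_j,e_k$ perfectly distinguish any pair of states drawn from $F_j$ and $F_k$ so that Theorem \ref{Theorem:Selfdual} applies, and the cone statement by homogeneity. You merely spell out the middle step (arbitrary $v\in F_j$, $v'\in F_k$ and the bound $e_j+e_k\le u_A$) in more detail than the paper does, which is a harmless elaboration rather than a different argument.
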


\begin{proof}
	 By the definition of perfectly distinguishable, there exist effects with $e_j(w_k) = \delta_{jk}$. For $j\ne k$, $e_k^{-1}(0)$ and $e_j^{-1}(1)$ are faces (see Lemma \ref{Lemma:EffectFace}) which contain $w_j$. As $F_j$ is the minimal face which contains $w_j$, we find  $F_j \subset e_j^{-1}(1)$, $F_j \subset e_k^{-1}(0)$ for $j \ne k$. \\
	 As the faces $F_j$ are contained in $ e_j^{-1}(1), e_k^{-1}(0)$ for $j \ne k$, these effects also perfectly distinguish the faces $F_j$, i.e. they are orthogonal by Theorem \ref{Theorem:Selfdual}.\\ Thus also $\braket{\mathbb R_{\ge 0} \cdot F_j,\mathbb R_{\ge 0} \cdot F_k} = \{0\}$ for $j \ne k$.
\end{proof}

\begin{lem}
	Suppose strong symmetry and classical decomposability are satisfied. Let $w_1,...,w_n \in \Omega_A$ be perfectly distinguishable, but not necessarily pure. Let $F_j \subset \Omega_A$ be the minimal faces that contain $w_j$. Let $P_j$ be the symmetric projection onto the linear span of $\mathbb R_{\ge 0} \cdot F_j$. Then $P_j w_k = \delta_{jk} w_k$. 
\end{lem}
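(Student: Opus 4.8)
The plan is to split the claim into the diagonal case $j=k$ and the off-diagonal case $j\neq k$, drawing on the orthogonality of the faces $F_j$ established in Lemma \ref{Lemma:FacesOrthogonalDegenerate} and on the self-dualizing inner product $\braket{\cdot,\cdot}$ guaranteed by Theorem \ref{Theorem:Selfdual}. The essential point is purely linear-algebraic: a symmetric (orthogonal) projection acts as the identity on its range and annihilates the orthogonal complement of its range.

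For the diagonal case, I would observe that $w_k \in F_k \subset \mathbb R_{\ge 0}\cdot F_k$, so $w_k$ lies in the linear span of $\mathbb R_{\ge 0}\cdot F_k$, which is exactly the range onto which $P_k$ projects. Since an orthogonal projection restricts to the identity on its own range, this immediately gives $P_k w_k = w_k$, matching $\delta_{kk} w_k$.

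For the off-diagonal case $j\neq k$, I would start from Lemma \ref{Lemma:FacesOrthogonalDegenerate}, which states $\braket{\mathbb R_{\ge 0}\cdot F_j,\mathbb R_{\ge 0}\cdot F_k} = \{0\}$. First I would lift this orthogonality of the two cones to orthogonality of their linear spans: every element of $\mathrm{span}(\mathbb R_{\ge 0}\cdot F_j)$ is a finite linear combination of cone elements, and likewise for $\mathrm{span}(\mathbb R_{\ge 0}\cdot F_k)$, so bilinearity of $\braket{\cdot,\cdot}$ forces the inner product of any two such combinations to vanish. In particular $w_k$, which lies in $\mathrm{span}(\mathbb R_{\ge 0}\cdot F_k)$, is orthogonal to every vector in the range of $P_j$, i.e. $w_k$ lies in the orthogonal complement of that range. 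Because $P_j$ is the symmetric projection onto that range, any vector in the orthogonal complement is sent to $0$, so $P_j w_k = 0 = \delta_{jk} w_k$. Combining both cases yields $P_j w_k = \delta_{jk}w_k$.

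The only step demanding any care, and hence the main (though modest) obstacle, is the passage from orthogonality of the cones to orthogonality of their spans, together with the interpretation of ``symmetric projection onto the linear span'' as precisely the orthogonal projection with respect to $\braket{\cdot,\cdot}$; once these are pinned down, both cases are immediate consequences of the defining property of orthogonal projections. All the geometric content—self-duality of $A_+$ and the orthogonality of perfectly distinguishable faces—has already been supplied by Theorem \ref{Theorem:Selfdual} and Lemma \ref{Lemma:FacesOrthogonalDegenerate}.
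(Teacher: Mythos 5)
Your proof is correct, and it follows the paper's overall structure: the diagonal case via surjectivity/idempotency of $P_j$ on its range is identical, and both arguments hinge on the orthogonality of the faces from Lemma \ref{Lemma:FacesOrthogonalDegenerate}. The one genuine difference is how the off-diagonal vanishing is concluded. The paper computes $u_A \circ P_j = u_j = \sum_k w_k^{(j)}$ via the projective unit and a generating frame, deduces $u_A(P_j w_k) = 0$ from face orthogonality, and then needs \emph{positivity} of $P_j$ (Proposition \ref{Prop:PositiveProjector}) plus strict positivity of $u_A$ to upgrade ``normalized to zero'' to ``equal to zero.'' You instead lift the cone orthogonality $\braket{\mathbb R_{\ge 0}\cdot F_j,\mathbb R_{\ge 0}\cdot F_k}=\{0\}$ to orthogonality of the spans (legitimate, since $\mathrm{span}(K)=K-K$ and the inner product is bilinear) and then use that a symmetric idempotent annihilates the orthogonal complement of its range: $\braket{P_j w_k, P_j w_k} = \braket{w_k, P_j w_k} = 0$. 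Your route is slightly leaner — it needs neither the positivity of the projector nor the frame decomposition of the projective unit — while the paper's route reuses machinery ($u_j = \sum_k w_k^{(j)}$ and positivity) that it has already set up and needs again in the subsequent theorem. Both are valid; yours is the more self-contained linear-algebra argument.
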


\begin{proof}
	We apply the same tricks as before:\\
	As $w_j \in \rm{im}(P_j)$, $\exists w \in A: P_j w = w_j$. Thus $P_j w_j = P_j P_j w = P_j w = w_j$.\\
	Let $w_1^{(j)},...,w_{f(j)}^{(j)}$ be frames that generate $\mathbb R_{\ge 0} \cdot F_j$ and $u_j$ the corresponding projective unit. Then $u_j = \sum_{k=1}^{f(j)} w_{k}^{(j)}$. Especially $u_A \circ P_j = u_j = \sum_{k=1}^{f(j)} w_{k}^{(j)}$. By Lemma \ref{Lemma:FacesOrthogonalDegenerate} the faces are orthogonal and thus $u_A \circ P_j(w_k) = 0$ for $j \ne k$, i.e. $P_j(w_k) = 0$ for $j \ne k$ by positivity.
\end{proof}

\begin{theorem}
	Assume strong symmetry and classical decomposability. Let $w_1,...,w_n \in \Omega_A$ be perfectly distinguishable, but not necessarily pure or maximal. Then there exists a projective measurement that perfectly distinguishes the $w_k$ in the following sense:\\
$\exists P_1,...,P_{n+f(n+1)}$ orthogonal positive projectors that form an operation with\\ $\sum_{k=1}^{n+f(n+1)} u_A \circ P_k = u_A$ and $u_A \circ P_j (w_k) = \delta_{jk}$ for $j \in \{1,...,n+f(n+1)\}$ and $k \in \{1,...,n\}$.
\end{theorem}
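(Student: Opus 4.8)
The plan is to build the measurement in two blocks: the first $n$ projectors handle the states $w_1,\dots,w_n$ themselves, while the remaining $f(n+1)$ projectors complete the operation so that the projective units add up to $u_A$. First I would invoke the preceding lemma together with Lemma~\ref{Lemma:FacesOrthogonalDegenerate}: the minimal faces $F_1,\dots,F_n$ containing $w_1,\dots,w_n$ are mutually orthogonal, and the symmetric projectors $P_1,\dots,P_n$ onto the spans of $\mathbb{R}_{\ge 0}\cdot F_j$ satisfy $P_j w_k = \delta_{jk}w_k$, hence $u_A\circ P_j(w_k)=\delta_{jk}$. By Proposition~\ref{Prop:PositiveProjector} these projectors are positive, and since $0\le u_A\circ P_j = u_{F_j}\le u_A$ by Proposition~\ref{Prop:ProjectiveUnit}, they are genuine normalization-non-increasing transformations.

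Next I would assemble a maximal frame. For each $j$ choose a frame $w_1^{(j)},\dots,w_{f(j)}^{(j)}$ generating $F_j$. Because the $F_j$ are orthogonal, all of these pure states are mutually orthogonal, so by Proposition~\ref{Prop:FramesAddUp} their union is itself a frame, of size $\sum_{j=1}^n f(j)$. This frame need not be maximal, so I would extend it, using the result that every frame extends to a maximal frame generating $A_+$, by adjoining pure states $v_1,\dots,v_{f(n+1)}$, where $f(n+1):=|A_+|-\sum_{j=1}^n f(j)$ counts the missing elements. Let $P_{n+l}$ denote the orthogonal projector onto $\mathbb{R}_{\ge 0}\cdot\{v_l\}$; by Proposition~\ref{Prop:PositiveProjector} these are again positive projectors, and their projective units are $u_A\circ P_{n+l}=v_l$.

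It then remains to verify the three defining properties of the claimed operation. Orthogonality of all $n+f(n+1)$ projectors is immediate from the mutual orthogonality of every frame element, since the ranges of the $P_j$ are spanned by disjoint blocks of one maximal frame, so $P_aP_b=0$ for $a\ne b$. For completeness I would use $u_A\circ P_j=u_{F_j}=\sum_{i=1}^{f(j)} w_i^{(j)}$ from Proposition~\ref{Prop:ProjectiveUnit} for $j\le n$, together with $u_A\circ P_{n+l}=v_l$ for the remaining projectors; summing all of them reconstitutes the sum over the entire maximal frame, which equals $u_A$ by Proposition~\ref{Prop:FramesAddUp}. Finally, for the distinguishing property I already have $u_A\circ P_j(w_k)=\delta_{jk}$ when $j\le n$, and for $j=n+l>n$ I compute $u_A\circ P_{n+l}(w_k)=\braket{v_l,w_k}=0$ because $v_l$ lies in the face orthogonal to all $F_k$; this equals $\delta_{jk}$ since $j>n\ge k$.

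I expect the main obstacle to lie in the second paragraph: one must check carefully that the union of the generating frames is a bona fide frame, combining the mutual orthogonality of pure states drawn from orthogonal faces with the fact that orthogonal pure states are perfectly distinguishable, and then confirm that completing to a maximal frame introduces only elements orthogonal to every $w_k$. Once these orthogonality bookkeeping points are secured, the remaining verifications reduce to the projective-unit identities and the fact that a maximal frame sums to $u_A$.
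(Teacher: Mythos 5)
Your proposal is correct and follows essentially the same route as the paper: projectors onto the minimal faces $F_j$ containing the $w_j$ for the first $n$ outcomes, completion of the union of their generating frames to a maximal frame, rank-one projectors for the remaining $f(n+1)$ outcomes, and the fact that maximal frames add up to the order unit for normalization. The only point where the paper is slightly slicker is the final orthogonality check: instead of arguing that $v_l$ lies in a face orthogonal to all $F_k$, one can simply note that $u_A\circ P_k(w_k)=1$ together with $\sum_a u_A\circ P_a = u_A$ forces $u_A\circ P_{n+l}(w_k)=0$.
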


\begin{proof}
	We consider the minimal faces $F_j \subset \Omega_A$ that contain $w_j$.\\ 
	Let $w_1^{(j)},...,w_{f(j)}^{(j)}$ be frames that generate $\mathbb R_{\ge 0} \cdot F_j$ and $u_j$ the corresponding projective unit, i.e. $u_j = \sum_{k=1}^{f(j)} w_{k}^{(j)}$. As the $w_k^{(j)}$ are pairwise orthogonal pure states, by Proposition \ref{Prop:FramesAddUp} (or Proposition 6 from \cite{Postulates}) they are a frame, which can be extended to a maximal frame. We will call the new frame elements $w_1^{(n+1)},...,w_{f(n+1)}^{(n+1)}$. Also consider the positive orthogonal projectors $P_{n+k}$ onto the faces $\mathbb R_{\ge 0} \cdot {w_k}^{(n+1)}$.\\ 
Then $0 \le u_j = u_A \circ P_j \le u_A$ for $j=1,2,...,n+f(n+1)$. As the projectors are also positive, they are valid transformations.\\
Furthermore:\\ $\sum_a u_A \circ P_a = \sum_{j=1}^{n} u_A \circ P_j + \sum_{k=1}^{f(n+1)} u_A\circ P_{n+k} = \sum_{j=1}^n \sum_{k=1}^{f(j)} w_k^{(j)}+\sum_{k=1}^{f(n+1)} w_k^{(n+1)} = u_A$ because maximal frames add up to the order unit. Thus we obtain a full projective operation.\\
For $P_1,...,P_n$ we already know $P_j(w_k) = \delta_{jk} w_k$. As $u_A \circ P_{n+k} w_j = \braket{w_k^{(n+1)}, w_j} = 0$ (alternatively, use that $u_A\circ P_j(w_j) = 1$ implies $u_A\circ P_{n+k}(w_j) = 0$ by having a normalized operation),  we find that only the projector $P_j$ has a non-zero probability to be performed on $w_j$, and it does not disturb $w_j$.
\end{proof}

\begin{note}
	One can drop the last $f(n+1)$ effects if one allows the measurement to be normalized to less than $u_A$. Then one still finds $u_A \circ P_j (w_k) = \delta_{jk}$.\\
\\
The same considerations from the previous chapter also apply here. I.e. it is not clear whether the projections are physically allowed, it has to be assumed as a well-motivated postulate. Furthermore, the projective measurement created in this chapter is perfectly suited for the generalized version of von Neumann's thought experiment found in \cite{Petz}.
\end{note}

\section{Second law}
\label{Section:2ndLaw}
The most famous property of the entropy is that it fulfils the second law of thermodynamics. At first we will discuss whether the second law holds in time evolution. Afterwards we will explain why even in quantum theory some quantum operations decrease entropy. This implies that an increase in entropy can only be shown for some subsets of the set of all quantum operations. We will therefore analyse projective measurements, as they are a central part of quantum theory. At last, we show that the entropy also never decreases during mixing processes.\\
\\
From now on, $S$ will always denote the entropy divided by the number of systems, i.e. the entropy we called $s_\text{GPT}$ before. Furthermore we omit $k_B$. The only exception will be Chapter \ref{Section:Mixing}, where the situation from the thermodynamic thought experiment will be used one more time.

\subsection{Time evolution}
Postulate 4 from \cite{Postulates} implies that time evolution is reversible and thus does not change the entropy. In particular, the 2nd law of thermodynamics is valid in dynamics. However, it is clear that Postulate 4 is much stronger than what we actually need in order to ensure that time evolution does not violate the second law. As long as time evolution is described by a reversible transformation, the entropy is conserved and thus does not decrease.\\
However, Postulates 1+2 do not say anything about time evolution, thus we will need an extra postulate specifying time evolution (e.g. Postulate 4). As long as we only consider Postulates 1+2, we cannot check if time evolution respects the second law because time evolution itself remains undefined. For now, we neglect the question if time evolution respects the second law; we consider the second law as a consistency requirement that any definition of time evolution should satisfy. 

\subsection{Issues concerning the second law in measurements and transformations}
In this section we will explain why there are some processes that are able to decrease the entropy, making it necessary to focus on some special transformations and operations and proving the entropy increase for them.\\
Consider an operation $O=\{T_1,...\}$ with transformations $T_j$ and $\sum_j u_A \circ T_j = 1$. Consider the action of the operation on an ensemble described by the state $w$. With probability $u_A\circ T_j (w)$ the state after the operation will be $\frac{T_j(w)}{u_A\circ T_j (w)}$ . This induces a new ensemble: 
\begin{equation}
w' = \sum_{ \{k\ | \ u_A\circ T_k (w) \ne 0\} } u_A\circ T_k (w) \cdot \frac{T_k(w)}{u_A\circ T_k (w)} = \sum_k T_k(w)
\end{equation}
In general, we cannot assume that such operations only increase entropy. If one considers a system coupled to an environment and performs a transformation on the composite system, there is no reason left why on the small system alone entropy should not be allowed to decrease. So if we want our definition of transformations to be as general as possible, especially to cover transformations induced by larger systems, we have to accept that some of them decrease entropy.\\
This is already true in quantum theory, as the following example shows: Consider the following SWAP-operation, the mathematical details are explained in Appendix \ref{Section:SWAP}: Whenever an electron approaches a black-box device, that device absorbs that electron and instead emits a new electron in a known pure state. As every incoming electron is replaced by a new electron, this transformation is already properly normalized and its physical implementation is also clear. But us this transformation is able to convert a mixed state into a pure state, in general it will decrease entropy. 
\\
Also the projective measurements play an important role for the second law in quantum theory. In Exercise 11.15 in \cite{BestBookEver}, one has to show that a measurement described by $M_1 := \ket{0}\bra{0}$, $M_2 := \ket{0}\bra{1}$ and ensemble state after the measurement $M_1 \rho M_1^\dagger+M_2 \rho M_2^\dagger$ decreases entropy. This is quite clear, because the state after this measurement is $\ket{0}\bra{0}$, i.e. a pure state.
\\
Because of all these conceptual problems, we decide to prove the second law only for projective measurements in analogy to quantum theory. But different to quantum theory, we will not postulate that these are the fundamental measurements.

\subsection{GPT observables}
The purpose of this section is to motivate why we will consider certain projective measurements for checking the second law. The basic idea is to consider projective measurements that correspond to measuring observables.\\
\\
We consider projective measurements in analogy to quantum theory. In quantum theory, observables are of the form $\mathcal A =\sum_{a} a P_a$ with eigenvalues $a$ and projectors onto eigenspaces $P_a =\sum_j\ket{j;a}\bra{j;a}$. $\ket{j;a}\bra{j;a}$ are pairwise orthogonal pure states. For our GPTs, we generalize observables as $\mathcal A =\sum_a a \sum_j w_{j;a}$ where $ w_{j;a}$ are pairwise orthogonal pure states and form a maximal frame. For fixed $a$, the face $F_a$ generated by all the $ w_{j;a}$ replaces the eigenspace, i.e. observables now have eigenfaces. For the corresponding symmetric projectors $P_a$ onto the linear spans of the faces $F_a$, we find $u_a := u_A \circ P_a = \sum_j w_{j;a}$ and thus $u_A \circ \sum_a P_a = \sum_a u_{a} = \sum_a \sum_j w_{j;a} =u_A$.  As these projectors are positive, they are furthermore normalization-non-increasing because of $u_A \circ \sum_a P_a = u_A$. Thus the eigenprojectors form a valid operation. By Lemma \ref{Lemma:OrthogonalProjectors}, these projectors are mutually orthogonal. Thus $\braket{P_j v,P_k w} = \braket{v, P_j P_k w} = 0$, which implies that also the faces they project onto are orthogonal. Thus the eigenfaces are orthogonal. The induced effects, i.e. the projective units, are $u_a = \sum_{j}w_{j;a}$. Such effects are also called \textit{sharp effects} \cite{Postulates}. \\

Vice versa, consider an operation given by sharp effects $u_a = \sum_j w_{j;a}$ for pure states $w_{j;a}$. As $1\ge u_a(w_{k;a}) = \sum_j \braket{w_{j;a},w_{k;a}} = 1+ \sum_{j\ne k} \braket{w_{j;a},w_{k;a}}$, for same $a$ the $w_{j;a}$ are pairwise orthogonal. As $\sum_a u_a = u_A$ and $u_a(w_{j;a}) =1$, we find $u_b(w_{j;a}) = 0$ for $b\ne a$ and thus $\braket{w_{k;b},w_{j;a}}=\delta_{ab}\delta_{jk}$. Thus the $w_{j;a}$ can be used to define an observable $M = \sum_a a \sum_j w_{j;a}$ and thus give rise to a projective measurement. In that sense, there is a correspondence between sharp measurements and projective measurements of observables.  \\
\\
The important difference to quantum theory is that it is never specified, what the pure states actually are. In QT, we know that the pure states are induced by a complex Hilbert space; we do not assume this for our GPTs, thus our treatment is more general.\\
\\
We do not assume as a postulate that observables enter the GPT in this way. However, the natural generalization from quantum theory motivates to analyze the consequences of such an assumption in greater detail.\\
Non-degenerate observables $\mathcal A = \sum_j a_j w_j$ with $a_j \in  \mathbb R$, $a_j \ne a_k$ for $j \ne k$ and $\{w_j\}$ a maximal frame correspond to measurements with projective units $u_j= w_j$, i.e. faces generated by single pure states. We will call such projective measurements \textbf{non-degenerate}. Projective measurements corresponding to degenerate observables will be called \textbf{degenerate}.\\
\\
We need to show that our notion of observables is well-defined, i.e. that the eigenvalues and eigenfaces do not depend on the choice of decomposition:

\begin{theorem}
	Let $\mathcal A =\sum_{x=1}^{n_a} a_x \sum_j w_{j,a,x}$ be arbitrary with $a_x \in \mathbb R$ pairwise unequal and $w_{j,a,x}$ a maximal frame. Assume $\mathcal A =\sum_{x=1}^{n_b} b_x \sum_j w_{j,b,x}$ is a similar decomposition. Then $n_a = n_b$ and, except for permutation, $a_x = b_x$ and $\sum_j w_{j,a,x} = \sum_j w_{j,b,x}$. 
\end{theorem}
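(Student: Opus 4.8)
The plan is to recover the eigenvalues $a_x$ and the eigenfaces (equivalently, the projective units $\sum_j w_{j,a,x}$) from $\mathcal A$ alone, by a variational characterization that makes no reference to any particular decomposition; uniqueness then follows because both decompositions must reproduce the same intrinsic data. Throughout I regard $\mathcal A = \sum_x a_x \sum_j w_{j,a,x}$ as a vector of $A$ and use the self-dualizing inner product $\braket{\cdot,\cdot}$ of Theorem \ref{Theorem:Selfdual}. Since the $w_{j,a,x}$ form a maximal frame, properties 3 and 4 of Theorem \ref{Theorem:Selfdual} show they are orthonormal, and as their number equals $\dim A$ they are an orthonormal basis; moreover $\sum_{j,x}\braket{w_{j,a,x},\cdot} = u_A$ by Proposition \ref{Prop:FramesAddUp}.

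First I would compute, for an arbitrary state $w \in \Omega_A$, the quantity $\braket{\mathcal A, w} = \sum_x a_x \sum_j \braket{w_{j,a,x}, w}$. Writing $c_{j,x} := \braket{w_{j,a,x}, w} \ge 0$ (property 2), the normalization $u_A(w) = 1$ together with $u_A = \sum_{j,x}\braket{w_{j,a,x},\cdot}$ gives $\sum_{j,x} c_{j,x} = 1$. Hence $\braket{\mathcal A, w}$ is a convex combination of the eigenvalues $\{a_x\}$, so $\braket{\mathcal A, w} \le \max_x a_x =: \mu$, with equality precisely when all the weight sits on indices with $a_x = \mu$, i.e. when $c_{j,x} = 0$ for every frame element outside the top eigenface $F_{\mathrm{top}}$. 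By the characterization of a face through its projective unit (Proposition 5.29 of \cite{Ududec}), this last condition is equivalent to $w \in F_{\mathrm{top}}$. Thus $\mu = \max_{w \text{ pure}} \braket{\mathcal A, w}$, and the set of pure maximizers is exactly the set of pure states of $F_{\mathrm{top}}$; by Minkowski's theorem $F_{\mathrm{top}}$ is the face they generate. Both data are defined purely in terms of $\mathcal A$, so the largest eigenvalue and its eigenface agree for the $a$- and $b$-decompositions, forcing $\sum_j w_{j,a,x^*} = u_{F_{\mathrm{top}}} = \sum_j w_{j,b,z^*}$ for the corresponding top indices.

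Then I would induct. Having isolated $\mu$ and $F_{\mathrm{top}}$, replace $\mathcal A$ by $\mathcal A' := \mathcal A - \mu\, u_{F_{\mathrm{top}}}$, which is again an observable of the prescribed form but supported on the complementary face $G$ (the face whose span is $\mathrm{span}(F_{\mathrm{top}})^\perp$, generated by the remaining frame elements). Restricting the variational argument to pure states of $G$ — where $u_{F_{\mathrm{top}}}$ vanishes, so $\braket{\mathcal A', w} = \braket{\mathcal A, w}$ — recovers the next-largest eigenvalue and its eigenface, again intrinsically. Iterating peels off one eigenvalue/eigenface pair at a time; since both decompositions drive the identical sequence and the process terminates exactly when the remaining face is $\{0\}$, we obtain $n_a = n_b$, and after the common permutation fixed by ordering the eigenvalues we get $a_x = b_x$ and $\sum_j w_{j,a,x} = \sum_j w_{j,b,x}$.

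The delicate point is the induction step: one must maximize over the pure states of the complementary face $G$ rather than over all of $\Omega_A$. Over all pure states the functional $\braket{\mathcal A', \cdot}$ attains the value $0$ on the already-removed face $F_{\mathrm{top}}$, which can exceed the next eigenvalue whenever the latter is negative; restricting to $G$ (where the subtracted term is identically zero and the remaining frame is a maximal frame of $G$, so the weights $c_{j,x}$ again sum to one) is exactly what keeps the convex-combination bound tight and guarantees the maximum is attained on the next eigenface. Verifying that $\mathcal A'$ is genuinely an observable on $G$ with the inherited frame, and that $G$ is itself determined by $\mathcal A$, are the remaining routine checks.
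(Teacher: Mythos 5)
Your proof is correct in substance, but it is organized quite differently from the paper's. The paper compares the two decompositions directly: it orders the eigenvalues, evaluates $\braket{w_{k,a,1},\mathcal A}$ against the $b$-decomposition to force $a_1=b_1$ by a strict-inequality contradiction, shows each extremal frame lies inside the other decomposition's extremal eigenface, and concludes equality of the faces by rank counting; the induction is then carried out by adding $L\sum_j w_{j,a,1}$ for large $L$ so that the next eigenvalue becomes extremal. You instead extract the data intrinsically: the convex-combination bound $\braket{\mathcal A,w}\le\max_x a_x$ over normalized states, together with Proposition 5.29 of \cite{Ududec}, identifies the top eigenvalue as $\max_{w\in\Omega_A}\braket{\mathcal A,w}$ and the top eigenface as the maximizer set, so both decompositions are forced to agree there without any pairwise comparison; your induction then restricts to the complementary face rather than shifting the operator. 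The underlying engine is the same (orthonormality of maximal frames plus the fact that they sum to $u_A$, so that frame overlaps form a probability distribution), but your variational packaging buys a cleaner identification of the eigenface (as a maximizer set rather than via mutual containment and rank counting), at the price of having to verify that the complementary face is well defined, is generated by the remaining frame elements, and carries them as a maximal frame — checks that do go through with the quoted propositions (the remaining elements lie in $u_{F_{\mathrm{top}}}^{-1}(0)\cap A_+$, and rank counting pins the face down), but which the paper's $L$-shift trick sidesteps entirely. You are also right that restricting to the complementary face, rather than maximizing $\braket{\mathcal A',\cdot}$ over all of $\Omega_A$, is essential when the next eigenvalue is negative.

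One incidental error: a maximal frame has $|A_+|$ elements, which is in general strictly smaller than $\dim A$ (in quantum theory, $N$ versus $N^2$), so the frame is an orthonormal \emph{set} but not a basis of $A$. You never actually use the basis claim, so nothing downstream is affected, but the remark should be deleted.
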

\begin{proof}
	Wlog assume that the $a_x$ and $b_x$ are ordered by size ($a_1 < a_2 <...$) and are numbered with $x=1,2,...$. Consider the smallest eigenvalues and assume $a_1 \ne b_1$, wlog $a_1 < b_1$. Then we use that maximal frames add up to the order unit:
\begin{align}
	a_1 &= \braket{w_{k,a,1}, \mathcal A} = \braket{w_{k,a,1},\sum_x b_x \sum_j w_{j,b,x}} \\ &= \sum_x b_x \braket{w_{k,a,1}, \sum_j w_{j,b,x}} > a_1 \sum_x \braket{w_{k,a,1}, \sum_j w_{j,b,x}} = a_1 u_A(w_{k,a,1}) = a_1
\end{align}
The $>$ is not a $\ge$, because all the $b_x > a_1$ and at least one of the $\braket{w_{k,a,1}, \sum_j w_{j,b,x}} > 0$ because of $u_A(w_{k,a,1})=1$ and $u_A = \sum_{x,j} w_{j,b,x}$. But now we reached a contradiction. Thus our assumption was false. Thus $a_1 = b_1$. \\
\\
The $w_{j,a,1}$ generate a face $F_1$ and the $w_{j,b,1}$ generate a face\footnote{The $F_1'$ is used here simply as another face, and is not meant to imply that $F_1'$ is complementary or orthogonal to $F_1$.} $F_1'$. The projective order units are given by $u_1=\sum_j w_{j,a,1}$ and $u_1'=\sum_j w_{j,b,1}$. 

\begin{align}
	a_1 &= \braket{w_{k,a,1},\mathcal A} = \braket{w_{k,a,1},\sum_x b_x \sum_j w_{j,b,x}} = \sum_x b_x \braket{w_{k,a,1}, \sum_j w_{j,b,x}} \\ & = a_1 \braket{w_{k,a,1}, \sum_j w_{j,b,1}} + \sum_{x\ne 1} b_x \braket{w_{k,a,1}, \sum_j w_{j,b,x}}\\
		& \ge a_1 \braket{w_{k,a,1}, \sum_j w_{j,b,1}} + a_1\sum_{x\ne 1} \braket{w_{k,a,1}, \sum_j w_{j,b,x}} \\
		& = a_1 \braket{w_{k,a,1}, \sum_x \sum_j w_{j,b,x}} =a_1 u_A(w_{k,a,1})= a_1
\end{align}
In the above equation, the $>$ holds if $\braket{w_{k,a,1}, \sum_j w_{j,b,x}} \ne 0$ for a $x\ge 2$. But then we had a contradiction. Therefore $\braket{w_{k,a,1}, \sum_j w_{j,b,x}} \ne 0$ for a $x\ge 2$. Thus because of normalization and maximal frames, $\braket{w_{k,a,1}, \sum_j w_{j,b,1}}=1$, i.e. $u_1'(w_{k,a,1}) = 1$. In the same way show $u_1(w_{k,b,1}) = 1$.\\
By Proposition 5.29 from \cite{Ududec}, $\Omega_A \cap F_1 = \{w \in \Omega_A | u_{F_1}(w) = 1\}$ (Besides, this shows that the symmetric projections onto linear span of faces are neutral). Thus the frame $w_{j,a,1}$ is found in $F_1'$ and the frame $w_{j,b,1}$ is found in $F_1$. As generating frames have the same size, we find $|F_1| \ge |F_1'|$ and $|F_1'| \ge |F_1|$. Thus $|F_1|=|F_1'|$. As all frames within a face with generating size generate the face, $F_1 = F_1'$. Thus also $u_1 = u_1'$, i.e. $\sum_j w_{j,a,1} = \sum_j w_{j,b,1}$. \\
\\
We want to obtain an inductive proof.\\
The easiest way might be like this: As $\sum_j w_{j,a,1} = \sum_j w_{j,b,1}$, we modify the operator:
\begin{align}
	\mathcal A' &:= \mathcal A + L \sum_j w_{j,a,1} = \mathcal A + L\sum_j w_{j,b,1}\\
	&= \sum_x (a_x+\delta_{x,1}\cdot L)\sum_j w_{j,a,x} = \sum_x (b_x+\delta_{x,1}\cdot L)\sum_j w_{j,a,x}
\end{align}
Here, $L$ is a very large number, such that now $a_2$ and $b_2$ are the smallest eigenvalues of $\mathcal A'$. We rename the index:\\
$a'_1 := a_2$, $a'_2 := a_3$,..., and $a'_{n_a} := a_1+L$ is the last because it is the largest eigenvalue (for $b'_x$ similarly). The $a'_x$ are still ordered by size. Now we repeat exactly the same argument as before to find $a_2 = a'_1 =b'_1 = b_2$ and $\sum_j w_{j,a,2} = \sum_j w_{j,b,2}$. We repeat this procedure until we are done. It is important to note, that as the $ w_{j,a,x}$ and the $w_{j,b,x}$ form maximal frames and as we prove for each index $x$, that the corresponding frames have the same size, necessarily $n_a = n_b$. I.e. ``we will not run out of b's while we still have a's left or vice versa''.\\ \\
\end{proof}

\begin{cor}
	The eigenvalues and eigenfaces of observables are well-defined: The eigenvalues are uniquely determined and $u_x :=\sum_j w_{j,a,x}=\sum_j w_{j,b,x}$ generate the same eigenface $F_x$.\\
As generating frames of faces have a unique size, the statement also shows that the probability distribution of classical decompositions of states is uniquely determined.\\
Also for $w = \sum_x p_x \sum_j w_{jx} = \sum_x p_x \sum_j w_{jx}'$ classical decompositions of a states into maximal frames, we find $\sum_j w_{jx} = \sum_j w_{jx}'$. Thus $\log w := \sum_x \log(p_x) \sum_j w_{jx} = \sum_x \log(p_x) \sum_j w_{jx}'$ is independent of the choice of classical decomposition.\\
Likewise for any function $f: \mathbb R \to \mathbb R$ and $\mathcal A=\sum_x a_x \sum_j w_{j,a,x}$ as before, $f(\mathcal A) := \sum_x f(a_x) \sum_j w_{j,a,x}$ is well-defined and independent of the decomposition into maximal frames.
\end{cor}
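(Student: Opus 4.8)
The plan is to obtain every assertion of the corollary by unwinding the preceding theorem, which already establishes that the eigenvalues $a_x$ and the projective units $u_x := \sum_j w_{j,a,x}$ are uniquely determined by $\mathcal A$. First I would dispatch the well-definedness of the eigenvalues and eigenfaces: the theorem states directly that the collection $\{a_x\}$ is fixed by $\mathcal A$ and that $\sum_j w_{j,a,x} = \sum_j w_{j,b,x}$ for matching eigenvalues. To pass from equality of projective units to equality of eigenfaces, I would invoke the characterization $F_x \cap \Omega_A = \{w \in \Omega_A \mid u_{F_x}(w)=1\}$ (the last proposition of Section~\ref{Section:Postulates}): since each eigenface is recovered from its projective unit, equal units yield equal faces $F_x$.

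Next, for the uniqueness of the probability distribution of a classical decomposition, the key idea is to regard the state itself as an observable. Given $w = \sum_k p_k w_k$ with $\{w_k\}$ a maximal frame (completing by zero-weight terms if necessary), I would group the frame elements by common weight: writing $q_1 > q_2 > \dots$ for the distinct values among the $p_k$, we get $w = \sum_x q_x \sum_j w_{jx}$, which is exactly of the form required by the theorem with ``eigenvalues'' $q_x$. The theorem then forces the distinct values $q_x$ and the units $\sum_j w_{jx}$ to be determined by $w$. Finally, the multiplicity with which each value $q_x$ occurs equals the size of a generating frame of $F_x$, which is the well-defined rank $|F_x|$ by the earlier proposition that all generating frames of a face share the same size; hence the full distribution, as a multiset of weights, is fixed by $w$.

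The remaining two statements are then immediate. For $\log w$, two classical decompositions into maximal frames share the same distinct weights $p_x$ and, by the theorem, the same units $\sum_j w_{jx} = \sum_j w_{jx}'$ for each $x$; substituting into $\log w := \sum_x \log(p_x)\sum_j w_{jx}$ gives the same element regardless of decomposition. For $f(\mathcal A)$, the theorem supplies, for any second decomposition, a permutation $\sigma$ with $b_x = a_{\sigma(x)}$ and $\sum_j w_{j,b,x} = \sum_j w_{j,a,\sigma(x)}$; therefore $\sum_x f(b_x)\sum_j w_{j,b,x}$ reindexes to $\sum_y f(a_y)\sum_j w_{j,a,y}$, proving $f(\mathcal A)$ independent of the chosen decomposition.

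The main obstacle I anticipate is the passage, in the probability-distribution step, from uniqueness of the distinct weights and their projective units to uniqueness of the full distribution with multiplicities; this is precisely where the rank (generating-frame-size) uniqueness must be invoked, and where care is needed to first group the decomposition by common weight so that the theorem becomes applicable at all. A minor point to flag is the $\log 0$ ambiguity, which is handled by restricting $\log w$ to the support of $w$ (equivalently, dropping the zero-weight terms).
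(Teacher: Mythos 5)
Your proposal is correct and follows exactly the route the paper intends: the corollary is stated without a separate proof precisely because it is the direct unwinding of the preceding uniqueness theorem, combined with the recovery of a face from its projective unit and the uniqueness of generating-frame size for the multiplicities. Your extra care in grouping equal weights before applying the theorem (whose hypothesis requires pairwise distinct eigenvalues) and in flagging the $\log 0$ convention are exactly the details one needs to make the "immediate" corollary rigorous.
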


It is important to note, that the close similarity to quantum observables is caused by Postulates 1 and 2, especially the fact that eigenvalues and eigenfaces are well-defined.\\
By Lemma 5.46 from \cite{Ududec} every element $w$ of $A$ does have a generalized classical decomposition of the form $w=\sum_j p_j w_j$ with $\{w_j\}$ a frame and $p_j \in \mathbb R$. Thus every element of $A$ can be interpreted as an observable. This fact is in direct analogy to quantum theory where every hermitian operator is interpreted as an observable.\\
\\
Sometimes \cite{Masanes}, observables are introduced to GPTs in a different but equivalent way:\\
When measuring an observable on ensembles, one obtains an average value which agrees with the expectation value in the thermodynamic limit. Thus it makes sense to introduce observables as functions $A_+ \rightarrow \mathbb R$. With the same argument as for effects, these functions should be convex-linear and can be extended to linear functions $A \rightarrow  \mathbb R$. Therefore observables are sometimes defined as elements of $A^*$.\\
By self-duality, for every observable $\mathcal A_* \in A^*$ there exists a $\mathcal A \in A$ with $\mathcal A_* = \braket{\mathcal A, \cdot}$. Vice versa, for every $\mathcal A \in A$ we find that $\mathcal A_* := \braket{\mathcal A,\cdot}$ is an element of $A^*$. Thus the elements of $A^*$ and $A$ are in bijective correspondence, and the definitions of observables as elements of $A$ or as elements of $A^*$ are equivalent.\\
\\
When postulating the standard axioms of quantum theory, there is always a mysterious quantum-classical transition in the measurement in the following sense: The measurement device, possibly quantum itself, measures a quantum property and returns a classical result (a digital number on a screen). In our framework considering the motivation for Postulate 1, we can justify this as follows: Every observable is part of a classical subspace. The measurement of the observable perfectly distinguishes the eigenspaces of the observable and is a consistent generalisation of the classical measurement to other classical subspaces. Thus the quantum optics explanation (see e.g. \cite{OpenQuantumSystems} or \cite{Decoherence}) by decoherence of an open quantum system (the measurement device being the environment) reduces the original state to a state of the classical subspace of the observable leading to a completely classical behaviour of the observable and the state, when only actions on this classical subspace are performed.

\subsection{Second law for non-degenerate projective measurements}
At first, we show the second law for non-degenerate projective measurements:\\
	Let $w \in \Omega$ be an arbitrary state, $w= \sum_j p_j w_j$ a classical decomposition into a maximal frame. Let $P_j$ be positive symmetric projectors with effects $u_j = u_A \circ P_j$ which form a properly normalized measurement, i.e. the $P_j$ form an operation. Furthermore, assume $P_j$ are the projectors onto the span of $w_j'$, where the $w_j'$ form a maximal frame (this means the projectors correspond to rank-1 projectors from quantum theory, i.e. no degeneracy of the measured observable). The measurement is conducted on all systems of the ensemble, i.e. the ensemble state after the measurement is given by 
	\begin{equation}
		w' = \sum_{\{j \ | \ u_A\circ P_j(w)\ne 0\} } u_A\circ P_j(w) \cdot \frac{P_j(w)}{u_A\circ P_j(w)} = \sum_j P_j(w)
	\end{equation}
\begin{theorem}
	Suppose Classical Decomposability and Strong Symmetry are satisfied. Then non-degenerate projective measurements as defined above never decrease the entropy of an ensemble:
\begin{equation}	
	S(w') \ge S(w)
\end{equation}	
\end{theorem}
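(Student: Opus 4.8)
The plan is to reduce both $S(w)$ and $S(w')$ to explicit Shannon-type expressions in probability vectors, and then to show that the post-measurement probability vector is obtained from the pre-measurement one by a doubly-stochastic matrix, so that majorization together with Schur-concavity of the entropy yields the inequality.

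First I would compute $w'$ explicitly. Since $P_j$ is the symmetric orthogonal projection onto the span of the pure state $w_j'$ and $\braket{w_j',w_j'}=1$, self-duality gives $P_j(v)=\braket{w_j',v}\,w_j'$, and hence $u_A\circ P_j=\braket{w_j',\cdot}$. Writing $q_j:=\braket{w_j',w}=u_A\circ P_j(w)\ge 0$, we obtain
\begin{equation}
  w'=\sum_j P_j(w)=\sum_j q_j\,w_j'.
\end{equation}
Because $\{w_j'\}$ is a maximal frame it sums to the order unit, so $\sum_j q_j=\braket{\sum_j w_j',w}=u_A(w)=1$; thus $(q_j)_j$ is a probability distribution and $w'=\sum_j q_j w_j'$ is a genuine classical decomposition into perfectly distinguishable pure states. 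By the consistency theorem the entropy is independent of the chosen classical decomposition, so $S(w')=-\sum_j q_j\ln q_j$, while $S(w)=-\sum_j p_j\ln p_j$.

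Next I would record the linear relation between the two probability vectors. Setting $R_{jk}:=\braket{w_j',w_k}$ we have $q_j=\sum_k R_{jk}\,p_k$, i.e. $q=Rp$. The matrix $R$ is doubly stochastic by exactly the argument used in the consistency theorem: $R_{jk}=\braket{w_j',w_k}\ge 0$ because the inner product of two states is non-negative, and both frames are maximal and hence sum to $u_A$, giving $\sum_j R_{jk}=u_A(w_k)=1$ and $\sum_k R_{jk}=u_A(w_j')=1$.

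Finally, since $q=Rp$ with $R$ doubly stochastic, the theorem of Hardy, Littlewood and P\'olya (as in \cite{Majorization}) gives $q\prec p$. The function $H(x)=-\sum_j x_j\ln x_j$ is symmetric and concave, hence Schur-concave, so $q\prec p$ implies $H(q)\ge H(p)$, that is $S(w')\ge S(w)$, as claimed. The only genuinely delicate point is the bookkeeping that makes the majorization applicable --- verifying that $w'$ really is a classical decomposition in the frame $\{w_j'\}$ with weights $q=Rp$ and that $R$ is doubly stochastic --- and then invoking Schur-concavity in the correct direction, so that the more spread-out (majorized) distribution carries the larger entropy; the majorization machinery itself is already available from the consistency proof.
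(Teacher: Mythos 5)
Your proposal is correct and follows essentially the same route as the paper: both identify $w'=\sum_j q_j w_j'$ with $q_j=u_A\circ P_j(w)=\braket{w_j',w}$, show the matrix $\braket{w_j',w_k}$ is doubly stochastic because maximal frames sum to the order unit, and conclude via majorization/Schur-concavity (the paper phrases the last step through Birkhoff's theorem, which is an equivalent repackaging). Your explicit rank-one formula $P_j(v)=\braket{w_j',v}w_j'$ is a slightly more direct way of getting $P_j(w)=q_jw_j'$ than the paper's positivity-plus-normalization argument, but the substance is identical.
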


\begin{proof}
Like described above, let $w=\sum_j p_j w_j$ be a classical decomposition into a maximal frame. Furthermore $P_j$ is positive and projects onto the linear span of $w_j'$, i.e. $\rm{im}^+(P_j) = \mathbb R_{\ge 0} \cdot \{w_j'\}$ (where $\rm{im}^+(P_j) := \rm{im}(P_j)\cap A_+$). Via the projective units we see that $\braket{w_j',\cdot } = u_A \circ P_j$. \\
 Thus $w' = \sum_j P_j(w) =: \sum_j q_j w_j'$ where $q_j= u_A \circ P_j(w)$:\\ 
 If $q_j=u_A \circ P_j(w)=0$, then $P_j w = 0$. Otherwise, $\frac{P_j(w)}{u_A\circ P_j(w)} = w_j'$ by proper normalization and therefore $P_j(w) = u_A \circ P_j(w) \cdot w_j' $. Furthermore:\\
$q_j =  u_A\circ P_j(w) = \sum_k u_A \circ P_j(w_k) p_k =: \sum_{k} M_{jk} p_k$.
As maximal frames are considered, $M_{jk} = u_A \circ P_j(w_k)$ is a square matrix with non-negative entries. Also:\\
$\sum_j M_{jk} = \sum_j \braket{w_j',w_k} = u_A(w_k) = 1 = u_A(w_j')=\sum_k \braket{w_j',w_k} = \sum_k M_{jk}$. Thus $M_{jk}$ is doubly stochastic, and in analogy to \cite{WehnerEntropy} we find that entropy increases:\\
$S(w) = S(\sum_j p_j w_j) = H(\vec p)$ where $H(\vec p) = -\sum_j p_j \ln p_j$ is the Shannon entropy\footnote{In this proof, the Shannon entropy is defined with $\ln$ instead of $\log_2$. Furthermore, we set $k_B = 1$ to simplify notation.}. By Birkhoff's theorem, $M=\sum_{\sigma \in S_N}a_\sigma \cdot \sigma$ is a convex combination of permutations. By Schur concavity of the Shannon entropy, $H(\vec q) \ge \sum_{\sigma \in S_N} a_\sigma H(\sigma(\vec p)) = H(\vec p)$. $S(w') = H(\vec q)$ because the $w_j'$ form a frame. Thus $S(w') \ge S(w)$.
\end{proof}

\subsection{Second law for degenerate projective measurements}
Next we wish to consider also degenerate measurements, i.e. measurements that correspond to degenerate observables in quantum theory. For doing so, we will adapt a proof for the quantum case from \cite{BestBookEver} using the relative entropy, especially we will adapt the proof from \cite{BestBookEver} that the relative entropy is non-negative. However, as our projective measurements are not necessarily induced by projectors on an underlying pure state Hilbert space, we have to find a different way to take advantage of the fact, that the relative entropy is non-negative, than in \cite{BestBookEver}.

\begin{defi}
	Assume strong symmetry and classical decomposability are fulfilled. Then the \textbf{relative entropy} is defined as:
		\begin{equation}
			S(w||v) := -S(w)-\braket{w, \ln v}
		\end{equation}
	Here, for $v=\sum_j q_j v_j$ a classical decomposition into a maximal frame, $\ln v := \sum_j \ln(q_j) v_j$.
\end{defi}

\vspace{1mm}
In quantum theory, our definition reproduces the standard definition of relative entropy from \cite{BestBookEver}. Like there, we will show that the relative entropy is never negative:
\vspace{1mm}

\begin{theorem}[Klein's inequality]
	For all $v,w \in \Omega_A$:
	\begin{equation}
	S(w||v) \ge 0 
	\end{equation}
\end{theorem}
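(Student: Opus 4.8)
The plan is to mimic the classical/quantum proof of Klein's inequality, which reduces everything to the elementary inequality $\ln x \le x - 1$ for $x > 0$ (with equality iff $x=1$), equivalently $-\ln x \ge 1 - x$. I would first write out both states in terms of maximal frames and the self-dualizing inner product from Theorem~\ref{Theorem:Selfdual}, which is the central tool here. Let $w = \sum_i p_i w_i$ and $v = \sum_j q_j v_j$ be classical decompositions into maximal frames (guaranteed by Postulate~1, completed to maximal frames by adding zero-weight terms as in the consistency proof). Then by definition $S(w||v) = -S(w) - \langle w, \ln v\rangle$ with $\ln v = \sum_j \ln(q_j)\, v_j$, and $S(w) = -\sum_i p_i \ln p_i$.

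**Expanding the inner product.**

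The key computation is to expand $\langle w, \ln v\rangle$ using bilinearity of the inner product and the identity $p_i = \langle w_i, w\rangle = \langle w, w_i\rangle$ (valid since each $w_i$ is pure, so $\langle w_i,w_i\rangle = 1$, and perfectly distinguishable frame elements are orthogonal, exactly as in the consistency theorem). Writing $r_{ij} := \langle w_i, v_j\rangle$, I would define the matrix $R = (r_{ij})$ and show it is doubly stochastic by precisely the argument already used in the excerpt: self-duality together with Proposition~\ref{Prop:FramesAddUp} gives $\sum_i \langle w_i,\cdot\rangle = u_A = \sum_j \langle v_j,\cdot\rangle$, so $\sum_i r_{ij} = u_A(v_j) = 1$ and $\sum_j r_{ij} = u_A(w_i) = 1$, while $r_{ij}\ge 0$ because inner products of states are non-negative. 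Then $\langle w, \ln v\rangle = \sum_{i,j} p_i r_{ij} \ln q_j$, and assembling the pieces gives
\begin{equation}
	S(w||v) = \sum_i p_i \ln p_i - \sum_{i,j} p_i r_{ij} \ln q_j = \sum_i p_i \left( \ln p_i - \sum_j r_{ij} \ln q_j \right).
\end{equation}

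**Applying concavity of the logarithm.**

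Now I would apply Jensen's inequality to the concave function $\ln$ on the convex combination $\sum_j r_{ij} q_j$ (legitimate because $R$ is row-stochastic), obtaining $\sum_j r_{ij}\ln q_j \le \ln\bigl(\sum_j r_{ij} q_j\bigr)$. Setting $\tilde q_i := \sum_j r_{ij} q_j$, this yields
\begin{equation}
	S(w||v) \ge \sum_i p_i \bigl(\ln p_i - \ln \tilde q_i\bigr) = \sum_i p_i \ln\frac{p_i}{\tilde q_i}.
\end{equation}
The right-hand side is a classical relative entropy (Kullback--Leibler divergence) between the probability vectors $p$ and $\tilde q$, and it is non-negative provided $\tilde q$ is itself a probability vector; this follows since $\sum_i \tilde q_i = \sum_{i,j} r_{ij} q_j = \sum_j q_j \sum_i r_{ij} = \sum_j q_j = 1$ by column-stochasticity. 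Non-negativity of the classical KL divergence then follows from $\ln x \le x-1$ applied to $x = \tilde q_i / p_i$, giving $\sum_i p_i \ln(p_i/\tilde q_i) \ge \sum_i (p_i - \tilde q_i) = 0$.

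**Anticipated obstacle.**

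The main subtlety I expect is bookkeeping around vanishing eigenvalues: the term $\ln q_j$ is ill-defined when $q_j = 0$, and one must check that such terms are suppressed by the convention $0\ln 0 = 0$ together with the requirement that $w$ have support inside the support of $v$ for the relative entropy to be finite (otherwise $S(w||v) = +\infty$ and the inequality is trivial). I would handle this by restricting the frame of $v$ to indices with $q_j > 0$ and verifying that whenever $p_i > 0$ but the corresponding $\tilde q_i = 0$, the divergence diverges to $+\infty$, so the bound holds vacuously; the finite case is then exactly the Jensen argument above. The essential mathematical content is entirely carried by self-duality and the double-stochasticity of $R$, both already established in the excerpt, so no genuinely new machinery is needed beyond careful organization.
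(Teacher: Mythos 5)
Your proposal is correct and follows essentially the same route as the paper's own proof: expand $\braket{w,\ln v}$ in maximal-frame decompositions, show the matrix $\braket{w_i,v_j}$ is doubly stochastic via self-duality and Proposition \ref{Prop:FramesAddUp}, apply Jensen to reduce to a classical relative entropy, and invoke its non-negativity. The only addition is your careful treatment of vanishing coefficients, which the paper leaves implicit.
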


\begin{proof}
	Consider classical decompositions $w=\sum_j p_j w_j$ and $v=\sum_k q_k v_k$ into maximal frames. Then:
		\begin{equation}
			S(w||v) = \sum_j p_j \ln p_j -\sum_{j,k} p_j \ln q_k \braket{w_j, v_k} =: \sum_j p_j \left( \ln p_j - \sum_k P_{jk} \ln q_k \right)
		\end{equation}
	Here, $P_{jk} := \braket{w_j,v_k} \ge 0$ and $\sum_j P_{jk} = \sum_k P_{jk} = 1$ because maximal frames add up to the order unit. Define $r_j := \sum_k P_{jk} q_k = \braket{w_j,v}$. $\ln$ is strictly concave and thus $\sum_k P_{jk} \ln q_k \le \ln r_j$. Thus:
	\begin{equation}
		S(w||v) \ge \sum_j p_j \ln\left( \frac{p_j}{r_j} \right)
	\end{equation}	
	 As $r_j = \sum_k P_{jk} q_k \ge 0$ and $\sum_j r_j = \sum_k \sum_{j} P_{jk} q_k = \sum_k q_k = 1$, the $r_j$ also form a probability distribution. By positivity of the classical relative entropy (see \cite{BestBookEver},Theorem 11.1), we thus find
	\begin{equation}
		S(w||v) \ge 0
	\end{equation}

	Alternative proof from Peres' book \cite{Peres}:
	\begin{equation}
		S(w||v) = \sum_j p_j \left( \ln p_j - \sum_k P_{jk} \ln q_k \right) = \sum_{jk} p_j P_{jk}\ln\left(\frac{p_j}{q_k}\right)
	\end{equation}
	We use $\ln(x) \ge 1-\frac 1 x$ with equality exactly for $x=1$:
	\begin{equation}
		S(w||v) \ge \sum_{jk} p_j P_{jk}\left[1-\frac{q_k}{p_j}\right] = \sum_j p_j -\sum_j r_j = 0
	\end{equation}
	%Equality requires either $\frac{p_j}{q_k} = 1$, i.e. $p_j=q_k$, or $p_j=0$ or $P_{jk}=0$. As $q_k$
\end{proof}

\begin{lem}
	Assume strong symmetry and classical decomposability are fulfilled. Let $P_j$ be symmetric, positive projectors which form an operation. Then $P_k P_j = \delta_{jk} P_j$, i.e. the $P_j$ are mutually orthogonal. \label{Lemma:OrthogonalProjectors}
\end{lem}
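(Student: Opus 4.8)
The plan is to extract as much as possible from the single algebraic condition defining an operation, namely $\sum_k u_A\circ P_k = u_A$, and to combine it with positivity, the strict positivity of the order unit, and classical decomposability (Postulate 1); interestingly, the self-duality/symmetry of the $P_j$ is not actually needed for this argument. First I would record two elementary facts. Since the $P_k$ form an operation, every $v\in A_+$ satisfies $u_A(v)=\sum_k u_A(P_k v)$, and each summand is nonnegative because $P_k$ is positive and $u_A\ge 0$ on $A_+$; in particular $u_A(P_j v)\le u_A(v)$ for all $v\in A_+$. Moreover, since each $P_j$ is idempotent, $\mathrm{im}(P_j)=\{w\mid P_j w=w\}$.

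The key step is a statement about pure states: if $v$ is a pure state with $u_A(P_j v)=1$, then $P_k v=0$ for every $k\ne j$. This follows at once from
\[
1=u_A(v)=\sum_k u_A(P_k v),
\]
where every term is nonnegative and the $j$-th term already equals $1$; hence $u_A(P_k v)=0$, and since $P_k v\in A_+$ with $u_A$ strictly positive on $A_+\setminus\{0\}$, I conclude $P_k v=0$. Next I would promote this to arbitrary elements of the positive part of the image. Take $w\in\mathrm{im}^+(P_j):=\mathrm{im}(P_j)\cap A_+$ with $w\ne 0$ and normalise $\hat w:=w/u_A(w)\in\Omega_A$, so that $P_j\hat w=\hat w$. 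Classical decomposability gives $\hat w=\sum_i p_i v_i$ with $p_i>0$ and $v_i$ pure. Applying $u_A\circ P_j$ and using $u_A(P_j v_i)\le u_A(v_i)=1$ together with $\sum_i p_i\,u_A(P_j v_i)=u_A(P_j\hat w)=u_A(\hat w)=1=\sum_i p_i$, I force $u_A(P_j v_i)=1$ for each $i$. The pure-state claim then yields $P_k v_i=0$ for all $k\ne j$, whence $P_k\hat w=0$ and therefore $P_k w=0$.

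It remains to pass from $\mathrm{im}^+(P_j)$ to all of $\mathrm{im}(P_j)$, which I expect to be the only genuine obstacle. Here I would use that $A_+$ is generating, so $A=A_+-A_+$ and consequently
\[
\mathrm{im}(P_j)=P_j(A)=P_j(A_+)-P_j(A_+)\subset \mathrm{im}^+(P_j)-\mathrm{im}^+(P_j),
\]
using $P_j(A_+)\subset A_+\cap\mathrm{im}(P_j)$. Since $P_k$ (for $k\ne j$) annihilates $\mathrm{im}^+(P_j)$ and is linear, it annihilates this spanning set and hence all of $\mathrm{im}(P_j)=P_j(A)$; that is, $P_k P_j=0$ for $k\ne j$, while $P_j P_j=P_j$ because $P_j$ is a projector. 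Together these give $P_k P_j=\delta_{jk}P_j$, as claimed.
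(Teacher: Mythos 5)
Your proof is correct, and its engine is the same as the paper's: evaluate the operation identity $u_A=\sum_k u_A\circ P_k$ on a normalized element of $\mathrm{im}(P_j)\cap A_+$, note that the $j$-th summand already exhausts the total, and conclude from positivity of the $P_k$ together with the strict positivity of $u_A$ on $A_+\setminus\{0\}$ that each $P_k$ ($k\ne j$) annihilates that element; the generating cone then upgrades this to $P_kP_j=0$ on all of $A$. The one place you genuinely diverge is the detour through classical decomposability: you first split $\hat w=\sum_i p_i v_i$ into pure states, force $u_A(P_j v_i)=1$ for each $i$, and only then apply your key step to each $v_i$. This detour is superfluous: your key step nowhere uses purity, and $\hat w$ itself already satisfies $u_A(P_j\hat w)=u_A(\hat w)=1$ by idempotence, so the key step applies to $\hat w$ directly. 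The paper does exactly that, which buys a slightly stronger statement --- the conclusion holds for any positive idempotents forming an operation, with no appeal to Postulate 1. Your remark that symmetry of the $P_j$ is not needed is also correct and consistent with the paper, where symmetry is only used afterwards to pass from $P_kP_j=0$ to the orthogonality $\braket{P_k w, P_j v}=0$ of the images.
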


\begin{proof}
	If $P_j w = 0$, then trivially $P_k P_j w = 0$. If $P_j w \ne 0$, then 
	\begin{align}
	1 &= u_A\left( \frac{P_j w}{(u_A\circ P_j)(w)}\right) = \left(u_A\circ\sum_k P_k \right)\left( \frac{P_j w}{(u_A\circ P_j)(w)}\right)\\&= u_A\left( \frac{P_j w}{(u_A\circ P_j)(w)}\right) + \left(u_A\circ\sum_{k\ne j} P_k \right)\left( \frac{P_j w}{(u_A\circ P_j)(w)}\right).
	\end{align}
	Thus $\left(u_A\circ\sum_{k\ne j} P_k \right)\left( \frac{P_j w}{(u_A\circ P_j)(w)}\right)=0$. By positivity of $P_k$, $\left(u_A\circ P_k \right)\left( \frac{P_j w}{(u_A\circ P_j)(w)}\right)=0$ for $k\ne j$. As $0$ is the only state with $u_A = 0$, $P_k P_j w = 0$. As the cone is generating ($\rm{Span}(A_+) = A$), $P_k P_j = 0$, especially $\braket{P_k w,P_j w} = 0$.
\end{proof}

\begin{lem}
	Let $P$ be a positive, symmetric, normalization-non-increasing projector which projects onto the linear span of a face $F$. For states $w$, $Pw$ is always found in the face $F$. \label{lemma:face_instead_of_span}
\end{lem}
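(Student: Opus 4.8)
The plan is to combine two elementary facts about $P$ with one structural property of faces. By construction $P$ projects onto $\mathrm{span}(F)$, so its image lies in that span; and since $P$ is positive and every state lies in $A_+$, the vector $Pw$ lands in the cone. The whole claim then reduces to recognising that $\mathrm{span}(F)\cap A_+$ is nothing but $F$, which is exactly the content of the earlier Proposition characterising a face as $F=\mathrm{aff}(F)\cap C$.

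First I would record the two containments. As $w$ is a state we have $w\in A_+$, and positivity of $P$ gives $Pw\in A_+$. On the other hand $P$ projects onto $\mathrm{span}(F)$, so $\mathrm{im}(P)\subseteq\mathrm{span}(F)$ and hence $Pw\in\mathrm{span}(F)$. Together these yield $Pw\in\mathrm{span}(F)\cap A_+$, and it only remains to show that this intersection equals $F$.

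Next I would identify the linear span of $F$ with its affine hull. Since $F$ is a face of the cone $A_+$, Lemma \ref{lemma:rays} gives $0=0\cdot v\in F$ for any $v\in F$. For any set containing the origin the linear span coincides with the affine hull: an arbitrary linear combination $\sum_i\lambda_i v_i$ can be rewritten as the affine combination $\sum_i\lambda_i v_i+\bigl(1-\sum_i\lambda_i\bigr)\cdot 0$, whose coefficients sum to one, so it lies in $\mathrm{aff}(F)$; the reverse inclusion $\mathrm{aff}(F)\subseteq\mathrm{span}(F)$ is automatic. Thus $\mathrm{span}(F)=\mathrm{aff}(F)$, and the earlier Proposition $F=\mathrm{aff}(F)\cap A_+$ closes the argument: $Pw\in\mathrm{span}(F)\cap A_+=\mathrm{aff}(F)\cap A_+=F$.

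The main obstacle — really the only non-formal point — is the identification $\mathrm{span}(F)=\mathrm{aff}(F)$, which hinges entirely on $0\in F$; without the cone structure of $A_+$ and Lemma \ref{lemma:rays} the affine hull could be strictly smaller than the linear span, and the cited Proposition would not apply. I would also remark that neither the symmetry nor the normalization-non-increasing property of $P$ is actually used in this conclusion: only positivity and the fact that $P$ maps into $\mathrm{span}(F)$ enter the proof.
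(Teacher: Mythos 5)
Your proof is correct, and it rests on the same key fact as the paper's proof --- the characterisation $F=\mathrm{aff}(F)\cap A_+$ of a face of the convex set $A_+$ --- but you reach it by a cleaner route. The paper's argument splits into the cases $Pw=0$ and $Pw\neq 0$, normalises $Pw$, expands it as a linear combination of elements of $F$, renormalises those elements to states (using Lemma \ref{lemma:rays} to keep them in $F$), and only then checks that the coefficients sum to one so that an affine combination is obtained. You bypass all of that bookkeeping with the single observation that $0\in F$ (again via Lemma \ref{lemma:rays} applied to any element of the nonempty face), which forces $\mathrm{span}(F)=\mathrm{aff}(F)$, so that $Pw\in\mathrm{span}(F)\cap A_+=\mathrm{aff}(F)\cap A_+=F$ follows immediately from positivity of $P$ and the face characterisation. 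This eliminates the case analysis and the normalisation step entirely. Your closing remark is also accurate: neither symmetry nor the normalization-non-increasing property is used in either argument; only positivity and $\mathrm{im}(P)\subseteq\mathrm{span}(F)$ matter here.
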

\begin{proof}
	$w$ a state. If $Pw=0$, then trivially $Pw \in F$. If $Pw \ne 0$:  By surjectivity onto $\rm{Span}(F)$, $\frac{Pw}{u_A(Pw)} = \sum_j p_j w_j$ for $p_j \in \mathbb R$ and $w_j \in F$. We assume $p_j \ne 0$ and $w_j \ne 0$ (otherwise they do not contribute) and normalize: $w_j':= \frac {w_j}{u_A(w_j)}$, $p_j' := u_A(w_j) p_j$. $w_j'$ is still found in $F$, because $w_j = \frac{1}{\lambda}(\lambda w_j) + (1- \frac{1}{\lambda})\cdot 0$ for $\lambda > 1$ and $w_j = \lambda w_j +(1-\lambda)\cdot 0$ for $0<\lambda < 1$ together with $F$ being a face of $A_+$ imply that $\mathbb R_{\ge 0} \cdot w_j \subset F$. Now we find $\frac{Pw}{u_A(Pw)} =  \sum_j p_j' w_j'$. As all states are normalized, $\sum_j p_j' = 1$. Thus, the state $\frac{Pw}{u_A(Pw)}$ is an affine combination of states in $F$. By Proposition 2.10 from \cite{Pfister}, $F=\text{aff}(F)\cap A_+$. Thus $Pw \in F$. 
\end{proof}

\begin{lem}
	Assume Postulates 1 and 2. Let $F\ne \{0\}$ be a face of $A_+$ and $w \in F$. Then there exists a classical decomposition $w=\sum_j p_j w_j$ which only uses states in $F$, i.e. $w_j \in F \ \forall j$. \label{lemma:FaceClassicalDecomposition}
\end{lem}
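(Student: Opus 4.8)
The plan is to reduce $w$ to a normalized state lying in $F$, apply classical decomposability (Postulate 1) there, and then invoke the defining property of a face to push the resulting decomposition back into $F$. The case $w=0$ is trivial, so I assume $w\neq 0$ throughout.

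First I would normalize. Since $u_A$ is strictly positive on $A_+\setminus\{0\}$, the number $c:=u_A(w)$ is positive; set $\hat w:=w/c$, so that $\hat w\in\Omega_A$. By Lemma \ref{lemma:rays}, a face of $A_+$ is closed under multiplication by non-negative scalars, hence $\hat w=(1/c)\,w\in F$ as well. Thus $\hat w$ is a genuine normalized state that still lies in $F$, which is precisely what is needed to make Postulate 1 applicable.

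Next I would apply classical decomposability to $\hat w$: there is a probability distribution $p_1,\dots,p_n$ and perfectly distinguishable pure states $w_1,\dots,w_n$ with $\hat w=\sum_j p_j w_j$. Discarding any vanishing weights, I may assume $p_j>0$ for all $j$. The crucial step is then purely convex-geometric: each $w_j$ lies in $\Omega_A\subset A_+$, the weights are strictly positive and sum to one, and the convex combination $\sum_j p_j w_j=\hat w$ lies in the face $F$ of $A_+$. By the induction lemma on faces proved in Section \ref{Section:ConvexGeometry} (extending the definition of a face from two points to finitely many), every summand must already lie in $F$, i.e. $w_j\in F$ for all $j$.

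This yields the claim: $\hat w=\sum_j p_j w_j$ is a classical decomposition of the normalized representative into perfectly distinguishable pure states all contained in $F$, and $w=c\,\hat w=\sum_j (c\,p_j)\,w_j$ transfers this back to $w$. I do not expect a genuine obstacle here; the only point requiring care is the passage to $\hat w$ so that Postulate 1 becomes available, together with the observation, supplied by Lemma \ref{lemma:rays}, that normalizing does not leave the face. Everything else is an immediate application of the face property.
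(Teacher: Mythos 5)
Your proof is correct and follows essentially the same route as the paper's: take a classical decomposition with strictly positive weights and use the defining property of a face (in its finitely-many-summands form) to conclude every summand lies in $F$. The only difference is that you explicitly handle the normalization of $w$ via Lemma \ref{lemma:rays} before invoking Postulate 1, a detail the paper's one-line proof leaves implicit; this is a welcome clarification rather than a divergence.
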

\begin{proof}
	Let $w= \sum_j p_j w_j$ be a classical decomposition. Wlog $p_j > 0 \ \forall j$. As $F$ is a face, we find $w_j \in F \ \forall j$. 
\end{proof}

Now we finally consider the entropy in degenerate projective measurements.\\
We remember: We consider observables $\mathcal A = \sum_a a \sum_j w_{j;a}$. Here $a \in \mathbb R$ are the eigenvalues and the $w_{j;a}$ form a maximal frame. The $u_a := \sum_{j} w_{j;a}$ are the projective units of the eigenfaces $F_a$ with positive symmetric projector $P_a$.\\
We know that the projective units are valid effects with $u_a = u_A\circ P_a$. Furthermore, $\sum_a u_a = \sum_{a,j} w_{j;a} = u_A$ by having a maximal frame.\\

Vice versa, consider an operation given by symmetric projectors $P_a$ which project onto the linear spans of faces $F_a$. We know by Proposition \ref{Prop:PositiveProjector} that these projectors are positive. By Lemma \ref{Lemma:OrthogonalProjectors}, the $P_a$ are mutually orthogonal and thus also the faces $F_a$ are orthogonal. For all $a$, let $w_{j;a}$ be frames that generate $F_a$. Then $u_a := u_A \circ P_a$ is the corresponding projective unit and by that a valid effect. As the $P_a$ form an operation, $\sum_{a,j} w_{j;a} = \sum_a u_a = \sum_a u_A \circ P_a = u_A$, i.e. the $w_{j;a}$ form a maximal frame in total. Thus we can define an observable $\mathcal A = \sum_{a}a u_a = \sum_a a \sum_j w_{j;a}$.

\begin{theorem}
	Suppose strong symmetry and classical decomposability are satisfied. Let $P_a$ be symmetric projectors which form a valid operation and project onto the linear spans of faces $F_a$. Then the induced measurement with post-measurement ensemble state $w'=\sum_a P_a w$ does not decrease entropy: $S(w')\ge S(w)$
\end{theorem}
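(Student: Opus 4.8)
The plan is to transcribe the standard quantum-theoretic proof that projective measurements increase entropy into the GPT language, using the self-dualizing inner product $\braket{\cdot,\cdot}$ of Theorem \ref{Theorem:Selfdual} in place of the trace and invoking Klein's inequality in place of positivity of the quantum relative entropy. The core idea is to apply Klein's inequality with the second argument equal to the post-measurement state $w'$ itself, giving
\begin{equation}
	0 \le S(w\,||\,w') = -S(w) - \braket{w, \ln w'}.
\end{equation}
Everything then reduces to proving the single identity $\braket{w, \ln w'} = \braket{w', \ln w'} = -S(w')$, after which rearranging yields $S(w') \ge S(w)$ at once.

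First I would fix a convenient classical decomposition of $w'$ adapted to the eigenfaces. By Lemma \ref{lemma:face_instead_of_span} each summand $P_a w$ lies in the face $F_a$, and by Lemma \ref{lemma:FaceClassicalDecomposition} it admits a classical decomposition $P_a w = \sum_j r_{j;a} w_{j;a}$ using pure states $w_{j;a}$ contained in $F_a$; extending these to frames that generate each $F_a$ (padding the extra coefficients with $r_{j;a}=0$) is harmless since $0\ln 0 = 0$. Because the $P_a$ form an operation and are mutually orthogonal by Lemma \ref{Lemma:OrthogonalProjectors}, the faces $F_a$ are mutually orthogonal and $\sum_a u_A\circ P_a = u_A$, so by Proposition \ref{Prop:FramesAddUp} the combined family $\{w_{j;a}\}_{a,j}$ is a maximal frame. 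This produces the classical decomposition $w' = \sum_{a,j} r_{j;a} w_{j;a}$ into a maximal frame, whence $\ln w' = \sum_{a,j} \ln(r_{j;a})\, w_{j;a}$ is well-defined and independent of the choices made, by the corollary on uniqueness of eigenfaces.

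The decisive computation is the identity $\braket{w, w_{j;a}} = r_{j;a}$. Since $P_a$ is symmetric and acts as the identity on its range, $P_a w_{j;a} = w_{j;a}$, so $\braket{w, w_{j;a}} = \braket{w, P_a w_{j;a}} = \braket{P_a w, w_{j;a}}$; expanding $P_a w = \sum_k r_{k;a} w_{k;a}$ and using orthonormality of the frame with respect to $\braket{\cdot,\cdot}$ gives $\braket{P_a w, w_{j;a}} = r_{j;a}$. This is exactly the analogue of the step in the quantum proof where $\ln \rho'$ commutes with the projectors. It follows that
\begin{equation}
	\braket{w, \ln w'} = \sum_{a,j} \ln(r_{j;a})\,\braket{w, w_{j;a}} = \sum_{a,j} r_{j;a}\ln(r_{j;a}) = \braket{w', \ln w'} = -S(w'),
\end{equation}
and inserting this into Klein's inequality completes the proof.

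I expect the main obstacle to lie not in the inequality but in the bookkeeping that guarantees $\ln w'$ is well-defined: one must check that the frames generating the orthogonal eigenfaces $F_a$ genuinely assemble into one maximal frame, and that the coefficients $r_{j;a}$ occurring in $w' = \sum_{a,j} r_{j;a} w_{j;a}$ coincide with the inner products $\braket{w, w_{j;a}}$ taken against the \emph{original} state $w$, not merely against $w'$. The symmetry of the projectors together with self-duality is precisely the bridge between these two, and the only delicate points are the handling of indices with $P_a w = 0$ or $r_{j;a}=0$, which are absorbed by the convention $0\ln 0 = 0$.
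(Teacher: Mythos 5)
Your proposal is correct and follows essentially the same route as the paper: Klein's inequality with second argument $w'$, the assembly of the per-face classical decompositions of the $P_a w$ into a maximal frame for $w'$, and the use of symmetry of the $P_a$ together with self-duality to establish $\braket{w,\ln w'}=\braket{w',\ln w'}=-S(w')$. The only difference is organizational — you move each $P_a$ across the inner product frame-element by frame-element via $\braket{w,w_{j;a}}=\braket{P_a w,w_{j;a}}=r_{j;a}$, whereas the paper packages the same computation as the single identity $\sum_a P_a \ln(w')=\ln(w')$ before transposing the projectors all at once.
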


\begin{proof}
	Let $u_a:= u_A \circ P_a$ be the projective order units and $w_{j;a}$ frames that generate the $F_a$. We have already argued that the projectors and the faces are mutually orthogonal. We have also seen that the $w_{j;a}$ form a maximal frame in total.\\ 
	\\
	We consider $-S(w)-\braket{w,\ln w'} =S(w||w') \ge 0$. Like in \cite{BestBookEver} Theorem 11.9, we claim $\braket{w,\ln w'} = -S(w')$. If that claim is true, then $S(w') \ge S(w)$. Thus we only have to prove this claim, but as our theories cannot be assumed to have an underlying pure state Hilbert space, we will use a different proof.\\
	\\
	As the $P_a$ are mutually orthogonal, so are the $P_a w$. Because of Lemma \ref{lemma:face_instead_of_span}, $P_a w \in F_a$. If $P_a w = 0$ we use the decomposition $P_a w = \sum_j u_A(P_a w)\cdot r_{aj}\cdot w_{aj}$ with $w_{aj}:=w_{j;a}$ and $r_{aj} := \delta_{1,j}$.  If $P_a w \ne 0$, we perform a classical decomposition $\frac{P_a w}{u_A(P_a w)} = \sum_k r_{ak} w_{ak}$ for $r_{ak} >0$. Because of the classical decomposition, the $w_{ak}$ are mutually orthogonal for same $a$. The $w_{ak}$ are found in the face $F_a$ corresponding to the projectors $P_a$. Therefore, we add terms $r_{aj}\cdot w_{aj}=0\cdot w_{aj}$ to the classical decomposition of $\frac{P_a w}{u_A(P_a w)}$ to complete the $w_{aj}$ to a generating frame of $F_a$. As all generating frames of a face have the same size, also the new generating frames add up to the order unit in total, i.e. can be combined to a maximal frame. Thus $P_a w_{ak} = w_{ak}$ and as these faces are mutually orthogonal, also for different $a$ the $w_{ak}$ are mutually orthogonal. By orthogonality, $P_b w_{ak}=P_b P_a w_{ak}=0$ for $b\ne a$.\\
	In total, we have found a classical decomposition $w' = \sum_{aj} u_A(P_a w) \cdot r_{aj}\cdot w_{aj}$ with $P_a w_{bj} = \delta_{ab} w_{bj}$.
	With $\ln(w') = \sum_{aj} \ln(u_A(P_a w) \cdot r_{aj}) w_{aj}$ we find:
	\begin{equation}
	\sum_a P_a \ln(w') = \sum_{abj} \ln(u_A(P_b w) \cdot r_{bj}) P_a w_{bj} = \sum_{aj} \ln(u_A(P_a w) \cdot r_{aj}) w_{aj} = \ln(w')
	\end{equation}
	%Because of the previous lemma, also for different $j$ the $w_{jk}$ are mutually orthogonal. Thus all the $w_{jk}$ can be combined into a frame, they induce a classical decomposition of $w'$ and are either invariant under $P_j$ or vanish.
	Finally, using symmetry of the projectors and 
	\begin{equation}
	  -S(w') = \sum_{aj} \big[u_A(P_a w) \cdot r_{aj}\big] \ln\big[u_A(P_a w) \cdot r_{aj}\big] = \braket{w',\ln w'}
	\end{equation} 
	for the same classical decomposition:
	\begin{align}
		\braket{w,\ln w'} = \braket{w, \sum_a P_a \ln w'} = \braket{\sum_a P_a w, \ln w'} = \braket{w',\ln w'} =-S(w')
	\end{align}
\end{proof}

\begin{note}
	If you read through the proofs carefully, you will see that the proofs would also work if $\log(w)$ depended on the choice of classical decomposition.
\end{note}

\subsection{Mixing processes}
\label{Section:Mixing}
Now we consider the entropy for a mixing process, i.e. what happens if we mix some gases which contain GPT-ensembles, see also Figure \ref{Fig:MixingProcess}. For the last time in this thesis, we use the convention that $S$ is the entropy proportional to the number of systems, while $s= \frac{S}{N}$ is the entropy divided by the number of systems, and we explicitly list $k_B$.\\ 

Like in von Neumann's thought experiment, we consider boxes, each of them filled with a GPT system, forming an ideal gas. We consider $n$ tanks at temperature $T$. In the $j$-th tank one finds $N_j$ boxes and each of these boxes contains the state $w_j$. Furthermore, the $j$-th tank is assumed to have volume $V_j := \frac{N_j}{N} V$, where $N:= \sum_{j=1}^n N_j$ is the total number of boxes/systems. Note that the gases in the tanks all have the same density. As the tanks are isolated from each other, the total GPT entropy is given by $S_\text{before} = \sum_j S_j(w_j)$ where $S_j$ is the GPT entropy of the gas in the $j$-th tank: For a classical decomposition $w_j = \sum_k p_k^{(j)} w_k^{(j)}$ we thus have $S_j(w_j) = - N_j k_B \sum_k p_k^{(j)} \ln p_k^{(j)}$. We use the normalized entropy $s$ which is already divided by the number of particles, i.e. $s(v) = - k_B \sum_k q_k \ln q_k$ for a state $v$ with classical decomposition $v = \sum_k q_k v_k$. We write $S_\text{before} = \sum_j N_j s(w_j)$. The tanks are merged to a giant tank. The walls separating the tanks are removed such that the gases mix. Now, we can put the walls back in, a process that is now reversible as the gases are already mixed.  \\
The new ensemble state is given by $w' := \sum_j \frac{N_j}{N} w_j$ because with probability $\frac{N_j}{N}$, a random box belonged to the $w_j$-gas before. The total GPT entropy after this mixing process is given by $S_\text{after} = S(w') = N s(w')$.\\
We see that the tanks which originally contained the gases $w_j$ now contain the mixed gas $w'$ at same conditions $T$,$N_j$, $V_j$. Thus the only difference in entropy is caused by the GPT-systems. Therefore, we need to check
\begin{equation}
	S_\text{before} = \sum_j N_j s(w_j) \le S_\text{after} = N s(w')
\end{equation}
which is equivalent to:
\begin{equation}
	 \sum_j \frac{N_j}{N} s(w_j) \le s\left(\sum_j \frac{N_j}{N} w_j\right)
\end{equation}

\begin{center}
	\includegraphics[width=\textwidth]{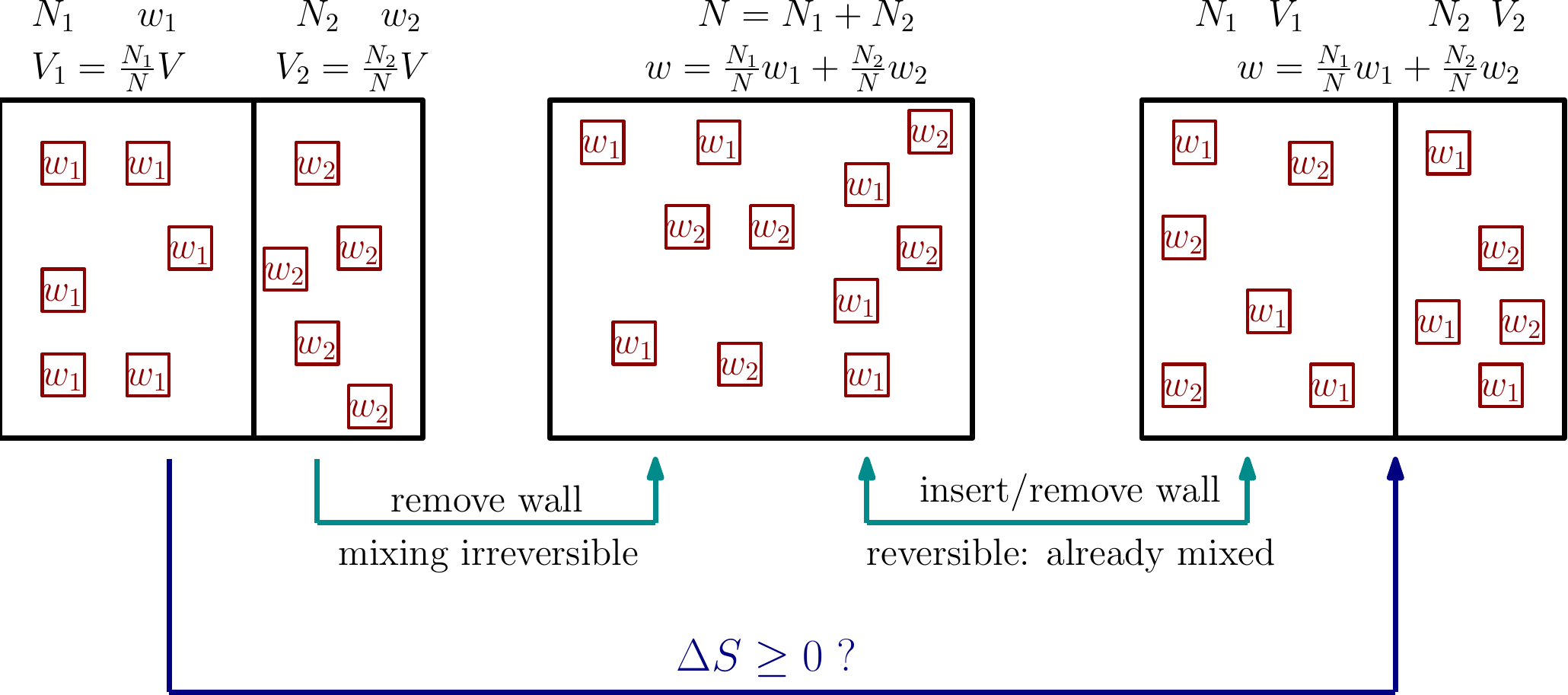}
	\captionof{figure}{\textit{\small At first, different pure gases are contained in tanks of same density. Then the gases are mixed by removing the walls. Afterwards the walls can be put back in giving the tanks from before but now with mixed gases in them. Does the entropy in the irreversible mixing procedure increase?}}
	\label{Fig:MixingProcess}
\end{center}

By continuity, we thus have to check for all normalized states $w_j \in \Omega_A$ and probability distributions $p_j$, that $s(\sum_j p_j w_j) \ge \sum_j p_j s(w_j)$. Therefore we have argued using a thermodynamic argument, that the entropy must be concave as function.\\
\\
As we now return to the mathematical properties of our entropy, we also return to the convention, that $S(\sum_j p_j w_j) = -\sum_j p_j \ln p_j$ for a classical decomposition $\sum_j p_j w_j$, i.e. we set $k_B =1$ and divide by the number of boxes. Now we show that the entropy is concave: 
\begin{theorem}
	Assume Postulates 1 and 2.\\
	The entropy is concave: Let $w_1,...,w_n \in \Omega_A$ and $p_1,...,p_n$ a probability distribution. Then:
	\begin{equation}
		S\left(\sum_j p_j w_j\right) \ge \sum_j p_j S(w_j)
	\end{equation}
\end{theorem}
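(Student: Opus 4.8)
The plan is to deduce concavity from Klein's inequality, exactly as one does for the von Neumann entropy in quantum theory. The central observation is that the relative entropy $S(\cdot||\cdot)$ already packages the non-negativity that is needed, so no separate convexity estimate has to be proved by hand; the whole argument reduces to one algebraic identity plus the inequality $S(w||v)\ge 0$ established above.

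First I would record the identity $\braket{w,\ln w} = -S(w)$. Writing a classical decomposition $w = \sum_x q_x w_x'$ into a maximal frame, so that $\ln w = \sum_x \ln(q_x) w_x'$, self-duality together with $\braket{w_x', w_y'} = \delta_{xy}$ (pure states satisfy $\braket{w,w}=1$ and perfectly distinguishable ones are orthogonal, by Theorem \ref{Theorem:Selfdual}) gives $\braket{w,\ln w} = \sum_{x,y} q_x \ln(q_y) \delta_{xy} = \sum_x q_x \ln q_x = -S(w)$. This quantity is well-defined because, by the corollary on uniqueness of classical decompositions, $\ln w$ does not depend on the chosen maximal frame.

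Next I would evaluate a convex combination of relative entropies of the $w_j$ against the mixture $w = \sum_j p_j w_j$. Using the definition $S(w_j||w) = -S(w_j) - \braket{w_j, \ln w}$ and linearity of the inner product in its first argument,
\begin{align}
\sum_j p_j\, S(w_j || w) &= -\sum_j p_j S(w_j) - \braket{\textstyle\sum_j p_j w_j,\, \ln w} \nonumber\\
&= -\sum_j p_j S(w_j) - \braket{w, \ln w} = S(w) - \sum_j p_j S(w_j),
\end{align}
where the final step uses the identity from the previous paragraph. Klein's inequality then gives $S(w_j||w)\ge 0$ for every $j$, so the left-hand side is a sum of non-negative terms, whence $S(w) - \sum_j p_j S(w_j) \ge 0$, which is exactly the claim.

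I do not expect a genuine obstacle here, since every ingredient (self-duality, orthonormality of frames via Theorem \ref{Theorem:Selfdual}, well-definedness of $\ln w$, and the non-negativity of the relative entropy) has already been established. The only point that requires mild care is the well-definedness of $\ln w$ and of $\braket{w_j, \ln w}$, which is precisely why the uniqueness corollary for classical decompositions must be invoked explicitly; once that is in place the computation is purely formal, and the same bookkeeping shows that equality holds iff all $w_j$ coincide (the equality case of Klein's inequality), though the statement as given does not require this refinement.
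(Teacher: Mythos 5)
Your proposal is correct and follows essentially the same route as the paper: the paper's proof is precisely the computation $0 \le \sum_j p_j S(w_j\|\sum_k p_k w_k) = -\sum_j p_j S(w_j) + S(\sum_j p_j w_j)$, using linearity of the inner product and the identity $\braket{w,\ln w} = -S(w)$, with Klein's inequality supplying non-negativity. Your explicit verification of $\braket{w,\ln w}=-S(w)$ and of the well-definedness of $\ln w$ is a welcome bit of extra care, but the argument is the same.
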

\begin{proof}
	\begin{align*}
		0 &\le \sum_j p_j S(w_j||\sum_k p_k w_k) = -\sum_j p_j S(w_j) -\sum_j p_j\braket{w_j, \ln \left(\sum_k p_k w_k\right)}\\
		&= -\sum_j p_j S(w_j) -\braket{\sum_j p_j w_j, \ln \left(\sum_k p_k w_k\right)} = -\sum_j p_j S(w_j)+S\left(\sum_j p_j w_j\right)
	\end{align*}
\end{proof}

\section{Information-theoretic and operational considerations about the entropy}
\label{Section:InformationTheory}
\subsection{Motivation, definitions and conventions}
So far, all our considerations about the entropy have been from a thermodynamic point of view. However, as the GPT framework has a very close connection to quantum information theory and its operational way of thinking, it is not surprising that there has already been some work on the entropy using an operational/information-theoretic approach. In \cite{WehnerEntropy} and \cite{BarnumEntropy} (see also \cite{JapanEntropy} and \cite{LogicEntropy}) , two different entropies were defined for very general GPTs. One of them considers measurements, the other the construction of states. In classical and quantum theory, both agree with regular Shannon/von Neumann entropy. But in general GPTs, this is not necessarily in the case. The purpose of this section is to analyse these two entropies in the context of the strong structure provided by our postulates. One of the main results, Theorem \ref{theorem:MeasIsTherm}, namely that the measurement entropy coincides with our spectral definition of the entropy, was found in a collaboration between our group and Howard Barnum\footnote{Originally, we only considered the generalization of the Shannon entropy. But the other R$\acute{\text{e}}$nyi entropies use the exact same proof.}. The proof is a generalization of the proof for quantum theory found in \cite{WehnerEntropy}, Lemma B.1.\\
We will adapt the conventions from \cite{WehnerEntropy} and introduce some basic definitions about fine-graining of measurements, measurement entropy and decomposition entropy used there. Afterwards, we derive our results.\\
As this chapter considers the entropy from an information-theoretic perspective, all entropies in this chapter are defined with $\log := \log_2$ instead of $\ln$ and we omit $k_B$.\\
\\
Let $e_1,...,e_n$ and $f_1,...,f_m$ be two normalized measurements such that there exists a map $M: \{1,...,n\} \rightarrow \{1,...,m\}$ with 
\begin{equation}
	\sum_{\{ j| M(j) = k \}} e_j = f_k \ \ \ \forall k \in \{1,...,m\}
\end{equation}
If $M$ is bijective, then the measurement $\mathbf f$ is simply a \textbf{re-labelling} of $\mathbf e$. If there exists a $k$ with $M(j) \ne k \ \ \forall j$, then because of the normalization of the $\mathbf e$-measurement, $f_k = 0$ i.e. it is a trivial outcome that never happens. If $M$ is not injective, then $\mathbf f$ is a \textbf{coarse-graining} of $\mathbf e$ (or vice versa, $\mathbf e$ a \textbf{refinement} of $\mathbf f$) in the sense that $\mathbf f$ is obtained from $\mathbf e$ by collecting several outcomes of $\mathbf e$ and giving them a common, new outcome-label (and maybe adding the $0$-effect a few times). In that sense, we do not care about which of the $e_j$ triggered the new effect. An example is shown in Figure \ref{Fig:FineGraining}. \\
\begin{center}
	\includegraphics[width= 0.8 \textwidth]{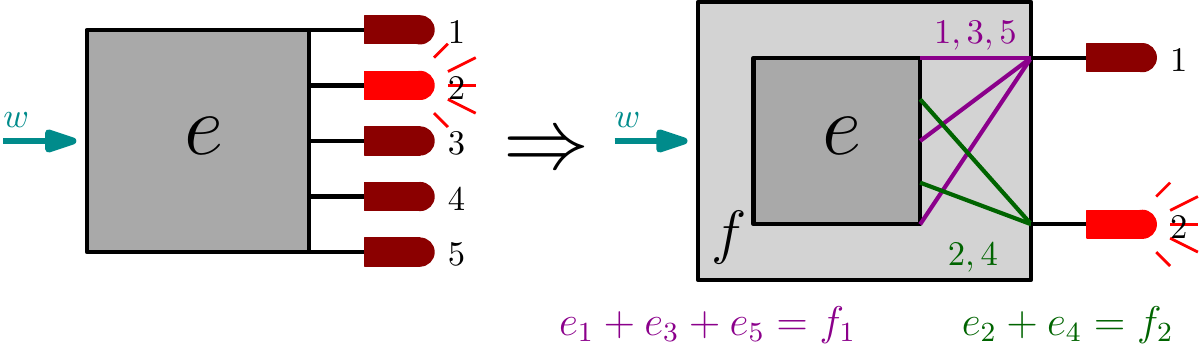}
	\captionof{figure}{\small \textit{This figure shows an example for measurements $\mathbf e$,$\mathbf f$ such that $\mathbf e$ is a refinement of $\mathbf f$. Here, the outcomes $1,3,5$ of $\mathbf e$ each trigger outcome $1$ of $\mathbf f$, while the outcomes $2,4$ of $\mathbf e$ trigger outcome $2$ of $\mathbf f$. Thus a $\mathbf f$-measurement can be constructed from a $\mathbf e$-measurement by checking if we have one of the outcomes $1,3,5$ or one of the outcomes $2,4$.}}
	\label{Fig:FineGraining}
\end{center}
However, there exist \textbf{trivial} refinements/coarse-grainings: for those, $e_j \propto f_{M(j)} \ \ \forall j$. We write $e_j = p_{j} f_{M(j)}$. Then such a measurement can be obtained by performing $\mathbf f$, and if outcome $k$ is triggered, we activate a classical random number generator which generates the final outcome $j$ among the $j$ with $M(j) = k$ with probability 
\begin{equation}
	\frac{p_{j}}{\sum_{\{a | M(a) = k\}} p_a}
\end{equation}
Thus a trivial refinement does not yield any additional information about the GPT-system. We only get additional information about the classical random number generator used at the end. So trivial refinements have no additional advantage in analyzing GPT-systems. And as we want to quantify the information of GPT-sources or the information missing about a GPT-system, we are not interested in any classical random number generator used at the end. An example is shown in Figure \ref{Fig:TrivialFineGraining}.\\
We call a measurement \textbf{fine-grained} if it does not have any non-trivial refinements. We call $\mathcal{E}^*$ the set of fine-grained measurements.\\
\begin{center}
	\includegraphics[width= 0.8 \textwidth]{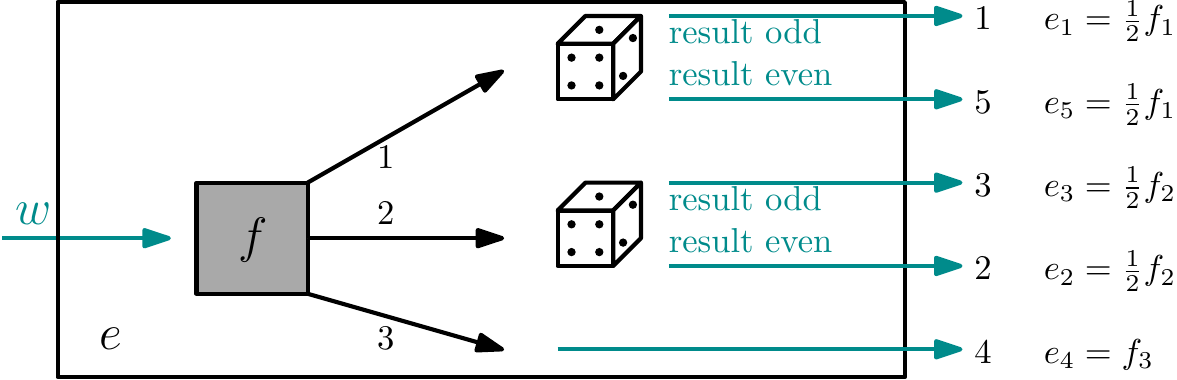}
	\captionof{figure}{\small \textit{This figure shows an example for measurements $\mathbf e$,$\mathbf f$ such that $\mathbf e$ is a trivial refinement of $\mathbf f$. Here, $e_1 = e_5 = \frac 1 2 f_1$, $e_2=e_3=\frac 1 2 f_2$ and $e_4 = f_3$. A $\mathbf e$-measurement can be constructed by using a $\mathbf f$-measurement and by using a classical random number generator afterwards, here a die. Thus the fine-graining from $\mathbf f$ to $\mathbf e$ does not yield any additional information about the GPT-system.} }
	\label{Fig:TrivialFineGraining}
\end{center}
Now we consider the R$\acute{\text{e}}$nyi entropies\cite{Renyi}, which are often used in information theory:\\
For a probability distribution $p=(p_1,p_2,...)$ the \textbf{R$\acute{\text{e}}$nyi entropies} are defined as:
	\begin{equation}
		H_\alpha(\mathbf p) = \frac 1 {1-\alpha} \log\left(\sum_j p_j^\alpha\right)
	\end{equation}
	where $\alpha \in ]0,\infty[$, $\alpha \ne 1$.
	Furthermore,
	\begin{equation}
		H_0(\mathbf p) := \lim_{\alpha \rightarrow 0} H_\alpha(\mathbf p) = \log | \rm{supp}(\mathbf p)| 
	\end{equation}
	with $\rm{supp}(\mathbf p)=\{p_j \ | \ p_j > 0\}$ is called the \textbf{max-entropy} and 
	\begin{equation}
		H_\infty(\mathbf p) := \lim_{\alpha \rightarrow \infty} H_\alpha(\mathbf p) = -\log \max_j p_j
	\end{equation}
	is called the \textbf{min-entropy}.
	Also, 
	\begin{equation}
		H_1(\mathbf p) := \lim_{\alpha \rightarrow 1} H_\alpha(\mathbf p) = -\sum_j p_j \log p_j = H(\mathbf p)
	\end{equation}
	is just the regular Shannon entropy $H$.\\
	For $\alpha \in [0,\infty]$ and GPTs satisfying Postulates 1 and 2, we generalize the classical R$\acute{\text{e}}$nyi entropies:
	\begin{equation}
		H_\alpha(w) = H_\alpha(\mathbf p)
	\end{equation}
	where $w=\sum_j p_j w_j$ is any classical decomposition.\\
	Following \cite{WehnerEntropy}, we introduce the R$\acute{\text{e}}$nyi measurement entropies and R$\acute{\text{e}}$nyi decomposition entropies which can also be used in GPTs which do not satisfy our postulates:\\
	For $\alpha \in [0,\infty]$, we define the \textbf{order-}$\mathbf{\alpha}$ \textbf{R$\acute{\text{e}}$nyi measurement entropy} to be:
	\begin{equation}
		\widehat{H}_\alpha(w) = \inf_{\mathbf e \in \mathcal{E}^*} H_\alpha(e_1(w),e_2(w),...)
	\end{equation}
	where $H_\alpha$ on the right hand side denotes the classical R$\acute{\text{e}}$nyi entropy.
	One uses fine-grained measurements because they yield the most information. Taking the infimum has two advantages: First of all, we eliminate the useless classical information caused by trivial refinements. Secondly, a (fine-grained) measurement with minimal entropy can be used to characterize a system; for example in quantum theory, particles prepared in a state $\ket{\psi}$ which all give the same energy in energy measurements would be said to be in an energy eigenstate. If instead we performed a position measurement, we would have a higher entropy, but the result of such a measurement is not a good characterization of the state (before the measurement).\\ 
	The \textbf{order-}$\mathbf{\alpha}$ \textbf{R$\acute{\text{e}}$nyi decomposition entropy} is defined as:
	\begin{equation}
		\widecheck{H}_\alpha(w) := \inf_{w=\sum_j q_j v_j \text{ convex decomp., } v_j\in \Omega_A \text{ pure}} H_\alpha(\mathbf q)
	\end{equation}
	This definition can be justified as follows: Assume we want to prepare a state $w$ by using states of maximal knowledge (i.e. pure states) $v_j$ and a random number generator, which gives output $j$ with probability $p_j$. Thus for a device, which outputs $v_j$ with probability $p_j$ and is described by $w$, we are interested in the lowest information content/entropy of the random number generator necessary to build such a device.\\
	Like in \cite{WehnerEntropy}, the \textbf{measurement entropy} is defined as 
	\begin{equation} \widehat H (w) := \inf_{\mathbf e \in \mathcal{E}^*} H(\mathbf e (w)) = \inf_{\mathbf e \in \mathcal{E}^*}\left[ -\sum_j e_j(w) \log e_j(w)\right]\end{equation}
	i.e. the measurement entropy is the order-$1$ R$\acute{\text{e}}$nyi measurement entropy. Similarly, the \textbf{decomposition entropy} is defined as
		\begin{equation}
		\widecheck{H}(w) := \inf_{w=\sum_j q_j v_j \text{ convex decomp., } v_j\in \Omega_A \text{ pure}} H(\mathbf q) 
	\end{equation}
	and coincides with the order-$1$ R$\acute{\text{e}}$nyi decomposition entropy.
\\
The order-0 R$\acute{\text{e}}$nyi entropy and the order-2 R$\acute{\text{e}}$nyi entropy will have a special role in our discussion, so we will motivate why they are interesting from an information-theoretic point of view (see also Figure \ref{Fig:Renyi}):\\ 
\\
At first we focus on the max-entropy, i.e. the order-0 R$\acute{\text{e}}$nyi entropy, i.e. $H_0(w) = \log | \rm{supp}(\mathbf p) |$ for $w =\sum_j p_j w_j$ a classical decomposition:\\
A preparation device randomly generates one of the states $w_j$ with probability $p_j$. Wlog, we assume the number of such $w_j$ with $p_j \ne 0$ is a power of 2. We want to ask yes-no-questions to determine which state is prepared. Then there exists a strategy which needs exactly $\log |\rm{supp}(\mathbf p)|$ questions, no matter what the state is and how unlikely it is\footnote{If the number of $w_j$ with $p_j >0$ is not a power of 2, we can still apply the same strategy by adding some probability-zero states until we reach a power of two. Then we find the true state in $\lceil \log |\rm{supp}(\mathbf p)| \rceil$ steps, and we will need more than $\lfloor \log |\rm{supp}(\mathbf p)| \rfloor$ steps. In that sense, $\log |\rm{supp}(\mathbf p)|$ can be seen as a continuous interpolation if the number of relevant $w_j$ is not a power of 2.}. No better strategy can be found which guarantees to need less steps, no matter how unlikely the state is: We consider only the possible states, i.e. those with $p_j \ne 0$. There are $|\rm{supp}(\mathbf p)|$ of those. We split these states into two groups of same size, and ask whether the state is in the first group. By that we eliminate one half of the states. Afterwards by the same procedure, we eliminate one half of the remaining states, and so on until only one remains. Thus we use $\log_2 |\rm{supp}(\mathbf p)|$ steps. Assume we ask another question, which might split the states into other fractions $q, 1-q$. Then we might be unlucky and the state is in the larger set, thus we have eliminated less states than with the strategy described before. As we want a guarantee for the maximal number of steps needed, no matter how unlucky we are or unlikely the (possible) state is, $\log |\rm{supp}(\mathbf p)|$ is the minimum number of steps we can guarantee.\\ 
\\
Next, we explain the relevance of the order-2 R$\acute{\text{e}}$nyi entropy, i.e. $H_2(w) = - \log(\sum_j p_j^2)$ for $w=\sum_j p_j w_j$ any classical decomposition. Consider two independent sources which are described by $w$ in the sense that they prepare $w_j$ with probability $p_j$. Then $\sum_j p_j^2$ is the probability that both sources have prepared the same state. Thus the order-2 R$\acute{\text{e}}$nyi entropy is also called collision entropy, ``collision'' meaning that the output of both independent sources is the same. 
\begin{center}
	\includegraphics[width= 0.8 \textwidth]{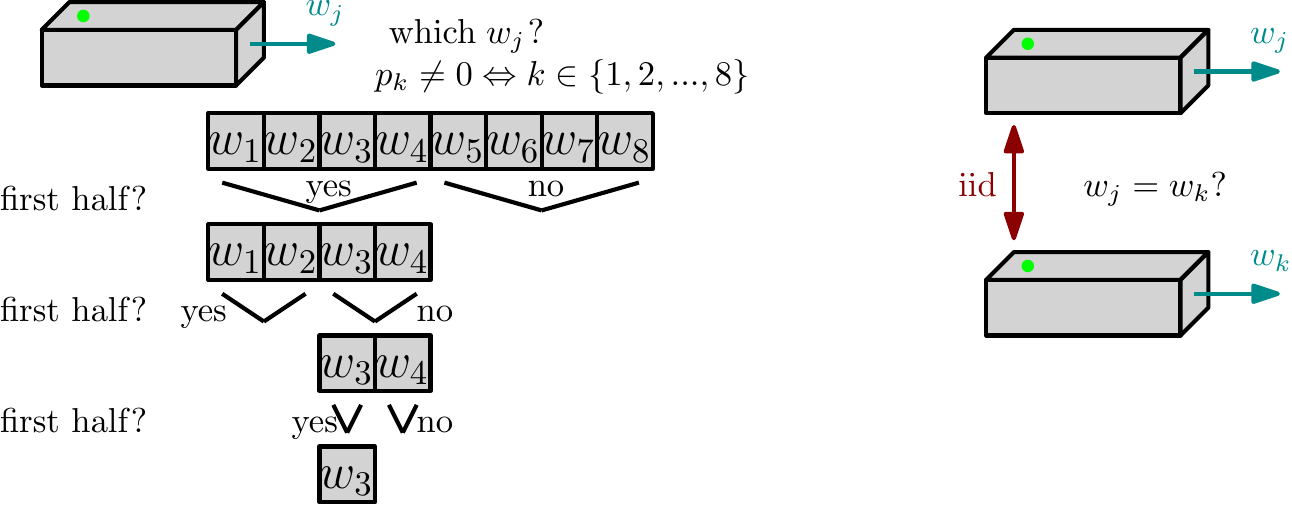}
	\captionof{figure}{\small \textit{The left part of this figure visualizes the argument used for the max-entropy for 8 states. No matter how unlikely the true state is, we only need 3 questions to find it. The right part visualizes the ``collision'' of the collision entropy.}}
	\label{Fig:Renyi}
\end{center}

\subsection{Results}
\begin{lem}
	Consider a GPT which satisfies the postulates of classical decomposability and strong symmetry from \cite{Postulates}. Let $(e_1,...,e_n)$ be a fine-grained measurement.\\% with $e_j \ne 0 \ \ \forall  j$.\\
	 Then $e_j = c_j\braket{w_j,\cdot}$ with $c_j \in [0,1]$ and $w_j$ normalized and pure. 
	 \label{Lemma:FineGrainedEffect}
\end{lem}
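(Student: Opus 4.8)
The plan is to work entirely in the self-dual picture supplied by Theorem \ref{Theorem:Selfdual}. Since Postulates 1 and 2 hold, there is an inner product identifying $A$ with $A^*$ and $A_+$ with the dual cone $A_+^*$, so every effect is of the form $\braket{a,\cdot}$ for some $a\in A_+$. First I would apply this to each $e_j$: write $e_j=\braket{a_j,\cdot}$ with $a_j\in A_+$, and since $A_+=\mathbb R_{\ge 0}\cdot\Omega_A$, factor $a_j=c_j w_j$ with $w_j\in\Omega_A$ normalized and $c_j\ge 0$ (if $e_j=0$ simply take $c_j=0$ and $w_j$ any pure state). Evaluating on $w_j$ and using property 2 of the inner product, $c_j\braket{w_j,w_j}=e_j(w_j)\in[0,1]$ together with $\braket{w_j,w_j}\le 1$, already yields $c_j\in[0,1]$. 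Everything except purity of $w_j$ is then immediate, so the real content is to show that $w_j$ may be chosen pure.

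For that I would argue by contradiction, assuming some $w_j$ is mixed. By classical decomposability (Postulate 1) there is a decomposition $w_j=\sum_k q_k v_k$ into perfectly distinguishable pure states $v_k$ with $q_k>0$, and this decomposition must contain at least two distinct terms, since a single term would make $w_j$ pure. Expanding $e_j$ by linearity of the inner product gives $e_j=\sum_k f_k$ with $f_k:=c_j q_k\braket{v_k,\cdot}$. The idea is that this exhibits a refinement of the given measurement: replacing the single outcome $j$ by the collection of outcomes $f_k$ leaves the total sum $\sum_i e_i=u_A$ unchanged, so it is again a valid measurement and refines the original one in the sense of the fine-graining definition.

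It remains to check that each $f_k$ is a legitimate effect and that the refinement is non-trivial. For validity, $f_k(w)\ge 0$ for $w\in\Omega_A$ because $c_j,q_k\ge 0$ and $\braket{v_k,w}\ge 0$, while $f_k(w)\le\sum_{k'}f_{k'}(w)=e_j(w)\le 1$ since all summands are non-negative; hence $0\le f_k\le u_A$, and by the no-restriction property (all effects are allowed under Postulates 1 and 2) these $f_k$ are physically allowed. For non-triviality, recall that a refinement is trivial exactly when each $f_k$ is proportional to $e_j$. Here $f_k\propto e_j$ would force $\braket{v_k,\cdot}\propto\braket{w_j,\cdot}$, i.e.\ $v_k=w_j$ (both being normalized, with $v_k$ pure). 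But perfectly distinguishable pure states are orthogonal by property 4, hence pairwise distinct, so at most one $v_k$ can coincide with $w_j$; as there are at least two terms, some $v_k\ne w_j$, and the corresponding $f_k$ is not proportional to $e_j$. This produces a non-trivial refinement of $(e_1,\dots,e_n)$, contradicting the assumption that the measurement is fine-grained. Therefore $w_j$ is pure, and $e_j=c_j\braket{w_j,\cdot}$ with $c_j\in[0,1]$ and $w_j$ normalized and pure, as claimed.

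I expect the delicate step to be the non-triviality argument: one must pin down the precise meaning of a \emph{trivial} refinement and translate $f_k\propto e_j$ into the geometric statement $v_k=w_j$, using purity and normalization. The self-duality identification and the verification that the $f_k$ are valid effects are comparatively routine once Theorem \ref{Theorem:Selfdual} and the no-restriction property are invoked.
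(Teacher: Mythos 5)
Your proof is correct and follows essentially the same route as the paper's: self-duality to write $e_j=c_j\braket{w_j,\cdot}$, then a contradiction argument in which a classical decomposition of a hypothetically mixed $w_j$ produces a non-trivial fine-graining of the measurement. One small ordering slip: the bound $c_j\le 1$ does \emph{not} follow from $c_j\braket{w_j,w_j}\le 1$ together with $\braket{w_j,w_j}\le 1$ (that inequality points the wrong way, giving only $c_j\le 1/\braket{w_j,w_j}$); it becomes available only after purity is established, since then $\braket{w_j,w_j}=1$ and $c_j=e_j(w_j)\le 1$ — which is exactly where the paper places this step.
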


\begin{proof}
	Let $j \in \{1,...,n\}$ be arbitrary. If $e_j = 0$, just choose $c_j=0$ and any pure state to be $w_j$.\\ Otherwise, because of the self-duality, there is a $w' \in A_+$ such that $\braket{w',\cdot}  = e_j$. As $w' \ne 0$ especially $u_A(w') \ne 0$. We can write $w' = c w$ with $w\in \Omega_A$, $c \in \mathbb R_{>0}$.\\
	 Now assume that $w$ is not pure. Then it has a non-trivial classical decomposition $w = \sum_{k=0}^N p_k v_k$, wlog $p_k \in ]0,1]$, $v_k$ pure, perfectly distinguishable and none of them equal to $w$. Especially $N \ge 1$, otherwise we had $w= v_0$ pure in contradiction to our assumption. Then $e_j = \sum_{k=0}^N c \cdot p_k \braket{v_k,\cdot}$. Wlog, we only consider the case $j = n$ (the other cases are included by relabelling such that the effect in consideration is called $e_n$), i.e. we have $e_n = \sum_{k=0}^N c \cdot p_k \braket{v_k,\cdot}$.\\
	 We define the measurement $e'_1 := e_1,...,e'_{n-1} := e_{n-1}$ and $e'_{n+i} := c p_i \braket{v_i,\cdot}$ for all $i \in \{0,...,N\}$. $\braket{v_i,\cdot} \ge 0$ on all states and thus $0\le c p_i \braket{v_i,\cdot} \le e_n \le u_A$ on all states which means the $e'_{n+i}$ are indeed effects. Especially $\sum_{i=0}^N e'_{n+i} = e_n$. We find $\sum_{k=1}^{n+N} e'_k = \sum_{k=1}^{n-1} e_k + \sum_{i=0}^N c p_i \braket{v_i,\cdot} = \sum_{k=1}^{n} e_k = u_A$, i.e. the $e'_1,...,e'_{n+N}$ form a measurement.\\
	 We define $M : \{1,...,n+N\} \to \{1,...,n\}$ by $M(i) := i$ for $1\le i \le n-1$, $M(i) := n$ for $i \ge n$. Then   
	\begin{align}
		&\sum_{\{j | M(j) = i\}} e'_j = e_i &\text{ for } i < n\\
		&\sum_{\{j | M(j) = i\}} e_j' = \sum_{j=n}^{n+N} e_j' = e_n  &\text{ for } i = n
	\end{align}
	Thus $(e'_1,...,e'_{n+N})$ is a fine-graining of $(e_1,...,e_n)$. It is non-trivial, because $e'_n = c p_0\braket{v_0,\cdot}$ is not proportional to  $e_n = \sum_{k=0}^N c \cdot p_k \braket{v_k,\cdot}$ (just check with $v_0$ and $v_1$ to see that this is true). This is in contradiction to the requirement that $(e_1,...,e_n)$ is a fine-grained measurement. Thus $w$ is pure. As $c = c \braket{w,w} =\braket{w',w}  = e_j(w) \le 1$, we have $c\in ]0,1]$.
\end{proof}

\vspace{5mm}

\begin{lem}
	Consider a GPT with classical decomposability and strong symmetry. $w\in \Omega_A$ arbitrary, $w= \sum_{j=1}^d p_j w_j$ a classical decomposition into a maximal frame\footnote{This can always be obtained by extending the frame of a classical decomposition to a maximal frame and adding the new elements with coefficients 0 to the classical decomposition. If one uses a smaller frame $w_1,...,w_n$ one can also use a (fine-grained) measurement $e_1,...,e_d$, which distinguishes a maximal frame-extension $w_1,...,w_d$, to distinguish the $w_1,..,w_n$. The effects $e_{n+1},...,e_d$ will always give 0. Especially, the Shannon entropy of the measurement probabilities is still $-\sum_{j=1}^d e_j(w)\log e_j(w) = -\sum_{j=1}^n p_j\log p_j$, which will be important for Theorem \ref{theorem:MeasIsTherm}. }. Then the measurement which perfectly distinguishes the $w_j$ (i.e. $e_k(w_j) = \delta_{jk}$) can be chosen to be fine-grained.
	\label{Lemma:FineGrainedFrame}
\end{lem}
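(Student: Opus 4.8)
The plan is to exhibit the canonical distinguishing measurement and show directly that it admits no non-trivial refinement. Since $w_1,\dots,w_d$ is a \emph{maximal} frame, self-duality (Theorem \ref{Theorem:Selfdual}) gives $\braket{w_j,w_k}=\delta_{jk}$ (perfectly distinguishable pure states are orthonormal), and Proposition \ref{Prop:FramesAddUp} gives $\sum_{k=1}^d\braket{w_k,\cdot}=u_A$. Hence I would set $e_k:=\braket{w_k,\cdot}$; this is a normalized measurement with $e_k(w_j)=\delta_{jk}$, so it perfectly distinguishes the frame. It then remains to prove that $\mathbf{e}=(e_1,\dots,e_d)$ is fine-grained.

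So suppose $\mathbf{e}'=(e'_1,\dots,e'_m)$ is any refinement of $\mathbf{e}$, with map $M:\{1,\dots,m\}\to\{1,\dots,d\}$ satisfying $\sum_{\{i\mid M(i)=k\}}e'_i=e_k$ for every $k$. Fix $i$ and write $k:=M(i)$. Since every $e'_{i'}$ is an effect (hence non-negative on $A_+$) and they sum to $e_k$, I get $0\le e'_i\le e_k=\braket{w_k,\cdot}$ on all of $A_+$. By self-duality the dual cone equals $A_+$, so $e'_i\ge 0$ yields $e'_i=\braket{v_i,\cdot}$ for some $v_i\in A_+$, and $e_k-e'_i\ge 0$ yields $w_k-v_i\in A_+$ in the same way.

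The heart of the argument is then the purity of $w_k$. If $v_i=0$ then $e'_i=0\propto e_k$. Otherwise $u_A(v_i)>0$, and if also $w_k-v_i\ne 0$ I can write $w_k$ as the convex combination $w_k=u_A(v_i)\,\hat v_i+u_A(w_k-v_i)\,\hat r_i$ of the normalized states $\hat v_i:=v_i/u_A(v_i)$ and $\hat r_i:=(w_k-v_i)/u_A(w_k-v_i)$, where the coefficients are positive and sum to $u_A(w_k)=1$. Because $w_k$ is an extreme point of $\Omega_A$, the corollary that an extreme point admits only the trivial convex decomposition forces $\hat v_i=w_k$, i.e.\ $v_i=u_A(v_i)\,w_k$ and therefore $e'_i=u_A(v_i)\,e_k$. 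The remaining degenerate case $w_k-v_i=0$ gives $v_i=w_k$, hence $e'_i=e_k$, again proportional. In every case $e'_i\propto e_{M(i)}$, so the refinement is trivial and $\mathbf{e}$ is fine-grained. I expect the main obstacle to be the careful bookkeeping in this last step: one must invoke self-duality correctly to pass from the effect inequality $0\le e'_i\le e_k$ to the cone memberships $v_i,\,w_k-v_i\in A_+$, and then isolate the degenerate normalizations so that the extreme-point corollary applies with strictly positive weights.
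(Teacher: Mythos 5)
Your proposal is correct and follows essentially the same route as the paper's proof: take $e_k=\braket{w_k,\cdot}$, use self-duality to represent the refined effects by cone elements, and invoke the extremality of the pure state $w_k$ to force each refined effect to be proportional to $e_k$. The only cosmetic difference is that you split off one refined effect at a time against its complement $e_k-e'_i$ (a two-term convex decomposition of $w_k$), whereas the paper writes the whole group $\sum_{\{i\mid M(i)=k\}}e'_i=e_k$ as a single convex combination of normalized states and applies the extreme-point corollary once.
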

\begin{proof}
	We consider $e_j := \braket{w_j,\cdot}$. As maximal frames add up to the order unit, they form a measurement. Now assume there is a fine-graining $e_k'$, $\sum_{\{j | M(j) = k\}} e_j' = e_k$. Using self-duality: $\sum_{\{j | M(j) = k\}} c_j' \braket{w_j',\cdot} = \braket{w_k,\cdot}$ where $c_j' \braket{w_j',\cdot} = e_j'$ with $w_j'$ normalized and $c_j' \ge 0$. So $\sum_{\{j | M(j) = k\}} c_j' w_j' = w_k$. Especially, $\sum_{\{j | M(j) = k\}} c_j'=\sum_{\{j | M(j) = k\}} c_j' u_A(w_j') = u_A(w_k) = 1$. Thus the $c_j'$ with $M(j)=k$ form a probability distribution and $\sum_{\{j | M(j) = k\}} c_j' w_j' = w_k$ is a convex decomposition of a pure state. This requires either $c_j' = 0$ or $w_j' = w_k$. In the first case $e_j' = 0 \propto e_k$, in the second case $e_j' = c_j' e_k \propto e_k$. Thus the fine-graining is trivial.
\end{proof}

\vspace{5mm}

\begin{lem}
	Let $\mathbf e=(e_1,...,e_N)\in \mathcal{E}^*$ be a fine-grained measurement in a theory fulfilling Postulates 1 and 2. Let $w\in \Omega_A$ be a state with classical decomposition $w=\sum_j p_j w_j$. Let $\mathbf{q} := (e_j(w))_j$ be the vector of outcome probabilities. Furthermore, $ \mathbf{p} := (p_j)_j$.\\
Then $d\le N$ where $d$ is the maximal frame size, i.e. the dimension (sometimes denoted $N_A$).\\
Furthermore, if $\mathbf{p'} := ( \mathbf{p},0,...,0)$ is an extension of $\mathbf{p}$ to an $N$-dimensional vector by adding zeroes (which is always possible because of $d \le N$), then $\mathbf q \prec \mathbf p'$, i.e. there is a bistochastic $N\times N$-matrix $M$ such that $\mathbf q = M \mathbf{p'}$. \label{Lemma:M}
\end{lem}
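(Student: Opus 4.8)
The plan is to build the doubly stochastic matrix explicitly from the self-dualizing inner product and then invoke the Hardy--Littlewood--Polya characterization of majorization already used in the consistency theorem. First I would apply Lemma \ref{Lemma:FineGrainedEffect} to write each fine-grained effect as $e_i = c_i \braket{v_i,\cdot}$ with $c_i \in [0,1]$ and $v_i$ pure and normalized. Extending the classical decomposition $w = \sum_{j=1}^d p_j w_j$ to a maximal frame, so that it has exactly $d = N_A$ terms (padding with zero coefficients if needed), I would set $M_{ij} := c_i \braket{v_i,w_j}$, which by self-duality and linearity gives $q_i = e_i(w) = \sum_{j=1}^d M_{ij} p_j$, i.e. $\mathbf q = M\mathbf p$ with $M$ an $N \times d$ matrix.

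Next I would read off the margins of $M$. Non-negativity of $M_{ij}$ follows from $c_i \ge 0$ together with property 2 of Theorem \ref{Theorem:Selfdual}. Because a maximal frame sums to the order unit (Proposition \ref{Prop:FramesAddUp}), the $i$-th row sum is $\sum_j c_i \braket{v_i,w_j} = c_i\,u_A(v_i) = c_i$, and because the $e_i$ form a measurement the $j$-th column sum is $\sum_i e_i(w_j) = u_A(w_j) = 1$. Summing all entries of $M$ in two ways then yields $d = \sum_{ij} M_{ij} = \sum_i c_i \le N$, which already establishes the first claim $d \le N$.

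For the majorization statement I would pad $\mathbf p$ to the $N$-vector $\mathbf{p'}$ and enlarge $M$ to a genuine $N\times N$ doubly stochastic matrix $\tilde M$. Its first $d$ columns are those of $M$ (column sums already equal to $1$), and I would append $N-d$ further columns forming a non-negative block $B$ chosen so that every appended column sums to $1$ while row $i$ receives exactly the deficit $1 - c_i \ge 0$. Since $\sum_i(1-c_i) = N - d = (N-d)\cdot 1$, the prescribed row margins $(1-c_i)_i$ and column margins $(1,\dots,1)$ have equal totals, so such a block exists (the corresponding transportation polytope is non-empty; concretely the northwest-corner rule produces one). Multiplying $\tilde M$ by $\mathbf{p'}$ reproduces $\mathbf q$ because the appended zeros annihilate the new columns, and $\tilde M$ is doubly stochastic, whence $\mathbf q \prec \mathbf{p'}$ by the Hardy--Littlewood--Polya theorem (Lemma I.2.B.2 of \cite{Majorization}).

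The main obstacle is precisely this completion step: one must be sure that an honest doubly stochastic $N\times N$ matrix realizing $\mathbf q = \tilde M \mathbf{p'}$ exists, rather than only the rectangular column-stochastic $M$ obtained directly from the effects. The identity $\sum_i c_i = d$ is what makes this routine, since it forces the row deficits of $\tilde M$ and the demands of the appended columns to balance exactly, guaranteeing a non-negative completion.
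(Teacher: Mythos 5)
Your proposal is correct and follows essentially the same route as the paper: both invoke Lemma \ref{Lemma:FineGrainedEffect} to write the effects as $c_i\braket{v_i,\cdot}$, form the matrix $e_i(w_j)$ against a maximal-frame decomposition, read off the row sums $c_i$ and column sums $1$, deduce $\sum_i c_i = d \le N$, and then pad to an $N\times N$ doubly stochastic matrix. The only (immaterial) difference is that where you argue existence of the completing block via the transportation polytope, the paper simply writes it down explicitly as the uniform block $M_{l,j} = \frac{1-c_l}{N-d}$ for $j>d$.
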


\begin{proof}
		Let $(e_1,...,e_N)$ be a fine-grained measurement. By our Lemma \ref{Lemma:FineGrainedEffect}, all effects are of the form $e_j = c_j \braket{w_j',\cdot}$ with $c_j \in [0,1]$ and $w_j'$ normalized and pure. Furthermore $\sum_{j=1}^N e_j = u_A$ as this is a measurement. Let $d$ denote the maximal frame size.\\
We define $q_l := e_l(w) = c_l \braket{w_l' , w}$. Now let us proof $\sum_{j=1}^N c_j = d$:\\
	As maximal frames add up to the order unit: $\sum_{j=1}^d v_j = u_A$ for all maximal frames $v_1$,...,$v_d$. So 
	\begin{align}
		\sum_{j=1}^N c_j &= \sum_{j=1}^N c_j u_A(w'_j) = \sum_{j=1}^N c_j \braket{w'_j, u_A} = \sum_{j=1}^N  e_j(u_A) = u_A(u_A) = \braket{u_A,u_A} \\
		&= \sum_{j,k = 1}^d \braket{v_j,v_k} = \sum_{j,k = 1}^d \delta_{jk} = d
	\end{align}
	Especially, as $c_j \le 1$, this implies $d\le N$.\\
	Now consider a classical decomposition of $w$ and extend it to a maximal frame decomposition $w=\sum_{j=1}^d p_j w_j$ by adding zeroes. Note that adding zeroes will not change $\mathbf{p'}$. Define $q_{l|j} := e_l(w_j)$.
	\begin{align}
		\sum_{j=1}^d q_{l|j} p_j &= \sum_{j=1}^d e_l(p_j w_j) = e_l(w) = q_l\\
		\sum_{j=1}^d q_{l|j} &= \sum_{j=1}^d e_l(w_j) = c_l \sum_{j=1}^d \braket{w'_l,w_j} = c_l u_A(w_l') = c_l\\
		\sum_{l=1}^N q_{l|j} &= \sum_{l=1}^N e_l(w_j) = u_A(w_j) = 1
	\end{align}
	Once more, we used that maximal frames add up to the order unit and that measurement effects sum up to the order unit.\\
	We extend $\mathbf p=(p_1,...,p_d)$ to a $N$-component vector $\mathbf {p'} := (p_1,...,p_d,0,...,0)$. Furthermore we define $M_{l,j} := q_{l|j}$ for $j \le d$ and $M_{l,j} := \frac{1-c_l}{N-d}$ for $j > d$ (the latter case can only happen if $d < N$). $M$ is a bistochastic $N \times N$-matrix:\\
	\begin{align}
		\sum_{l=1}^N M_{l,j} &= \sum_{l=1}^N q_{l|j} = 1 \text { for } j\le d\\
		\sum_{l=1}^N M_{l,j} &= \sum_{l=1}^N \frac{1-c_l}{N-d} = \frac{N-d}{N-d} = 1 \text{ for } j>d\\
		\sum_{j=1}^N M_{l,j} &= \sum_{j=1}^d q_{l|j} + (N-d) \frac{1-c_l}{N-d} = c_l +1-c_l = 1
	\end{align}
	Furthermore $M_{l,j}\ge 0$ as $q_{l|j} = e_l(w_j)\ge 0$ and $\frac{1-c_l}{N-d} \ge 0$ by $c_l \le 1$, $d < N$ (for $N=d$, there is no case $j>d$).
	$M$ has the important property that $\mathbf q = M \cdot \mathbf{p'}$: 
	\begin{equation}
		\sum_{j=1}^N M_{l,j} p_j' =  \sum_{j=1}^d M_{l,j} p_j =  \sum_{j=1}^d q_{l|j} p_j = q_l 
	\end{equation}
	Thus $M$ is bistochastic and by Birkhoff's theorem, $M$ is a statistical mixture of permutations: $M = \sum_{\sigma \in S_N} P_\sigma \sigma$.
\end{proof}

\vspace{5mm}

\begin{theorem}
	Consider a GPT which satisfies classical decomposability and strong symmetry. Then the R$\acute{\text{e}}$nyi entropies and the R$\acute{\text{e}}$nyi measurement entropies coincide, i.e.
	\begin{equation}
		H_j(w) = \widehat{H}_j(w) \ \forall w \in \Omega_A,\ j \in [0,\infty]
	\end{equation}
	In particular the measurement entropy is the same as the thermodynamic/spectral entropy\footnote{The proofs of Theorem \ref{theorem:MeasIsTherm} for the measurement entropy together with Lemma\ref{Lemma:M} where found in a collaboration with Howard Barnum and are an adaptation of the quantum proof found in \cite{WehnerEntropy}, Lemma B.1}, which for a state $w$ with classical decomposition $w = \sum_j p_j w_j$ is defined as $S(w) = -\sum_j p_j \log p_j$, i.e. we have  $\widehat H (w) = S(w) \ \ \forall w\in \Omega_A$. 
	\label{theorem:MeasIsTherm}
\end{theorem}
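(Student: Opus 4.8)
The plan is to establish the two inequalities $\widehat{H}_\alpha(w) \le H_\alpha(w)$ and $\widehat{H}_\alpha(w) \ge H_\alpha(w)$ separately, using the two preceding lemmas together with the fact that every classical R\'enyi entropy is Schur-concave. Throughout, fix a classical decomposition $w = \sum_j p_j w_j$ into a maximal frame, so that $H_\alpha(w) = H_\alpha(\mathbf p)$ by definition, and write $\mathbf p' $ for $\mathbf p$ padded with zeros to length $N$; note that appending zeros leaves every R\'enyi entropy unchanged, so $H_\alpha(\mathbf p') = H_\alpha(\mathbf p) = H_\alpha(w)$.

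For the upper bound I would exhibit a single fine-grained measurement whose outcome distribution reproduces $\mathbf p$. By Lemma \ref{Lemma:FineGrainedFrame}, the measurement $e_k = \braket{w_k,\cdot}$ that perfectly distinguishes the frame ($e_k(w_j) = \delta_{jk}$) can be taken fine-grained, and its outcome probabilities on $w$ are exactly $e_k(w) = p_k$. Hence $H_\alpha(e_1(w), e_2(w), \ldots) = H_\alpha(\mathbf p) = H_\alpha(w)$, and since $\widehat{H}_\alpha$ is an infimum over $\mathcal{E}^*$, this gives $\widehat{H}_\alpha(w) \le H_\alpha(w)$ at once.

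For the lower bound I would show that no fine-grained measurement can do better. Let $\mathbf e \in \mathcal{E}^*$ be arbitrary and let $\mathbf q = (e_j(w))_j$ be its outcome distribution. By Lemma \ref{Lemma:M} we have $\mathbf q \prec \mathbf p'$, and Schur-concavity of $H_\alpha$ then yields $H_\alpha(\mathbf q) \ge H_\alpha(\mathbf p') = H_\alpha(w)$; taking the infimum over $\mathbf e$ gives $\widehat{H}_\alpha(w) \ge H_\alpha(w)$. Combining the two parts proves $\widehat{H}_\alpha(w) = H_\alpha(w)$ for all $\alpha \in [0,\infty]$, and the special case $\alpha = 1$ is precisely $\widehat H(w) = S(w)$.

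The only point needing care is the Schur-concavity of $H_\alpha$ across the whole range $\alpha \in [0,\infty]$, which I would verify case by case: for $\alpha = 0$ majorization cannot decrease the support size, for $\alpha = \infty$ it cannot increase the maximal probability $\max_j p_j$, for $\alpha = 1$ it is the familiar Schur-concavity of the Shannon entropy, and for $0 < \alpha < 1$ (resp. $\alpha > 1$) it follows from $x \mapsto x^\alpha$ being concave (resp. convex), so that $\sum_j p_j^\alpha$ is Schur-concave (resp. Schur-convex) and the prefactor $\frac{1}{1-\alpha}$ carries the matching sign. This bookkeeping is routine; the genuine content of the theorem is carried entirely by Lemmas \ref{Lemma:FineGrainedFrame} and \ref{Lemma:M}, so I do not expect any real obstacle beyond invoking them correctly and handling the boundary values of $\alpha$.
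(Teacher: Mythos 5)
Your proof is correct and follows essentially the same route as the paper's: the lower bound via Lemma \ref{Lemma:M} and Schur-concavity of the classical R\'enyi entropies, and the upper bound by exhibiting the fine-grained frame-distinguishing measurement from Lemma \ref{Lemma:FineGrainedFrame}. The only difference is cosmetic — you spell out the two inequalities and the case-by-case Schur-concavity check that the paper merely asserts.
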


\begin{proof}
Once more for any fine-grained measurement $e_1,...,e_N$, we set $q_l := e_l(w)$, $\mathbf{p'} =(\mathbf{p},0,...,0)$ and let $M=\sum_{\sigma \in S_n} P_\sigma \sigma$ be the bistochastic $N\times N$ matrix with $\mathbf q = M \cdot \mathbf{p'}$, compare Lemma \ref{Lemma:M}.
 As the Shannon entropy is Schur-concave, we find 
	\begin{equation}
		H(\mathbf q) \ge \sum_{\sigma \in S_N} P_\sigma H(\sigma(\mathbf p')) = \sum_{\sigma \in S_N} P_\sigma H(\mathbf p') = H(\mathbf p') = H(\mathbf p) = - \sum_{j=1}^d p_j \log p_j = S(w)
	\end{equation} 
	Note that $H(\mathbf p) = - \sum_{j=1}^d p_j \log p_j = S(w)$ is the result of a measurement that distinguishes the states $w_j$: $e_k(w_j) = \delta_{jk}$. By Lemma \ref{Lemma:FineGrainedFrame} such a measurement can indeed be chosen to be fine-grained.\\
	For the other R$\acute{\text{e}}$nyi entropies $H_\alpha$, the same argument can be used because their classical counterpart is also Schur-concave.
\end{proof}

\vspace{5mm}

In the context of Postulates 1 and 2, Postulate 3 is equivalent to the \textbf{covering property} (see \cite{Postulates}):
\begin{defi}
	A GPT satisfies the \textbf{covering property iff:}\\ For every face $F$ and an atom $\hat{a}$ (i.e. the face given by a single pure state), the smallest face $F \vee \hat{a}$ containing $F$ and $\hat{a}$ is either $F$ itself or it covers $F$, i.e. $F\vee \hat{a}$ is a larger face than $F$ and there is no other face in between them.\\ In our context, covering property thus means:
$F$ a face of $A_+$, $w$ pure then the face $G$ generated by $F$ and $w$ has rank $|G| \le |F|+1$.\\	
\end{defi}

\begin{lem}
	Consider a GPT satisfying Postulates 1 and 2, and the covering property.\\
	Let $w_1,...w_n$ be pure states. Then the face $F$ generated by them satisfies $|F|\le n$. In particular, for every state in $F$, there is a classical decomposition using at most $n$ pure states.
	\label{Lemma:Covering}
\end{lem}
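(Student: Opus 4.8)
The plan is to establish the bound $|F| \le n$ by induction on $n$, using the covering property as the engine of the inductive step, and then to read off the decomposition statement from the frame structure of faces. Throughout I work with faces of $A_+$ (equivalently of $\Omega_A$, via the bijective face correspondence, which preserves rank), so that the covering property applies in the exact form stated: for a face $F$ and a pure state $w$, the face generated by $F$ and $w$ has rank at most $|F|+1$.

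For the base case $n=1$, the face generated by the single pure state $w_1$ is the atom $\mathbb R_{\ge 0}\cdot\{w_1\}$, which is generated by the one-element frame $w_1$ and hence has rank $1=n$. For the inductive step I assume the claim for $n-1$ pure states. Given $w_1,\dots,w_n$, let $F'$ denote the face generated by $w_1,\dots,w_{n-1}$, so that $|F'|\le n-1$ by the induction hypothesis. The crucial observation I would verify is that the face $F$ generated by $w_1,\dots,w_n$ equals the face generated by $F'$ together with $w_n$, i.e. $F=F'\vee\hat w_n$. Indeed, $F$ is a face containing $w_1,\dots,w_{n-1}$, hence contains the minimal such face $F'$, and it also contains $w_n$, so $F\supseteq F'\vee\hat w_n$; conversely $F'\vee\hat w_n$ contains $w_1,\dots,w_n$ and therefore contains the minimal face $F$, giving $F\subseteq F'\vee\hat w_n$. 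With this identification in hand, the covering property yields immediately
\begin{equation}
	|F| = |F'\vee\hat w_n| \le |F'|+1 \le (n-1)+1 = n,
\end{equation}
completing the induction.

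For the ``in particular'' claim, let $w\in F\cap\Omega_A$ be arbitrary. By classical decomposability (Postulate 1) there is a classical decomposition $w=\sum_j p_j v_j$ with $p_j>0$ and $v_j$ perfectly distinguishable pure states, i.e. a frame. Since $w\in F$ and $F$ is a face, the lemma that a convex combination lying in a face forces each of its constituents into the face gives $v_j\in F$ for every $j$. Thus $\{v_j\}$ is a frame contained in $F$, and by the proposition that any frame contained in $F$ extends to a generating frame of $F$ (of size $|F|$), such a frame has size at most $|F|$. Hence the decomposition uses at most $|F|\le n$ pure states. I expect the only genuinely non-mechanical point to be the identification $F=F'\vee\hat w_n$; it is purely formal, resting on the characterisation of the generated face as the \emph{minimal} face containing a given set, but it is what lets the covering property be applied cleanly, and once it is in place the rest is routine.
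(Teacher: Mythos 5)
Your proof is correct and follows essentially the same route as the paper: induction on $n$ with the covering property applied to $F'\vee\hat w_n$ in the inductive step. You are in fact slightly more thorough than the paper, which only uses the inclusion $F\subseteq F'\vee\hat w_{n+1}$ together with monotonicity of rank and leaves the ``in particular'' clause unproved, whereas you verify the equality $F=F'\vee\hat w_n$ and spell out the decomposition argument via the face property and frame extension.
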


\begin{proof}
	We prove by using an induction:\\
	In the case of just a single pure state, the decomposition is unique.\\
	Now assume the statement is true for all sets of $n$ or less pure states. We also want to show that the statement is true for all sets of $n+1$ pure states:\\
	Let $w_1,...,w_{n+1}$ be pure states. By the induction hypothesis, the face $F$ generated by $w_1,...,w_n$ has a rank of at most $n$.\\
	By the covering property, the face $F\vee \big(\mathbb R_{\ge 0}\cdot \{ w_{n+1} \}\big)$ contains $w_1$, $w_2$,...,$w_{n+1}$ and has a rank of at most $n+1$, i.e. it is generated by a frame of at most $n+1$ states. As generated faces are minimal, also the face generated by $w_1$, $w_2$,...,$w_{n+1}$ may have no higher rank.
\end{proof}

\vspace{5mm}

Now we ask whether there is a convex decomposition into pure states which needs less trials than a frame decomposition.

\begin{theorem}
	Consider a GPT satisfying Postulates 1 and 2.\\
	Then Postulate 3 is equivalent to $\widecheck{H}_0 = \widehat{H}_0 = H_0$, i.e. in the context of Postulates 1 and 2, Postulate 3 is true exactly if the order-0 R$\acute{\text{e}}$nyi entropies coincide.
\end{theorem}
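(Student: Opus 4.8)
The plan is to reduce the claimed triple equality to a single equality and then prove that equality is equivalent to the covering property, which by the remark preceding Lemma~\ref{Lemma:Covering} is equivalent to Postulate~3 under Postulates~1 and 2. First I would note that $\widehat H_0 = H_0$ already holds unconditionally here, since this is the special case $j=0$ of Theorem~\ref{theorem:MeasIsTherm}; hence the content of the statement is exactly $\widecheck H_0 = H_0$. Next I would fix the two quantities geometrically: for a classical decomposition $w=\sum_j p_j w_j$ into a frame, the support size $|\mathrm{supp}(\mathbf p)|$ equals the rank $|F_w|$ of the smallest face $F_w$ of $A_+$ containing $w$ (the nonzero $w_j$ form a frame generating $F_w$, and this number is decomposition-independent by the uniqueness of the probability distribution of a classical decomposition), so $H_0(w)=\log|F_w|$. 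On the other hand $\widecheck H_0(w)=\log m(w)$, where $m(w)$ is the least number of pure states occurring in any convex decomposition of $w$. Since a frame decomposition is itself such a decomposition, $\widecheck H_0(w)\le H_0(w)$ holds for every $w$; the whole problem is therefore to control when the reverse inequality can fail.

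A key structural observation I would record first is that, for any convex decomposition $w=\sum_{j=1}^n q_j v_j$ with $q_j>0$ and $v_j$ pure, the face generated by $\{v_1,\dots,v_n\}$ coincides with $F_w$: all $v_j$ lie in $F_w$ because a convex combination lying in a face has all its components in the face, while conversely $w$ lies in the face generated by the $v_j$. Thus every pure-state decomposition of $w$ generates exactly $F_w$ (moving between faces of $\Omega_A$ and of $A_+$ via Proposition~\ref{Prop:FaceCorrespondence}).

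For the direction ``covering $\Rightarrow\ \widecheck H_0=H_0$'', I would take an arbitrary convex decomposition $w=\sum_{j=1}^n q_j v_j$ into $n$ pure states and apply Lemma~\ref{Lemma:Covering} to the $v_j$: the face they generate, which is $F_w$, then has rank $|F_w|\le n$. Hence $n\ge|F_w|$ for every pure-state decomposition, so $m(w)\ge|F_w|$ and $\widecheck H_0(w)\ge H_0(w)$; combined with the trivial inequality this yields equality for all $w$. For the converse I would argue by contraposition: if covering fails, there is a face $F$ and a pure state $w_a\notin F$ such that the face $G$ generated by $F$ and $w_a$ has rank $|G|\ge|F|+2$. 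Choosing a generating frame $w_1,\dots,w_{|F|}$ of $F$, the state $w:=\tfrac{1}{|F|+1}\big(w_a+\sum_{k=1}^{|F|}w_k\big)$ has a convex decomposition into $|F|+1$ pure states, so $\widecheck H_0(w)\le\log(|F|+1)$, while by the structural observation its generated face is $G$, giving $H_0(w)=\log|G|\ge\log(|F|+2)$. Hence $\widecheck H_0(w)<H_0(w)$, contradicting the assumed equality.

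I expect the main obstacle to be the bookkeeping around faces and ranks rather than any deep new idea: one must verify carefully that $H_0$ is well-defined on states (the decomposition-independence of the support size, which rests on the uniqueness of the probability distribution of a classical decomposition) and that the face generated by an arbitrary pure-state decomposition really is $F_w$. The counterexample in the converse also requires confirming that $w_a\notin F$, together with the failure of covering, is what forces $|G|\ge|F|+2$ and hence the strict gap $\widecheck H_0(w)<H_0(w)$.
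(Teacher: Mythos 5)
Your proposal is correct and follows essentially the same route as the paper: the forward direction applies Lemma~\ref{Lemma:Covering} to an arbitrary pure-state decomposition, and the converse builds a counterexample state from a generating frame of $F$ together with the extra pure state witnessing the failure of covering (you use uniform weights $\tfrac{1}{|F|+1}$ where the paper uses $\tfrac12 w+\tfrac12 v$ with $v$ the normalized projective unit, which is an immaterial difference). Your explicit identity $H_0(w)=\log|F_w|$ and the observation that every pure-state decomposition of $w$ generates $F_w$ make the bookkeeping slightly cleaner than the paper's, but the underlying argument is the same.
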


\begin{proof}
At first, we consider the ``$\mathbf{\Leftarrow}$''-direction:\\
Assume $H_0 = \widecheck{H}_0$, but that the covering property is not fulfilled. Then there exist a face $F$ and a pure state $w$, such that the face $G$ generated by both has rank $|F|+2$ or higher. Let $w_1,...,w_{|F|}$ be a frame generating the face $F$. Then $F$ is also generated by $v:= \frac{1}{|F|}\sum_{j=1}^{|F|} w_j$, i.e. the normalized projective unit. This statement is clear, because every face containing $v$ also contains all the $w_j$ (and vice versa), and $F$ is the smallest face for that.\\
So consider now the state $\frac 1 2 w + \frac 1 2 v$. If this state had a classical decomposition using only $|F|+1$ (or less) perfectly distinguishable pure states $v_1,...,v_{|F|+1}$, then the face generated by this frame already contains $v$ (and thus $w_1,...,w_{|F|}$) and $w$, but it only has rank $|F|+1$ in contradiction to our assumption. Thus the classical decomposition of $\frac 1 2 w + \frac 1 2 v$ uses at least $|F|+2$ perfectly distinguishable pure states with non-zero coefficients, i.e. $H_0(\frac 1 2 w + \frac 1 2 v) \ge \log(|F|+2)$. But $\frac 1 2 w + \frac 1 2 v =\frac{1}{2|F|} \sum_{j=1}^{|F|}w_j + \frac{1}{2} w$ is a convex decomposition into $|F|+1$ pure states, i.e.
\begin{equation}
	\widecheck{H}_0 \left(\frac 1 2 w + \frac 1 2 v\right) \le \log (|F|+1) < \log (|F|+2) \le H_0 \left(\frac 1 2 w + \frac 1 2 v \right)
\end{equation}
Once more, this is an contradiction to our assumption.\\
Thus if $H_0 = \widecheck{H}_0$, then also the covering property must hold.\\
\\
Now we proof the ``$\Rightarrow$''-direction:\\
Now assume that the covering property holds.\\
 Assume there is a state $w \in \Omega_A$ with $H_0(w) \ne \widecheck{H}_0(w)$. This requires $H_0(w) > \widecheck{H}_0(w)$, because the classical decompositions are also included in the infimum-definition of $\widecheck{H}_0$. As the minimization procedure only runs over values $1,2,...,|\Omega_A|$, the infimum actually is a minimum. Thus there exist pure states $w_1,...,w_{2^{\widecheck{H}_0(w)}}$ such that there is a convex decomposition: 
 \begin{equation} w= \sum_{j=1}^{2^{\widecheck{H}_0(w)}} p_j w_j \end{equation}
By Lemma \ref{Lemma:Covering}, the face generated by these pure states has at most rank $2^{\widecheck{H}_0(w)}$. By convexity, it includes $w$. Thus there is a classical decomposition of $w$ which uses a frame of size no larger than $2^{\widecheck{H}_0(w)}$. Thus $H_0(w) > \widecheck{H}_0(w)$ is impossible.
\end{proof}

\vspace{5mm}

However, the order-2 R$\acute{\text{e}}$nyi entropy $H_2(w) = -\log(\sum_j p_j^2)$ reduces without use of the third postulate, i.e. we do not need the third postulate for all R$\acute{\text{e}}$nyi entropies:
\begin{theorem}
	Consider a GPT which satisfies Postulates 1 and 2. Then for all states:
	\begin{equation}
		\widecheck{H}_2(w) = H_2(w)
	\end{equation}
\end{theorem}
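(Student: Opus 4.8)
The plan is to prove the two inequalities $\widecheck{H}_2(w) \le H_2(w)$ and $\widecheck{H}_2(w) \ge H_2(w)$ separately. The first is immediate: a classical decomposition $w = \sum_j p_j w_j$ into a frame is in particular a convex decomposition into pure states, hence admissible in the infimum defining $\widecheck{H}_2$, so $\widecheck{H}_2(w) \le H_2(\mathbf p) = H_2(w)$. The entire content therefore lies in showing that no convex decomposition into pure states can do better. Since $H_2(\mathbf q) = -\log(\sum_j q_j^2)$ and $-\log$ is strictly decreasing, minimizing $H_2$ over decompositions is the same as maximizing the ``purity'' $\sum_j q_j^2$; so I must show that the classical decomposition maximizes $\sum_j q_j^2$ among all convex decompositions of $w$ into pure states.

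The key observation is that, using the self-dualizing inner product $\braket{\cdot,\cdot}$ from Theorem \ref{Theorem:Selfdual}, the quantity $\sum_j p_j^2$ is nothing but the squared norm $\braket{w,w}$, which does not depend on any decomposition. Indeed, for the classical decomposition the frame states are orthonormal, $\braket{w_j,w_k} = \delta_{jk}$ (properties 3 and 4 of Theorem \ref{Theorem:Selfdual}), so
\begin{equation}
	\braket{w,w} = \sum_{j,k} p_j p_k \braket{w_j,w_k} = \sum_j p_j^2 .
\end{equation}

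Now take an arbitrary convex decomposition $w = \sum_j q_j v_j$ into pure (but not necessarily orthogonal) states $v_j \in \Omega_A$. Expanding the same squared norm and separating the diagonal from the off-diagonal terms gives
\begin{equation}
	\braket{w,w} = \sum_j q_j^2 \braket{v_j,v_j} + \sum_{j\ne k} q_j q_k \braket{v_j,v_k} = \sum_j q_j^2 + \sum_{j\ne k} q_j q_k \braket{v_j,v_k},
\end{equation}
where $\braket{v_j,v_j} = 1$ because the $v_j$ are pure (property 3). Since $q_j \ge 0$ and $\braket{v_j,v_k} \ge 0$ for all normalized states (property 2), the off-diagonal sum is non-negative, whence $\sum_j q_j^2 \le \braket{w,w} = \sum_j p_j^2$. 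Therefore $H_2(\mathbf q) = -\log(\sum_j q_j^2) \ge -\log(\sum_j p_j^2) = H_2(w)$ for every such decomposition, and taking the infimum yields $\widecheck{H}_2(w) \ge H_2(w)$. Combined with the easy direction this proves the claim.

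In contrast to the order-$0$ case, there is essentially no obstacle here, and---notably---no appeal to the covering property or Postulate 3 is needed: everything reduces to the single fact that inner products of states are non-negative, so off-diagonal ``coherences'' can only inflate the norm, forcing any pure decomposition with overlapping states to have a smaller purity $\sum_j q_j^2$ than the orthogonal (classical) one. The only point deserving a moment's care is that the frame appearing in a classical decomposition is genuinely orthonormal under $\braket{\cdot,\cdot}$, which is exactly guaranteed by properties 3 and 4 of Theorem \ref{Theorem:Selfdual}.
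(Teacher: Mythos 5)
Your proposal is correct and follows essentially the same route as the paper's own proof: both expand $\braket{w,w}$ once via the orthonormal frame of a classical decomposition and once via an arbitrary pure convex decomposition, then use non-negativity of the self-dualizing inner product on states to conclude $\sum_j q_j^2 \le \sum_j p_j^2$. Your explicit separation of the two inequalities and the remark on which properties of Theorem \ref{Theorem:Selfdual} are invoked is a slightly more careful write-up of the identical argument.
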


\begin{proof}
	Let $w = \sum_j p_j w_j$ be a classical decomposition, and $w=\sum_j q_j v_j$ any convex decomposition into pure states. Then $\sum_j p_j^2 =\braket{w,w} = \sum_j q_j ^2 + \sum_{j\ne k} q_j q_k \braket{v_j, v_k}$. As $\braket{v_j, v_k} \ge 0$, we find $\sum_j q_j^2 \le \sum_j p_j^2$, thus $H_2(p) = -\log(\sum_j p_j^2) \le -\log(\sum_j q_j^2)$. Thus classical decompositions indeed minimize the collision entropy.
\end{proof}

\section{Weak spectrality does not imply spectrality}
\label{Section:Spectrality}
\subsection{The idea}
Our notion of classical decomposability is inspired by the diagonalization of density operators. If a state space satisfies the postulate of classical decomposability, we also say that it satisfies \textbf{weak spectrality}. However, in contrast to the eigenvalues of a density operator, it is not clear that the coefficients of a classical decomposition are unique except for permutation and zeroes. To show this result, we had to use the second postulate, strong symmetry. Thus we say a state space with weak spectrality satisfies \textbf{(unique/strong) spectrality}\footnote{The notions ``weak'' and ``unique spectrality'' have been defined by Howard Barnum while working with our group. This chapter is an answer for his question whether weak spectrality implies unique spectrality.}, if the coefficients are unique except for permutation and zeroes.\\
We want to construct an example which shows that weak spectrality does not imply unique spectrality. This means that we want to find an example for a state space, in which:
\begin{enumerate}
\item every state has at least one classical decomposition \vspace{-3mm}
\item there is at least one state which has at least two different classical decompositions whose coefficients are not just a permutation of each other
\end{enumerate}
This result shows that classical decomposability alone is not enough in order for our entropy to be well-defined.

To prove this, we consider an egg-like state space $\Omega_A$:
\begin{center}
\includegraphics[width=0.45\textwidth]{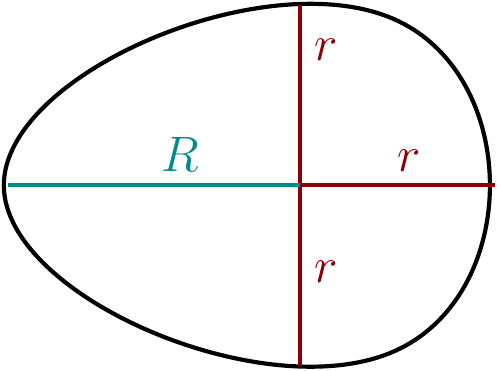} \vspace{-3mm}
\captionof{figure}{\small \textit{The state space is chosen to look like a 2D egg.}}
\end{center}
It is quite obvious that the state shown in the figure above has two different classical decompositions whose coefficients are not just a permutation of each other:
\begin{align}
	\begin{pmatrix} 0 \\ 0 \end{pmatrix} &= \frac 1 2 \begin{pmatrix} 0 \\ r \end{pmatrix}+ \frac 1 2 \begin{pmatrix} 0 \\ -r \end{pmatrix}= \frac{r}{r+R} \begin{pmatrix} -R \\ 0 \end{pmatrix} + \frac{R}{r+R} \begin{pmatrix} r \\ 0 \end{pmatrix}
\end{align}
So it remains to show that every state in this state space has a classical decomposition, i.e. that weak spectrality is satisfied.\\
The idea is like this: We put a tangent hyperplane (i.e. a straight tangent line) at an arbitrary boundary point of the egg. Afterwards, we consider a parallel hyperplane on the other side of the egg and move it towards the egg until it hits the egg.
\begin{center}
\includegraphics[width=0.45\textwidth]{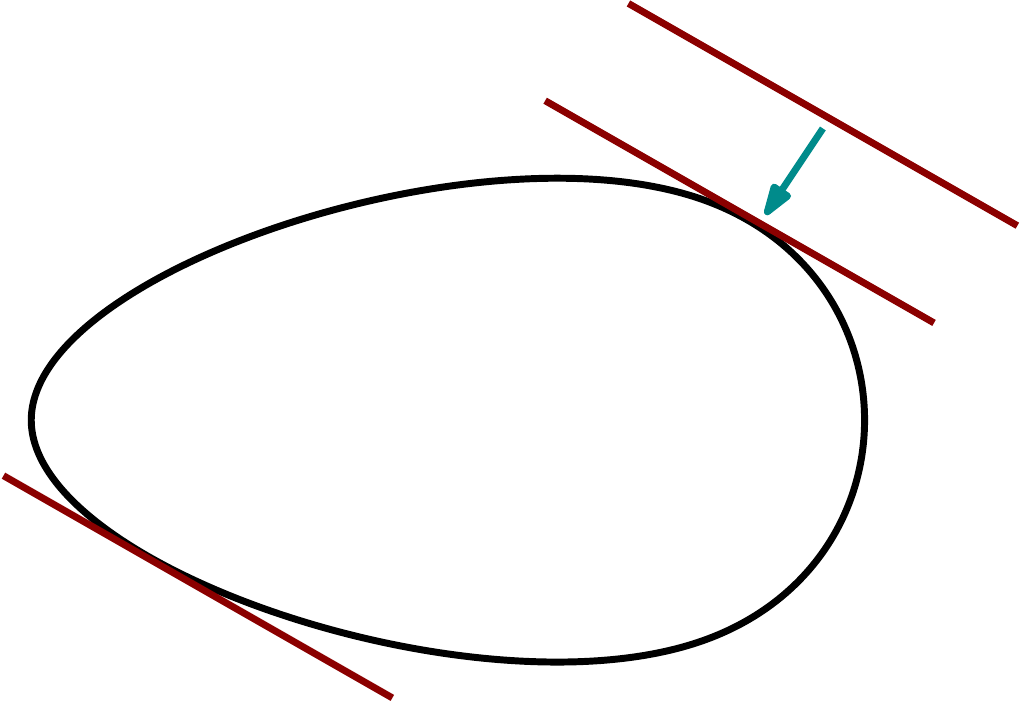} \vspace{-3mm}
\captionof{figure}{\small \textit{The first tangent line is chosen freely. The tangent line on the other side is constructed by moving a distant parallel line towards the egg.}}
\end{center}
We call the tangent lines $E_1$ and $E_2$, and the points where they and the egg intersect are called $p_1$ and $p_2$. The line that connects these points is called $l$. Now we move $p_1$ around the boundary of the egg, while keeping $E_1$ tangential. At the same time, we also move $p_2$ around the boundary such that $E_2$ stays tangential and parallel to $E_1$. In Appendix \ref{chap:Rotation}, there is a visualisation with Mathematica.
\begin{center}
\includegraphics[width=\textwidth]{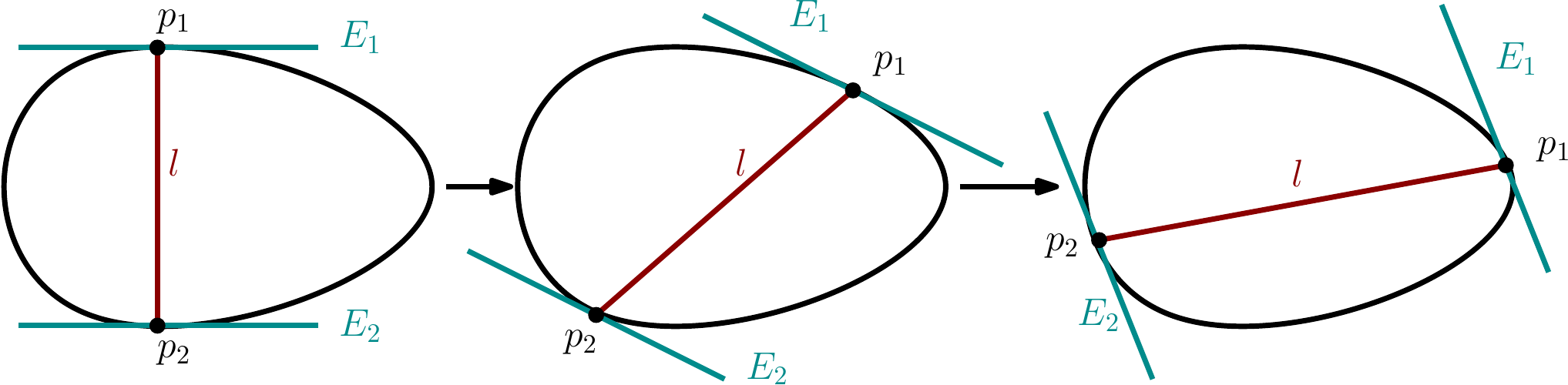}
\captionof{figure}{\small \textit{This figure shows how the tangential lines are rotated along the egg.}}
\end{center}
 The conjecture is that while doing this for full $360$ degrees, every point of the egg lies on the connection line $l$ at least once. If a point $p$ is found on $l$, it is a convex combination of $p_1$ and $p_2$. $p_1$, $p_2$ are pure. We define effects $e_1,e_2$ by $e_j(E_k) = \{\delta_{jk}\}$ and extend affine-linearly, as explained in Chapter \ref{Section:GPT}, Example \ref{Example:FigureToEffect}. Thus $p_1$ and $p_2$ are perfectly distinguishable pure states and $p$ has a classical decomposition. If the conjecture is true, then the egg satisfies weak spectrality. 

\subsection{Proof idea}
A ``topological'' proof idea is shown in Figure \ref{fig:top}. Like before, we put two parallel tangent hyperplanes (i.e. straight tangent lines) at the egg. Wlog, they are chosen such that the straight line through the egg connecting them coincides with the symmetry axis of the egg-shaped state space. \\
\begin{center}
\includegraphics[width=\textwidth]{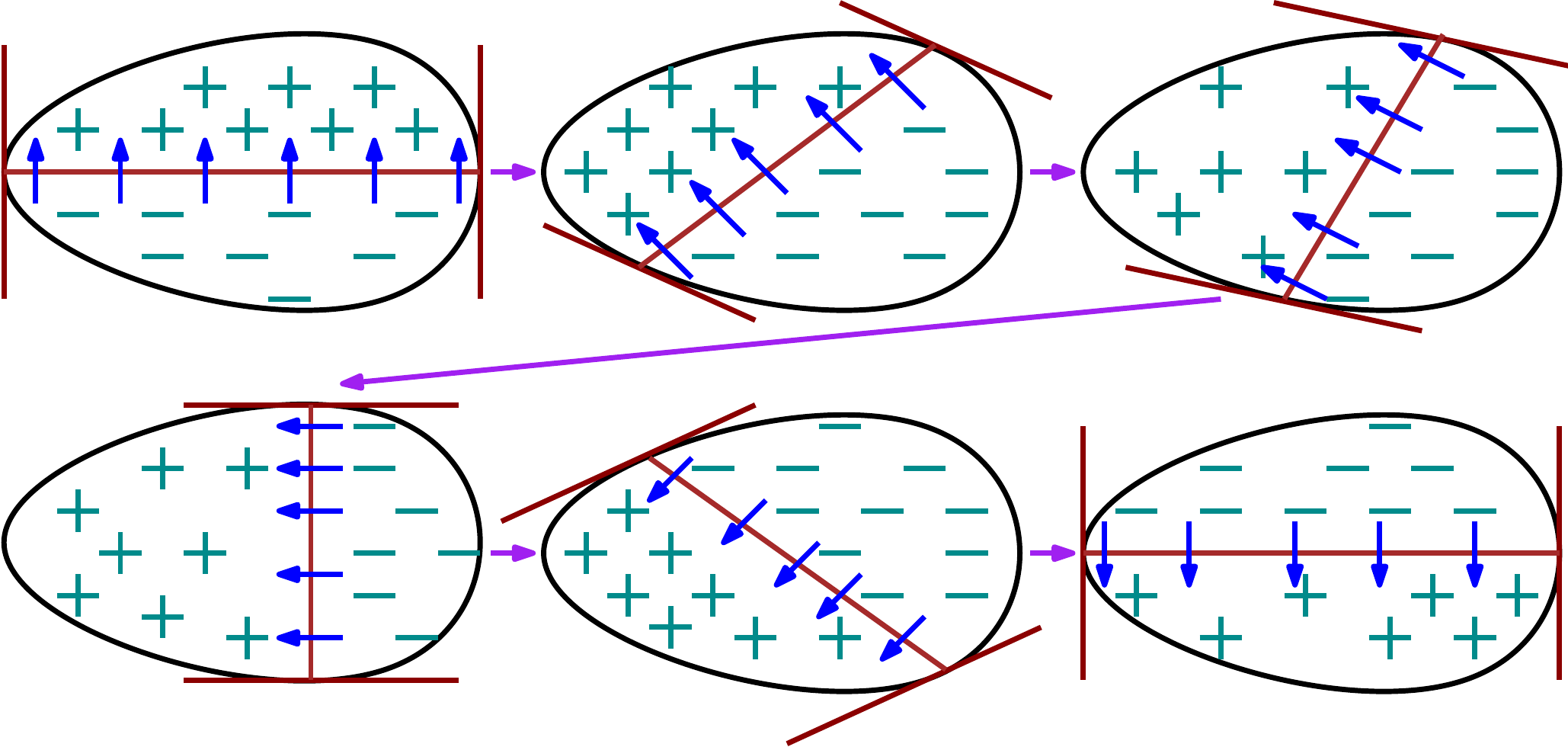}
\captionof{figure}{\small \textit{This figure visualizes an idea for a proof, that the egg space satisfies weak spectality.} }
 \label{fig:top}
\end{center}
Again, we call the tangent lines $E_1$ and $E_2$, and the points where they and the egg intersect are called $p_1$ and $p_2$. The line that connects these points is called $l$. The line $l$ splits the egg into two parts. We mark one part with $+$, the other with $-$. Now we move $p_1$ around the boundary of the egg, while keeping $E_1$ tangential. At the same time, we also move $p_2$ around the boundary such that $E_2$ stays tangential and parallel to $E_1$. We stop, when $p_1$ coincides with the other end of the symmetry axis. Then the situation looks almost like in the beginning, but the $+$- and $-$-parts are exchanged. This means that every point of the egg has changed its sign either from $+$ to $-$ or vice versa. As the line $l$ is moved continously through the egg, this requires that each point of the egg lay on $l$ at least once. This means that each point of the egg has at least one convex decomposition into perfectly distinguishable pure states ($p_1$ and $p_2$). \\
\\
While one can expect that this proof idea will work for many strictly convex state spaces, we will now consider an exact proof for our egg shaped state space based on this idea. We will need a parametrization of our egg shaped state space.\\

\subsection{Parametrization}
\begin{center}
\includegraphics[width=0.5\textwidth]{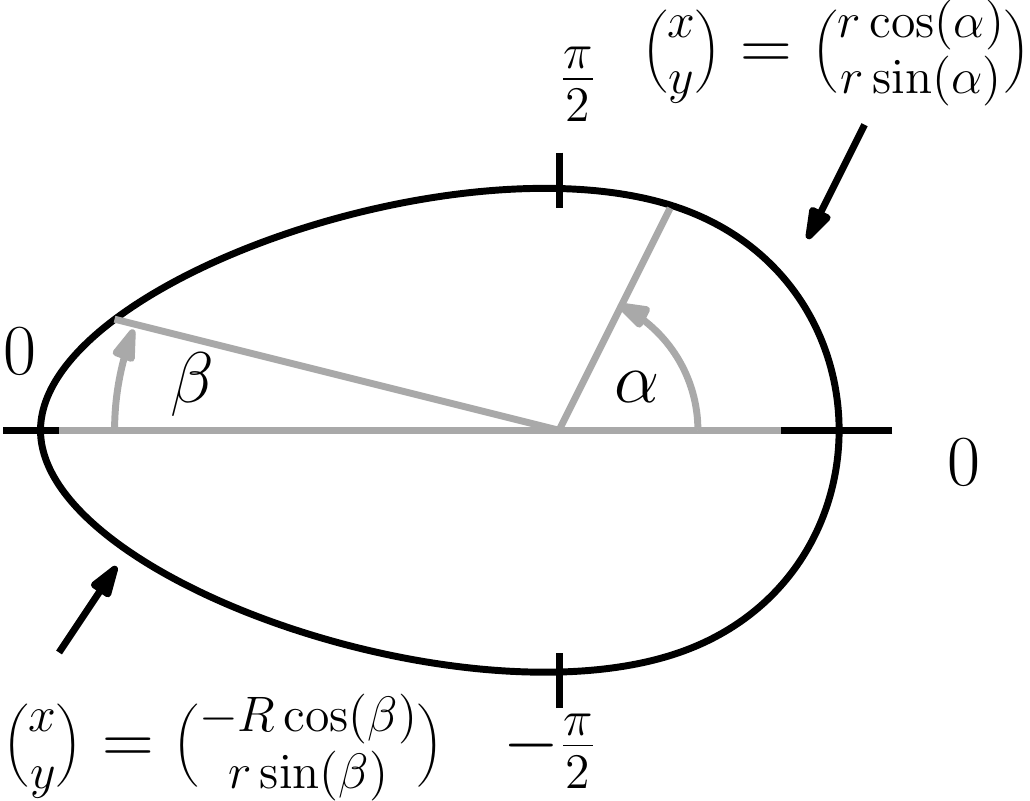}
\captionof{figure}{\small \textit{Detailed picture of how we parametrize the egg.}}
\end{center}
We want to provide an analytical proof, but also check our results with Mathematica. According to \cite{Trigonometrie}, Mathematica chooses to have $\text{arccot}$ values between $-\frac \pi 2$ and $\frac{\pi} 2$, the same for $\arctan$ \cite{Trigonometrie}.\\
So now we give the exact parametrization of the 2D egg:
The first part is a half-circle, given by 
\begin{equation}
	(x,y) = (r\cos(\alpha),r\sin(\alpha))
\end{equation}
for $\alpha \in [-\frac \pi 2, \frac{\pi} 2 ]$. The second part is an ellipse given by 
\begin{equation}
	(x,y) = (-R\cos(\beta), r\sin(\beta))
\end{equation}
for $\beta \in [-\frac \pi 2, \frac{\pi} 2 ]$. Choose an arbitrary angle $\alpha$. Then
\begin{align}
	\frac{\mathrm d y}{\mathrm d \alpha} = r \cos(\alpha)
	&&\frac{\mathrm d x}{\mathrm d \alpha} = -r \sin(\alpha)
\end{align}
Thus the slope of the tangential line is given\footnote{We apply the chain rule here: $\frac{\mathrm d y}{\mathrm dx} \approx \frac{y(\alpha)-y(\alpha_0)}{x(\alpha)-x(\alpha_0)} = \frac{y(\alpha)-y(\alpha_0)}{\alpha-\alpha_0} / \frac{x(\alpha)-x(\alpha_0)}{\alpha-\alpha_0} \approx  \frac{\mathrm d y}{\mathrm d\alpha} / \frac{\mathrm d x}{\mathrm d\alpha}$} by $\frac{\mathrm d y}{\mathrm d x} = -\cot(\alpha)$.
Now we want to find the parallel line on the other side. At first
\begin{align}
	\frac{\mathrm d y}{\mathrm d \beta} = r \cos(\beta)
	&&\frac{\mathrm d x}{\mathrm d \beta} = R \sin(\beta)
\end{align}
Thus the slope on the other side is given by $\frac{\mathrm d y}{\mathrm d x} = \frac{r}{R} \cot(\beta)$. We want the slopes to agree, i.e. $\frac{r}{R} \cot(\beta) = -\cot(\alpha)$. Thus 
\begin{equation}
	\beta(\alpha) = \text{arccot}\left(-\frac{R}{r} \cot{\alpha}\right)
\end{equation}
The line connecting $p_1$ and $p_2$ is 
\begin{equation}
	l_\alpha(p) = p\begin{pmatrix}r\cos(\alpha)\\ r\sin(\alpha) \end{pmatrix} +(1-p) \begin{pmatrix}-R\cos(\beta(\alpha)) \\ r\sin(\beta(\alpha)) \end{pmatrix}
\end{equation}
where $p \in [0,1]$.\\
\\
To prove weak spectrality, we have to show that every point of the egg lies on one of the $l_\alpha$. Based on our parametrization, we can implement this with Mathematica, see Appendix \ref{chap:WSMathematica}. In Appendix \ref{chap:AlternativeParam}, the same is done for an alternative parametrization. For both parametrizations, we first plot the boundary of the egg space. Afterwards, we plot all $l_\alpha(p)$, i.e. all points that have at least one classical decomposition. As one can see, Mathematica suggests that all points have a classical decomposition. Thus weak spectrality holds, but not unique spectrality as already argued in the beginning. Take a moment to appreciate that by choosing $\alpha, \beta \in [-\pi/2, \pi/2]$, we have chosen the same conventions as Mathematica.

\subsection{Exact proof}
Now, we show an analytical proof that does not need Mathematica:\\
So far, we have found 
\begin{align}
	l_\alpha(p) &= p\begin{pmatrix}r\cos(\alpha)\\ r\sin(\alpha) \end{pmatrix} +(1-p) \begin{pmatrix}-R\cos(\beta(\alpha)) \\ r\sin(\beta(\alpha))\end{pmatrix} \\&= \begin{pmatrix}-R\cos(\beta(\alpha)) \\ r\sin(\beta(\alpha))\end{pmatrix}+p\begin{pmatrix}r\cos(\alpha)+R\cos(\beta(\alpha))\\ r\sin(\alpha)-r\sin(\beta(\alpha)) \end{pmatrix} =: \vec{a_{\alpha}}+p\cdot \vec{t_{\alpha}}
\end{align}
where 
\begin{equation}
	\beta(\alpha) = \text{arccot}\left(-\frac{R}{r} \cot{\alpha}\right)
\end{equation}
$\vec{t_\alpha}$ gives the direction of the straight line $l_\alpha$. A vector perpendicular to $l_\alpha$ is thus given by:
\begin{equation}
	\vec{n_\alpha} := \begin{pmatrix} -r\sin(\alpha)+r\sin(\beta(\alpha)) \\ r\cos(\alpha)+R\cos(\beta(\alpha))  \end{pmatrix}
\end{equation}
While intuitively clear, we want to prove that $\beta(\alpha)$ is well-defined and continous. \\
Figures \ref{fig:cot} and \ref{fig:arccot} show, that the only critical arguments are $\alpha=0$ (here cot diverges) and $\alpha=\pm \frac \pi 2$ (here arccot$(0)$).
\begin{align}
	\lim_{\alpha \to 0} \beta(\alpha) &= \text{arccot}(\mp \infty) = 0\\
	\lim_{\alpha \nearrow + \frac \pi 2} \beta(\alpha) &= \lim_{x\searrow 0} \text{arccot}(- x) = -\frac \pi 2 \\
	\lim_{\alpha \searrow - \frac \pi 2} \beta(\alpha) &= \lim_{x\nearrow 0} \text{arccot}(- x) = +\frac \pi 2 
\end{align}

\begin{minipage}{0.45\textwidth}
	\includegraphics[width=0.95\textwidth]{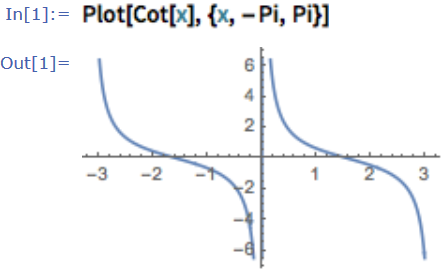}
	\captionof{figure}{\small \textit{The cot-function has zeroes for $\pm \frac{\pi}2$ and diverges in 0. (from\cite{Trigonometrie})}}
\label{fig:cot}
\end{minipage}
\hfill
\begin{minipage}{0.45\textwidth}
	\includegraphics[width=0.95\textwidth]{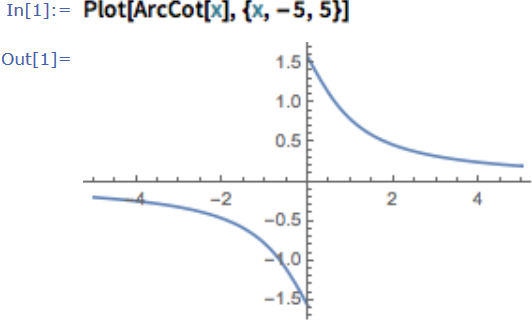}
	\captionof{figure}{\small \textit{The arccot function as chosen by Mathematica(\cite{Trigonometrie}) and us is discontinous in 0. For infinite argument, arccot approaches 0.}}
\label{fig:arccot}
\end{minipage}

Thus, as expected, for $\alpha=0$ we find $\beta = 0$, while for $\alpha=\pm \frac \pi 2$ we find $\beta = \mp \frac \pi 2$. Note that we used that $\alpha \in [-\frac \pi 2, \frac \pi 2]$. Thus $\beta(\alpha)$ is well-defined (by continuous extension) and continuous.\\
Now let $\vec w \in \Omega_A$ be an arbitrary state. Then:
\begin{align}
	\vec w \in l_\alpha &\Leftrightarrow \exists p \in [0,1]: \vec w = \vec{a_\alpha}+p\cdot \vec{t_\alpha} \Leftrightarrow  \exists p \in [0,1]: \vec w -\vec{a_\alpha} = p\cdot \vec{t_\alpha} \\ 
	&\Leftrightarrow \big[\vec w - \vec{a_\alpha} \big] \cdot \vec{n_\alpha} = 0 
\end{align}
In the last equivalence, for $\Rightarrow$ we multiplied with $\vec{n_\alpha}$ and used $\vec{n_\alpha} \cdot \vec{t_\alpha} = 0$. For $\Leftarrow$, we used that $\vec{n_\alpha}, \vec{t_\alpha}$ are an orthogonal basis\footnote{As $l_\alpha$ was constructed from two opposing points in the egg and $\vec{t_\alpha}$ is the vector connecting these points, $\vec{t_\alpha} \ne 0 \quad \forall \alpha$. Except for sign and exchange, $\vec{n_\alpha}$ has the same components and thus also never vanishes. }, thus $\big[\vec w - \vec{a_\alpha} \big] \cdot \vec{n_\alpha} = 0$ implies that 
$\vec w - \vec{a_\alpha} \propto \vec{t_\alpha}$. As $\vec w$ is a state\footnote{$l_\alpha$ connects two opposing boundary points given for $p=0$ and $p=1$. Thus for $p \notin [0,1]$, one leaves the egg. Also note, that the condition $p\in [0,1]$ in these equivalences is not really necessary. It is sufficient to check if $\vec w$ is on the straight line given by $l_\alpha$, $p \in [0,1]$ follows because $\vec w$ is a state.}, the proportionality constant has to be $\in [0,1]$.\\
Now consider an arbitrary state $\vec w \in \Omega_A$. We define the function $g_{\vec{w}} : \left[-\frac \pi 2, \frac \pi 2\right] \to \mathbb R$,
\begin{equation}
	 \ g_{\vec w}(\alpha) = \big[\vec w - \vec{a_\alpha} \big] \cdot \vec{n_\alpha} = \left[ \vec w -   \begin{pmatrix}-R\cos(\beta(\alpha)) \\ r\sin(\beta(\alpha))\end{pmatrix}\right] \cdot \begin{pmatrix} -r\sin(\alpha)+r\sin(\beta(\alpha)) \\ r\cos(\alpha)+R\cos(\beta(\alpha))  \end{pmatrix}
\end{equation}
$g_{\vec w}$ is a continuous function. Now we compare the situations for $\alpha = \pm \frac \pi 2$.\\
For $\alpha = \pm \frac \pi 2$, we find $\beta = \mp \frac \pi 2$ and thus $\vec{n_\alpha} = \begin{pmatrix} \mp 2r \\ 0 \end{pmatrix}$. Therefore:

\begin{align}
	g_{\vec w}\left(+\frac \pi 2\right) &= \left[ \vec w -   \begin{pmatrix}-R\cos(-\frac \pi 2) \\ r\sin(-\frac \pi 2)\end{pmatrix}\right] \cdot \begin{pmatrix} -2r \\ 0 \end{pmatrix} = - \left[ \vec w -   \begin{pmatrix}-R\cos(\frac \pi 2) \\ r\sin(\frac \pi 2)\end{pmatrix}\right] \cdot \begin{pmatrix} 2r \\ 0 \end{pmatrix} \nonumber \\
	&= -g_{\vec w}\left(-\frac \pi 2\right)
\end{align}
\begin{center}
	\includegraphics[width=0.8\textwidth]{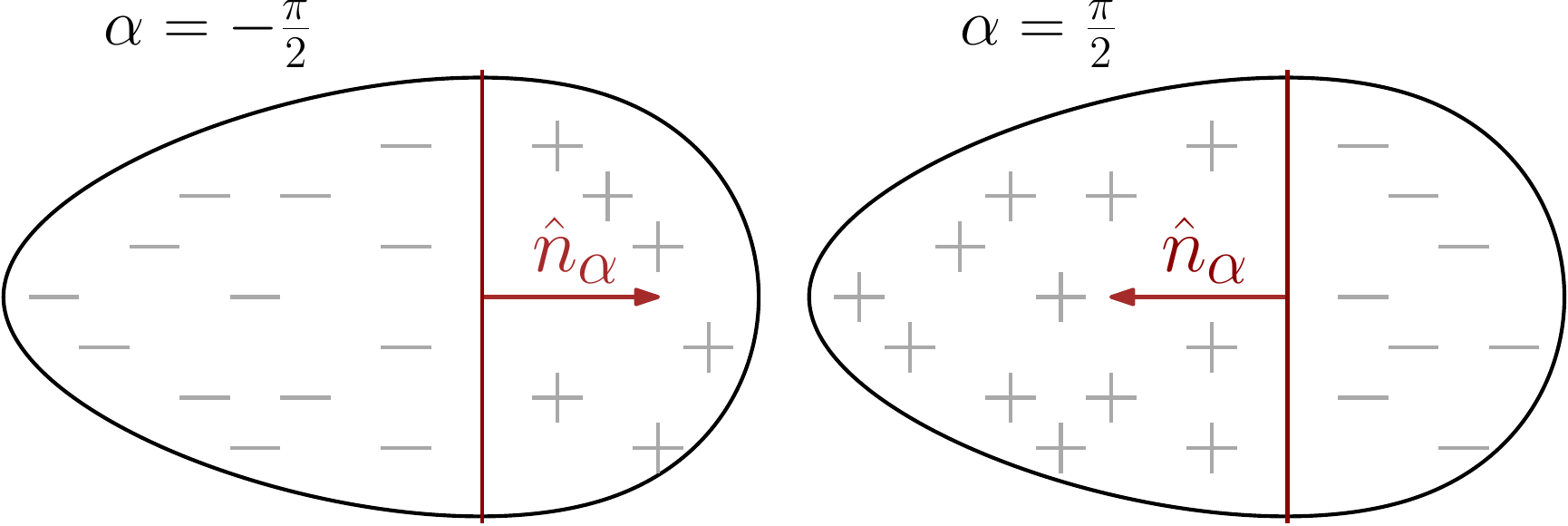}
	\captionof{figure}{\small \textit{This figure visualises the connection between the exact analytical proof and the proof idea presented before.}}
\end{center}

Thus, if not already $g_{\vec w}\left(+\frac \pi 2\right)=0$, $g_{\vec w}$ changes its sign. In that case, by the intermediate value theorem , there is an $\alpha \in [-\frac \pi 2, \frac \pi 2]$ with $g_{\vec w}(\alpha) = 0$.Thus in all cases, there is a $\alpha$ with $\big[\vec w - \vec{a_\alpha} \big] \cdot \vec{n_\alpha} = 0$, i.e. $\vec w \in l_\alpha$. This means that $\vec w$ has a classical decomposition. As $\vec w \in \Omega_A$ was arbitrary, weak spectrality/classical decomposability holds.

\subsection{Conclusion}
We have analysed a state space which has a 2D egg-shape. Each state in this state space has a classical decomposition (i.e. weak spectrality holds). We used Mathematica and an analytical proof (based on an idea that should also work for other strictly convex state spaces) to show that weak spectrality holds. But there is at least one state, which has classical decompositions that differ in their coefficients (i.e. unique spectrality does not hold). This has the important thermodynamic consequence, that an entropy similar to the von Neumann entropy would not be well-defined for a state, because its value depends on the choice of classical decomposition. Especially, the state would not fully characterize the ensemble, i.e. ``the state does not describe the state of the system''. Furthermore if the ensemble is not realized by perfectly distinguishable pure states, in general we cannot not define the entropy at all. In contrast, the postulate of (unique) spectrality directly leads to the generalization of the von Neumann entropy. In fact, a postulate or result of the form of unique spectrality is needed to know what the eigenvalues and orthonormal eigenvectors can be replaced with to formulate the generalized von Neumann entropy. Thus it is an interesting question to investigate what other principles or postulates lead to unique spectrality. 
%\newpage
%\thispagestyle{empty}
%\quad
%\newpage
\section{Pfister's state discrimination principle}
\label{Section:StateDiscrimination}
This section has no direct relevance for entropy or thermodynamics. However, the strong structure provided by our postulates also fulfils other physically motivated principles/postulates. One example is Pfister's \textit{state discrimination principle} \cite{Pfister}: To illustrate its significance, we start with a seemingly trivial example: We can perfectly distinguish hats and T-shirts. Among the T-shirts, we can distinguish blue T-shirts from black T-shirts. We can also distinguish hats, blue T-shirt and black T-shirt from each other. What seems quite trivial, is not necessarily true in all GPTs. We assume we wish to analyze an object which is either a GPT-hat or a blue or black GPT-T-shirt. If we ask \textit{Is it a GPT-hat?} it may happen that the object is destroyed or collapses into another object. Then in case of a GPT-T-shirt, asking \textit{Is it blue or black?} might not be possible or might give the wrong answer. Thus Pfister postulated that this will not happen. The basic idea is in the spirit of Specker's principle: 
\textit{``Do you know what, according to me, is the fundamental theorem of quantum mechanics? (. . . ) That is, if you have several questions and you can answer any two of them, then you can also answer all three of them''} (see e.g. \cite{Specker} for more information and references).

\begin{theorem}
	Consider a GPT which satisfies Postulate 1 (Classical Decomposability) and Postulate 2 (Strong Symmetry) from \cite{Postulates}.\\
	Then $A$ fulfils Pfister's\cite{Pfister} Postulate 3, the \textit{state discrimination principle}:\\
	Let $B_1, B_2 \subset \Omega_A$ be perfectly distinguishable sets of states. Assume that in addition, there are subsets $B_3, B_4 \subset B_2$ such that $B_3$ is perfectly distinguishable from $B_4$. Then $B_1,B_3,B_4$ are perfectly distinguishable. 
\end{theorem}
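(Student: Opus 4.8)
The plan is to reduce the statement about perfectly distinguishable \emph{sets} to an orthogonality statement about \emph{faces}, and then to reassemble a distinguishing measurement out of the projective units of those faces. Since $B_1$ and $B_2$ are perfectly distinguishable, there is a two-outcome measurement $\{a, u_A - a\}$ with $a(w) = 1$ for every $w \in B_1$ and $a(w) = 0$ for every $w \in B_2$; likewise there is an effect $b$ with $b = 1$ on $B_3$ and $b = 0$ on $B_4$. By Lemma~\ref{Lemma:EffectFace}, the level sets $a^{-1}(1)$, $a^{-1}(0)$, $b^{-1}(1)$, $b^{-1}(0)$ are faces of $\Omega_A$. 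Writing $F_1, F_3, F_4$ for the minimal faces generated by $B_1, B_3, B_4$ and $F_2$ for the minimal face generated by $B_2$, minimality gives $F_1 \subseteq a^{-1}(1)$, $F_2 \subseteq a^{-1}(0)$, $F_3 \subseteq b^{-1}(1)$ and $F_4 \subseteq b^{-1}(0)$.

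The heart of the argument is an orthogonality chain. Any state of $a^{-1}(1)$ is perfectly distinguishable from any state of $a^{-1}(0)$ via $\{a, u_A - a\}$, so by Theorem~\ref{Theorem:Selfdual}(4) these two faces are orthogonal; hence $F_1 \perp F_2$. The same reasoning applied to $b$ gives $F_3 \perp F_4$. The key inclusion is that $F_3, F_4 \subseteq F_2$: indeed $B_3, B_4 \subseteq B_2 \subseteq F_2$ and $F_2$ is a face, so the minimal faces generated by $B_3$ and $B_4$ are contained in it. Combining $F_1 \perp F_2$ with $F_3, F_4 \subseteq F_2$ yields $F_1 \perp F_3$ and $F_1 \perp F_4$, so that $F_1, F_3, F_4$ are mutually orthogonal. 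Passing to the associated faces $\mathbb R_{\ge 0}\cdot F_1$, $\mathbb R_{\ge 0}\cdot F_3$, $\mathbb R_{\ge 0}\cdot F_4$ of $A_+$ via Proposition~\ref{Prop:FaceCorrespondence} preserves these relations.

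Finally I would build the distinguishing measurement. Choosing generating frames for the three faces of $A_+$ and using their mutual orthogonality, the union of these frames is a set of mutually orthogonal pure states; by Proposition~\ref{Prop:FramesAddUp} it is a frame whose associated functionals sum to at most $u_A$. Setting $g_1 := u_{F_1}$, $g_3 := u_{F_3}$, $g_4 := u_{F_4}$ (the projective units) then gives $g_1 + g_3 + g_4 \le u_A$, which is exactly the sub-normalized measurement required by the definition of perfect distinguishability. On $B_1 \subseteq F_1 \cap \Omega_A$ one has $g_1 = 1$ by Proposition~\ref{Prop:ProjectiveUnit}, while $g_3 = g_4 = 0$ there because, writing $u_{F_3} = \braket{\sum_k y_k, \cdot}$ with $y_k \in F_3$, each term vanishes on $F_1$ thanks to $F_1 \perp F_3$ (and analogously for $g_4$). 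The checks on $B_3$ and $B_4$ follow by symmetry, so $\{g_1, g_3, g_4\}$ perfectly distinguishes $B_1, B_3, B_4$.

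I expect the main obstacle to be bookkeeping rather than any deep step: one must track the orthogonality relations carefully and pass correctly between faces of $\Omega_A$ and faces of $A_+$. The only genuinely non-formal input is that pairwise orthogonality of the three faces lifts to a single sub-normalized measurement, which is precisely what Proposition~\ref{Prop:FramesAddUp} supplies once the three generating frames are merged into one frame.
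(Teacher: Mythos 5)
Your proposal is correct and follows essentially the same route as the paper's own proof: identify the minimal faces generated by the sets, use Lemma \ref{Lemma:EffectFace} and Theorem \ref{Theorem:Selfdual} to establish mutual orthogonality of $F_1,F_3,F_4$ (the paper distinguishes the frames of $E_1$ and $E_3$ directly via $e_1,e_2$, while you route through $F_1\perp F_2$ and the containment $F_3,F_4\subseteq F_2$ — an equivalent bookkeeping choice), and then assemble the projective units into a sub-normalized measurement via Propositions \ref{Prop:FramesAddUp} and \ref{Prop:ProjectiveUnit}. No gaps.
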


\begin{proof}
	By definition of perfect distinguishability, there are effects $e_1,e_2$ such that $e_k(B_j) = \{\delta_{jk}\}$ for $j,k =1,2$ and effects $e_3,e_4$ such that $e_k(B_j) = \{\delta_{jk}\}$ for $j,k =3,4$. Especially, $B_j$ lies in the faces $e_j^{-1}(1)$ and $e_k^{-1}(0)$ for $\{k, j\} = \{1,2\}$ or $\{k, j\} = \{3,4\}$ (interpreted as subsets of $\Omega_A$), compare Lemma \ref{Lemma:EffectFace}. Let $E_j$ be the minimal face, which contains $B_j$. Then especially $E_j \subset e_j^{-1}(1),e_k^{-1}(0)$ for $\{k,j\}= \{1,2\}$ or $\{k,j\} = \{3,4\}$.\\
There are frames $w_1^{(j)},...,w_{|E_j|}^{(j)}$ that generate $E_j$.  \\
There is a 1-1 correspondence between the faces of $\Omega_A$ and $A_+$, especially $\mathbb R_{\ge 0} \cdot e_j^{-1}(1)$, $\mathbb R_{\ge 0} \cdot e_k^{-1}(0)$, $\mathbb R_{\ge 0} E_j$ are the corresponding faces of $A_+$. \\
By Proposition \ref{Prop:FaceCorrespondence},  $\mathbb R_{\ge 0} \cdot E_j$ is also generated by the frame $w_1^{(j)},...,w_{|E_j|}^{(j)}$.\\ By Proposition \ref{Prop:ProjectiveUnit} the projective unit 
\begin{equation*}\sum_{a=1}^{|E_j|} \braket{w_a^{(j)},\cdot } = u_{[\mathbb R_{\ge 0}\cdot E_j]} =: u_j\end{equation*} 
 is an effect with $u_{[\mathbb R_{\ge 0}\cdot E_j]}(w) = 1$ for all $w \in \big(\mathbb R_{\ge 0}\cdot E_j\big) \cap \Omega_A = E_j$ and $0 \le u_{[\mathbb R_{\ge 0}\cdot E_j]} \le u_A$.\\ 
\\
By $e_1$ and $e_2$, $B_1$ and $B_3 \subset B_2$ are perfectly distinguishable. As $E_1$ and  $E_3 \subset E_2$ are the minimal faces which contain $B_1$ or $B_3 \subset B_2$, we find $E_1 \subset e_1^{-1}(1),e_2^{-1}(0)$ and $E_3 \subset e_2^{-1}(1),e_1^{-1}(0)$. By perfect distinguishability thus $\braket{w_a^{(1)}, w_b^{(3)}}= 0$. By the same reasoning $\braket{w_a^{(1)}, w_b^{(4)}}= 0$. Furthermore as the $w_a^{(j)}$ form frames for fixed $j$, $\braket{w_a^{(j)}, w_b^{(j)}} = \delta_{a,b}$.\\
Furthermore, $B_3$ and $B_4$ are perfectly distinguishable by $e_3$ and $e_4$. As $E_3$ and $E_4$ are the minimal faces which contain $B_3, B_4$ we find $E_3 \subset e_3^{-1}(1),e_4^{-1}(0)$ and $E_4 \subset e_4^{-1}(1),e_3^{-1}(0)$. Thus $E_3$ and $E_4$ are perfectly distinguishable by $e_3$ and $e_4$ and we find $\braket{w_a^{(3)},w_b^{(4)}} = 0$.\\
So in total, we found: $\braket{w_a^{(j)}, w_b^{(k)}} = \delta_{ab}\delta_{jk}$ for $j,k \in \{1,3,4\}$. 
Therefore $\{w_a^{(j)} |  j\in \{1,3,4\}\}$ is a frame and $\sum_{j=1,3,4} u_j = \sum_{j =1,3,4} \sum_a \braket{w_a^{(j)},\cdot} \le u_A$. \\ 
\\
The faces $E_1, E_3, E_4$ are orthogonal to each other because they are pairwise perfectly distinguishable. Thus $u_j(E_k) = 0$ for $j \ne k$. (Alternative reason: $\sum_{j=1,3,4} u_j \le u_A$ and thus $u_j(E_j) = 1$ implies $u_k(E_j) = \delta_{jk}$.)\\
\\
In total we found $u_j(B_k) = \{\delta_{jk}\}$ for $j,k \in \{1,3,4\}$. The $u_j$ are effects whose sum is no larger that $u_A$. Thus the $u_j$ perfectly distinguish $B_1,B_3,B_4$. 
\end{proof}

\section{Conclusion and outlook}
\label{Section:Outlook}
In this thesis, we have explored the thermodynamic consequences of two important postulates. These two postulates are the condensed form of important structural properties of quantum theory and classical theory. The first postulate is that every state belongs to a classical subspace, non-classical behaviour only possible because there might be different classical subspaces. The second postulate is the computational equivalence of all n-level systems.\\
Following a suggestion by J. Barrett, we adapted a thought experiment by von Neumann and an expansion thereof to derive a thermodynamic entropy for GPTs fulfilling the two postulates. This entropy is a direct analogue of the von Neumann entropy. We proved that this entropy is well-defined and we showed how it behaves for decomposition into perfectly distinguishable states. Furthermore, we constructed projective measurements and proved the second law for such measurements as well as mixing processes. In that context, we also generalized observables to our GPTs. We showed that the information-theoretic measurement entropy coincides with our entropy and generalized this result to other R\'{e}nyi entropies. Furthermore, we investigated the R\'{e}nyi decomposition entropies in the context of third order interference. Then we used an egg-shaped state space to show that the first postulate alone does not imply the second and thus does not always lead to a well-defined entropy. Furthermore, we showed that our GPTs satisfy the state discrimination principle formulated by Pfister.\\
\\
There are many ways how to build upon the results of this thesis:\\
 An interesting question is, whether the third postulate (no 3rd order interference) already is an consequence of the first two postulates or whether it is independent. If it is a consequence, then the GPTs considered by us all have the important property that they do no exhibit non-trivial 3rd order interference. If the third postulate does not follow from the first two, then there might be many more GPTs for which our results work. In particular, Postulates 1,2,4 provide entropy and energy, and thus allow to define the free energy and perform equilibrium thermodynamics. So if Postulate 3 really is necessary to obtain quantum theory, then we could define equilibrium thermodynamics for some non-classical and non-quantum systems. Also, we have seen that the third postulate is equivalent to the result, that the decomposition max-entropy is identical with the spectral definition of the max-entropy. Investigating this equivalence might be a way to find out whether the third postulate is independent of the first two postulates. In the same way, it would be interesting to analyze whether the decomposition entropy agrees with the spectral entropy, and especially whether we need the third postulate for this identity. Furthermore, it is an interesting question whether the first two postulates can be replaced by weaker postulates. For example, using unique spectrality leads to a well-defined generalization of the von Neumann entropy. As projective measurements play an important role, it would be interesting to check what consequences projective state spaces would have. Also one should try to consider infinite-dimensional GPTs.

%\thispagestyle{empty}
%\begin{center}
%	\Huge \textbf{Appendix}
%\end{center}

\thispagestyle{empty}

%\include{appendix}

%%%%%%%%%%%%%%%%%%%%%%%%%%%%%%%%%%

%BIBLIOGRPHY
% use \citep{a , b } for [Zahl a, Zahl b]
%\nocite{*}

\singlespacing

%\bibliographystyle{unsrt}
%\bibliography{master}

%\cleardoublepage
%\thispagestyle{empty}
\appendix
\section{Appendix: Measurement effects can be extended to linear functions}
\label{Chapter:ConvexLinearToLinear}
In this appendix, we consider convex-linear effects only defined on $\Omega_A$ and show that that they can be extended to linear functions on $A$. The proof can also be found in \cite{Hardy} and \cite{Barrett}. It is included such that the introduction to GPTs is complete.\\ 
\\
For any state $w$ in $A_+$, there is a normalized state $v$ and some $p \ge 0$ with $w = p v$. Except for the $0$-state, we find $p = u_A(w)$, $v=\frac{w}{u_A(w)}$, i.e. the decomposition is unique (except for the $0$-state). Thus it is well-defined to set $e(w) = p e(v)$. This also is well-defined for the $0$-state, $e(0) = 0$. This definition agrees with our interpretation of subnormalized states: $e(w) = u_A(w) e(\frac{w}{u_A(w)})$ is the probability that $w$ is successfully prepared times the probability that $e$ is triggered, assuming that the preparation is successful. Especially, the impossible state never triggers $e$. Thus we find $e(p w) = p e(w)$ for all $p \in \mathbb R_{\ge 0}$, $w \in A_+$. Furthermore let $\sum_j p_j w_j$ with $p_j \in \mathbb R_{\ge 0}$, $w_j \in A_+$. Wlog, $p_j \ne 0$ and $w_j \ne 0$ as the zero state will cause no problems for the linearity. Then:
\begin{align}
	e\left(\sum_j p_j w_j\right) &= \sum_a p_a u_A(w_a) e\left(\sum_j \frac{u_A(w_j) p_j}{\sum_k p_k u_A(w_k)} \frac{w_j}{u_A(w_j)}\right)\\
		&= \sum_a p_a u_A(w_a) \sum_j \frac{u_A(w_j) p_j}{\sum_k p_k u_A(w_k)} e\left(\frac{w_j}{u_A(w_j)}\right)\\
		&= \sum_j p_j e(w_j)
\end{align}
For the first ``$=$'', we used $e(p w) = p e(w)$ for all $p \in \mathbb R_{\ge 0}$, $w \in A_+$ and that cones are closed under sums and multiplication with non-negative numbers. For the second ``$=$'', we used that $e$ is assumed to be convex-linear on $\Omega_A$. For the third ``$=$'' we used $e(p w) = p e(w)$ once more.\\
Now, let $w= \sum_j p_j w_j$ be an arbitrary linear combination with $w, w_j \in A_+$. Let $M_+$ be the set with $p_j \ge 0$, $M_-$ the set with $p_j< 0$. Then $w + \sum_{j \in M_-} |p_j| w_j = \sum_{j \in M_+} |p_j| w_j$ are positive linear combinations and thus $e(w) + \sum_{j \in M_-} |p_j| e(w_j) = \sum_{j \in M_+} |p_j| e(w_j)$ gives $e(w) = \sum_j p_j e(w_j)$.\\
As the cone is generating, there is a basis $v_1,...,v_n \in A_+$. We extend to $A$ by $e(\sum_j q_j v_j) = \sum_j q_j e(v_j)$ for all real $q_j$. This definition does not depend on the choice of basis, as $e$ already is linear on $A_+$. 
\section{Appendix: Correspondence between the faces of $\Omega_A$ and $A_+$}
\label{Section:FaceCorrespondence}

In this appendix we prove that there is a bijective correspondence between the faces of $\Omega_A$ and the faces of $A_+$.

\begin{prop}
	For an abstract state space, a face $F$ of $\Omega_A$ induces a face $\mathbb R_{\ge 0} \cdot F$ of $A_+$. Vice versa, a face $F\ne \{0\}$ of $A_+$ induces a face $\Omega_A \cap F$ of $\Omega_A$. Furthermore, if $w_1,...w_m\in \Omega_A$ generate the face $F$ of $\Omega_A$ ($A_+$), they also generate the corresponding face of $A_+$ ($\Omega_A$). Furthermore for a face $F\subset \Omega_A$, $\Omega_A \cap (\mathbb R_{\ge 0}\cdot F) = F$, and vice versa for a face $G\subset A_+$, $G \ne \{0\}$, we find $G= \mathbb R_{\ge 0} \cdot (\Omega_A \cap G)$.
\end{prop}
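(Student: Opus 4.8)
The plan is to pass back and forth between $\Omega_A$ and $A_+$ using the unique decomposition $w = u_A(w)\,\hat w$ of any nonzero $w\in A_+$ into its normalization $u_A(w)>0$ and its normalized representative $\hat w := w/u_A(w)\in\Omega_A$, together with Lemma \ref{lemma:rays}, which says every face of $A_+$ is closed under multiplication by nonnegative scalars. First I would show that $\mathbb R_{\ge 0}\cdot F$ is a face of $A_+$ whenever $F$ is a face of $\Omega_A$. Nonemptiness is immediate, and convexity follows by writing a convex combination $p(\lambda_1 w_1)+(1-p)(\lambda_2 w_2)$ with $w_i\in F$, $\lambda_i\ge 0$ as $\mu\big[\tfrac{p\lambda_1}{\mu}w_1+\tfrac{(1-p)\lambda_2}{\mu}w_2\big]$ with $\mu=p\lambda_1+(1-p)\lambda_2$ and invoking convexity of $F$. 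For the face property, suppose $v,w\in A_+$ and $p\in(0,1)$ with $pv+(1-p)w=\lambda f$ for some $f\in F$, $\lambda\ge 0$. Applying $u_A$ gives $\lambda = p\,u_A(v)+(1-p)\,u_A(w)$, and (after disposing of the trivial cases where $v$, $w$, or $\lambda$ vanish, which are controlled by strict positivity of $u_A$) dividing by $\lambda$ exhibits $f$ as the convex combination $\tfrac{p\,u_A(v)}{\lambda}\hat v+\tfrac{(1-p)u_A(w)}{\lambda}\hat w$ with strictly positive coefficients summing to one. Since $f$ lies in the face $F$, the defining property of a face forces $\hat v,\hat w\in F$, hence $v,w\in\mathbb R_{\ge 0}\cdot F$. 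Conversely, for a face $G\ne\{0\}$ of $A_+$, the set $\Omega_A\cap G$ is a face of $\Omega_A$: it is nonempty because $G$ contains a nonzero $w$, whose normalization $\hat w\in\Omega_A\cap G$ by Lemma \ref{lemma:rays}; it is convex as an intersection; and its face property is inherited from that of $G$, since any $v,w\in\Omega_A\subset A_+$ with a proper convex combination in $G$ must already lie in $G$.

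Next I would establish the identities in the last part, which also drive the generation claim. For a face $F\subset\Omega_A$ the inclusion $F\subseteq\Omega_A\cap(\mathbb R_{\ge 0}\cdot F)$ is trivial, and if $w=\lambda f\in\Omega_A$ with $f\in F$ then applying $u_A$ gives $1=\lambda$, so $w=f\in F$. For a face $G\subset A_+$, $G\ne\{0\}$, the inclusion $\mathbb R_{\ge 0}\cdot(\Omega_A\cap G)\subseteq G$ is exactly Lemma \ref{lemma:rays}, while for nonzero $v\in G$ one has $\hat v\in\Omega_A\cap G$ (again by Lemma \ref{lemma:rays}) and $v=u_A(v)\hat v$, giving the reverse inclusion; the zero vector is covered since $\Omega_A\cap G$ is nonempty. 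The generation claim then follows from minimality: if $w_1,\dots,w_m$ generate the face $F$ of $\Omega_A$, then $\mathbb R_{\ge 0}\cdot F$ is a face of $A_+$ containing the $w_j$, and for any other face $G$ of $A_+$ containing them, $\Omega_A\cap G$ is a face of $\Omega_A$ containing the $w_j$ (here $G\ne\{0\}$), so $F\subseteq\Omega_A\cap G\subseteq G$ and hence $\mathbb R_{\ge 0}\cdot F\subseteq\mathbb R_{\ge 0}\cdot G=G$ by Lemma \ref{lemma:rays}; thus $\mathbb R_{\ge 0}\cdot F$ is the minimal face containing the $w_j$. The reverse direction is symmetric, using the second part and the identity $G=\mathbb R_{\ge 0}\cdot(\Omega_A\cap G)$.

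The main obstacle I anticipate is the bookkeeping in the face property of the first part: one must apply $u_A$ to convert the scalar $\lambda$ into the normalizations of $v$ and $w$, check that the resulting coefficients form a genuine strict convex combination, and carefully handle the degenerate cases ($v=0$, $w=0$, or $\lambda=0$), where strict positivity of $u_A$ collapses the combination. Once the normalization dictionary $w\leftrightarrow(u_A(w),\hat w)$ and Lemma \ref{lemma:rays} are in place, all remaining steps are routine.
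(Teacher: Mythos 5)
Your proposal is correct and follows essentially the same route as the paper's proof in the appendix: the normalization dictionary $w = u_A(w)\,\hat w$, strict positivity of $u_A$ to control the degenerate cases, and Lemma \ref{lemma:rays} to pass between $F$ and $\mathbb R_{\ge 0}\cdot F$. The only cosmetic difference is in the generation claim, where you argue directly by chasing inclusions through the two correspondences while the paper phrases the same minimality argument as a proof by contradiction; both are equally valid.
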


\begin{proof}
	Let $F$ be a face of $\Omega_A$. We show that $F^+ := \mathbb R_{\ge 0} \cdot F$ is a face of $A_+$:\\
	Let $v \in F^+$. $0\in F^+$ trivially because $F$ is a face and thus F is not empty. So assume $v\ne 0$. Then $\frac{v}{u_A(v)} \in F$: By definition, there is a $v'\in F\subset \Omega_A$ and a $\lambda \in \mathbb R_{\ge 0}$ with $v= \lambda v'$. Thus $u_A(v) = \lambda u_A(v') = \lambda$, therefore $v'= \frac{v}{u_A(v)}$. Thus by definition $\mathbb R_{\ge 0} \cdot v \subset F^+$. Consider now $w \in A_+$, $w=pw_1+(1-p)w_2$ with $w_1,w_2 \in F^+$ and $p\in (0,1)$. Wlog we can assume $w_1\ne 0 \ne w_2$ because we already know that all non-negative multiples are contained in $F^+$. Because of the normalization, this also implies $w\ne 0$. Thus: 
	\begin{equation}
	  \frac{w}{u_A(w)} = \frac{p u_A(w_1)}{u_A(w)} \frac{w_1}{u_A(w_1)} + \frac{(1-p)u_A(w_2)}{u_A(w)}\frac{w_2}{u_A(w_2)}
	\end{equation}
	Because of proper normalization and positivity, the coefficients $\frac{p u_A(w_1)}{u_A(w)}$ and $\frac{p u_A(w_2)}{u_A(w)}$ form a probability distribution. Furthermore, $\frac{w_1}{u_A(w_1)},\frac{w_2}{u_A(w_2)} \in F$ just like before. As $F$ is convex, $\frac{w}{u_A(w)} \in F$. Thus $w\in F^+$, i.e. $F^+ = \mathbb R_{\ge 0} \cdot F$ is convex.\\ 
	For $w\in F^+$, assume $w=pw_1+(1-p)w_2$ with $w_1,w_2 \in A_+$ and $p\in (0,1)$. If $u_A(w) = 0$, then $w=0$ because of strict positivity of $u_A$. As $u_A(w_{1,2}) \ge 0$, thus $u_A(w_{1,2}) = 0$ and therefore $w_{1,2} = 0$, again because of strict positivity. If $u_A(w) \ne 0$, then $w':= \frac w {u_A(w)} \in F$. If $u_A(w_2) = 0$ (or analogously, $u_A(w_1) = 0$), then $w = p w_1$. Then $w_1 = \frac 1 p w \in \mathbb R_{\ge 0}F$ because $w\in \mathbb R_{\ge 0} F$. So assume now $u_A(w_2) \ne 0$ and $u_A(w_1) \ne 0$. Then 
	\begin{equation}
	  \frac{w}{u_A(w)} = \frac{p u_A(w_1)}{u_A(w)} \frac{w_1}{u_A(w_1)}+\frac{(1-p) u_A(w_2)}{u_A(w)} \frac{w_2}{u_A(w_2)} \in F
	\end{equation}
	As all the states are properly normalized and $p,1-p,u_A(w), u_A(w_1),u_A(w_2) > 0$, we find $1>\frac{p u_A(w_1)}{u_A(w)} > 0$, $\frac{p u_A(w_1)}{u_A(w)}+\frac{(1-p) u_A(w_1)}{u_A(w)} = 1$. As $F$ is a face of $\Omega_A$, $\frac{w_1}{u_A(w_1)}\in F$ and $\frac{w_2}{u_A(w_2)} \in F$. Therefore $w_1,w_2 \in \mathbb R_{\ge 0} \cdot F$. Thus $F^+$ is a face.\\
	\\
	Now let $F \ne \{0\}$ be a face of $A_+$. We now show that $G:= F \cap \Omega_A$ is a face of $\Omega_A$:\\
	$G$ is not empty, because there is a $v\in F$ with $v\ne 0$, i.e. $u_A(v)>0$. Thus $\frac{v}{u_A(v)} \in G$ by Lemma \ref{lemma:rays}. \\
	Let $w_1,w_2 \in G$, $p\in(0,1)$. Then $w_1,w_2 \in F$  and as $F$ is convex, $pw_1+(1-p)w_2 \in F$. By proper normalization and as properly normalized states of $A_+$ are elements of $\Omega_A$, $pw_1+(1-p)w_2 \in G$.\\
	Let $w \in G$, $p\in (0,1)$, $w_1,w_2 \in \Omega_A$ with $w= pw_1+(1-p)w_2$. By definition, also $w \in F$. As $F$ is a face, $w_1,w_2 \in F$. As $w_{1,2}\in \Omega_A$, in total $w_1,w_2 \in \Omega_A \cap F = G$. Thus G is a face.\\
	\\
	Now assume that the face $F\subset \Omega_A$ is generated by $w_1,...,w_m$. Then the minimal face
	\begin{equation*}
	  G := \bigcap_{H \subset A_+ \text{ face}, \ w_1,...,w_m \in H} H 
	\end{equation*}  
	of $A_+$ containing $w_1,...,w_m$ is a subset of the face $\mathbb R_{\ge 0}\cdot F$, i.e. $G \subset \mathbb R_{\ge 0}\cdot F$. If $G \ne \mathbb R_{\ge 0}\cdot F$, then $\exists w\in \mathbb R_{\ge 0}\cdot F$ but $w \notin G$. As all faces contain the $0$ and $u_A$ is strictly positive, $u_A(w) \ne 0$. The face $G \cap \Omega_A$ contains $w_1,...,w_m$. But $\frac{w}{u_A(w)} \in F$ is not found in $G\cap \Omega_A$, which is a contradiction to $F$ being minimal as a generated face.\\
	\\
	Now assume that the face $F\subset A_+$ is generated by $w_1,...,w_m \in \Omega_A$. Then $w_1,...,w_m$ are also found in the face $\Omega_A \cap F \subset \Omega_A$. Now consider the minimal face $G\subset \Omega_A$ containing the $w_1,...,w_m$. By definition, $G\subset F\cap \Omega_A$. If $G\ne F \cap \Omega_A$, then $\exists w \in F\cap \Omega_A$, but $w \notin G$. The face $\mathbb R_{\ge 0} \cdot G$ also contains the $w_1,...,w_m$, but not $w$. This is a contradiction to $F$ being minimal.\\
\\
	Now consider a face $F\subset \Omega_A$. We have $F \subset (\mathbb R_{\ge 0}\cdot F)$ and $F\subset \Omega_A$, thus $F\subset \Omega_A \cap (\mathbb R_{\ge 0}\cdot F)$. For $w \in \Omega_A \cap (\mathbb R_{\ge 0}\cdot F)$, we have $w\in \Omega_A$ and $\exists w' \in F$, $p \in \mathbb R_{\ge 0}$ with $w = p w'$. By $w\in \Omega_A$, we find $p=1$ because of normalization, thus $w\in F$  and in total $\Omega_A \cap (\mathbb R_{\ge 0}\cdot F) = F$.\\
\\
	Now consider a face $G\subset A_+$ with $G \ne \{0\}$. We have $(\Omega_A \cap G) \subset G$ and thus by Lemma \ref{lemma:rays} $\mathbb R_{\ge 0} \cdot (\Omega_A \cap G) \subset G$. Vice versa for $w \in G$: If $w \ne 0$, then $\frac{w}{u_A(w)} \in \Omega_A$ by $A_+ =\mathbb R_{\ge 0} \Omega_A$ and proper normalization. Thus $\frac{w}{u_A(w)} \in \Omega_A \cap G$ by Lemma \ref{lemma:rays}. Thus $w \in R_{\ge 0} \cdot (\Omega_A \cap G)$. As $G \ne \{0\}$, $\exists w' \in G:$ $u_A(w')> 0$. $\frac{w'}{u_A(w')} \in G \cap \Omega_A$, Thus $0 \in R_{\ge 0} \cdot (\Omega_A \cap G)$. In total, $G = R_{\ge 0} \cdot (\Omega_A \cap G)$.  
\end{proof} 

\section{Appendix: Entropy decreasing SWAP operation} 
\label{Section:SWAP}
In this appendix we explain the formal details of a quantum operation which replaces an arbitrary incoming quantum system by a pure state.\\ 
We consider two n-level Hilbert spaces $\mathcal H_a= \mathcal H_b$ with orthonormal base $\ket{1},...,\ket{n}$.\\
We define the linear map $U$ by $U(\ket{j}_a\otimes \ket{k}_b) := \ket{k}_a \otimes \ket{j}_b$ and extend linearly. This map is often called \textit{SWAP-operation}. It is unitary: For $\ket{\psi} = \sum_{j,k} p_{jk} \ket{j}_a\otimes \ket{k}_b$, $\ket{\phi} = \sum_{j,k} q_{jk} \ket{j}_a\otimes \ket{k}_b$ we find:
\begin{align*}
	\braket{U \cdot \psi |U \cdot \phi} &= \sum_{jk,rt} p_{jk}^* q_{rt} \big( \bra{k}_a\otimes \bra{j}_b\big) \big( \ket{t}_a\otimes \ket{r}_b\big) = \sum_{jk,rt} p_{jk}^* q_{rt} \delta_{kt} \delta_{jr} \\
	&= \sum_{jk,rt} p_{jk}^* q_{rt} \big( \bra{j}_a\otimes \bra{k}_b\big) \big( \ket{r}_a\otimes \ket{t}_b\big) = \braket{\psi | \phi}
\end{align*}
In particular, we find:
\begin{align*}
	U \big( \sum_{jk} \rho_{jk} \ket{j}\bra{k}_a \big) \otimes \ket{1}\bra{1}_b U^\dagger &= \sum_{jk} \rho_{jk} U \big( \ket{j}_a \otimes \ket{1}_b \big)  \big(\bra{k}_a\otimes\bra{1}_b\big) U^\dagger \\
	&= \sum_{jk} \rho_{jk} \big( \ket{1}_a \otimes \ket{j}_b \big) \big(\bra{1}_a\otimes\bra{k}_b\big) \\
	&= \ket{1}\bra{1}_a \otimes \big( \sum_{jk} \rho_{jk} \ket{j} \bra{k}_b \big)
\end{align*}
For the density operator $\rho_a$ on $\mathcal H_a$ we define the map:
\begin{equation*} T(\rho_a) := \text{Tr}_b\Big( U \rho_a \otimes \ket{1}\bra{1}_b U^\dagger \Big) \end{equation*}
In this form it is clear that this is a quantum transformation. The physical implementation is also clear: One starts with a n-level system in the state $\rho_a$. Then one creates another n-level system (of the same physical realization) initialized in the state $\ket{1}$. Then one exchanges the label of the two systems and throws away the original system, keeping only the system initialized to $\ket{1}$. This is clearly a experimentally possible transformation. But as the original system is simply forgotten and we just create any system of the same physical implementation in any state we like, it is clear that the transformation can decrease the entropy: $T(\frac{1}{n} \sum_j \ket{j}\bra{j}_a) = \ket{1}\bra{1}_a$, thus $S(\frac{1}{n} \sum_j \ket{j}\bra{j}_a) = -\sum_j \frac{1}{n} \ln \left( \frac{1}{n}\right) =\ln(n) > 0 = S(\ket{1}\bra{1}_a) = S(T(\frac{1}{n} \sum_j \ket{j}\bra{j}_a))$.\\
We note that this transformation does not change the normalization, thus it simply induces the order unit as measurement. Thus in most generality, there will be operations that can decrease entropy.
\section{Appendix: Mathematica shows that weak spectrality in the 2D egg-shape holds}
\subsection{Appendix: States with classical decomposition}
\label{chap:WSMathematica}
%\vspace{-25mm}
\hspace{-20mm}
%\begin{center}
\includegraphics[height=0.8\textheight, trim = 10mm 70mm 0mm 25mm, clip]{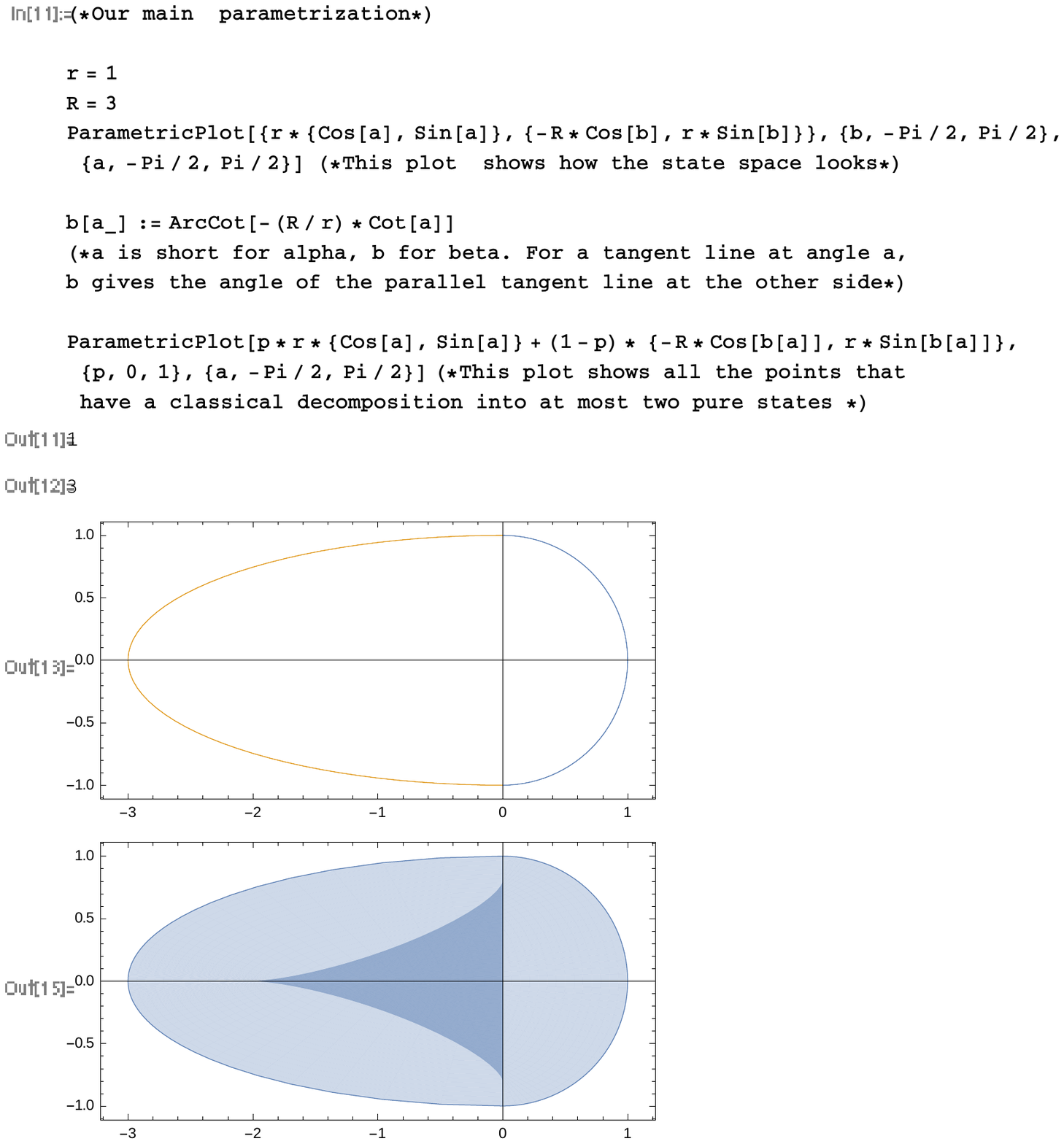}
%\end{center}

\subsection{Appendix: Alternative parametrization}
\label{chap:AlternativeParam}
\begin{center}
\includegraphics{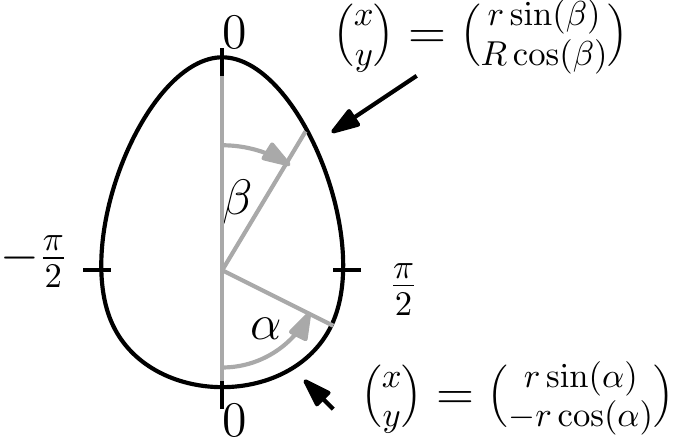}
\captionof{figure}{\small \textit{Detailed picture of how we parametrize the egg.}}
\end{center}
So now we give an alternative parametrization of the 2D egg:
The first part is a half-circle, given by 
\begin{equation}
	(x,y) = (r\sin(\alpha),-r\cos(\alpha))
\end{equation}
for $\alpha \in [-\frac \pi 2, \frac{\pi} 2 ]$. The second part is an ellipse given by 
\begin{equation}
	(x,y) = (r\sin(\beta), R\cos(\beta))
\end{equation}
for $\beta \in [-\frac \pi 2, \frac{\pi} 2 ]$. Choose an arbitrary angle $\alpha$. Then
\begin{align}
	\frac{\mathrm d y}{\mathrm d \alpha} = r \sin(\alpha)
	&&\frac{\mathrm d x}{\mathrm d \alpha} = r \cos(\alpha)
\end{align}
Thus the slope of the tangent line is given by $\frac{\mathrm d y}{\mathrm d x} = \tan(\alpha)$.
Now we want to find the parallel tangent line on the other side. At first
\begin{align}
	\frac{\mathrm d y}{\mathrm d \beta} = -R \sin(\beta)
	&&\frac{\mathrm d x}{\mathrm d \beta} = r \cos(\beta)
\end{align}
Thus the slope on the other side is given by $\frac{\mathrm d y}{\mathrm d x} = -\frac{R}{r} \tan(\beta)$. We want the slopes to agree, i.e. $-\frac{R}{r} \tan(\beta) = \tan(\alpha)$. Thus 
\begin{equation}
	\beta(\alpha) = \text{arctan}\left(-\frac{r}{R} \tan{\alpha}\right)
\end{equation}
The line connecting $p_1$ and $p_2$ is 
\begin{equation}
	l_\alpha(p) = p\begin{pmatrix}r\sin(\alpha)\\ -r\cos(\alpha) \end{pmatrix} +(1-p) \begin{pmatrix}r\sin(\beta(\alpha)) \\ R\cos(\beta(\alpha)) \end{pmatrix}
\end{equation}
where $p \in [0,1]$. The next page uses Mathematica to show that weak spectrality / classical decomposability holds using this parametrization.
\includepdf[pages={1}]{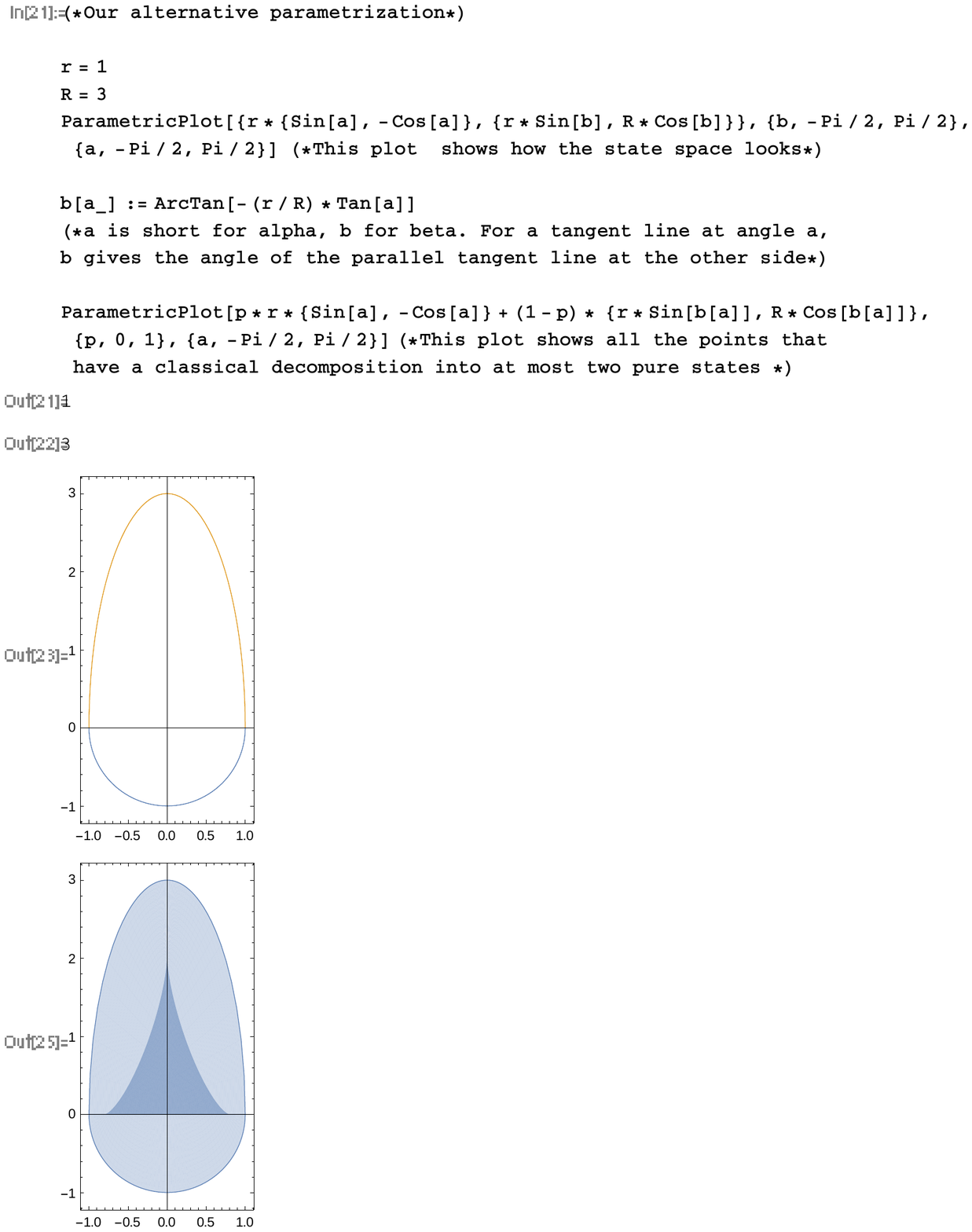} 

\subsection{Appendix: Visualization of the tangent lines with Mathematica}
\label{chap:Rotation}
\vspace{-20mm}
\hspace{-25mm}
\includegraphics[width=1.14\textwidth]{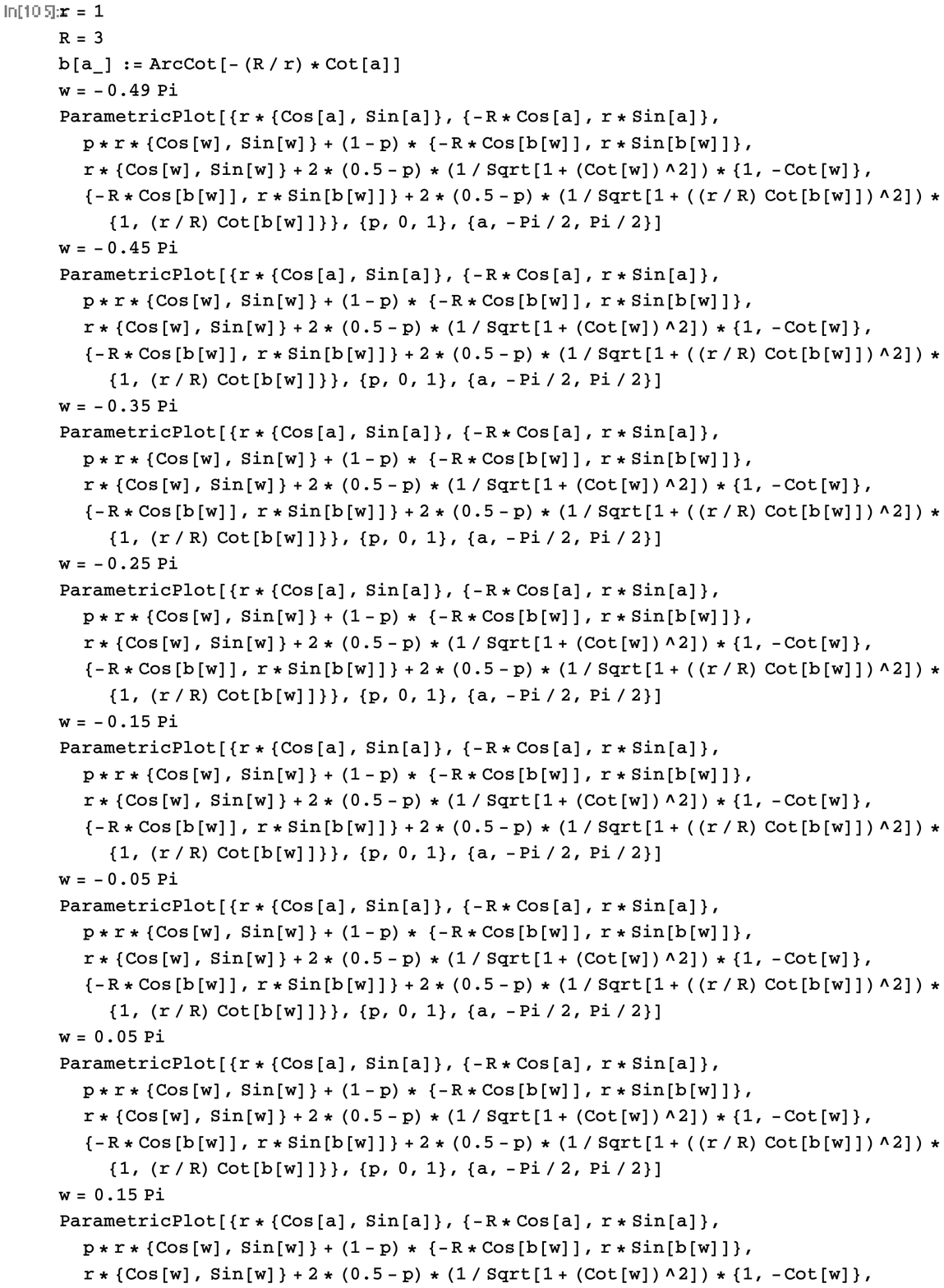}
\includepdf[pages={2,3,4,5}]{Rotating.pdf}

\newpage
\thispagestyle{empty}
\quad
\newpage
\thispagestyle{empty}

\section*{Acknowledgments}

I thank Dr. Markus P. Müller for the possibility to write a Master thesis in the fields of quantum information theory and quantum foundations. His guidance helped me very much and his ideas often closed the missing link to new proofs and new theorems. But more importantly, my fascination for the information-theoretic insights into quantum theory are mainly due to his expertise and I feel lucky for working with him.\\
\\
I also thank Jonathan Barrett for suggesting to adapt von Neumann's thought experiment for GPTs, without his idea this thesis would not exist. Furthermore, I am grateful for his hospitality during our stay in Oxford.\\
\\
Furthermore, I thank Howard Barnum for very useful discussions and fruitful collaboration while working with our group. In particular, Howard had several ideas for further projects connected to our current work.\\
\\
I thank my family and my friends for supporting me in hard times. It is good to know that there exist people who accept me no matter whether I fail or succeed.\\
\\
Also, I thank the people at the institute for theoretical physics for the good atmosphere.\\
\\
I thank Prof. Dr. Manfred Salmhofer for agreeing to read and grade my thesis.\\
\\
At last but not least, I thank all the researchers from the quantum foundations and quantum information community. Many surprising relations between information-theoretic principles and physics have been discovered and I am convinced that the very core of physics and reality is an information-theoretic principle.\\

\cleardoublepage

\thispagestyle{empty}

\setlength{\parindent}{0em}

Erklärung:\par
\vspace{3\baselineskip}
Ich versichere, dass ich diese Arbeit selbstständig verfasst habe und keine
anderen als die angegebenen Quellen und Hilfsmittel benutzt habe.\par
\vspace{5\baselineskip}
Heidelberg, den \quad .\quad .2015\hspace{3cm}\dotfill

\end{document}